\documentclass[onecolumn,12pt]{IEEEtran}

%\documentclass[journal,11pt,onecolumn]{IEEEtran}  %%% remove: onecolumn,10pt!!!!!!!!!!!!!!!!!!!!!!!!!!!!!
%\documentclass[journal]{IEEEtran}  %%% remove: onecolumn,10pt!!!!!!!!!!!!!!!!!!!!!!!!!!!!!
% \documentclass[conference]{IEEEtran}  %%% remove: onecolumn,10pt!!!!!!!!!!!!!!!!!!!!!!!!!!!!!

% \addtolength{\topmargin}{9mm}  
\usepackage[T1]{fontenc}% optional T1 font encoding

\usepackage{graphicx}

\usepackage{algorithmB}
\usepackage{algorithm}
\usepackage{algorithmic}

% *** MATH PACKAGES ***
%
\usepackage{amssymb,amsmath}
\usepackage{amsthm}
\usepackage{latexsym, bm, mathtools, braket, multirow,enumitem, cite}

\usepackage[cmintegrals]{newtxmath}
\usepackage{color}

\usepackage[caption = false]{subfig}

\usepackage{tikz}
\usetikzlibrary{shapes,snakes}
\usetikzlibrary{tikzmark}
\usetikzlibrary{fit,backgrounds}
\usetikzlibrary{matrix,decorations.pathreplacing,angles,quotes}
\usetikzlibrary{patterns}
\usetikzlibrary{calc}
\usetikzlibrary{positioning}
\tikzstyle{block} = [rectangle, draw, 
    minimum width=4em, text centered, rounded corners, minimum height=1em]
\tikzstyle{rec} = [rectangle, draw]
\tikzstyle{line} = [draw, -latex]
\usetikzlibrary{intersections}

\newcommand{\ctext}[1]{\raise0.2ex\hbox{\textcircled{\scriptsize{#1}}}}

\usepackage{color}
\newcommand{\myred}[1]{{\leavevmode\color{red}#1}}

\DeclareFontFamily{OMX}{MnSymbolE}{}
\DeclareFontShape{OMX}{MnSymbolE}{m}{n}{
    <-6>  MnSymbolE5
   <6-7>  MnSymbolE6
   <7-8>  MnSymbolE7
   <8-9>  MnSymbolE8
   <9-10> MnSymbolE9
  <10-12> MnSymbolE10
  <12->   MnSymbolE12}{}
\DeclareSymbolFont{mnlargesymbols}{OMX}{MnSymbolE}{m}{n}
\SetSymbolFont{mnlargesymbols}{bold}{OMX}{MnSymbolE}{b}{n}
\DeclareMathDelimiter{\llangle}{\mathopen}{mnlargesymbols}{'164}{mnlargesymbols}{'164}
\DeclareMathDelimiter{\rrangle}{\mathclose}{mnlargesymbols}{'171}{mnlargesymbols}{'171}

\newtheorem{lemm}{Lemma}
\newtheorem{defi}{Definition}
\newtheorem{exam}{Example}
\newtheorem{theo}{Theorem}

\newtheorem*{prot*}{Protocol}
\newtheorem{prop}{Proposition}

\newtheorem{coro}{Corollary}

\theoremstyle{definition}
\newtheorem{remark}{Remark}
\newtheorem*{remark*}{Remark}

\def\im{\mathop{\rm Im}}

\newcommand{\cA}{\mathcal{A}}
\newcommand{\cB}{\mathcal{B}}
\newcommand{\cC}{\mathcal{C}}
\newcommand{\cD}{\mathcal{D}}
\newcommand{\cE}{\mathcal{E}}

\newcommand{\cH}{\mathcal{H}}

\newcommand{\cM}{\mathcal{M}}

\newcommand{\cR}{\mathcal{R}}

\DeclareMathOperator{\Tr}{Tr}

\DeclareMathOperator{\tr}{tr}

\DeclareMathOperator*{\rank}{\mathrm{rank}}

%% for this paper

\newcommand\sA{\mathsf{A}}

\newcommand\bF{\mathbf{F}}

\newcommand\bW{\mathbf{W}}

\newcommand\by{\mathbf{y}}

\newcommand\saa{\mathsf{a}}
\newcommand\sbb{\mathsf{b}}
\newcommand\scc{\mathsf{c}}
\newcommand\sdd{\mathsf{d}}

\newcommand\sff{\mathsf{f}}
\newcommand\sgg{\mathsf{g}}

\newcommand\smm{\mathsf{m}}
\newcommand\snn{\mathsf{n}}

\newcommand\srr{\mathsf{r}}

\newcommand\stt{\mathsf{t}}

\newcommand\sx{\mathsf{x}}
\newcommand\sy{\mathsf{y}}
\newcommand\sz{\mathsf{z}}

\newcommand\qprot{\Lambda}

\newcommand\PNSS{\Phi^{\smm}_{\mathrm{QNSS}}}

\newcommand\fA{\mathfrak{A}}
\newcommand\fB{\mathfrak{B}}
\newcommand\fC{\mathfrak{C}}

\newcommand\REC{\mathfrak{A}}
\newcommand\COL{\mathfrak{B}}
\newcommand\ACC{\mathfrak{A}}
\newcommand\REJ{\mathfrak{B}}

\newcommand\FF{\mathbb{F}}
\newcommand\Fq{\mathbb{F}_q}

\newcommand\vD{\mathsf{d}}
\newcommand\vN{\mathsf{n}}

\newcommand\vM{\mathsf{m}}

\DeclareMathOperator*{\supp}{\mathrm{supp}}

\usepackage{array}

\newcolumntype{M}[1]{>{\centering\arraybackslash}m{#1}}
\newcolumntype{N}{@{}m{0pt}@{}}
       % "top" strut
 % "bottom" strut
 % top&bottom struts

\usepackage{caption}
\DeclareCaptionLabelFormat{cont}{#1~#2\alph{ContinuedFloat}}
\captionsetup[ContinuedFloat]{labelformat=cont}

\usepackage[caption = false]{subfig}

\usepackage{framed}
\usepackage[framemethod=TikZ]{mdframed}

\interdisplaylinepenalty=2500 % As explained in bare_conf.tex

%%%%%%
% correct bad hyphenation here
\hyphenation{op-tical net-works semi-conduc-tor}

% \usepackage{enumerate}
%\newcounter{nameOfYourChoice}

\if0
\usepackage{tikz}
\usetikzlibrary{shapes,snakes}
\usetikzlibrary{tikzmark}
\usetikzlibrary{fit,backgrounds}
\usetikzlibrary{matrix,decorations.pathreplacing,angles,quotes}
\usetikzlibrary{patterns}
\usetikzlibrary{calc}
\usetikzlibrary{positioning}
\tikzstyle{block} = [rectangle, draw, 
    minimum width=4em, text centered, rounded corners, minimum height=1em]
\tikzstyle{rec} = [rectangle, draw]
\tikzstyle{line} = [draw, -latex]

\tikzset{snake arrow/.style=
{
decorate,
decoration={snake,amplitude=.4mm,segment length=2mm,post length=1mm}}
}

\usetikzlibrary{intersections}

\fi
%% custom
 
\def\Label#1{\label{#1}\ [\ \text{#1}\ ]\ }
\def\Label{\label}
\newcommand{\red}[1]{#1}

\begin{document}
\title{Unified Approach to Secret Sharing and 
Symmetric Private Information Retrieval with Colluding Servers
in Quantum Systems}

%\title{Dense Coding with Locality Restriction on Decoders:\\%
%Quantum Encoders vs. Super-Quantum Encoders}

%%% Several authors with up to three affiliations:
\author{%
Masahito~Hayashi,~\IEEEmembership{Fellow,~IEEE}, and~
% <-this % stops a space
Seunghoan Song,~\IEEEmembership{Member,~IEEE} % <-this % stops a space
\thanks{M. Hayashi is with 
Shenzhen Institute for Quantum Science and Engineering, Southern University of Science and Technology, Nanshan District,
Shenzhen, 518055, China,
International Quantum Academy (SIQA), Futian District, Shenzhen 518048, China,
Guangdong Provincial Key Laboratory of Quantum Science and Engineering,
Southern University of Science and Technology, Nanshan District, Shenzhen 518055, China,
and
Graduate School of Mathematics, Nagoya University, Nagoya, 464-8602, Japan.
(e-mail:hayashi@sustech.edu.cn).}
\thanks{S. Song is with Graduate school of Mathematics, Nagoya University, Nagoya, 464-8602, Japan.}
%(e-mail: m17021a@math.nagoya-u.ac.jp).}
% \thanks{M. Hayashi is also with 
% Shenzhen Institute for Quantum Science and Engineering, Southern University of Science and Technology
% and the Centre for Quantum Technologies, National University of Singapore, Singapore.}
\thanks{MH is supported in part by the National Natural Science Foundation of China (Grant No. 62171212)
and Guangdong Provincial Key Laboratory (Grant No. 2019B121203002),
SS is supported by JSPS Grant-in-Aid for JSPS Fellows No. JP20J11484 and Lotte Foundation Scholarship.  
}
% \thanks{Copyright (c) 2017 IEEE. Personal use of this material is permitted.  However, permission to use this material for any other purposes must be obtained from the IEEE by sending a request to pubs-permissions@ieee.org.}
}

\maketitle

\begin{abstract}
This paper unifiedly addresses two kinds of key quantum secure tasks, i.e., quantum versions of secret sharing (SS) and symmetric private information retrieval (SPIR) by using multi-target monotone span program (MMSP), which characterizes the classical linear protocols of SS and SPIR.
SS has two quantum extensions; One is the classical-quantum (CQ) setting, 
in which the secret to be sent is classical information and the shares are quantum systems.
The other is the quantum-quantum (QQ) setting, in which the secret to be sent is a quantum state and the shares are quantum systems.
The relation between these quantum protocols and MMSP 
has not been studied sufficiently.
We newly introduce the third setting, i.e., the entanglement-assisted (EA) setting, which is defined by modifying the CQ setting with allowing prior entanglement between the dealer and the end-user who recovers the secret by collecting the shares.
Showing that the linear version of SS with the EA setting is directly linked to MMSP, we characterize linear quantum versions of SS with the CQ ad QQ settings via MMSP.
Further, we introduce the EA setting of SPIR, which is shown to link to MMSP.
In addition, we discuss the quantum version of maximum distance separable codes.
\end{abstract}

\begin{IEEEkeywords}
mutual information,
maximization,
channel capacity,
classical-quantum channel,
analytical algorithm
\end{IEEEkeywords}

\section{Introduction}
Recently, quantum information processing technology attracts much attention as a future technology.
In particular, it is considered that quantum information processing technology
has a strong advantage for cryptographic protocols.
Therefore, it is desired to develop an efficient method for construct 
various cryptographic protocols in a unified viewpoint.
This paper focuses on the quantum versions of 
two fundamental cryptographic protocols, 
secret sharing (SS) and private information retrieval (PIR).
Since these are key tools for cryptographic tasks,
their quantum versions are expected to take crucial roles in future quantum technologies.

In SS \cite{Shamir79,Blakley79},
a dealer is required to encode a secret into ${\bar{\snn}}$ shares so that 
the end-user can reconstruct the secret by using some subsets of shares 
but nobody obtains any part of the secret from the other subsets. 
In PIR \cite{CGKS98}, a user is required to retrieve one of the multiple files from server(s) without revealing which file is retrieved.
Since PIR with one server has no efficient solution \cite{CGKS98}, it has been extensively studied with multiple non-communicating servers, and thus, in the following, we simply denote multi-server PIR by PIR.
When 
the user obtains no information other than the retrieved file, the PIR protocol is called {\em symmetric PIR} (SPIR), 
which is also called {\em oblivious transfer} \cite{Rabin81} in the one-server case.

SS and SPIR have a similar structure because the secrecy of both protocols is obtained by 
partitioning the confidential information.
	On the other hand, the two protocols have a different structure because
        in SS, the secret is both the confidential and targeted information 
        but in PIR, the targeted file is not confidential.
Using the similarity, several studies constructed PIR protocols from SS protocols \cite{GIKM00, BIKO12,LMD17,YSL18, DER18}.
Recently, the paper \cite{SH2022} derived an equivalence relation between 
\red{linear SS protocols and 
linear} SPIR protocols even with general access structure.
In this equivalence, \red{
all linear SS protocols and a special class of linear SPIR protocols
are algebraicly characterized by  
using multi-target monotone span program (MMSP) \cite{Beimel11,BI93,Dijk95}.}

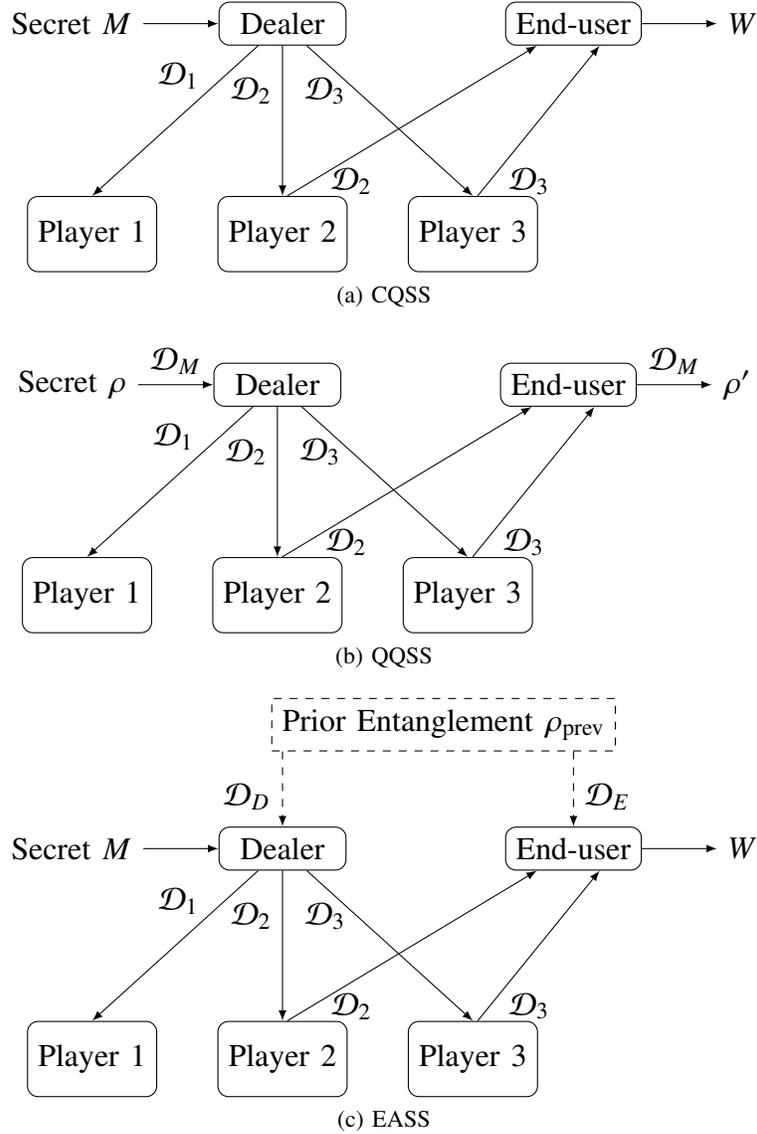
\begin{figure}
\begin{center}
\subfloat[CQSS]{
%         \resizebox {0.8\linewidth} {!} {
\begin{tikzpicture}[node distance = 3.3cm, every text node part/.style={align=center}, auto]

    \node [block] (dealer) {Dealer};

    \node [block, right=5em of dealer] (user) {End-user};
    
    \node [block,minimum height = 1cm, below=2cm of dealer] (serv2) {Player 2};
    \node [block,minimum height = 1cm, left=0.8cm of serv2] (serv1) {Player 1};
%     \node [right=0.8cm of serv1] (ten) {$\cdots$};
    \node [block,minimum height = 1cm, right=0.8cm of serv2] (serv3) {Player 3};
    
    \node [right=1cm of user] (receiv) {$W$};
    \node [left=1cm of dealer] (secret) {Secret $M$};

%     \node [thick, rec, above, above right= -5em and -9em of ten, minimum width=16em, minimum height=10em,dashdotted] (attack) {};
%     \node [below left= -9em and -12em of attack] (text_collude) {\\\\\\\\\\\\\\\\Colluding Servers};
%     \tikzstyle{block} = [rectangle, draw, 
%     minimum width=4em, text centered, rounded corners, minimum height=1em]
    
    \path [line] (dealer) --node[pos=0.2,left=2mm] {$\cD_1$} (serv1.north);
    \path [line] (dealer.south) --node[pos=0.3,left] {$\cD_{2}$} (serv2.north);
    \path [line] (dealer) --node[pos=0.3,left] {$\cD_{3}$} (serv3.north);
%     \path [line] (user) --node[pos=0.3,left=2mm] {$\cD_{\snn}$} (servn.100);

%     \path [line] (serv1.80)--node[pos=0.1,right=2mm] {$\cD_1$} (user);
    \path [line] (serv2.83) --node[pos=0.1,right=1mm] {$\cD_{2}$} (user);
    \path [line] (serv3.83) --node[pos=0.1,right=1mm] {$\cD_{3}$} (user.320);
%     \path [line] (servn.north) --node[pos=0.1,right=2mm] {$\cD_{\snn}$} (user.345);
    
    \path [line] (user.east) -- (receiv);
    \path [line] (secret) -- (dealer);
    
%     
% % 	\path [line] (K.east) --++(0:5em)|- node[right=6.5em, above]{$\mathcal{C},K,Q^{[1]}$} (dec1.215);
%     \path [line,dashed] (shared.0) --++(2em,0) --++(0,-5em) -| node[pos=0.65,right=0mm] {$\cD_{1}'$} (serv1.south);
%     
%     \path [line,dashed] (shared.0) --++(2em,0) --++(0,-5em) -| node[pos=0.65,right=0mm] {$\cD_{2}'$} (serv2.south);
%     \path [line,dashed] (shared.0) --++(2em,0) --++(0,-5em) -| node[pos=0.65,right=0mm] {$\cD_{3}'$} (serv3.south);
% %     \path [line,dashed] (shared.0) --++(2em,0) --++(0,-8em) -| node[pos=0.65,right=0mm] {$\cD_{\snn}'$} (servn.south);
% 	
%     \path [line,dashed] (shared.0) -|node[pos=0.1,left=6mm] {$\cD_{1}'$} (serv1.south);
    
%     \path [line,dashed] (shared.0) -|node[pos=0.3,left=0mm] {$\cD_{\vT}'$} (serv2.south);
%     \path [line,dashed] (shared.0) --node[pos=0.3,right=0mm] {$\cD_{\vT+1}'$} (serv3.south);
%     \path [line,dashed] (shared.0) --node[pos=0.33,right=6mm] {$\cD_{\snn}'$} (servn.south);
\end{tikzpicture}
% }
} \\

\subfloat[QQSS]{
%         \resizebox {0.8\linewidth} {!} {
        \begin{tikzpicture}[node distance = 3.3cm, every text node part/.style={align=center}, auto]
    \node [block] (dealer) {Dealer};

    \node [block, right=5em of dealer] (user) {End-user};
    
    \node [left=1cm of dealer] (secret) {Secret $\rho$};
    \node [right=1cm of user] (receiv) {$\rho'$};

    \node [block,minimum height = 1cm, below=2cm of dealer] (serv2) {Player 2};
    \node [block,minimum height = 1cm, left=0.8cm of serv2] (serv1) {Player 1};
%     \node [right=0.8cm of serv1] (ten) {$\cdots$};
    \node [block,minimum height = 1cm, right=0.8cm of serv2] (serv3) {Player 3};
%     \node [right=0.8cm of serv3] (ten2) {$\cdots$};
%     \node [block,minimum height = 1cm, right=0.8cm of ten2] (servn) {Player $\snn$};

%     \node [thick, rec, above, above right= -5em and -9em of ten, minimum width=16em, minimum height=10em,dashdotted] (attack) {};
%     \node [below left= -9em and -12em of attack] (text_collude) {\\\\\\\\\\\\\\\\Colluding Servers};
%     \tikzstyle{block} = [rectangle, draw, 
%     minimum width=4em, text centered, rounded corners, minimum height=1em]
    
    \path [line] (dealer) --node[pos=0.2,left=2mm] {$\cD_1$} (serv1.north);
    \path [line] (dealer.south) --node[pos=0.3,left] {$\cD_{2}$} (serv2.north);
    \path [line] (dealer) --node[pos=0.3,left] {$\cD_{3}$} (serv3.north);
%     \path [line] (user) --node[pos=0.3,left=2mm] {$\cD_{\snn}$} (servn.100);

%     \path [line] (serv1.80)--node[pos=0.1,right=2mm] {$\cD_1$} (user);
    \path [line] (serv2.83) --node[pos=0.1,right=1mm] {$\cD_{2}$} (user);
    \path [line] (serv3.83) --node[pos=0.1,right=1mm] {$\cD_{3}$} (user.320);
%     \path [line] (servn.north) --node[pos=0.1,right=2mm] {$\cD_{\snn}$} (user.345);
    
    \path [line] (user.east) --node {$\cD_M$} (receiv);
    \path [line] (secret) -- node {$\cD_M$} (dealer);
\end{tikzpicture}
% }
}\\
\subfloat[EASS]{
%         \resizebox {0.8\linewidth} {!} {
\begin{tikzpicture}[node distance = 3.3cm, every text node part/.style={align=center}, auto]
    \node [block] (dealer) {Dealer};

    \node [block, right=5em of dealer] (user) {End-user};
    
    \node [block,minimum height = 1cm, below=2cm of dealer] (serv2) {Player 2};
    \node [block,minimum height = 1cm, left=0.8cm of serv2] (serv1) {Player 1};
%     \node [right=0.8cm of serv1] (ten) {$\cdots$};
    \node [block,minimum height = 1cm, right=0.8cm of serv2] (serv3) {Player 3};
%     \node [right=0.8cm of serv3] (ten2) {$\cdots$};
%     \node [block,minimum height = 1cm, right=0.8cm of ten2] (servn) {Player $\snn$};
    
    \node [draw, dashed, above right=1cm and -1cm of dealer] (shared) {Prior Entanglement $\rho_{\mathrm{prev}}$};
%     \phantom{
%     \node [draw,ellipse,right=2cm of serv3] (sharedph) {Prior Entanglement\\$\rho_{\mathrm{prev}}$};
%     }
    
    \node [right=1cm of user] (receiv) {$W$};
    \node [left=1cm of dealer] (secret) {Secret $M$};

%     \node [thick, rec, above, above right= -5em and -9em of ten, minimum width=16em, minimum height=10em,dashdotted] (attack) {};
%     \node [below left= -9em and -12em of attack] (text_collude) {\\\\\\\\\\\\\\\\Colluding Servers};
%     \tikzstyle{block} = [rectangle, draw, 
%     minimum width=4em, text centered, rounded corners, minimum height=1em]
    
    \path [line] (dealer) --node[pos=0.2,left=2mm] {$\cD_1$} (serv1.north);
    \path [line] (dealer.south) --node[pos=0.3,left] {$\cD_{2}$} (serv2.north);
    \path [line] (dealer) --node[pos=0.3,left] {$\cD_{3}$} (serv3.north);
%     \path [line] (user) --node[pos=0.3,left=2mm] {$\cD_{\snn}$} (servn.100);

%     \path [line] (serv1.80)--node[pos=0.1,right=2mm] {$\cD_1$} (user);
    \path [line] (serv2.83) --node[pos=0.1,right=1mm] {$\cD_{2}$} (user);
    \path [line] (serv3.83) --node[pos=0.1,right=1mm] {$\cD_{3}$} (user.320);
%     \path [line] (servn.north) --node[pos=0.1,right=2mm] {$\cD_{\snn}$} (user.345);
    
    \path [line] (user.east) -- (receiv);
    \path [line] (secret) -- (dealer);

% 	\path [line] (K.east) --++(0:5em)|- node[right=6.5em, above]{$\mathcal{C},K,Q^{[1]}$} (dec1.215);
%     \path [line,dashed] (shared.0) --++(2em,0) --++(0,-5em) -| node[pos=0.65,right=0mm] {$\cD_{1}'$} (serv1.south);
%     
%     \path [line,dashed] (shared.0) --++(2em,0) --++(0,-5em) -| node[pos=0.65,right=0mm] {$\cD_{2}'$} (serv2.south);
%     \path [line,dashed] (shared.0) --++(2em,0) --++(0,-5em) -| node[pos=0.65,right=0mm] {$\cD_{3}'$} (serv3.south);
    
    \path [line,dashed] (shared.south -|dealer) -| node[pos=0.8,left=0mm] {$\cD_D$} (dealer);
    
    \path [line,dashed] (shared.south -|user) -| node[pos=0.8,right=0mm] {$\cD_E$} (user);

\end{tikzpicture}
% }
}
\caption{Quantum SS protocols where the end-user receives the shares from Player 2 and Player 3.
Fig. (a), (b), and (c) show a CQSS protocol, a QQSS protocol, and an EASS protocol, respectively.
The notations in the above figures will be defined in Section \ref{S4}.
}   \label{fig:SS}
\end{center}
\end{figure}

%In this paper, using MMSP, we develop a unified theory for the quantum versions of SS and SPIR. 
As quantum versions of SS, 
existing studies investigated two problem settings.
One is classical-quantum SS (CQSS), in which the secret message to be sent is given as classical information 
\cite{HBB99, Gottesman00, MS08, KFMS10, Sarvepalli12, MM13,KKI99, KMMP09, WCY14, Matsumoto17, Matsumoto20}, which is illustrated as Fig. \ref{fig:SS} (a).
The other is quantum-quantum SS (QQSS), in which the secret message to be sent is given as a quantum state \cite{HBB99, Gottesman00, MS08, KFMS10, Sarvepalli12, MM13,CGL99, Smith00, GJMP15, ZM15, Matsumoto18,OSIY05,SR10}, which is illustrated as Fig. \ref{fig:SS} (b).
The studies \cite{Smith00, Sarvepalli12, WCY14, ZM15, Matsumoto17, Matsumoto18,  Matsumoto20} discussed the security of these problem setting by using general access structure.
Although the papers \cite{HBB99, Gottesman00, MS08, KFMS10, Sarvepalli12, MM13}
studied both settings, no exiting study a unified framework for both problem settings.
\red{That is, no preceding study clarified algebraic structure of 
CQSS protocols and QQSS protocols. 
In fact, since linear SS protocols are characterized algebraicly by MMSP completely,
we can expected that CQSS protocols and QQSS protocols 
can be characterized by a variant of MMSP. 
That is, it is expected that such characterization would be helpful
to understand what a type of CQSS and QQSS are possible.
{\bf However, such useful characterizations of 
CQSS and QQSS protocols with general access structure by MMSP has not been obtained.}
In addition, 
for classical SS protocol, ramp-type SS protocols 
have been actively studied in \cite{BM,IY,OKT,Stinson,Yamamoto}.
However, while its CQSS and QQSS versions were introduced,
their analysis is very limited and did not discuss general access structure
\cite{Matsumoto17,Matsumoto20,ZM15,Matsumoto18}.}

In this paper, \red{to resolve the above problems for 
CQSS protocols and QQSS protocols from a unified framework,}
%to develop such a unified framework, 
as illustrated in Fig. \ref{fig:SS} (c),
we introduce
the third problem setting, entanglement-assisted SS (EASS), in which the secret message to be sent is given as a classical information, and prior entanglement is allowed between the dealer and the end-user who intends to decode the message while 
the above two problem settings allow no prior entanglement.
Analyzing two special cases of EASS, we derive our analyses of CQSS and QQSS.
Here, CQSS is simply given as a special case of EASS.
In contrast, we derive a notable conversion between 
QQSS and a special case of EASS by considering 
notable relations between dense coding and noiseless quantum state transmission.

To cover the security with general access structure, 
we study the relation between the security of EASS and the property of MMSP 
under linearity condition
while the paper \cite{Smith00} discussed this relation with a special case of access structure.
\red{For this analysis, 
we introduce a new concept, the symplectification for each access structure.
Through the symplectification for each access structure, 
linear EASS protocols are characterized by MMSP
because the symplectic structure plays a central role in this problem
although such a symplectification for an access structure
has not been considered by any existing study.
Then, using this concept, we characterize 
CQSS protocols.
To characterize QQSS protocols, we additionally invent new relations between
dense coding and noiseless quantum state transmission.
Then, we characterize QQSS protocols by combining our obtained characterization 
for EAQQ protocols and the above relations.}
%That is, we develop various quantum analogues of MMSP depending on the problem setting.
In addition, we clarify the relation between 
QQSS protocols and quantum maximum distance separable (MDS) codes
while a special case of such relations was mentioned in \cite{Gottesman00}.

As quantum versions of SPIR, many existing papers \cite{KdW04,SH19,SH19-2, SH20,AHPH20, KL20, ASHPHH21,WKNL21-1, WKNL21-2} studied 
classical-quantum SPIR (CQSPIR), in which the file to be sent is given as a classical information.
However, no existing paper studied the relation between CQSPIR and quantum versions of SS
while such relation in the classical version was studied in \cite{SH2022}.
\red{In addition, 
several existing papers \cite{SH2022,TGKFHE17,ZG17,SJ18-col,YLK20,CLK20}
for the classical setting considered
the reconstruction of the message only from the answers from 
a part of servers, which is called a qualified set of servers.
Also, several existing papers \cite{SH2022,BU19-2,HFLH21,TGKFH19,WS18} considered
various cases for the set of colluding servers.
Therefore, the analysis with various qualified sets of servers and various sets of colluding servers can be considered as a hot topic in the area of SPIR.
However,
{\bf no existing paper studied the reconstruction of a CQSPIR protocol with a general qualified set of servers}
because all existing papers \cite{KdW04,SH19,SH19-2,SH20,AHPH20,ASHPHH21,KL20,WKNL21-1,WKNL21-2} of the quantum setting 
considered this task under the condition that all servers send the answer to the user.}
In this paper, to develop the above relation, as another quantum setting of SPIR,
we introduce entanglement-assisted SPIR (EASPIR), in which the file to be sent is given as a classical information, and prior entanglement is allowed between the servers and the user while 
CQSPIR does not allow such prior entanglement.
\red{Although the papers \cite{KLLGR16,ABCGLS19,SLH} 
considered use of prior entanglement between the server and the user
and the papers \cite{KLLGR16,SLH} 
showed its great 
advantages over the case without prior entanglement 
in the quantum non-symmetric PIR,
no existing paper discussed the use of this type of prior entanglement 
in the quantum SPIR.}
Using MMSP, we derive the conversion relation between EASS and EASPIR protocols under the linearity condition.
Due to this conversion, we address these two settings under general access structure.
In SPIR, general access structure characterizes 
what a set of servers is qualified to recover the file information
and what a set of colluded servers is disqualified to identify what file the user wants.
That is, under this problem setting, we can discuss the case when 
only a part of servers answers the query sent by the user.
%This problem setting has not been studied in the quantum setting.
In fact, no existing study addressed general access structure for CQSPIR
because all existing studies \cite{KdW04,SH19,SH19-2, SH20,AHPH20, KL20, ASHPHH21,WKNL21-1, WKNL21-2} considered only the case when the file information is recovered with answers from all severs.
This paper is the first paper to address the case when
a set of servers is qualified to recover the file information.

One may consider that these two problem settings with entanglement assistance 
are artificial. However, these two problem settings take key roles to discuss other three problem settings,
CQSS, QQSS, and CQSPIR.
That is, use of these two settings enables us to derive the above characterizations by MMSP and their variants. 
\red{In other words, two problem settings with entanglement assistance work
as theoretical tools to analyzing 
three settings, CQSS, QQSS, and CQSPIR
because our analysis for CQSS, QQSS, and CQSPIR does not work
without considering entanglement assisted settings.
In this sense, considering entanglement assisted settings is a key part of this paper.}

The remainder of the paper is organized as follows.
Section~\ref{Preliminaries} prepares several notations and prior knowledges for our analysis on quantum versions of 
SS and SPIR. 
Section~\ref{S4} formally describes CQSS, QQSS, and  
CQSPIR protocols.
Section~\ref{S5} reviews the existing results under various classical settings as special cases of 
our quantum settings.
Section~\ref{SS-5} presents our results for CQSS, QQSS, and CQSPIR protocols
without their proofs.
Section~\ref{S6-1} introduces EASS protocols and presents our results for EASS protocols, which implies our results for CQSS protocols.
Section~\ref{SS-7} presents the proofs for our results of QQSS protocols stated in
Section~\ref{SS-5} by showing notable relations between dense coding and noiseless quantum state transmission.
Section~\ref{S6-3} introduces EASPIR protocols and presents our results for EASPIR protocols, which implies our results for CQSPIR protocols.
Section \ref{ZMTA} presents a conversion theorem from EASPIR protocols to EASS protocols as a part of our results.
Section~\ref{Sec-Ex} addresses examples for problem settings.
Section~\ref{sec:conclusion} makes conclusions and discussion.
Appendix is devoted for the proofs of Theorems \ref{TH3CQ}, \ref{TH6}, and \ref{TH3},
which are stated in Sections \ref{SS-5} and \ref{S6-1} and guarantees
the existences of several types of MMSPs.
\if0
\subsection{Existing results of QSS/QSPIR}

\textbf{Summary}

\begin{enumerate}
\item There is no result about the conversion between CQSS and QQSS.
    \begin{itemize}
    \item Papers \cite{HBB99, Gottesman00, MS08, KFMS10, Sarvepalli12, MM13} treated both CQSS and QQSS but did not explicitly discuss the relation between CQSS and QQSS.
    \item The paper \cite{Smith00} presented a conversion from monotone span program (i.e., CSS) to QSS.
    \end{itemize}
 
\item There are no existing studies for EASS because the end-user has not been considered in the existing studies.
    In all other papers, the end-user existed implicitly in the reconstruction process but he/she is not considered a participant in the protocol.
    
\myred{
\item MDS code is remarked in \cite{Gottesman00,RST05}.
    \begin{itemize}
    \item In \cite{RST05}, it is prove that $((t, 2t-1))$ threshold QQSS can be constructed from $[[2t-1, 1, t]]_q$ quantum MDS codes and vice versa.
    
    \item In \cite{Gottesman00}, classical MDS is used for the construction of CQSS protocol.
    \end{itemize}
}

\myred{
\item Existing results of QSPIR
\begin{itemize}
\item One-server SPIR has not been studied (because oblivious transfer is impossible)
\item Kerenidis and de Wolf \cite{KdW04} proposed a QSPIR protocol without shared randomness.
\item Our result \cite{SH19,SH19-2, SH20,AHPH20, ASHPHH21}: QSPIR capacity on various settings.
\item Kon and Lim \cite{KL20}: constructed a QSPIR protocol with quantum-key distribution. Wang et al. \cite{WKNL21-1, WKNL21-2} implemented QSPIR protocols experimentally.
\end{itemize}
}

% \item Connection between QSS and error-correction: \cite{MM13, Sarvepalli12}
\end{enumerate}

\textbf{Summary of existing results}

\begin{itemize}
\item CQSS only: \cite{KKI99, KMMP09, WCY14, Matsumoto17, Matsumoto20}
\item QQSS only: \cite{CGL99, Smith00, GJMP15, ZM15, Matsumoto18,OSIY05,SR10}
\item CQSS and QQSS: \cite{HBB99, Gottesman00, MS08, KFMS10, Sarvepalli12, MM13}

\item QSS with Graph state \cite{MS08, JMP12, Sarvepalli12, KMMP09, KFMS10}
\item QSS for general access structure \cite{Smith00, Sarvepalli12, WCY14, ZM15, Matsumoto17, Matsumoto18,  Matsumoto20}
\item Other papers
    \begin{itemize}
    \item Connection with other problems
    \begin{itemize}
        \item QSS and reversibility of quantum operation \cite{OSIY05}.
        \item QSS and matroids \cite{SR10}.
        \item QSS and error correction \cite{MM13}
    \end{itemize}
    \item QQSS where the secret state is chosen from a set of possible secrets $\{|X_1\rangle, \ldots, |X_n\rangle\}$. \cite{NTWIM03, BLX16}

    \item Ramp QSS \cite{ZM15, Matsumoto18, Matsumoto20}
    \item Quantum evoving secret sharing \cite{Chaudhury2020}. {\em An evolving secret sharing scheme is supposed to accommodate unbounded number (potentially infinite) of new participants over time and the dealer should not know the set of participants in advance.}
    \end{itemize}

\end{itemize}

\textbf{Details of existing results}
\begin{itemize}
\item \cite{HBB99}: An example of CQSS and QQSS using GHZ states.

\item \cite{KKI99}: An example of CQSS using bipartite entanglement.

\item \cite{CGL99}: $(k,n)$-threshold theorem of QQSS is $n < 2k$. 
\item \cite{Gottesman00}
    \begin{itemize}
    \item For QQSS, all mixed state QQSS scheme is obtained from pure state QQSS scheme.
    \item In QQSS, share size $\geq$ secret size.
    \item For access structures with $\fA^c = \fB$, QQSS exists for any access structure if the access structure follows no-cloning theorem, e.g., both pairs of shares $\{1,2\}$ and shares $\{3,4\}$ should not recover the secret.
    \item For CQSS, $(k,n)$ threshold scheme exists sharing a classical secret of size $p^2$ with one qupit per share whenever $n\le2(k-1)$, $p\geq n$, and $p$ is prime.
    \end{itemize}   
\item \cite{MS08,KFMS10}: 
    \begin{itemize}
    \item Examples of CQSS and QQSS from graph states.
    \item No conversion results between CQSS and QQSS is given in this paper.
    \end{itemize}
\item \cite{JMP12,GJMP15}: examples of threshold CQSS and QQSS schemes with graph states.
\item \cite{Sarvepalli12}
    \begin{itemize}
    \item Graph-state CQSS and QQSS with arbitrary access structure.
    \item Connection between QSS and error-correction.
        \begin{itemize}
        \item ``We show an equivalence between [[n,1]] binary quantum codes and graph-state secret-sharing schemes sharing one bit.''
        \item ``We also establish a similar (but restricted) equivalence between a class of [[n,1]] Calderbank-Shor-Steane codes and graph-state QSS schemes sharing one qubit.''
        \end{itemize}
    \end{itemize}      
\item \cite{KMMP09}: CQSS with graph states
\item \cite{Smith00} 
    \begin{itemize}
    \item This paper considered access structures with $\fA^c = \fB$.
    \item Monotone Span Program.
    \item \myred{A large class of linear CSS schemes can be converted into QQSS of the same efficiency.}
    \item \myred{A necessary and sufficient condition of the direct conversion of classical schemes into quantum ones (+ all group homomorphic schemes satisfy it).}
    \end{itemize}
\item \cite{RST05}
    \begin{itemize}
    \item ``In the classical case, a linear $(t, n)$ threshold scheme over $\Fq$ can be constructed from an $[n+1, t, n+2-t]_q$ MDS code and vice versa. 
      We show that in the quantum case, a $[[2t-1, 1, t]]_q$ quantum MDS code can be used to construct a $((t, 2t-1))$ QTS and vice versa.'' (Here, QTS means Quantum Threshold Secret Sharing Scheme)
    \end{itemize}
\item \cite{Matsumoto20}: CQSS
\item \cite{MM13}: QQSS with an access structure exists if and only if CQSS with the same access structure exists.
    \begin{itemize}
    \item I can't understand the result. At least, the proposed CQSS and QQSS protocols seem not related to each other.
    \end{itemize}

\end{itemize}
\fi

%\subsection{Organization of paper}

\section{Preliminaries} \Label{Preliminaries}
\subsection{Vector space and matrix over finite field} \Label{S3-1}
In this paper, our information is described as an element of a vector space on a finite field
$\mathbb{F}_q$
whose order is a prime power $q = p^r$
because we address linear protocols.
For preliminary for our analysis, we prepare several notations over 
the finite field $\mathbb{F}_q$.
First, we define $\tr z \coloneqq \Tr T_z\in\mathbb{F}_p$ for $z\in\mathbb{F}_q$, where $T_z\in\mathbb{F}_p^{r\times r}$ denotes the matrix representation of the linear map $y\in\mathbb{F}_q \mapsto zy \in\mathbb{F}_q$ 
by identifying the finite field $\mathbb{F}_q$ with the vector space 
$\mathbb{F}_p^{r}$.

\red{\begin{exam}
Consider the algebraic extension $\in\mathbb{F}_q$ of $\in\mathbb{F}_p$
with an irreducible polynomial 
$f(x)=-x^r+a_{r-1} x^{r-1}+\cdots + a_1 x +a_0 $, i.e., 
$q=p^r$.
The finite field $\mathbb{F}_q$ is written as a vector space $\mathbb{F}_p^r$
with basis $1, x, \ldots, x^{r-1}$.
Then, for $\alpha \in \{0,\ldots, r-1\}$ we have  
\begin{align}
x^\alpha x
=
\left\{
\begin{array}{ll}
 x^{\alpha+1} & \hbox{ when } \alpha \le r-2 \\
a_{r-1} x^{r-1}+\cdots + a_1 x +a_0  & \hbox{ when } \alpha= r-1 .
\end{array}
\right.
\end{align}
Therefore, $\Tr T_x= a_{r-1}$, i.e., $\tr x=a_{r-1} $.
\end{exam}}

A linear map from a vector space $\FF_q^{\sy}$ to 
$\FF_q^{{\bar{\snn}}}$ is written as a ${\bar{\snn}} \times \sy$ matrix $G$.
We say that $(G,F)$ is a ${\bar{\snn}} \times (\sy+\sx)$ matrix
when $G$ is an ${\bar{\snn}} \times \sy$ matrix and $F$ is an ${\bar{\snn}} \times \sx$ matrix.
The column vectors of $G$ are written as $g^1, \ldots, g^{\sy} 
\in \FF_q^{\red{\bar{\snn}}}$.
%\in \FF_q^{\sx}$. 
The image $\im G$ of $G$ is given as the vector space spanned by
$g^1, \ldots, g^{\sy} $.
Once the linear map $G$ is given,
we have an subspace $\im G \subset \FF_q^{{\bar{\snn}}}$.
When we identify vectors $h^1,h^2 \in \FF_q^{{\bar{\snn}}}$ satisfying $h^1- h^2 \in \im G$,
we define the quotient vector space $ \FF_q^{{\bar{\snn}}}/ \im G$.
Considering this identification, we define the natural map from 
 $ \FF_q^{{\bar{\snn}}}$ to  $ \FF_q^{{\bar{\snn}}}/ \im G$.
 This map is written as $\pi[\im G]$.
 
For a given positive integer ${\bar{\snn}}$, we denote the set 
$\{1, \ldots, {\bar{\snn}}\}$ by $[{\bar{\snn}}]$. 
For a subset $\red{\cC}$ of $[{\bar{\snn}}] $, we define the linear map $P_{\red{\cC}}$,
from $\FF_q^{{\bar{\snn}}}$ to $\FF_q^{|\red{\cC}|}$ as follows.
Given a vector $g=(g_1, \ldots, g_{{\bar{\snn}}})^T\in \FF_q^{{\bar{\snn}}}$,
the vector $P_{\cA} g\in \FF_q^{|\red{\cC}|}$ is defined as
$P_{\cC}g=(g_s)_{s \in \red{\cC}}\in \FF_q^{|\red{\cC}|}$.

Next, we consider the case with ${\bar{\snn}}=2\snn$.
For $x, y \in \mathbb{F}_q^{\snn}$,
	we denote $\langle x, y \rangle \coloneqq \tr \sum_{i=1}^{\snn} 
	x_iy_i\in\mathbb{F}_p$.
\if0
	and define a skew-symmetric matrix $J$ on $\mathbb{F}_q^{{\bar{\snn}}}$ by 
\begin{align*}
J = \begin{pmatrix}
    0 & -I_n \\ I_n & 0
    \end{pmatrix}.
\end{align*}
\fi
Then, we define the symplectic inner product
$( (x,y),(x',y') ):=  \langle x, y' \rangle-\langle x', y \rangle \in \FF_p$.
We say that
a vector $ g \in \FF_q^{{\bar{\snn}}}$
is orthogonal to another vector $ h \in \FF_q^{{\bar{\snn}}}$
in the sense of the symplectic inner product
when $(g,h)=0$.
We say that a matrix $G=(g^1, \ldots, g^{\sy}) \in \FF_q^{{\bar{\snn}} \times \sy}$ is self-column-orthogonal
when vectors $g^1, \ldots, g^{\sy}$ are orthogonal to each other in the sense of the symplectic inner product.
In addition, 
we say that an ${\bar{\snn}} \times \sx$ matrix $F$ is column-orthogonal to an ${\bar{\snn}} \times \sy$ matrix $G$
when vectors $f^1, \ldots, f^{\sx}$ are orthogonal to 
vectors $g^1, \ldots, g^{\sy}$ in the sense of the symplectic inner product.

\if0
In addition, 
we say that a matrix $G=(g^1, \ldots, g^{2\sdd}) \in \FF_q^{{\bar{\snn}} \times 2\sdd}$
has {\em symplectic form} when 
%the following conditions hold.
the relations $(g^i,g^j)=0$, $(g^{i'},g^{j'})=0$,
$(g^{i},g^{j'})=\delta_{i,j'}$ hold for $i,j=1, \ldots, \sdd$.
\fi 
 
\if0
Further, an ${\bar{\snn}}\times (\sy+\sx)$ matrix $(G,F)$ is called CQ-convertible
when the matrix $G$ has semi-symplectic form with $\snn,\sy-\snn$.

%so that $G=(g^1,\ldots, g^{\sy+\sx})$.

An ${\bar{\snn}}\times (\sy+\sx)$ matrix $(G,F)$ is called QQ-convertible 
when the following conditions hold.
%We denote the column vectors $g_{j}$ so that $G=(g^1,\ldots, g^{\sy+\sx})$.
The integers ${\bar{\snn}}$ and $\sx$ are even numbers $2 \snn$ and $2\sdd$, respectively. 
The relation ${\bar{\snn}} \ge \sy+\sx$ holds,
the $\sx$ column vectors $f^{1}, \ldots, f^{\sx}$ have a symplectic structure, and
the $\sy$ column vectors $g^{1}, \ldots, g^{\sy}$ have a semi-symplectic structure
with $\sy_1,\sy_2$.
The relation $\snn=\sbb+\sy_1$ holds.
\fi

\if0
There exist $ {\bar{\snn}} -(\sy+\sx)$ vectors $h^1, \ldots,h^{{\bar{\snn}} -(\sy+\sx)}$
such that $( f^1, \ldots, f^{\sdd},g^{1}, \ldots, g^{\sy_1}, h^{1},\ldots, h^{\snn-\sdd-\sy_1},
f^{\sdd+1}, \ldots, f^{\sx},$ $g^{\sy_1+1}, \ldots, g^{\sy},
%h^{\smm-\sz-\sy_1+1},\ldots, h^{{\bar{\snn}}-(\sy+\sx)}
)$ is a symplectic matrix.
\fi

\subsection{Fundamentals of Quantum Information Theory} \Label{sec:quantum-basic}
%  I recommend having a brief introduction on their quantum notation (how they indicate Hilbert spaces, density operators, etc). 
% In general it would be useful to indicate on what system each density operator is acting (e.g., using ρA for a density operator on A).
% Especially in their converse argument, they do not need a new notation for partial traces.

In this subsection, we briefly introduces the fundamentals of quantum information theory.
More detailed introduction can be found at \cite{NC00, Hay17}.
A quantum system is a Hilbert space $\cH$.
	Throughout this paper, we only consider finite dimensional Hilbert spaces.
A quantum state is defined by a {\em density matrix}, which is a Hermitian matrix $\rho$ on $\cH$ such that 
\begin{align}
\rho \geq 0, \quad \Tr \rho  = 1.
\end{align}
The set of states on $\cH$ is written as $\mathcal{S}(\cH)$.
A state $\rho$ is called a {\em pure state} if $\rank \rho = 1$, which can also be described by a unit vector of $\cH$.
If a state $\rho$ is not a pure state, it is called a {\em mixed state}.
The state $\rho_{\mathrm{mix}} \coloneqq  I_{\cH} / \dim \cH$ is called the completely mixed state.
% 	and any pure state is written as a unit vector of $\cH$.
The composite system of two quantum systems $\cA$ and $\cB$ is given as the tensor product of the systems $\cA\otimes \cB$.
%If there is no confusion, we simply denoted $\cA\cB \coloneqq \cA\otimes \cB$.
For a state $\rho\in \mathcal{S}(\cA\otimes \cB)$, 
	the {\em reduced state} on $\cA$ is written as 
	\begin{align}
	\rho_{\cA} = \Tr_{\cB} \rho,
	\end{align}
where $\Tr_{\cB}$ is the partial trace with respect to the system $\cB$.

A state $\rho \in \mathcal{S}(\cA\otimes \cB)$ is called a {\em separable state} if 
	$\rho$ is written as 
	\begin{align}
	\rho = \sum_i p_i \rho_{\cA,i} \otimes \rho_{\cB,i},
	\end{align}
	for some distribution $p = \{p_i\}$ and states $\rho_{\cA,i} \in \mathcal{S}(\cA)$, $\rho_{\cB,i}\in \mathcal{S}(\cB)$.
A state $\rho \in \mathcal{S}(\cA\otimes \cB)$ is called an {\em entangled state} if it is not separable.

A quantum measurement is defined by a {\em positive-operator valued measure (POVM)}, which is the set of Hermitian matrices $\{\Pi_{x}\}_{x\in\mathcal{X}}$ on $\cH$ such that 
\begin{align}
\Pi_x \geq 0, \quad \sum_{x \in \mathcal{X}}  \Pi_x = I_{\cH}.
\end{align}
When the elements of POVM are orthogonal projections, i.e., $\Pi_x^2 = \Pi_x$ and $\Pi_x^\dagger = \Pi_x$,
	we call the POVM a {\em projection-valued measure (PVM)}.
A quantum operation is described by a {\em trace-preserving completely positive (TP-CP) map} $\kappa$, which is a linear map such that 
	\begin{align}
	\Tr \kappa(\rho) &= 1 \quad \forall \rho \in \mathcal{S}(\cH), \\
	\kappa \otimes \iota_{\mathbb{C}^{d}} (\rho) &\geq 0 \quad \forall \rho \in \mathcal{S}(\cH\otimes \mathbb{C}^d), \ \forall d \geq 1,
	\end{align}
	where $\iota_{\mathbb{C}^{d}}$ is the identity map on $\mathbb{C}^d$.
We often omit the identity map $\iota_{\mathbb{C}^{d}}$.
An example of TP-CP maps is the unitary map, which is defined by $\kappa_U (\rho) = U \rho U^*$ for a unitary matrix $U$. 
% If a unitary map $\kappa_U$ is applied to the pure state $|\psi\rangle\langle \psi|$,

\subsection{Stabilizer formalism over finite fields} \Label{sec:stabilizer}

In this subsection, we introduce the stabilizer formalism for finite fields.
Stabilizer formalism gives an algebraic structure for quantum information processing.
We use this formalism for the construction of the QPIR protocol. 
Stabilizer formalism is often used for quantum error-correction.
%At the end of this subsection (Remark~\ref{rem:qstabs}),
%	we give a brief review of quantum stabilizer error-correcting codes with the notation introduced in this subsection.
More detailed introduction of the stabilizer formalism can be found at \cite{CRSS98,AK01, KKKS06, Haya2}.
% 	for binary case, 
% 	\cite{} over rings, 
% 	and \cite{} over the finite field.
% % We give the 

In this paper, we denote 
the $q$-dimensional Hilbert space with a basis $\{ |j\rangle \mid  j\in \mathbb{F}_q \}$
by $\cH$.
%We define $\tr x \coloneqq \Tr T_x\in\mathbb{F}_p$ for $x\in\mathbb{F}_q$, where $T_x\in\mathbb{F}_p^{r\times r}$ denotes the matrix representation of the linear map $y\in\mathbb{F}_q \mapsto xy \in\mathbb{F}_q$ by identifying the finite field $\mathbb{F}_q$ with the vector space $\mathbb{F}_p^{r}$.
For $a,b\in\mathbb{F}_q$, we define unitary matrices on $\cH$ 
\begin{align}
\mathsf{X}(a) &\coloneqq \sum_{j\in\mathbb{F}_q} |j+a\rangle \langle j |, \quad 
\mathsf{Z}(b) \coloneqq \sum_{j\in\mathbb{F}_q} \omega^{\tr bj} |j\rangle \langle j |,\\
\mathsf{W}(a,b) &\coloneqq \mathsf{X}(a)\mathsf{Z}(b),
\end{align}
where $\omega \coloneqq \exp({2\pi i/p})$.
For ${a} = (a_1,\ldots, a_{\snn}),~
b = (b_1,\ldots, b_{\snn}) \in\mathbb{F}_q^{\snn}$,
	and ${w} = ({a},{b}) \in \mathbb{F}_q^{2\snn}$, we define a unitary matrix on 
	$\cH^{\otimes {\snn}}$
\begin{align*}
% \mathbf{X(a)} \coloneqq \mathsf{X}(a_1) \otimes \mathsf{X}(a_2)  \otimes \cdots \otimes \mathsf{X}(a_n),\\
% \mathbf{Z(b)} \coloneqq \mathsf{Z}(b_1) \otimes \mathsf{Z}(b_2)  \otimes \cdots \otimes \mathsf{Z}(b_n),\\
    &\mathbf{W}_{[\snn]}(w) = \mathbf{W}_{[\snn]}(a,b) \\
    &\coloneqq 
	\mathsf{X}(a_1) \mathsf{Z}(b_1) \otimes \mathsf{X}(a_2) \mathsf{Z}(b_2)   \otimes \cdots \otimes \mathsf{X}(a_n) \mathsf{Z}(b_n) 
% \mathbf{X(a)}\mathbf{Z(b)}
.
% \mathbf{W(a,b)} \coloneqq \tau^{\langle \mathbf{a}, \mathbf{b}\rangle } \mathbf{X(a)}\mathbf{Z(b)},
\end{align*}
Since $\mathsf{X}(a)\mathsf{Z}(b) = \omega^{-\tr ab} \mathsf{Z}(b)\mathsf{X}(a)$,
	for any $({a}, {b}), ({c},{d}) \in \mathbb{F}_q^{2 \snn}$,
	we have 
\begin{align}
\mathbf{W}_{[\snn]}(a,b)\mathbf{W}_{[\snn]}(c,d) &= 
            \omega^{((a,b),(c,d))} 
            \mathbf{W}_{[\snn]}(c,d) \mathbf{W}_{[\snn]}(a,b) \Label{eq:commutative}.
%\mathbf{W}(a,b)\mathbf{W}(c,d) &= 
%             \omega^{\langle\mathbf{b}, \mathbf{c}\rangle} 
%            \omega^{\langle\mathbf{b}, \mathbf{c} \rangle} \mathbf{W}(a+c, b+d).
  %           \Label{eq:sum}
\end{align}
When the above operator is defined on a subset $\cA \subset [\snn]$,
it is written as $\mathbf{W}_{\cA}(w')$ with $w' \in \FF_q^{2 |\cA|}$.

\subsection{Information quantities}
To discuss information leakage, we often employ the mutual information.
To address the mutual information, we prepare the quantum relative entropy.
For two states $\rho$ and $\sigma$ on the quantum system $\cH$, 
	the quantum relative entropy is defined as
\begin{align*}
	D(\rho \| \sigma) &\coloneqq  
	\begin{cases}
	\Tr \rho (\log \rho - \log \sigma) &\text{if }  \supp(\rho) \subset \supp(\sigma)\\
	\infty 		&\text{otherwise},
	\end{cases}
	%\Label{def:qrel}
\end{align*}
	where $\supp(\rho) \coloneqq \{ |x\rangle \in \cH \mid \rho |x\rangle \neq 0 \}$.

When the state on the joint system of two quantum systems $\cH_A$ and $\cH_B$ 
is given as $\rho_{AB}$, the mutual information
$I(A;B)[\rho_{AB}]$ is defined as
\begin{align}
I(A;B)[\rho_{AB}]:= D( \rho_{AB} \| \rho_A \otimes \rho_B),
\end{align}
where $\rho_A:=\Tr_B \rho_{AB}$ and $\rho_B:=\Tr_A \rho_{AB}$. 
The above quantity will be employed for the discussion on the relation between two types of SS protocols.

\subsection{Access structure}
In this paper, we discuss a general access structure when ${\bar{\snn}}$ players or ${\bar{\snn}}$ servers 
exists.
The family of subsets of $[{\bar{\snn}}]$ is identified with $\red{\{0,1\}^{[{\bar{\snn}}]}}$.
We call $\fA \subset \red{\{0,1\}^{[{\bar{\snn}}]}}$ a {\em monotone increasing collection} 
	if $\cA \in \fA$ implies $\cC\in\fA$ for any $\cA \subset \cC \subset[{\bar{\snn}}]$.
In contrast, we call $\fB \subset \red{\{0,1\}^{[{\bar{\snn}}]}}$ a {\em monotone decreasing collection} 
	if $\cB \in \fB$ implies $\cC\in\fB$ for any $\cC \subset \cB$.
In addition, we call $\fC \subset \red{\{0,1\}^{[{\bar{\snn}}]}}$ a $\sff$-{\em collection} 
	if $\cC \in \red{\fC}$ implies $|\cC|=\sff$.
An {\em access structure} on $[{\bar{\snn}}]$ is defined as a pair 
 $(\ACC, \REJ)$
of monotone increasing and 
decreasing collections $\REC$ and $\COL \subset \red{\{0,1\}^{[{\bar{\snn}}]}}$ such that $\REC \cap \COL = \emptyset$.

When ${\bar{\snn}}$ is an even number $2\snn$,
for a subset $\cA \in 2^{\snn}$, we denote its cardinality by $|\cA|$.
\red{Then, we define the symplectifications of 
a subset $\overline{\cA} \in 2^{{\bar{\snn}}}$ 
and an access structure $\fA$ as follows.}
%define 
%a subset $\overline{\cA} \in 2^{{\bar{\snn}}}$ as follows.
When $\cA=\{ a_1, \ldots, a_l\}\subset [\snn] $, 
$\overline{\cA}$ is defined as $\{ a_1, \ldots, a_l, a_1+\snn, \ldots, a_l+\snn\}\subset 
[{\bar{\snn}}] $. 
Then, given a monotone increasing collection $\fA \subset 2^{[\snn]}$,
we define 
a monotone increasing collection $\overline{\fA} \subset \red{\{0,1\}^{[{\bar{\snn}}]}}$ as follows.
When $\fA=\{ \cA_1, \ldots, \cA_l\}\subset 2^{[\snn]}$, 
$\overline{\fA}$ is defined as $\{ \overline{\cA}_1, \ldots, \overline{\cA}_l\} \subset \red{\{0,1\}^{[{\bar{\snn}}]}}$, 
\red{which is called the symplectification of $\fA$.}
For a monotone decreasing collection $\fB \subset 2^{[\snn]}$,
we define
the monotone decreasing collection $\overline{\fB} \subset 2^{[\snn]}$
in the same way.

In fact, 
general access structure covers the case with several players have access to 
multiple systems as follows.
Assume that there are $\sz$ players and 
$s$-th player has access to the systems labeled by elements in the subset $A_s \subset [{\bar{\snn}}]$.
Here, we assume that $A_{s}\cap A_{s'} =\emptyset$ for $s\neq s'$
and $\cup_s A_s=[{\bar{\snn}}]$.
Any general access structure of this case is written by the pair of $\fA$ and $\fB$
to satisfy the condition that any elements of $\fA$ and $\fB$ are written as a form
$\cup_{s \in C} A_s$ with $C \subset [\sz]$.

\section{Our models} \Label{S4}
\if0
\begin{defi}[Access structure]
Let $\snn$ be a positive integer.
We call $\fA \subset 2^{[\snn]}$ an {\em monotone increasing collection} 
	if $\cA \in \fA$ implies $\cC\in\fA$ for any $\cA \subset \cC \subset[\snn]$.
In contrast, we call $\fB \subset 2^{[\snn]}$ a {\em monotone decreasing collection} 
	if $\cB \in \fB$ implies $\cC\in\fB$ for any $\cC \subset \cB$.
An {\em access structure} on $[\snn]$ is defined as a pair of monotone increasing and decreasing collections $\REC$ and $\COL \subset 2^{[\snn]}$ such that $\REC \cap \COL = \emptyset$.
\end{defi}
\fi

\subsection{Formulation of quantum versions of SS protocols
%CQSS, EASS, and QQSS
}\Label{S4-1}
First, we consider secret sharing, where 
the secret is a classical information and the dealer uses quantum states.
Hence, our problem setting is called classical-quantum secret sharing (CQSS).
In this problem setting, 
shares the secret is a classical information and is
given as a random variable $M \in \cM$ and $\smm \coloneqq |\cM|$.
We denote the quantum system to be sent to the $j$-th player by $\cD_j$,
and define the system $\cD[\cA]$ as $\otimes_{j \in \cA} \cD_j$
for any subset $\cA \subset [\snn]$.
Then, the dealer generates shares as quantum states, and distributes
the shares to $\snn$ players.
Finally, the end-user intends to recover the secret by collecting shares from several players.
As illustrated in Fig. \ref{fig:SS} (a), a CQSS protocol with one dealer, $\snn$ players, and 
one end-user 
is defined as Protocol \ref{Flow1}.

\begin{Protocol}[H]
\caption{CQSS protocol}         
\Label{Flow1}      
\begin{algorithmic}
\STEPONE
\textbf{Share generation}:
Depending on the message $M \in \cM$,
the dealer prepares 
$\snn$ shares as a state $\rho[M]$ on the joint system $\cD_1\otimes \cdots \otimes \cD_{\snn}$,
and sends the $j$-th share system $\cD_j$ to the $j$-th player.
\STEPTWO
\textbf{Decoding}:
For a subset $\cA \in \fA$,
the end-user decodes the message from the received state from players $\cA$
by a decoder, which is defined as a POVM
$\mathsf{Dec}(\cA) \coloneqq \{ {Y}_{\cA}(w) \mid  w\in[\smm] \}$ on 
$\cD[\cA]$.
The end-user outputs the measurement outcome $W$ as the decoded message.
\end{algorithmic}
\end{Protocol}

Then, in Protocol \ref{Flow1},
the share cost and the rate of a CQSS protocol are defined by 
\begin{align}
D \coloneqq \dim \bigotimes_{j=1}^{\snn} \cD_j , \quad
R \coloneqq \frac{\log\vM}{\log D} . %= \frac{\log\vM}{\vN\log \vD} .
\Label{Rate1}
\end{align}

The security of CQSS protocols are defined as follows.

\begin{defi}[$(\ACC,\REJ)$-security] \Label{def:nss}
%Let $\stt,\srr, \snn, \sdd$ be positive integers satisfying $1\leq \stt \leq \srr \leq \snn$ and $0 \leq \sdd \leq (\srr-\stt)/2$.
%$\sff \geq 2, \snn \geq 2, \stt\geq 1$.
%     the deterministic mapping $f$ in the following step.
%We call a protocol $\Phi$.
For an access structure $(\ACC, \REJ)$ on $[\snn]$,
	a CQSS protocol defined as Protocol \ref{Flow1} %the protocol $\PNSS$ 
is called $\ACC$-correct
	if the following correctness condition is satisfied.
It is called $(\ACC,\REJ)$-secure
	if 
	the following both conditions are satisfied.
	%The protocol $\Phi$ is called a $(\stt,\srr,\snn,\sdd)$-NSS protocol with security $(\alpha,\beta)$ if the following conditions are satisfied.
\begin{itemize}[leftmargin=1.5em]
\item \textbf{Correctness}:
The relation 
    \begin{align*}
\Tr \rho[m] (Y_{\cA}(m)\otimes I_{\cA^c} )=1
    \end{align*}
holds for \red{$\cA\in \fA$} and $m \in \cM$.
\item \textbf{Secrecy}:
The state $\Tr_{\cB^c} \rho[m]$ does not depend on $m \in \cM$ 
\red{for $\cB \in \fB$}.
\if0
\begin{align}
I(M ; \cB) =0
\end{align}
holds under the state 
$\sum_{m \in\cM} \frac{1}{\smm}|m\rangle \langle m|\otimes \rho[m]$.
\fi
\end{itemize}
In particular, 
we define the $(\srr,\stt,\snn)$-security as 
the $(\ACC,\REJ)$-security 
under the choice $\ACC = \{ \cA \subset [\snn] \mid |\cA| \geq \srr \}$
and $\REJ = \{ \cB \subset [\snn] \mid |\cB| \leq \stt \}$.
This type concept of the $(\srr,\stt,\snn)$-security will be applied to various type of
$(\ACC,\REJ)$-security in the latter parts.
For example, a $(\srr,\srr-1,\snn)$-secure CQSS protocol is called threshold type CQSS protocol \cite{CGL99},
    and a $(\srr,\stt,\snn)$-secure CQSS protocol with $\srr > \stt$
is called ramp type CQSS protocol \cite{ZM15, Matsumoto18, Matsumoto20}.
\end{defi}

The classical case of ramp type SS protocols
have been actively studied in \cite{BM,IY,OKT,Stinson,Yamamoto}.
%Let $\stt,\srr, \snn, \sdd$ be positive integers satisfying $1\leq \stt \leq \srr \leq \snn$ and $0 \leq \sdd \leq (\srr-\stt)/2$.
%$\sff \geq 2, \snn \geq 2, \stt\geq 1$.
When the secret is given as a quantum state,
we need a different problem setting.
Since this problem uses quantum systems to generate shares,
our protocol is called a quantum-quantum secret sharing (QQSS) protocol.
To consider this problem, we need the quantum system $\cD_{M}$
with $\dim \cD_{M}=\smm $ to describe our secret.
As illustrated in Fig. \ref{fig:SS} (b), a QQSS protocol with one dealer, $\snn$ players, and one end-user 
is defined by Protocol \ref{Flow2-5}.
\begin{Protocol}[H]                  
\caption{QQSS protocol}         
\Label{Flow2-5}      
\begin{algorithmic}
\STEPONE
\textbf{Share generation}:
Applying a TP-CP map $\Gamma$ from $\cD_M $ to $\cD_1\otimes \cdots \otimes \cD_{\snn}$,
the dealer prepares 
$\snn$ shares as the joint system $\cD_1\otimes \cdots \otimes \cD_{\snn}$,
and sends the $j$-th share system $\cD_j$ to the $j$-th player.
\STEPTWO
\textbf{Decoding}:
For a subset $\cA \in \fA$,
the end-user decodes the message from the received state from players $\cA$
by a decoder, which is defined as a TP-CP map
$\mathcal{DEC}[\cA]$ from 
$\cD[\cA]$ to $\cD_M$.
\end{algorithmic}
\end{Protocol}

\begin{defi}[$(\ACC,\REJ)$-QQSS] \Label{def:q-nss}
For an access structure $(\ACC, \REJ)$  on $[\snn]$,
a QQSS protocol defined as Protocol \ref{Flow2-5} %$\PNSS$ 
is called $(\ACC,\REJ)$-secure
	when the following condition holds.
$\ACC$-correctness is defined in the same way as Definition \eqref{def:nss}.\begin{itemize}[leftmargin=1.5em]
\item \textbf{Correctness}:
The relation 
    \begin{align*}
\mathcal{DEC}[\cA]( \Tr_{\cA^c}  \Gamma (\rho) )=\rho
    \end{align*}
holds for any state $\rho$ on $\cD_M$.
\item \textbf{Secrecy}:
The state $\Tr_{\cB^c} \Gamma(\rho)$ does not depend on the state $\rho$ on $\cD_M$.
\end{itemize}
\end{defi}

\subsection{Formulation of quantum versions of SPIR protocol
%Formulation of CQSPIR protocol and EASPIR protocol
}\Label{S4-2}
We consider the following type of SPIR.
The files are given as classical information.
The query is limited to classical information.
The servers can use quantum system and their answers are quantum states.
In addition, the servers are allowed to share prior entangled states.
This problem setting is called classical-quantum SPIR (CQSPIR).
In this setting,
the files $M_1,\ldots, M_{\sff}\in [\smm]$ are uniformly and independently distributed.
Each of $\vN$ servers $\mathtt{serv}_1$, \ldots , $\mathtt{serv}_{\vN}$ contains
	a copy of all files $\vec{M} \coloneqq (M_1,\ldots, M_{\sff})^T$.
The $\vN$ servers are assumed to share an entangled state.
A user chooses a file index $K \in\{1,\ldots, \sff\}$ uniformly and 
independently of $\vec{M}$ 
	in order to retrieve the file $M_K$. 
The requirement is to construct a protocol that allows 
the user to retrieve $M_K$ from the collection of the answers from servers $\cA \in \fA$
without revealing $K$ to the collection of servers $\cB \in\fB$.
%As protocol \ref{}, we give the formal definition of the CQSPIR protocol. 
\red{The user uses a random variables 
$Q^{(K)} =  (Q_1^{(K)},\ldots,Q_{\snn}^{(K)})^T  
\in \mathcal{Q}_1\times\cdots\times \mathcal{Q}_{\snn}$ depending on $K$ as a query,
where $\mathcal{Q}_1,\ldots, \mathcal{Q}_{\vN}$ are finite sets.}
As illustrated in Fig. \ref{fig:SPIR} (a), a CQSPIR protocol is defined as Protocol \ref{Flow3}.

%, where $\cD_1',\ldots, \cD_{\vN}'$ are $\vD'$-dimensional Hilbert spaces.
That is,
given the numbers of servers $\snn$ and files $\sff$,
	a CQSPIR protocol of the file size $\vM$ is described by 
$$%\Psi_{\mathrm{QPIR}}^{(\vM)} \coloneqq 
(\rho_{\mathrm{prev}}, \mathsf{Enc}_{\mathrm{user}}, \mathsf{Enc}_{\mathrm{serv}}, \mathsf{Dec})$$
of the shared entangled state, user encoder, server encoder, and decoder,
where $\mathsf{Enc}_{\mathrm{serv}} \coloneqq (\mathsf{Enc}_{\mathrm{serv}_{1}}, \ldots,\mathsf{Enc}_{\mathrm{serv}_{\vN}})$.

\if0
\begin{figure}
\begin{center}
\subfloat[CQSPIR]{
\centering
  \includegraphics[width=0.9\linewidth]{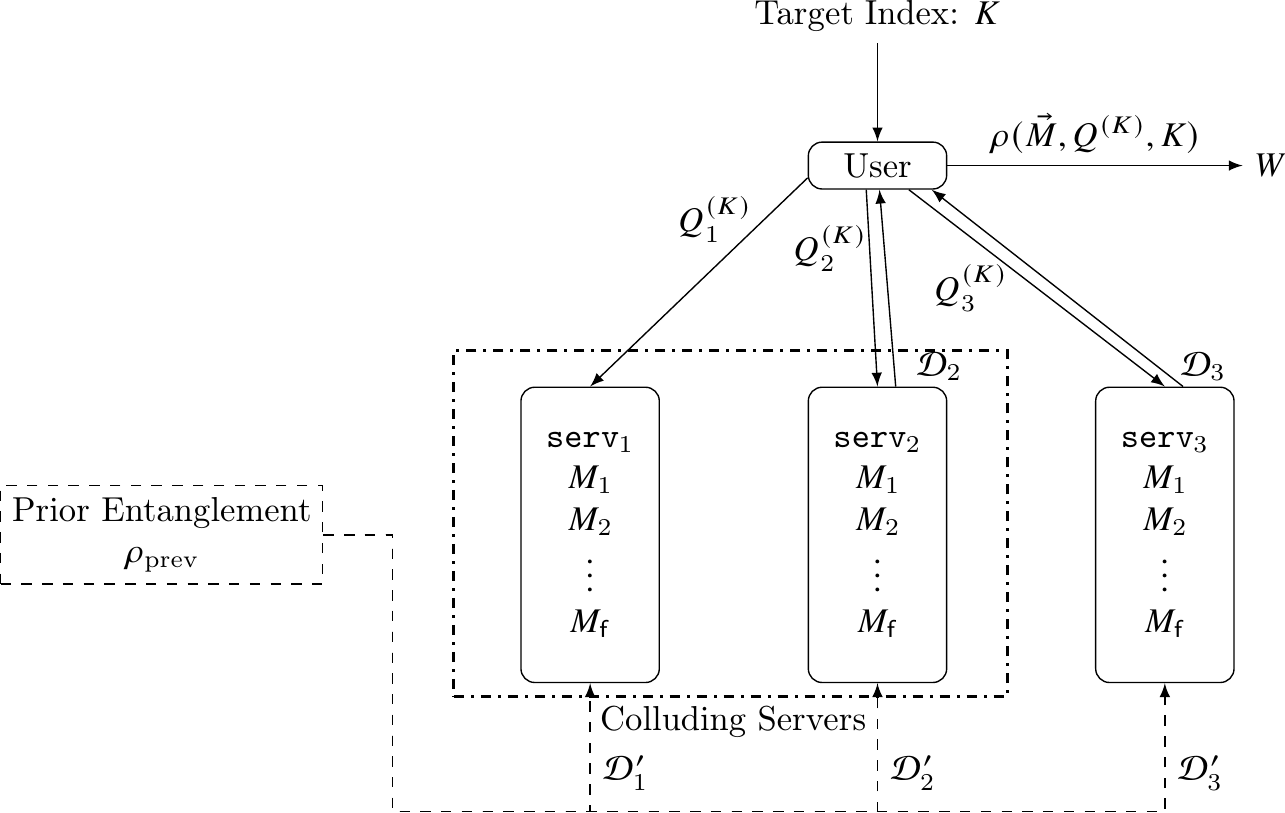}
} \\
\subfloat[EASPIR]{
\centering
  \includegraphics[width=0.9\linewidth]{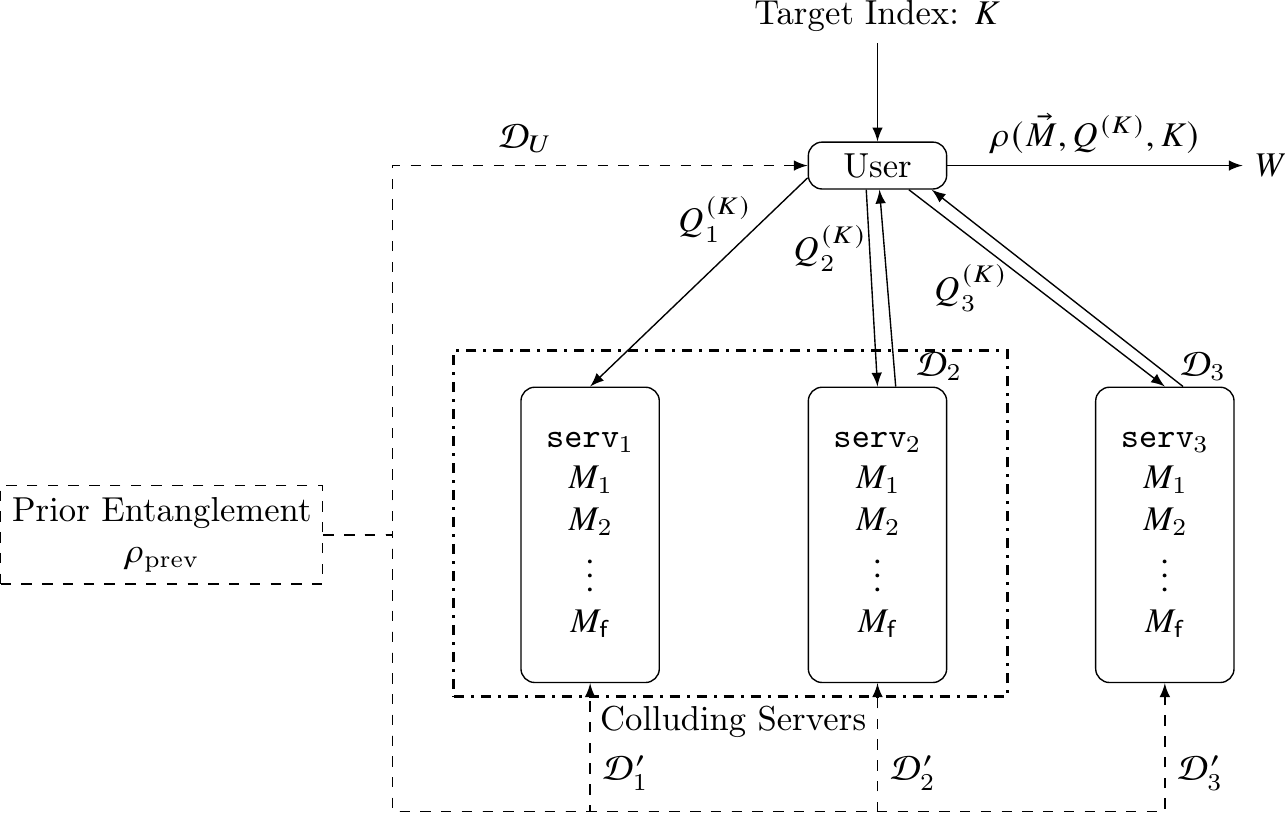}
}
\caption{Quantum SPIR protocols where Sever 1 and Server 2 collude and Server 2 and Server 3 respond to the user.
Fig. (a) and (b) show a CQSPIR protocol and an EASPIR protocol, respectively.
}   \label{fig:SPIR}
\end{center}
\end{figure}
\fi

\begin{figure}
\begin{center}
  \includegraphics[width=0.8\linewidth]{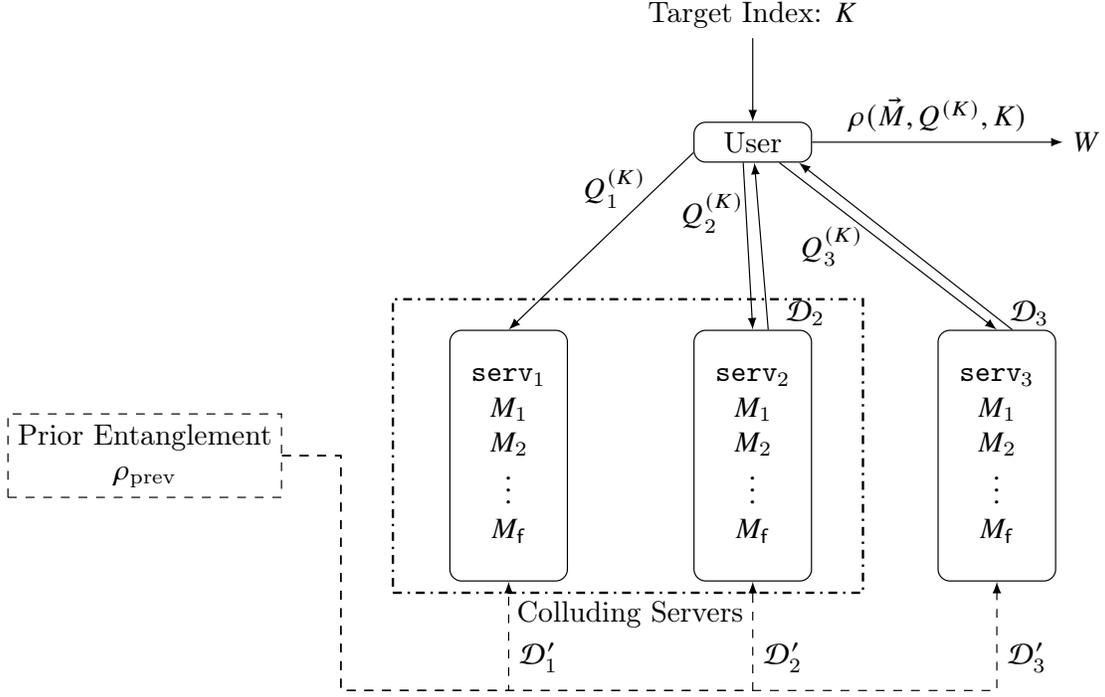}
  \end{center}
\caption{Classical-quantum (CQ) SPIR protocols where Sever 1 and Server 2 collude and Server 2 and Server 3 respond to the user.
}   \label{fig:SPIR}
\end{figure}

\begin{Protocol}[H]                  
\caption{CQSPIR protocol}         
\Label{Flow3}      
\begin{algorithmic}
\STEPONE
\textbf{Preparation}:
The state of the quantum system $\cD_1' \otimes \cdots \otimes \cD_{\snn}'$ is initialized as $\rho_{\mathrm{prev}}$ 
	and is distributed so that the $j$-th server $\mathtt{serv}_j$ contains $\cD_j'$. 
Let $U_S$ be random variable, called the {\em random seed} for servers, 
and the random seed $U_S$ is encoded as 
$\mathsf{Enc}_{\mathrm{SR}} (U_S) = R = (R_1,\ldots, R_{\snn})^T 
\in \cR = \cR_1\times \cdots \times \cR_{\snn}$ 
by the shared randomness encoder $\mathsf{Enc}_{\mathrm{SR}} $.
The randomness $R$ is distributed so that $j$-th server contains $R_j$.
\STEPTWO
\textbf{User's encoding}:
The user 
	randomly encodes the index $K$ to classical queries 
	$Q_1^{(K)},\ldots,Q_{\vN}^{(K)}$, i.e.,
\begin{align*}
\mathsf{Enc}_{\mathrm{user}} (K) = Q^{(K)} =  (Q_1^{(K)},\ldots,Q_{\snn}^{(K)})^T  
\in \mathcal{Q}_1\times\cdots\times \mathcal{Q}_{\snn},
\end{align*}
where $\mathcal{Q}_1,\ldots, \mathcal{Q}_{\vN}$ are finite sets.
Then, the user sends $Q_j$ to the $j$-th server $\mathtt{serv}_j$ ($j=1,\ldots, \snn$).
\STEPTHREE
\textbf{Servers' encoding}:
Let $\cD_1,\ldots, \cD_{\snn}$ be $\vD$-dimensional Hilbert spaces and 
$\cD[\cA]$.
After receiving the query $Q_j^{(K)}$, 
depending on the random variable $R_j$,
the $j$-th server $\mathtt{serv}_j$ constructs a TP-CP map $\Lambda_{j}$ from $\cD_j'$ to $\cD_j$ by the server encoder $\mathsf{Enc}_{\mathrm{serv}_j}$ as
% encodes the query $Q_s$ 
\begin{align*}
\mathsf{Enc}_{\mathrm{serv}_j} (\vec{M},Q_j^{(K)}, R_j) =  \Lambda_j.
\end{align*}
Then, the $j$-th server $\mathtt{serv}_j$ applies $\Lambda_j$, and sends $\cD_j$ to the user.
The state on $\cD_1\otimes \cdots \otimes \cD_{\snn}$ is written as
\begin{align*}
\rho( \vec{M},Q^{(K)},K) 
\coloneqq \Lambda_1\otimes\cdots\otimes \Lambda_{\vN} (\rho_{\mathrm{prev}}) .
\end{align*}
\STEPFOUR
\textbf{Decoding}:
For a subset $\cA \in \fA$,
the user decodes the message from the received state from servers $\cA$
by a decoder, which is defined as a POVM
$\mathsf{Dec}(K,Q^{(K)},\cA) \coloneqq \{ {Y}_{K,Q^{(K)},\cA}(w) \mid  w\in [\smm] \}$ on 
$\cD[\cA]$
depending on the variables $K$ and $Q^{(K)}$. 
The user outputs the measurement outcome $W$ as the retrieval result.
\end{algorithmic}
\end{Protocol}

Then, in Protocol \ref{Flow3},
the upload cost, the download cost, and the rate of a CQPIR protocol $\qprot$ are defined by 
\begin{align}
U(\qprot) &\coloneqq \prod_{j=1}^{\snn} |\mathcal{Q}_j| , \quad
D(\qprot) \coloneqq \dim \bigotimes_{j=1}^{\snn} \cD_j , \\
R(\qprot) &\coloneqq \frac{\log\vM}{\log D(\qprot)} . %= \frac{\log\vM}{\vN\log \vD} .
\Label{Rate2}
\end{align}

The security of CQSPIR protocols are defined as follows.
\begin{defi}\Label{Def5}
For an access structure $(\ACC, \REJ)$  on $[\snn]$,
	a CQSPIR protocol defined as Protocol \ref{Flow3} %$\PNSS$ 
	is called $(\ACC,\REJ)$-secure
	if 
	the following conditions are satisfied.
	%The protocol $\Phi$ is called a $(\stt,\srr,\snn,\sdd)$-NSS protocol with security $(\alpha,\beta)$ if the following conditions are satisfied.
\begin{itemize}
\item \textbf{Correctness}:
For any $\cA \in \fA$, $k \in [\sff]$, and $\vec{m}=(m_1,\ldots, m_{\sff})^T \in [\smm]^\sff$,
the relation 
    \begin{align*}
\Tr \rho( \vec{m},q,k) (Y_{k,q,\cA}(m_k)\otimes I_{\cA^c} )=1
    \end{align*}
holds when $q$ is any possible query $Q^{(K)}$.

\item \textbf{User Secrecy}:
The distribution of $(Q_j^{(k)})_{j \in \cB}$ does not depend on $k \in [\sff]$ for any $\cB \in \fB$.

\item \textbf{Server Secrecy}:
We fix $K=k$, $M_k=m_k$, and $Q^{(K)}=q$. 
Then, the state $\rho( (m_1,\ldots, m_{\sff})^T,q,k)$ does not depend on $(m_j)_{j \neq k} \subset \cM^{\sff-1}$.
\end{itemize}
\end{defi}

\begin{table}[t]
\begin{center}
\caption{Symbols} \label{tab:symbols}
\begin{tabular}{|c|c|c|c|c|}
\hline
Symbol &	CQSPIR	&    CQSS  \\
\hline
\hline
$\snn$	&	Number of servers	&	Number of shares \\
\hline
$\sff$	&	Number of files & 	- 	\\
\hline
$\smm$	&	Size of one file 		& 	Size of secret	\\
\hline
\multirow{2}{*}{$\srr$}	&	Number of  &	Reconstruction 	\\
& responsive servers	&	threshold	\\
\hline
\multirow{2}{*}{$\stt$}	&	Number of 		&	Secrecy \\
&	colluding servers &	 threshold	\\
\hline
\end{tabular}
\end{center}
\end{table}

%\newpage
\section{Classical linear protocols}\Label{S5}
\subsection{Linear CSS}\Label{S5-1}
Section \ref{S4} formulates various quantum protocols.
This section reviews their classical version with the linearity condition.
As the first step, we formulate linear CSS as a special case of CQSS.

\begin{defi}[Linear CSS]
%${\bar{\snn}}$ players.
A CQSS protocol $\PNSS$ is called a {\em linear CSS} protocol with $(G, F)$ if 
the following conditions are satisfied. 
In this definition, the number of shares is written as ${\bar{\snn}}$ instead of $\snn$.
%    the encoder is given as follows. %satisfies the following conditions. %with a position map $\tau: [\sz] \to [{\bar{\snn}}]$.
    \begin{description}
    \item[Vector representation of secret]
    The secret $M$ is written as a vector in $\FF_q^{\sx}$.
	\item[Vector representation of randomness]
	The dealer's private randomness $U_D$ is written as a uniform random vector in $\FF_q^{\sy}$.
	\item[Linearity of share generation]
	The $j$-th share is a random variable $Z_j \in \FF_q$. %The dealer prepares a uniform random number $U_D \in \FF_q^{\sy}$. 
	The encoder is given as a linear map, i.e., an ${\bar{\snn}} \times (\sy+\sx) $ matrix $(G,F)$. That is,
	%$M^1,\ldots, M^{x}, \RandNSS^1, \ldots, \RandNSS^{y}\in\FF_q$.
	$(Z_1,\ldots, Z_{\bar{\snn}})^T = F M+ G U_D$.
	\end{description}

In the notation $(G,F)$, 
the first matrix $G$ identifies the direction of the randomization for secrecy,
and the second matrix $F$ identifies the direction of the message imbedding.
	Due to the above conditions, the rate of this linear CSS is $\sx/{\bar{\snn}}$.
	%and randomness rate of the linear CSS are $\sx/{\bar{\snn}}$ and $\sy/\sx$, respectively.
\end{defi}

As a special case, we define MDS codes as follows.
\begin{defi}[$({\bar{\snn}},\sx)$-MDS code]
We consider the with $\sy=0$, i.e., we have only 
an ${\bar{\snn}} \times \sx$ matrix $(\emptyset,F)$. In this case,
the linear CSS protocol with $(\emptyset,F)$ is called an $({\bar{\snn}},\sx)$-maximum distance separable (MDS) code
when it is $\ACC$-correct with 
$\ACC = \{ \cA \subset [\snn] \mid |\cA| \geq \sx \}$.
\red{In addition,} when 
the linear CSS protocol with $(\emptyset,F)$ is an $({\bar{\snn}},\sx)$-MDS code,
we say that the matrix $F$
is an $({\bar{\snn}},\sx)$-MDS code.
\end{defi}

\red{In the following, we identify the matrix $F$ with
the linear CSS protocol with $(\emptyset,F)$.}

\begin{remark}
Usually, an MDS code is defined as a code whose minimum distance is ${\bar{\snn}}-\sx+1$ \cite{MDS}.
In fact, given a linear subspace $C \subset \FF_q^{{\bar{\snn}}}$,
the relation $\bar{n}- \min_{x \in C\setminus \{0\}} |x|= \sx-1$ holds
if and only if
$\dim P_{\cA}C= |\cA| $ for any $\cA \subset \red{\{0,1\}^{[{\bar{\snn}}]}}$ with $|\cA|=\sx$.
Hence, our definition for an MDS code is equivalent with the above conventional definition of an MDS code.
\end{remark}

\subsection{Multi-target monotone span program (MMSP)}\Label{S5-2}
% We extend the definition of multi-target monotone programs (MMSP) \cite{Beimel11,BI93,Dijk95} 
% 	to the case with authorized and forbidden sets as follows.
A linear CSS is characterized by a multi-target monotone program (MMSP) \cite{Beimel11,BI93,Dijk95,SH2022}. 
Hence, to discuss its correctness and its secercy with a general access structure,
we focus on the following lemma.
To state the following lemma, we focus on the vector space $\Fq^{\sx+\sy}$, and
define the vector 
$\mathbf{e}_i\in\Fq^{\sx+\sy}$ as 
the row vector with $1$ in the $i$-th coordinate and $0$ in the others.
Also, we define the vector space $\cE$ spanned by 
$\{\mathbf{e}_{\sy+1},\ldots, \mathbf{e}_{\sy+\sx}\}$.

\begin{lemm} \label{lemm:mmsp_equiv_condition}
The following conditions are equivalent
for an ${\bar{\snn}}\times (\sy+\sx)$ matrix $(G,F)$ and a subset $\cA\subset [{\bar{\snn}}]$,
\begin{description}
\item[(A1)]
The column vectors $\pi [\im P_{\cA} G] P_{\cA} F $ are linearly independent.
\item[(A2)]
The vector space spanned by the row vectors $(P_{\cA} G,P_{\cA} F)$
contains $\cE$.
\end{description}

Also, the following conditions are equivalent
for a ${\bar{\snn}}\times (\sy+\sx)$ matrix $(G,F)$ and a subset $\cB\subset [{\bar{\snn}}]$.
\begin{description}
\item[(B1)]
Any column vector $P_{\cB} F $ is included in the linear span of 
column vectors $P_{\cB} G $.
\item[(B2)]
The vector space spanned by the row vectors $(P_{\cB} G,P_{\cB} F)$ does not contain
any non-zero element of $\cE$.
\end{description}
\end{lemm}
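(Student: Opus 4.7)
\begin{proofidea}
The plan is to derive both equivalences from the standard duality: for any matrix $M$ over $\FF_q$, a row vector $w$ lies in the row span of $M$ if and only if $w\cdot x =0$ for every $x$ in the (right) kernel of $M$. Throughout, I would write $A:=P_{\cA}G$ and $B:=P_{\cA}F$ for the first equivalence, and $A':=P_{\cB}G$ and $B':=P_{\cB}F$ for the second; these are matrices with $|\cA|$ (resp.\ $|\cB|$) rows.

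For (A1) $\Leftrightarrow$ (A2), I would first unpack (A1): the columns of $\pi[\im A]B$ are linearly independent in $\FF_q^{|\cA|}/\im A$ iff there is no nonzero $y\in \FF_q^{\sx}$ with $By\in\im A$, iff there is no $(x,y)$ with $y\neq 0$ satisfying $Ax+By=0$. Next I would unpack (A2) using the duality above applied to $(A,B)$: a vector $(u,v)\in\FF_q^{\sy+\sx}$ lies in the row span iff $ux+vy=0$ for every $(x,y)\in\ker(A,B)$. Taking $(u,v)=\mathbf{e}_{\sy+i}$ for $i=1,\dots,\sx$ shows that $\cE\subset \rowspan(A,B)$ iff $y=0$ for every $(x,y)\in\ker(A,B)$. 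This matches the reformulation of (A1) verbatim, giving the equivalence.

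For (B1) $\Leftrightarrow$ (B2), the approach is dual and even more direct. A nonzero vector of $\cE$ lies in the row span of $(A',B')$ iff there exists a row combination $\lambda$ with $\lambda^{T}A'=0$ but $\lambda^{T}B'\neq 0$, i.e.\ a left-null vector of $A'$ that fails to annihilate $B'$. The negation (B2) therefore asserts that the left null space of $A'$ is contained in that of $B'$, equivalently $(\im A')^{\perp}\subset(\im B')^{\perp}$, which upon taking orthogonal complements is precisely $\im B'\subset\im A'$, i.e.\ (B1).

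No substantive obstacle is anticipated; the only care required is to keep the splitting $\FF_q^{\sy+\sx}=\FF_q^{\sy}\oplus\FF_q^{\sx}$ consistently bookkept across the two dualizations, so that the ``selector'' role played by $\mathbf{e}_{\sy+1},\dots,\mathbf{e}_{\sy+\sx}$ in (A2) and the ``witness'' role played by a nonzero element of $\cE$ in (B2) are unambiguous.
\end{proofidea}
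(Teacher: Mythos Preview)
Your argument is correct. Both equivalences follow cleanly from the row-span/kernel duality $\rowspan M=(\ker M)^{\perp}$ over $\FF_q$, and your bookkeeping of the splitting $\FF_q^{\sy}\oplus\FF_q^{\sx}$ is sound.

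The paper takes a different, more ``hands-on'' route: for the (A)-equivalence it introduces an intermediate condition (A3) asserting the existence of an invertible $|\cA|\times|\cA|$ matrix $H$ with $H(P_{\cA}G,P_{\cA}F)=\begin{pmatrix}0 & I_{\sx}\\ * & *\end{pmatrix}$, and argues that (A1) and (A2) are each equivalent to (A3); the (B)-equivalence is handled analogously through the negation and a row-reduced form $\begin{pmatrix}0 & a\\ * & *\end{pmatrix}$ with a single nonzero column $a$. In effect the paper proves the lemma by exhibiting a Gaussian-elimination certificate, whereas you invoke duality directly. Your approach is slightly more conceptual and avoids the implicit row-reduction step (and the attendant need to check that enough left-null rows exist to place the $I_{\sx}$ block); the paper's approach has the minor advantage of producing an explicit decoding matrix $H$, which aligns with how (A3) is used informally later when describing decoders. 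Either proof is entirely adequate here.
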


\begin{proof}
The condition (A1) is equivalent to the  following condition (A3).
\begin{description}
\item[(A3)]
There exists an $|\cA|\times |\cA| $ invertible matrix  $H$ such that
$H(P_{\cA} G,P_{\cA} F)=
\left(
\begin{array}{cc}
0 & I_{\sx} \\
*& *
\end{array}
\right)$, where $*$ means an arbitrary form.
\end{description}
Also, the condition (A2) is equivalent to the  following condition (A3).
Hence, we obtain the equivalence between (A1) and (A2).

The condition (B1) does not hold if and only if the following condition (B3*) holds.
\begin{description}
\item[(B3*)]
There exist an $|\cB|\times |\cB| $ invertible matrix  $H$ 
and a column vector $a$
such that
$H(P_{\cB} G,P_{\cB} F)=
\left(
\begin{array}{cc}
0 & a \\
*& *
\end{array}
\right)$.
\end{description}
Also, the condition (B2) does not hold if and only if the following condition (B3*) holds.
Hence, we obtain the equivalence between (B1) and (B2).
\end{proof}

Then, MMSP is defined as follows.

\begin{defi}[{Multi-target monotone span program (MMSP)}] \Label{defi:MMSP}
Given an ${\bar{\snn}}\times (\sy+\sx)$ matrix $(G,F)$, we say the following;
\begin{itemize}
\item (Acceptance) $(G,F)$ accepts $\ACC$ if the condition (A1) or (A2) holds
 for any $\cA\in\fA$.
 \item (Rejection) $(G,F)$ rejects $\REJ$ if the condition (B1) or (B2) holds
for any $\cB\in\fB$.
\end{itemize}
Then, the matrix $(G,F)$ is called $(\ACC,\REJ)$-MMSP if $(G,F)$ accepts $\ACC$ and rejects $\REJ$.
%The MMSP rate $\RMMSP$ is defined as the ratio $\sx/\sz$.
\end{defi}

A MMSP characterizes the security of CSS as follows.
\begin{prop}[\protect{\cite[Corollary 5]{SH2022}}] \Label{prop1}
A linear CSS protocol with $(G,F)$ is 
$(\ACC,\REJ)$-secure if and only if
$(G,F)$ is $(\ACC,\REJ)$-MMSP. 
\end{prop}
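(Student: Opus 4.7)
The plan is to verify the two conditions in Definition \ref{def:nss} separately and show that each translates, under linearity, into one of the two halves of the MMSP condition in Definition \ref{defi:MMSP}. The single input to exploit is that the share vector assigned to players in a subset $\cC\subset [{\bar{\snn}}]$ has the explicit form
\begin{align*}
P_{\cC}(FM + G U_D) \;=\; P_{\cC}F\cdot M + P_{\cC}G\cdot U_D,
\end{align*}
where $U_D$ is uniform on $\FF_q^{\sy}$. Since all randomness is classical and uniform, the CQSS ``state'' $\rho[M]$ on the shares is just a probability distribution, which keeps the reasoning purely linear-algebraic.

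First I would treat correctness. For $\cA\in\fA$, a (classical) decoder recovers $M$ with certainty from $P_{\cA}F\cdot M + P_{\cA}G\cdot U_D$ exactly when different messages produce disjoint supports; because $U_D$ is uniform over $\FF_q^{\sy}$, the support associated with $M$ is the coset $P_{\cA}F\cdot M + \im P_{\cA}G$. Hence $\ACC$-correctness is equivalent to the injectivity of the map $M\mapsto \pi[\im P_{\cA}G]\,P_{\cA}F\cdot M$ for every $\cA\in\fA$, which is precisely condition (A1) of Lemma~\ref{lemm:mmsp_equiv_condition}. By that lemma, (A1) is equivalent to (A2), so acceptance of $\fA$ by $(G,F)$ follows.

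Next I would treat secrecy. Fix $\cB\in\fB$. The distribution of shares observed by the colluders is the uniform distribution on the coset $P_{\cB}F\cdot M + \im P_{\cB}G$ of $\FF_q^{|\cB|}$. This distribution is identical for every choice of $M\in\FF_q^{\sx}$ if and only if the coset itself does not depend on $M$, i.e.\ $P_{\cB}F\cdot M \in \im P_{\cB}G$ for all $M$. This is condition (B1), which by Lemma~\ref{lemm:mmsp_equiv_condition} is the same as (B2). Therefore $(\ACC,\REJ)$-secrecy is equivalent to rejection of $\fB$ by $(G,F)$.

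Combining the two equivalences with Definition~\ref{defi:MMSP} gives the claim. The only potentially subtle point is the correctness direction: one must observe that the existence of a POVM (here, a deterministic classical decoder) achieving perfect recovery is equivalent to distinguishability of the supports, and that the supports are honest cosets because $U_D$ is uniform; this is the step that cleanly converts a decoding statement into the linear-independence condition (A1), and is the only place where the uniformity of $U_D$ is used.
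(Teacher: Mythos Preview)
Your proposal is correct and follows essentially the same approach as the paper: both arguments reduce correctness for $\cA\in\fA$ to the injectivity of $M\mapsto \pi[\im P_{\cA}G]\,P_{\cA}F\,M$ (condition (A1)) and secrecy for $\cB\in\fB$ to $P_{\cB}F\,M\in\im P_{\cB}G$ for all $M$ (condition (B1)). Your version is simply more explicit about why uniformity of $U_D$ makes the share distribution uniform on a coset and why perfect recovery forces disjoint supports, points the paper leaves implicit.
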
 

\red{Therefore, 
a linear CSS protocol completely characterized by an MMSP.
That is, to consider the security of a given linear CSS protocol with $(G,F)$,
it is sufficient to consider its corresponding MMSP defined by $(G,F)$.}

\begin{proof}
For a linear CSS protocol with $(G,F)$,
the decodable information is given as $ (\im P_{\cA}F+ \im P_{\cA}G)/ \im P_{\cA}G$.
Thus, if and only if the map $x \in \FF_q^{\sx} \mapsto \pi[\im P_{\cA}G] (P_{\cA} F x)
\in (\im P_{\cA}F+ \im P_{\cA}G)/ \im P_{\cA}G$ is injetive, the correctness holds.
Also, if and only if \red{$P_{\cB} F x  \in \im P_{\cB}G$} for $\in \FF_q^{\sx}$, 
the secrecy holds.
Therefore,
the acceptance condition for MMSP guarantees correctness,
and 
the rejection condition for MMSP guarantees secrecy.
Hence, we have the following proposition.
\end{proof}

\begin{remark}
Our definition of MMSP is the same as the definition in \cite{SH2022}
while the paper \cite{SH2022} uses the conditions (A2) and (B2).
Hence, Proposition \ref{prop1} was shown in \cite{SH2022} by using the conditions (A2) and (B2).
This paper mainly uses the conditions (A1) and (B1) while 
other preceding studies also use the conditions (A2) and (B2) as well as the reference \cite{SH2022}.
The definition in \cite{SH2022} generalized the definition in \cite{Beimel11,BI93,Dijk95}.
The MMSP defined in \cite{Beimel11,BI93,Dijk95} corresponds to the definition in \cite{SH2022}
with 
    $\ACC \cup \REJ = \red{\{0,1\}^{[{\bar{\snn}}]}}$ and $\ACC \cap \REJ = \emptyset$, i.e., 
	every subset of $[{\bar{\snn}}]$ is either authorized or forbidden.
Our definition of MMSP also generalizes the monotone span programs \cite{KW93},
    which corresponds to the case $\sx = 1$ and $\ACC\cup\REJ = \red{\{0,1\}^{[{\bar{\snn}}]}}$ for our MMSP definition.
The papers \cite{Brickell89, KW93,Beimel_thesis} proved 
	the equivalence of linear CSS protocols with complete security and monotone span programs.
\end{remark}

As special cases, we define $(\ACC,\REJ)$-MMSPs with thresholds as follows \cite{SH2022}.

\begin{defi}[$(\srr,\stt,{\bar{\snn}})$-MMSP]\Label{LXP}
When $\ACC = \{ \cA \subset [{\bar{\snn}}] \mid |\cA| \geq \srr \}$
	and $\REJ = \{ \cB \subset [{\bar{\snn}}] \mid |\cB| \leq \stt \}$,
	an $(\ACC,\REJ)$-MMSP are called an $(\srr,\stt,{\bar{\snn}})$-MMSP.
\end{defi}

% \begin{remark}[$(\srr,\stt,{\bar{\snn}})$-MMSPs and maximum distance separable (MDS) codes]
An $(\srr,\stt,{\bar{\snn}})$-MMSP is related with an MDS code.
A matrix $A\in\FF_q^{{\bar{\snn}}\times\sff}$ is an $({\bar{\snn}},\sff)$-MDS code
	if and only if any $\sff$ rows of $A$ are linearly independent because 
	the linear independence guarantees the correctness.
\if0
	Generally, given a $\sff$-collection $\fC$, the image $\im A$ of a ${\bar{\snn}}\times\sff$ matrix $A$
		is called an $\fC$-MDS code
	if the vectors $(a_i^1)_{i \in \cC}, \ldots, (a_i^\sff)_{i \in \cC}$ are linearly independent for any subset $\cC \in \fC$. When $\fC = \{ \cC \subset [{\bar{\snn}}] \mid |\cC| = \sff \}$, 
	an $\fC$-MDS code is an $({\bar{\snn}},\sff)$-MDS code.
\fi
% and the matrix $\sA$ is called the generator matrix of the MDS code.
\if0
Under the choice of $\fB$ given in Definition \ref{LXP}, 
the rejection condition holds if and only if
$G$ is the generator matrix of an $({\bar{\snn}},\stt)$-MDS code.
Under the choice of $\fA$ given in Definition \ref{LXP} and the above condition, 
the acceptance condition holds if and only if
$(G,F)$ is the generator matrix of an $({\bar{\snn}},\srr)$-MDS code.
Summarizing this discussion, we have 
the following proposition \cite{SH2022}.
\fi
Then, we have the following proposition \cite{SH2022}.

\begin{prop} \Label{theo:MDS-MMSP}
An ${\bar{\snn}}\times (\stt+ (\srr-\stt))$ matrix
$(G,F)$ is an $(\srr,\stt,{\bar{\snn}})$-MMSP if and only if
the matrix $(G,F)$ is an $({\bar{\snn}},\srr)$-MDS code, and the matrix $G$ is 
an $({\bar{\snn}},\stt)$-MDS code.
\end{prop}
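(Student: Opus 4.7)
The proof will verify both implications separately.

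For $(\Leftarrow)$, I will directly check conditions (A1) and (B1) of Lemma~\ref{lemm:mmsp_equiv_condition} using the invertibility of certain submatrices provided by the MDS hypotheses. For rejection at $\cB$ with $|\cB|=\stt$, since $P_\cB G$ is $\stt\times\stt$ invertible by $G$ being $({\bar{\snn}},\stt)$-MDS, $\im P_\cB G=\FF_q^{\stt}$ trivially contains $\im P_\cB F$, giving (B1). For acceptance at $\cA$ with $|\cA|=\srr$, since $P_\cA(G,F)$ is $\srr\times\srr$ invertible by $(G,F)$ being $({\bar{\snn}},\srr)$-MDS, the relation $P_\cA F x=P_\cA G y$ forces $P_\cA(G,F)(y,-x)^T=0$, hence $(y,-x)=0$ and in particular $x=0$; this establishes (A1).

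For $(\Rightarrow)$, I plan to first show that $G$ is $({\bar{\snn}},\stt)$-MDS and then deduce the MDS property of $(G,F)$ from it. The second step is immediate once the first is known: given $G$ is $({\bar{\snn}},\stt)$-MDS, for any $\cA$ with $|\cA|=\srr$ we have $\rank P_\cA G=\stt$ (any $\srr\geq\stt$ rows of an MDS $G$ contain $\stt$ linearly independent ones); combining with (A1), which under this assumption gives $\rank P_\cA(G,F)=\rank P_\cA G+(\srr-\stt)=\srr$, we conclude that $P_\cA(G,F)$ is invertible, verifying that $(G,F)$ is $({\bar{\snn}},\srr)$-MDS.

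To show $G$ is $({\bar{\snn}},\stt)$-MDS, I argue by contradiction: suppose some $\cB$ with $|\cB|=\stt$ has $\rank P_\cB G<\stt$. Then there is a non-zero $c\in\FF_q^{\bar{\snn}}$ supported on $\cB$ with $c^T G=0$. Rejection (B1) at $\cB$ gives $\im P_\cB F\subseteq\im P_\cB G$, and since $c$ annihilates $\im P_\cB G$ it also annihilates $\im P_\cB F$; hence $c^T F=0$ and thus $c^T(G,F)=0$. Extending $\cB$ to an $\srr$-subset $\cA\supseteq\cB$, the row relation $c^T P_\cA(G,F)=0$ forces $\rank P_\cA(G,F)\leq\srr-1$, while (A1) at $\cA$ demands $\rank P_\cA(G,F)=\rank P_\cA G+(\srr-\stt)$. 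Choosing $\cA$ so that $\rank P_\cA G=\stt$ makes the right-hand side equal to $\srr$, producing the desired contradiction.

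The main obstacle will be justifying the existence of such an $\cA$, i.e.\ an $\srr$-subset containing $\cB$ whose rows of $G$ span $\FF_q^{\stt}$. I plan to do this by separately disposing of the degenerate case $\rank G<\stt$---in which rejection (B1) at every $\stt$-subset combined with a decomposition of the left kernel of $G$ into low-weight elements forces $\im F\subseteq\im G$ globally, so that $\rank P_\cA(G,F)=\rank P_\cA G<\rank P_\cA G+(\srr-\stt)$ contradicts (A1) whenever $\srr>\stt$---and, in the generic case $\rank G=\stt$, by selecting $\stt$ linearly independent rows of $G$ with maximum overlap with $\cB$ and padding to an $\srr$-subset, which is always possible since $\srr\leq{\bar{\snn}}$.
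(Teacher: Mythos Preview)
The paper gives no proof here, only a citation to \cite{SH2022}, so I assess your argument on its own merits. Your $(\Leftarrow)$ direction and the degenerate case $\rank G<\stt$ of the $(\Rightarrow)$ direction are both correct. However, there is a genuine gap in your generic case $\rank G=\stt$: you need an $\srr$-subset $\cA\supseteq\cB$ with $\rank P_\cA G=\stt$, and your construction forces $\cA$ to contain both $\cB$ and some set of $\stt$ linearly independent rows of $G$; even with maximal overlap the union has size $2\stt-r_0$ where $r_0:=\rank P_\cB G$, and this can exceed $\srr$. Concretely, with $\bar{\snn}=5$, $\stt=3$, $\srr=4$, take $G$ with its first three rows equal (and nonzero) and rows $4,5$ completing a basis of the row space; then $\rank G=3$, yet for $\cB=\{1,2,3\}$ one has $r_0=1$ and every $4$-subset $\cA\supseteq\cB$ satisfies $\rank P_\cA G=2<3$. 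The justification ``$\srr\leq\bar{\snn}$'' does not address this.

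The repair is short and in fact collapses your case split. Since (B1) gives $\rank P_\cB(G,F)=r_0$, the left kernel of $P_\cB(G,F)$ has dimension $\stt-r_0$, whence $\rank P_\cA(G,F)\leq \srr-\stt+r_0$ for \emph{every} $\srr$-subset $\cA\supseteq\cB$; combined with the (A1) identity $\rank P_\cA(G,F)=\rank P_\cA G+(\srr-\stt)$ this forces $\rank P_\cA G\leq r_0$, hence $\rank P_\cA G=r_0$ for all such $\cA$. Varying the extra element in $\cA\setminus\cB$ then shows every row of $G$ lies in $\rowspan(P_\cB G)$, so $\rank G=r_0<\stt$. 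Thus the existence of a bad $\cB$ always forces the degenerate case, which you already handle correctly, and no separate generic-case argument is needed.
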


\subsection{Linear CSPIR}\Label{S5-3}
Next, we formulate linear CSPIR as a special case of CQSPIR as follows.

\begin{defi}[{Linear CSPIR}] \Label{defi:lin-spir}
A protocol %$\PSPIR$ 
is called a {\em linear CSPIR} protocol if 
    the following conditions are satisfied. 
In this definition, the number of servers is written as ${\bar{\snn}}$ instead of $\snn$.
    % with a map $\tau: [\sz] \to [{\bar{\snn}}]$, called the {\em position map}.
\begin{description}
% 1. [Vector realization of files M, randomness seed T']
% 2. [Linearity of server randomness encoder h] There exists S.
% 3. [Matrix representation of query Q]
% 4. [Linear encoding of servers]
    
    \item[Vector representation of files]
    The files $M_i$ are written as a vector in $\Fq^{\sx}$. 
    The entire file is written by the concatenated vector 
    \red{$\vec{M} = (M_1, \ldots, M_\sff)^T \in \Fq^{\sff \sx }$}.
    \item[Linearity of shared randomness]
    The random seed $U_S$ is written by a uniform random vector in $\Fq^{\sy}$.
    The randomness encoder is written as a matrix 
    $G\in\Fq^{{\bar{\snn}}\times \sy}$ and 
    the shared randomness is written as $R = G U_S \in \Fq^{{\bar{\snn}}}$.
    The randomness of the $j$-th server is written as $R_j = G_j U_S  \in \Fq$, where $G=(G^{(1)}, \ldots, G_{{\bar{\snn}}})^T$, i.e., $G_j$ is a row vector of $G$.  
%    \item[Matrix representation of query]
%    The query $Q$ is written as a matrix in $\Fq^{\sz\times \sff\sx }$ and
%    the query to the $j$-th server is written as $Q_{\tau^{-1}(j)} \in \Fq^{|\tau^{-1}(j)|\times \sff\sx }$.
        \item[Linearity of servers]
The $j$-th server's system $\cD_j$ is classical system, i.e., is given as a random variable $D_j \in \Fq$.    
    The answer of the $j$-th server $D_j$ is written as 
    the sum of 
        the shared randomness $R_j$
        and
        the encoded output of the files $\vec{M}$ by \red{an $\sff \sx $-dimensional random column vector} $ Q_{j}^{(K)} $, which depends on the query, i.e., 
        \begin{align}
        D_j = Q_{j}^{(K)} \vec{M}  + R_j \in \red{\Fq}.
        \Label{eq:linearencc}
        \end{align}
Therefore, we can consider that the query to the $j$-th server is given as
the linear function, a \red{random} matrix, 
\red{$ Q^{(K)} =(Q_{j}^{(K)})_{j=1}^{\bar{\snn}} \in \Fq^{{\bar{\snn}}\times \sff\sx }$.}
\end{description}

The above protocol is called 
the linear CSPIR protocol with $G, Q^{(K)}$.
% while the query $Q$ is constructed from  $K$. 

Due to the above conditions, the PIR rate and the shared randomness rate of a linear SPIR protocol are $\sx/\sz$ and $\sy/\sx$, respectively.
\end{defi}

In the case of linear CSPIR protocols, the server secrecy can be characterized as follows.

\begin{lemm}\Label{L7}
\red{Assume that $ Q^{(k)} =(Q_{j}^{(k)})_{j=1}^{\bar{\snn}}$ is an ${\bar{\snn}} \times \sff \sx $ random matrix for $k=1, \ldots, \sff$.}
A linear CSPIR protocol with $G,Q^{(K)}$ satisfies 
the server secrecy if and only if
any column vector of
$\red{Q^{(k)}_j }\in \FF_q^{{\bar{\snn}} \times \sx}$ belongs to the linear span of column vectors of $G$ for $j\neq k$.
\end{lemm}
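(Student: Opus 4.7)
The plan is to analyze the joint distribution of the servers' answers directly and show that the stated linear-algebraic condition is exactly what is needed for that distribution to be independent of $(M_l)_{l\neq k}$ after conditioning on $K=k$ and $Q^{(K)}=q$. First I would partition the query file-wise as $Q^{(k)}=(Q^{(k)}_1,\ldots,Q^{(k)}_\sff)$ with each $Q^{(k)}_l\in\FF_q^{{\bar{\snn}}\times\sx}$, so that the joint answer vector becomes
\begin{align*}
D=(D_1,\ldots,D_{\bar{\snn}})^T=\sum_{l=1}^{\sff} Q^{(k)}_l M_l + G U_S .
\end{align*}

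Next, I would exploit that $U_S$ is uniform on $\FF_q^{\sy}$: the random vector $GU_S$ then has a fixed distribution whose support equals the column span $\im G$, and two translations $c+GU_S$ and $c'+GU_S$ coincide in distribution if and only if $c-c'\in\im G$. Conditioning on $K=k$, $M_k=m_k$, and $Q^{(K)}=q$, the answer vector is $\bigl(q_k m_k+\sum_{l\neq k} q_l m_l\bigr)+GU_S$, so its conditional distribution depends on the non-retrieved files $(m_l)_{l\neq k}$ only through $\sum_{l\neq k} q_l m_l\bmod \im G$.

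With this reduction in hand, the equivalence becomes routine linear algebra: the conditional distribution is independent of every $(m_l)_{l\neq k}\in(\FF_q^\sx)^{\sff-1}$ if and only if $q_l v\in\im G$ for every $l\neq k$ and every $v\in\FF_q^\sx$, which is precisely the statement that each column of $q_l$ lies in the column span of $G$ for $l\neq k$. Finally, since by Definition \ref{def:nss}-style server secrecy this must hold for every realization $q$ in the support of $Q^{(K)}$, I obtain the asserted condition on $Q^{(k)}_l$ for $l\neq k$.

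The only subtle point I anticipate is being careful that it is the entire conditional distribution of $D$ — not merely its mean or its support — that is captured by translation modulo $\im G$; this is immediate because $GU_S$ has a fixed distribution independent of the message, the index $K$, and the realized query $q$. No further machinery is required beyond this observation and the partitioning step.
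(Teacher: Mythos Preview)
Your proposal is correct and follows essentially the same approach as the paper's proof: both reduce server secrecy to the condition that the coset $\pi[\im G](Q^{(k)}\vec{m})$ is independent of $(m_l)_{l\neq k}$, and then linearize this to the column-membership condition. Your write-up is more explicit about why the distribution of $D$ is determined by the coset (via the translation-invariance observation on $GU_S$), whereas the paper compresses this into the single sentence ``the user cannot distinguish elements in $\im G$,'' but the mathematical content is identical.
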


\begin{proof}
The user cannot distinguish elements in $\im G$.
Hence, the server secrecy holds if and only if the following holds.
Let $k$ be an arbitrary element in $[\sff]$ and $m_k$.
The element $\pi[\im G] Q^{(k)} \vec{m}$ does not depend on $(m_j)_{j\neq k}$.
Since the above condition is equivalent to the condition stated in this lemma. 
Hence, the desired statement is obtained.
\end{proof}

\if0
In the case of linear CSPIR protocols, 
the conversion given as Protocol \ref{Flow5} 
characterized as follows.

\begin{lemm}\Label{L72}
We apply the conversion given as Protocol \ref{Flow5} to 
a linear CSPIR protocol with $G,Q^{(K)}$.
We choose the ${\bar{\snn}}\times \sx$ matrix $F$ as one possible value of $Q^{(1),1}$.
The converted CQSS is the linear CSS with $(G,F)$.
\end{lemm}
\fi

For the choice of the query $Q^{(k)}$, we consider the following construction
in a similar way to
\cite{SJ17-2, WS17-2, WS18, TGKFHE17, TGKFH19, SH2022}.

\begin{defi}[Standard form] \Label{def:projection1}
Let $F$ be an ${\bar{\snn}} \times \sx$ matrix taking values in $\FF_q$. 
%Let $(G,F)$ be  
A query $Q^{(K)}$ is called the standard form with matrix $F$
when it is given as follows.
The user prepares uniform random variable $U_{Q} \in \FF_q^{\sy \times \sx \sff }$ independently of $K$. Then,
\begin{align}
Q^{(k)}:= F E_k +G U_Q,\Label{CAP1}
\end{align}
where $E_k:=(\delta_{1,k} I_{\sx}, \ldots, \delta_{\sff,k} I_{\sx}) $.
A linear CSPIR protocol with $G,Q^{(K)}$ is called
a standard linear CSPIR protocol with $(G, F)$ when the query $Q^{(K)}$ is the standard form with matrix $F$.
\end{defi}

%\myred{Queries of standard form have been used for PIR/SPIR protocols \cite{SJ17-2, WS17-2, WS18, TGKFHE17, TGKFH19, SH2022}.}

The security of a standard linear CSPIR protocol is characterized as follows.

\begin{prop}\Label{P2}
The standard linear CSPIR protocol with $(G, F)$ is 
$(\ACC,\REJ)$-secure
if and only if
the matrix $(G,F)$ is $(\ACC,\REJ)$-MMSP.
\end{prop}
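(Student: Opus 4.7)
The plan is to unpack the three security conditions in Definition~\ref{Def5} separately for the standard linear CSPIR protocol with $(G,F)$ and show that correctness is equivalent to condition (A1) of Lemma~\ref{lemm:mmsp_equiv_condition} on every $\cA\in\fA$, while user secrecy is equivalent to condition (B1) on every $\cB\in\fB$. Server secrecy, perhaps surprisingly, will turn out to be automatic for the standard form, so it contributes no new condition. Combining via Definition~\ref{defi:MMSP} then yields the claimed equivalence with the $(\fA,\fB)$-MMSP property.

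First, I will address server secrecy. For the standard form, the $j$-th block (columns corresponding to file $j$) of $Q^{(k)}$ equals $F\,\delta_{j,k} I_{\sx} + G U_Q^{(j)}$, where $U_Q^{(j)}$ is the $j$-th block of $U_Q$. For every $j\neq k$ this block reduces to $G U_Q^{(j)}$, whose columns lie in $\im G$ by construction. Thus the hypothesis of Lemma~\ref{L7} is satisfied for every realization of the user's randomness $U_Q$, and server secrecy holds unconditionally on $(G,F)$.

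Next I turn to correctness. Substituting the standard form into the answer rule \eqref{eq:linearencc} gives
\begin{align*}
P_{\cA} D = P_{\cA} F\,M_K + P_{\cA}G\,(U_Q \vec{M}+U_S).
\end{align*}
The user knows $K$, $U_Q$ and $P_{\cA}D$, but $U_S$ is unknown, so the user can only compute $\pi[\im P_{\cA}G]\,P_{\cA}D=\pi[\im P_{\cA}G]\,P_{\cA}F\,M_K$. Hence $M_K$ is recoverable from the responses of servers in $\cA$ if and only if the linear map $x\mapsto \pi[\im P_{\cA}G]\,P_{\cA}F\,x$ is injective. This is exactly condition (A1) of Lemma~\ref{lemm:mmsp_equiv_condition}; so correctness for $\fA$ is equivalent to acceptance of $\fA$ by $(G,F)$.

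For user secrecy, I will analyze the distribution of $P_{\cB}Q^{(K)}=P_{\cB}F E_K + P_{\cB}G U_Q$. Because $U_Q$ is uniform in $\FF_q^{\sy\times\sx\sff}$, the columns of $P_{\cB}G U_Q$ are mutually independent and uniform in $\im P_{\cB}G$, so the distribution of $P_{\cB}Q^{(k)}$ is the product of the cosets of $\im P_{\cB}G$ containing the columns of $P_{\cB}F E_k$. For columns in the $k$-th block the relevant coset is that of a column of $P_{\cB}F$, while for columns in the other blocks it is the zero coset. Requiring all these cosets to be independent of $k$ forces every column of $P_{\cB}F$ to lie in $\im P_{\cB}G$, which is condition (B1); the converse is immediate. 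Thus user secrecy for $\fB$ is equivalent to rejection of $\fB$ by $(G,F)$. Combining with the previous paragraph and Definition~\ref{defi:MMSP} completes the equivalence. The only mildly delicate step is keeping careful track of the block structure of $E_k$ and $U_Q$ when reducing user secrecy to the coset condition; once that is handled, the proof is essentially a bookkeeping exercise on top of Lemmas~\ref{lemm:mmsp_equiv_condition} and~\ref{L7}.
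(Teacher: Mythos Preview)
Your proof is correct and follows essentially the same approach as the paper: server secrecy is disposed of via Lemma~\ref{L7}, correctness is reduced to injectivity of $x\mapsto\pi[\im P_{\cA}G]P_{\cA}Fx$ (condition (A1)), and user secrecy is reduced to the coset condition $\im P_{\cB}F\subset\im P_{\cB}G$ (condition (B1)). You give more explicit detail on the block structure of $E_k$ in the user-secrecy step than the paper does, but the logic is identical.
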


Although Proposition \ref{P2} was shown shown in \cite{SH2022} by using the conditions (A2) and (B2), we show it by using the conditions (A1) and (B1)
because we use the conditions (A1) and (B1) in the latter discussion.

\begin{proof}
The choice \eqref{CAP1} of $Q^{(k)}$ satisfies the condition of Lemma \ref{L7}.
Hence, it is sufficient to discuss the user secrecy and the correctness.
The user secrecy holds for $\cB \in \fB$ if and only if
$P_{\cB} (F+ G U)$ and $P_{\cB} GU$ cannot be distinguished
when $U \in \FF_q^{\sy \times \sx}$ is subject to the uniform distribution.
This condition is equivalent to the rejection condition for $\fB$ and $(G,F)$.
The correctness holds for $\cA \in \fA$ if and only if
the map $m \in \FF_q^{\sx} \mapsto \pi[\im P_{\cA}G] (P_{\cA} F x)
\in  (\im P_{\cA}F+ \im P_{\cA}G)/ \im P_{\cA}G$ is injetive.
This condition is equivalent to the acceptance condition for $\fA$ and $(G,F)$.
Therefore, the desired statement is obtained.
\end{proof}

\section{Linear quantum protocols without preshared entanglement with user}\Label{SS-5}
\subsection{Linear CQSS protocol}\Label{S6-1-1}
We assume that 
an ${\bar{\snn}}\times \snn$ matrix $G^{(1)}$,
an ${\bar{\snn}}\times \sy_2$ matrix $G^{(2)}$, and
an ${\bar{\snn}}\times \sx$ matrix $F$
on the finite field $\FF_q$ 
with ${\bar{\snn}}=2 \snn$ satisfy the following conditions.
All column vectors of $(G^{(1)},G^{(2)},F)$ are linearly independent, and
all column vectors %$g^1, \ldots, g^\snn$
of $G^{(1)}$ are commutative with each other, which are equivalent to the self-column-orthogonal  condition.
Then, we define
a CQSS protocol as follows.
We choose the message set $\cM$ as $\FF_q^{\sx}$.
We choose the Hilbert space ${\cal H}$ as the space spanned by 
$\{ |x\rangle \}_{x \in \FF_q}$.
We define a normalized vector $|\psi[G^{(1)}]\rangle \in {\cal D}_D:= {\cal H}^{\otimes n}$ as
the common eigenvector with eigenvalue $1$ of 
$\mathbf{W}_{[\snn]}(g^1), \ldots, \mathbf{W}_{[\snn]}(g^\snn) $, i.e., 
\begin{align}
\mathbf{W}_{[\snn]}(G^{(1)} y )|\psi[G^{(1)}]\rangle=|\psi[G^{(1)}]\rangle
\hbox{ for } y \in \FF_q^{\snn}.
\end{align}
 Then, we define the linear CQSS protocol with $(G^{(1)},G^{(2)},F)$ 
 as Protocol \ref{protocol1CQ}.
\begin{Protocol}[H]                  
\caption{Linear CQSS protocol with $(G^{(1)},G^{(2)},F)$}         
\Label{protocol1CQ}
\begin{algorithmic}
\STEPONE
\textbf{Preparation}:
We set the initial state $\rho_{D}$ on ${\cal D}_D$
to be $|\psi[G^{(1)}]\rangle$.
\STEPTWO
\textbf{Share generation}:
The dealer prepares a uniform random variable $U_D \in \FF_q^{\sy_2} $.
For $m \in \cM$, the dealer applies 
$\mathbf{W}_{[\snn]}( Fm+G^{(2)}U_D)$ on ${\cal D}_D$. That is,
the encoding operation $\Gamma[m]$ \red{on ${\cal D}_D$} is defined as
\begin{align}
\Gamma[m](\rho):= \sum_{u_D \in \FF_q^{\sy}} 
\frac{1}{q^{\sy_2}}
\mathbf{W}_{[\snn]}( Fm+G^{(2)}u_D) \rho 
\mathbf{W}_{[\snn]}^\dagger( Fm+G^{(2)} u_D).
\end{align}
The shares are given as parts of the state $ \Gamma[m]
(|\psi[G^{(1)}]\rangle \langle \psi[G^{(1)}]|)$.
\STEPTHREE
\textbf{Decoding}:
For a subset $\cA \in \fA$,
the end-user makes measurement on the basis
$\{\mathbf{W}_{\cA}(y) \Tr_{\cA^c}|\psi[G^{(1)}]\rangle \langle \psi[G^{(1)}]|
\mathbf{W}_{\cA}^\dagger(y)
\}_{y \in \FF_q^{2|\cA|}}$.
Based on the obtained outcome, the end-user recovers $m$.
%Same as Protocol \ref{}.
\end{algorithmic}
\end{Protocol}

A usual linear CQSS protocol  does not have randomization 
$U_D \in \FF_q^{\sy_2} $ \cite{HBB99, KKI99, Gottesman00, MS08, KMMP09, KFMS10, Sarvepalli12, MM13,WCY14, Matsumoto17, Matsumoto20}.
That is, $\sy_2=0$ and it does not have the matrix $G^{(2)}$.
Such a protocol is called the randomless linear CQSS protocol with $(G^{(1)},F)$.
In this case, the state with message $M=0$ is determined as
the stabilizer of the group generated by $G^{(1)}$.
That is, any CQSS protocol given as the application of $\mathbf{W}_{[\snn]}$
to the stabilizer state is written as the above way.
Then, we have the following theorem.

\begin{table}[t]
\caption{Comparison for analysis for CQSS protocols}
\Label{hikaku1}
\begin{center}
\begin{tabular}{|c|c|c|c|c|}
\hline
 & general & relation   &  \multirow{2}{*}{ramp}   \\
 & access &to  &    \multirow{2}{*}{scheme}   \\
 & structure &MMSP &       \\
\hline
\cite{WCY14,Sarvepalli12} &Yes& No & No      \\
\hline
\cite{Smith00} &Yes& special cases & No   \\
\hline
\cite{Matsumoto17,Matsumoto20} &No& No & special cases    \\
\hline
\multirow{2}{*}{This paper} & \multirow{2}{*}{Yes} & general case  & general case   \\
 &  & (Theorem \ref{Cor1})  & (Corollary \ref{Coro6CQ}) \\
\hline
\end{tabular}
\end{center}
\end{table}

\begin{theo}\Label{Cor1}
Given a $2\snn \times \snn$ self-column-orthogonal matrix $G^{(1)}$,
a $2\snn\times \sy_2$ matrix $G^{(2)}$,
and
a $2\snn \times \sx$ matrix $F$, 
the following conditions for $G^{(1)},G^{(2)},F$ are equivalent.
\begin{description}
\item[(C1)]
The linear CQSS protocol with $(G^{(1)},G^{(2)},F)$
is $(\ACC,\REJ)$-secure.
\item[(C2)]
The linear CSS protocol with $((G^{(1)},G^{(2)}),F)$ 
is $(\bar{\ACC},\bar{\REJ})$-secure.
\item[(C3)]
The matrix $((G^{(1)},G^{(2)}),F)$ is an $(\bar{\ACC},\bar{\REJ})$-MMSP.
\end{description}
\end{theo}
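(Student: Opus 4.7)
The equivalence (C2)~$\Leftrightarrow$~(C3) is immediate from Proposition~\ref{prop1} applied to the matrix $((G^{(1)},G^{(2)}),F)$ and the access structure $(\bar\fA,\bar\fB)$. The substantive content is (C1)~$\Leftrightarrow$~(C3), and I will split it into a secrecy/rejection correspondence and a correctness/acceptance correspondence, both driven by the symplectic structure carried by the Heisenberg--Weyl operators.

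The underlying observation is that each quantum share $\cD_j$ is labelled by \emph{two} coordinates of $\FF_q^{2\snn}$ (the $\sX$-shift index $j$ and the $\sZ$-shift index $j+\snn$), so a subset $\cA\subset[\snn]$ of players corresponds exactly to its symplectification $\bar\cA\subset[2\snn]$ on the classical side. Since $G^{(1)}$ has $\snn$ linearly independent self-column-orthogonal columns, $\im G^{(1)}$ is a Lagrangian subspace of $\FF_q^{2\snn}$ and $|\psi[G^{(1)}]\rangle$ is its unique joint $+1$-eigenstate of $\{\mathbf{W}_{[\snn]}(g):g\in\im G^{(1)}\}$.

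\emph{Step~1 (Secrecy $\Leftrightarrow$ Rejection).} For $\cB\in\fB$ I will analyse $\Tr_{\cB^c}\Gamma[m](|\psi[G^{(1)}]\rangle\langle\psi[G^{(1)}]|)$. Using that $\mathbf{W}_{[\snn]}(v)|\psi[G^{(1)}]\rangle\langle\psi[G^{(1)}]|\mathbf{W}_{[\snn]}(v)^\dagger$ depends on $v$ only through the coset $v+\im G^{(1)}$, and that tracing out $\cB^c$ erases all information about coordinates outside $\bar\cB$, the $m$-dependence of the remaining state is exactly the coset of $P_{\bar\cB}Fm$ modulo $P_{\bar\cB}(\im G^{(1)}+\im G^{(2)})$ (the randomization over $u_D$ absorbs the $\im G^{(2)}$ part). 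Thus secrecy is equivalent to every column of $P_{\bar\cB}F$ lying in the column span of $P_{\bar\cB}(G^{(1)},G^{(2)})$, which is condition~(B1) of Lemma~\ref{lemm:mmsp_equiv_condition} for $\bar\cB$.

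\emph{Step~2 (Correctness $\Leftrightarrow$ Acceptance).} For $\cA\in\fA$ I will verify that the family $\{\mathbf{W}_\cA(y)\Tr_{\cA^c}|\psi[G^{(1)}]\rangle\langle\psi[G^{(1)}]|\mathbf{W}_\cA^\dagger(y)\}$, indexed by $y+P_{\bar\cA}\im G^{(1)}\in\FF_q^{2|\cA|}/P_{\bar\cA}\im G^{(1)}$, is a mutually orthogonal collection of reduced stabilizer states that sum to $I_{\cD[\cA]}$. The specified measurement then returns the coset of $P_{\bar\cA}(Fm+G^{(2)}u_D)$ in $\FF_q^{2|\cA|}/P_{\bar\cA}\im G^{(1)}$, and after averaging over the uniform $u_D$ the accessible label lives in $\FF_q^{2|\cA|}/P_{\bar\cA}(\im G^{(1)}+\im G^{(2)})$. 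Decodability of $m$ is therefore equivalent to injectivity of $m\mapsto\pi[P_{\bar\cA}(\im G^{(1)}+\im G^{(2)})]\,P_{\bar\cA}Fm$, which is condition~(A1) of Lemma~\ref{lemm:mmsp_equiv_condition} for $\bar\cA$.

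The main obstacle will lie in Step~2: proving the mutual orthogonality of the reduced stabilizer states $\Tr_{\cA^c}\mathbf{W}_{[\snn]}(v)|\psi[G^{(1)}]\rangle\langle\psi[G^{(1)}]|\mathbf{W}_{[\snn]}(v)^\dagger$ as $v$ ranges over representatives of $\FF_q^{2|\cA|}/P_{\bar\cA}\im G^{(1)}$. This is a standard but delicate property of Lagrangian stabilizer states under bipartitions, and it is what guarantees that the decoder's specified basis really is a genuine distinguishing PVM rather than only a POVM; once it is in hand, the two MMSP conditions (A1) and (B1) read off essentially by inspection.
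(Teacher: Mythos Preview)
Your plan is correct but takes a different route from the paper. The paper does not analyse Protocol~\ref{protocol1CQ} directly; instead it introduces the entanglement-assisted setting (EASS, Protocol~\ref{protocol2}), proves the analogous equivalence there (Theorem~\ref{TH1}), and then observes that CQSS is literally the special case $\sy_1=\snn$: with $\snn$ self-column-orthogonal columns the end-user's reference system $\cD_E[0,G^{(1)}]$ collapses to dimension one, so $|\Phi[0,G^{(1)}]\rangle$ degenerates to $|\psi[G^{(1)}]\rangle$ on $\cD_D$ alone and no entanglement is actually shared. Theorem~\ref{TH1}, in turn, is proved by reducing EASS to the fully entangled variant FEASS (Protocol~\ref{protocol1}), whose initial state is the maximally entangled $|\Phi\rangle$; there the reduced states on any $\cA$ are Bell-basis vectors tensored with a fixed completely mixed state (equation~\eqref{XZP}), so the quantum protocol is transparently isomorphic to the classical CSS protocol on $2\snn$ shares (Lemma~\ref{L2U}).

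What this buys is that the paper never has to confront your ``main obstacle'': orthogonality and completeness of the reduced stabilizer states are trivial in the Bell basis, and the passage from EASS to FEASS is a randomization trick on the end-user's side (relations~\eqref{MTX}--\eqref{MTX2}). Your direct argument instead requires showing that the conjugation stabilizer of $\Tr_{\cA^c}|\psi[G^{(1)}]\rangle\langle\psi[G^{(1)}]|$ under $\mathbf{W}_{\cA}$ is exactly $P_{\bar\cA}\im G^{(1)}$; this is a Lagrangian-duality computation (using $(\im G^{(1)})^\perp=\im G^{(1)}$) which, once established, makes the Step~2 family a genuine PVM indexed by $\FF_q^{2|\cA|}/P_{\bar\cA}\im G^{(1)}$, and both steps then go through as you describe. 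Your route is more self-contained for CQSS alone, while the paper's EASS detour is chosen because the same framework simultaneously yields the QQSS and CQSPIR characterizations.
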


Also, we have the following proposition.
\begin{prop}\Label{ZNOCQ}
In Protocol \ref{protocol1CQ},
even when STEP 3 is replaced by another decoder,
the decoder can be simulated by the decoder given in STEP 3.
That is, once STEPs 1 and 2 are given in Protocol \ref{protocol1CQ},
without loss of generality, we can assume that our decoder is given as STEP 3.
\end{prop}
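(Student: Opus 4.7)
The plan is to establish that the encoded state on $\cD[\cA]$ is block-diagonal with respect to the projectors $\tau_y := \mathbf{W}_{\cA}(y) \sigma_0 \mathbf{W}_{\cA}^\dagger(y)$ appearing in STEP 3, where $\sigma_0 := \Tr_{\cA^c}|\psi[G^{(1)}]\rangle \langle \psi[G^{(1)}]|$, and then invoke a sufficient-statistic argument.

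First, I would use the factorization $\mathbf{W}_{[\snn]}(v) = c(v) \cdot \mathbf{W}_{\cA}(P_\cA v) \otimes \mathbf{W}_{\cA^c}(P_{\cA^c} v)$, where $c(v)$ is a scalar phase arising from reordering the single-site Weyl operators, together with the fact that a partial trace is invariant under conjugation by unitaries on the traced-out subsystem. Combined, these yield
\begin{align*}
\rho[m] := \Tr_{\cA^c} \Gamma[m]\bigl(|\psi[G^{(1)}]\rangle \langle \psi[G^{(1)}]|\bigr) = \sum_{u_D \in \FF_q^{\sy_2}} \frac{1}{q^{\sy_2}} \mathbf{W}_{\cA}\bigl(y(m,u_D)\bigr)\, \sigma_0 \,\mathbf{W}_{\cA}^\dagger\bigl(y(m,u_D)\bigr),
\end{align*}
with $y(m,u_D) := P_\cA(Fm + G^{(2)} u_D)$.

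Second, I would show that $\sigma_0$ is proportional to a projector and that $\{\tau_y\}_{y \in \FF_q^{2|\cA|}}$, after identifying the cosets of $y$ on which the projector is preserved, forms a PVM on $\cD[\cA]$. This is the standard structural fact for stabilizer states: $\sigma_0$ is the maximally mixed state on the subspace stabilized by those generators of the stabilizer group whose support lies entirely in $\cA$, and shifts by $\mathbf{W}_{\cA}(y)$ for $y$ in different cosets of the appropriate isotropic subspace yield mutually orthogonal subspaces. The self-column-orthogonality of $G^{(1)}$ and the standard theory of symplectic codes supply the commutation relations needed to execute this.

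Third, with $\rho[m] = \sum_{y} p_y(m)\, \tau_y$ a mixture over a message-independent family of mutually orthogonal projectors, the statistics of an arbitrary POVM $\{F_w\}_{w \in \cM}$ factor as
\begin{align*}
\Tr\bigl(F_w \rho[m]\bigr) = \sum_y p_y(m)\, \Tr(F_w \tau_y),
\end{align*}
so the end-user may first perform the STEP 3 measurement, obtain $y$ with probability $p_y(m)$, and then classically sample the output $w$ from the conditional distribution $\Tr(F_w \tau_y)$, which is independent of $m$ and can be precomputed. This reproduces the original decoder exactly.

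The main obstacle is the second step: one must carefully verify that $\sigma_0$ is proportional to a projector and that the $\tau_y$ provide an orthogonal resolution of the identity (modulo coset redundancy). This requires invoking the reduced-state characterization of stabilizer codes and using the symplectic inner-product structure on $\FF_q^{2\snn}$, i.e., the self-orthogonality of $\im G^{(1)}$ together with the commutation identity \eqref{eq:commutative}, to pin down exactly which cosets of $y$ index distinct projectors.
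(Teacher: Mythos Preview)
Your approach is correct but takes a genuinely different route from the paper's. The paper does not prove Proposition~\ref{ZNOCQ} directly; instead it observes that the linear CQSS protocol (Protocol~\ref{protocol1CQ}) is the special case $\sy_1=\snn$ of the linear EASS protocol (Protocol~\ref{protocol2}), in which the end-user's reference system $\cD_E[0,G^{(1)}]$ collapses to a one-dimensional space. Proposition~\ref{ZNOCQ} is then inherited from Proposition~\ref{ZNO2}, which in turn is reduced (via a randomization that converts $|\Phi[0,G^{(1)}]\rangle$ into the full maximally entangled state $|\Phi\rangle$) to Proposition~\ref{ZNO} for FEASS. In the FEASS setting the reduced state on $\cD[\cA]\otimes\cD_E[\cA]$ is a \emph{pure} Bell-type state, so the family $\{\mathbf{W}_\cA(y)|\phi\rangle^{\otimes|\cA|}\}$ is an orthonormal basis of rank-one vectors and the sufficient-statistic argument is immediate.

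Your argument instead stays inside the CQSS model and works directly with the \emph{mixed} reduced state $\sigma_0=\Tr_{\cA^c}|\psi[G^{(1)}]\rangle\langle\psi[G^{(1)}]|$, invoking the standard fact that a reduced stabilizer state is maximally mixed on a code subspace and that Weyl shifts by inequivalent cosets produce orthogonal projectors. This is more self-contained for the proposition at hand and avoids the EASS detour, at the cost of needing the structural lemma about reduced stabilizer states (which the paper sidesteps by purifying via the reference system). The paper's route, by contrast, buys uniformity: the same FEASS reduction simultaneously yields Propositions~\ref{ZNO2}, \ref{ZNOCQ}, \ref{ZNOSPIR}, and \ref{ZNOSPIRCQ}. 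One minor notational point: your $P_\cA$ acting on vectors in $\FF_q^{2\snn}$ should be $P_{\bar{\cA}}$ in the paper's conventions.
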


We will prove the above theorem and proposition
after we introduce linear EASS protocols in the next section.
The papers \cite{Smith00,WCY14,Sarvepalli12} studied CQSS protocols 
with general access structure.
However, the paper \cite{WCY14,Sarvepalli12} did not discuss its relation with MMSP,
and discussed only the case $\ACC = \REJ^c$,
and the paper \cite{Smith00} considered it only 
when $\sx=1$ and the state $|\psi[G^{(1)}]\rangle$ is restricted to the following form
with an $\snn \times \sy'$ matrix $F'$
\begin{align}
\sum_{a \in \bF_q^{\sy'-1}}
\left|F'
\left(
\begin{array}{c}
0 \\
a
\end{array}
\right)
\right\rangle.
\end{align}
That is, our result, Theorem \ref{Cor1} covers 
the relation with MMSP and general access structure
under a large class of CQSS protocols.
The comparison with the existing results is summarized in Table \ref{hikaku1}.

To characterize CQSS protocols with two thresholds, i.e., with the ramp scheme,
%the threshold case,
we define the following special case of $(\ACC,\REJ)$-MMSPs.

\begin{defi}[$(\srr,\stt,\snn)$-CQMMSP]
We choose $\ACC = \{ \cA \subset [\snn] \mid |\cA| \geq \srr \}$
and $\REJ = \{ \cB \subset [\snn] \mid |\cB| \leq \stt \}$.
Given a $2\snn \times \snn$ self-column-orthogonal matrix $G^{(1)}$,
a $2\snn\times \sy_2$ matrix $G^{(2)}$,
and a $2\snn\times \sx$ matrix $F$, 
the matrix $(G^{(1)},G^{(2)},F)$ is called an $(\srr,\stt,\snn)$-CQMMSP
when the matrix $((G^{(1)},G^{(2)}),F)$ is an $(\bar{\ACC},\bar{\REJ})$-MMSP.
\end{defi}

\if0
\begin{theo}\Label{TH3CQ}
Let $p$ be a prime number.
Given $\snn \ge \srr > \stt >0$,
the relation holds $\srr > \snn/2 $
if and only if
there exist a positive integers $s$, $\sy_2$, and $\sx$,
a $2\snn \times \snn$ self-column-orthogonal matrix $G^{(1)}$,
a $2\snn\times \sy_2$ matrix $G^{(2)}$,
and a $2\snn\times \sx
$ matrix $F$ on $\FF_{q}$ with $q=p^s$
such that 
the matrix $(G^{(1)},G^{(2)},F)$ is an $(\srr,\stt,\snn)$-CQMMSP,
where $[x]_+:= \max(0,x)$.
When the above condition holds,
$\sy_2$ and $\sx$ can be chosen as $[2\stt-\snn]_+$ and $2\srr-\max(2\stt,\snn)$,
respectively, i.e.,
$G^{(2)}$ and $F$ are
a $2\snn\times [2\stt-\snn]_+$ matrix 
and a $2\snn\times (2\srr-\max(2\stt,\snn))$ matrix, respectively.
\end{theo}
\fi

\begin{theo}\Label{TH3CQ}
When $\snn \ge \srr > \stt >0$,
and $\srr > \snn/2 $,
there exist a positive integer $s$,
a $2\snn \times \snn$ self-column-orthogonal matrix $G^{(1)}$,
a $2\snn\times [2\stt-\snn]_+$ matrix $G^{(2)}$,
and a $2\snn\times (2\srr-\max(2\stt,\snn))$ matrix $F$ on $\FF_{q}$ with $q=p^s$
such that 
the matrix $(G^{(1)},G^{(2)},F)$ is an $(\srr,\stt,\snn)$-CQMMSP.
Here, we use the notation $[x]_+:= \max(0,x)$.
\end{theo}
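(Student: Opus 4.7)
The plan is to reduce the existence of the claimed $(\srr,\stt,\snn)$-CQMMSP to a pair of classical MDS conditions plus a symplectic self-orthogonality constraint on $G^{(1)}$, and then to construct the matrices by generalized Reed--Solomon (GRS) codes over $\FF_{p^s}$ with $s$ sufficiently large.

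First, I claim it is enough to find $(G^{(1)},G^{(2)},F)$ of the stated sizes satisfying (i) the $2\snn\times 2\srr$ matrix $((G^{(1)},G^{(2)}),F)$ is a $(2\snn,2\srr)$-MDS code, (ii) the $2\snn\times\max(\snn,2\stt)$ matrix $(G^{(1)},G^{(2)})$ is a $(2\snn,\max(\snn,2\stt))$-MDS code, and (iii) the columns of $G^{(1)}$ are pairwise orthogonal under the symplectic inner product on $\FF_q^{2\snn}$. Indeed, by Proposition~\ref{theo:MDS-MMSP} applied with parameters $(2\srr,\max(\snn,2\stt),2\snn)$, conditions (i) and (ii) already make $((G^{(1)},G^{(2)}),F)$ a threshold MMSP accepting every $\cC\subset[2\snn]$ with $|\cC|\ge 2\srr$ and rejecting every $\cC\subset[2\snn]$ with $|\cC|\le\max(\snn,2\stt)$. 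Since every $\bar{\cA}\in\bar{\fA}$ has $|\bar{\cA}|=2|\cA|\ge 2\srr$ and every $\bar{\cB}\in\bar{\fB}$ has $|\bar{\cB}|=2|\cB|\le 2\stt\le\max(\snn,2\stt)$, this threshold MMSP is in particular a $(\bar{\fA},\bar{\fB})$-MMSP, i.e.\ a CQMMSP.

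For the construction, I would take $q=p^s\ge 2\snn$, fix distinct evaluation points $\alpha_1,\ldots,\alpha_{2\snn}\in\FF_q$ and nonzero multipliers $v_1,\ldots,v_{2\snn}\in\FF_q$, and form the GRS generator $A\in\FF_q^{2\snn\times 2\srr}$ with $A_{ij}=v_i\alpha_i^{j-1}$. Every leading block of $j\le 2\srr$ columns is $(2\snn,j)$-MDS by the Vandermonde property, so declaring $G^{(1)}$ to be the first $\snn$ columns, $G^{(2)}$ the next $[2\stt-\snn]_+$ columns, and $F$ the remaining $2\srr-\max(2\stt,\snn)$ columns, conditions (i) and (ii) hold for free. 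The hypothesis $\srr>\snn/2$ is essential here: it gives $2\srr>\snn$, matching the maximal Lagrangian dimension of $(\FF_q^{2\snn},(\cdot,\cdot))$, so the code's image has enough room to contain an $\snn$-dimensional isotropic subspace.

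The hardest part will be imposing (iii) while keeping (i) and (ii). My plan is to replace $A$ by $AM$ for an invertible $M\in\FF_q^{2\srr\times 2\srr}$, which preserves all MDS properties, and then to choose $M$ so that the first $\snn$ columns of $AM$ span a Lagrangian subspace of $(\im A,(\cdot,\cdot)\mid_{\im A})$. The resulting system is a set of $\binom{\snn}{2}$ homogeneous quadratic equations on the first $\snn$ columns of $M$; for $s$ large enough, an admissible $M$ exists either via an explicit symplectic self-orthogonal GRS construction---tuning the $v_i$ so that the relevant expressions $\sum_i v_iv_{i+\snn}(\alpha_i^{j-1}\alpha_{i+\snn}^{j'-1}-\alpha_{i+\snn}^{j-1}\alpha_i^{j'-1})$ vanish for all $1\le j<j'\le\snn$---or via a dimension-counting argument contrasting the locus on which (iii) holds with the proper subvarieties where (i) or (ii) fail. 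The extension degree $s$ claimed in the theorem is absorbed into this field-size requirement.
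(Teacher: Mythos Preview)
Your reduction to the three conditions (i)--(iii) via Proposition~\ref{theo:MDS-MMSP} is correct and matches the paper's implicit strategy. The gap is in the construction.

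The sentence ``the code's image has enough room to contain an $\snn$-dimensional isotropic subspace'' conceals the central obstacle. Having $\dim\im A=2\srr\ge\snn$ is necessary but not sufficient: for a \emph{generic} $2\srr$-dimensional subspace $V\subset\FF_q^{2\snn}$ (and in particular for a generic GRS code) the restricted symplectic form is nondegenerate, so the maximal isotropic subspace of $V$ has dimension only $\srr$, not $\snn$. Thus whenever $\srr<\snn$, no invertible $M$ can make the first $\snn$ columns of $AM$ isotropic, and your right-multiplication scheme has no solution at all. Your alternative of tuning the $v_i$ does not close this: the self-orthogonality conditions yield $\binom{\snn}{2}$ homogeneous linear equations in only the $\snn$ products $v_iv_{\snn+i}$, an overdetermined system for $\snn\ge 4$ with no visible structure forcing a solution with all $v_i\neq 0$. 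The dimension-counting sketch inherits the same defect, since the locus on which (iii) holds may be empty inside the GRS family you start from. (Separately, right-multiplying by $M$ preserves the full MDS property (i) but \emph{not} the prefix MDS property (ii); you seem partly aware of this, but it compounds the difficulty.)

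The paper takes the opposite route. Rather than starting from an MDS matrix and trying to impose isotropy, it \emph{builds} a self-column-orthogonal $G^{(1)}$ that is simultaneously $(2\snn,\snn)$-MDS (Lemma~\ref{AMT}), via an iterated algebraic-extension device: entries are placed in successively larger fields $\FF_p[e_1,\ldots,e_k]$, with one block of entries (the matrix $A_3$) left free precisely to enforce the symplectic constraint~\eqref{AMP}, while Lemma~\ref{LL11} ensures every relevant square submatrix remains invertible at each step. The extension to $((G^{(1)},G^{(2)}),F)$ with the required nested MDS properties is then carried out by the same mechanism (Lemma~\ref{AMX}, conditions (N5)--(N6)). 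The missing idea in your proposal is that isotropy has to be engineered into the MDS construction from the outset, not grafted onto a pre-existing one.
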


\if0
\begin{theo}\Label{TH3CQ}
When 
$\snn \ge \srr > \stt \ge \snn/2>0$,
there exist a positive integer $s$,
a $2\snn \times \snn$ self-column-orthogonal matrix $G^{(1)}$,
a $2\snn\times (2\stt-\snn)$ matrix $G^{(2)}$,
and a $2\snn\times (2\srr-2\stt)$ matrix $F$ on $\FF_{q}$ with $q=p^s$
such that 
the matrix $(G^{(1)},G^{(2)},F)$ is an $(\srr,\stt,\snn)$-CQMMSP.
\end{theo}
\fi
Theorem \ref{TH3CQ} is shown in Appendix \ref{A2B}.
%by considering a special case of Theorem \ref{TH3}.
Combining Theorems \ref{Cor1} and \ref{TH3CQ}, we obtain the following corollary.

\begin{coro}\Label{Coro6CQ}
When $\snn \ge \srr > \stt >0 $  and $\srr \ge \snn/2>0$,
there exists an $(\srr,\stt,\snn)$-secure CQSS protocol of rate 
$(2\srr-\max(2\stt,\snn))/\snn$.
In particular,
%when $\snn \ge \srr > \stt = \snn/2>0$,
%there exists an $(\srr,\stt,\snn)$-secure randomless
%CQSS protocol of rate $2(\srr-\stt)/\snn$. Also, 
when $\snn \ge \srr > \snn/2\ge \stt >0$,
there exists an $(\srr,\stt,\snn)$-secure randomless
CQSS protocol of rate $(2\srr-\snn)/\snn$.
\end{coro}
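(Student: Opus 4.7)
The plan is to combine Theorem~\ref{TH3CQ} with the equivalence (C1)$\Leftrightarrow$(C3) in Theorem~\ref{Cor1}, together with a simple rate computation for Protocol~\ref{protocol1CQ}. Theorem~\ref{TH3CQ} supplies, under the stated threshold hypotheses, a $2\snn \times \snn$ self-column-orthogonal matrix $G^{(1)}$, a $2\snn \times [2\stt-\snn]_+$ matrix $G^{(2)}$, and a $2\snn\times (2\srr-\max(2\stt,\snn))$ matrix $F$ over some extension field $\FF_q$ such that $(G^{(1)},G^{(2)},F)$ is an $(\srr,\stt,\snn)$-CQMMSP. By definition this means that $((G^{(1)},G^{(2)}),F)$ is a $(\bar{\ACC},\bar{\REJ})$-MMSP for $\ACC=\{\cA\subset[\snn]\mid|\cA|\ge\srr\}$ and $\REJ=\{\cB\subset[\snn]\mid|\cB|\le\stt\}$.

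Next I would invoke Theorem~\ref{Cor1}: from $(C3)\Rightarrow(C1)$ we immediately conclude that the linear CQSS protocol with $(G^{(1)},G^{(2)},F)$ as constructed in Protocol~\ref{protocol1CQ} is $(\ACC,\REJ)$-secure, i.e.\ $(\srr,\stt,\snn)$-secure. Thus the existence of the desired protocol is established, and it only remains to compute the rate.

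For the rate, the message alphabet of Protocol~\ref{protocol1CQ} is $\cM=\FF_q^{\sx}$ with $\sx=2\srr-\max(2\stt,\snn)$, while the total share system is $\cD_D=\cH^{\otimes\snn}$ with $\dim\cH=q$. Thus $\log\vM=\sx\log q$ and $\log D=\snn\log q$, so by definition \eqref{Rate1} the rate is $\sx/\snn=(2\srr-\max(2\stt,\snn))/\snn$, as claimed.

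Finally, for the randomless statement, observe that when $\snn\ge\srr>\snn/2\ge\stt>0$ one has $2\stt\le\snn$, hence $[2\stt-\snn]_+=0$ and $\max(2\stt,\snn)=\snn$. In this regime the matrix $G^{(2)}$ disappears, so the dealer's randomness $U_D\in\FF_q^{\sy_2}$ is trivial and Protocol~\ref{protocol1CQ} reduces to the randomless linear CQSS protocol with $(G^{(1)},F)$; the rate formula collapses to $(2\srr-\snn)/\snn$. Since both Theorem~\ref{Cor1} and Theorem~\ref{TH3CQ} are quoted as already proved (the former earlier in this section and the latter in Appendix~\ref{A2B}), there is no genuine obstacle here; the corollary is essentially a bookkeeping consequence, and the only care required is the case split encoded in the $\max$ and $[\cdot]_+$ to match the randomless sub-statement.
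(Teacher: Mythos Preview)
Your proposal is correct and follows exactly the paper's approach: the paper simply states that the corollary follows by combining Theorems~\ref{Cor1} and~\ref{TH3CQ}, and you have spelled out precisely this combination together with the straightforward rate computation and the case split for the randomless statement.
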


\if0
\begin{coro}\Label{Coro6CQ}
When $\snn \ge \srr > \stt \ge \snn/2>0$,
there exists an $(\srr,\stt,\snn)$-secure CQSS protocol of rate 
$2(\srr-\stt)/\snn$.
In particular,
when $\snn \ge \srr > \stt = \snn/2>0$,
there exists an $(\srr,\stt,\snn)$-secure randomless
CQSS protocol of rate $2(\srr-\stt)/\snn$. 
\end{coro}
\fi

In the classical case with the ramp scheme, 
the optimal rate of $(\srr,\stt,\snn)$-secure SS protocol
is $(\srr-\stt)/\snn$ 
\cite{Ogata,Okada,Paillier}.
Hence, the rate of Corollary \ref{Coro6CQ} is twice of the classical case.
In fact, the existence of the above rate of the ramp case
was not shown in the general case.
Only a limited case of the ramp case was discussed in \cite{Matsumoto17,Matsumoto20}.
That is, the achievability of the rate $(2\srr-\max(2\stt,\snn))/\snn$.
%$2(\srr-\stt)/\snn$
was not shown in existing studies.

\if0
However, the rate of CQSS cannot exceed $1$ due to the condition $\stt \ge \snn/2$.
This constraint always holds beyond the condition in Corollary \ref{Coro6CQ} 
because CQSS does not have shared entanglement.
Since EASS has shared entanglement, the rate of CQSS exceeds $1$ by removing 
the condition $\stt \ge \snn/2$,
which can be considered as an advantage of  EASS over CQSS.
\fi
\if0
In addition, as a special case of CQSS protocol, 
we define the CQ version of MDS codes 
\begin{defi}[$(\snn, \sx)$-CQMDS code]
We consider the case with $\sy_2=0$.
Assume that $ G^{(1)}$ is a $2\snn\times \snn$ self-column-orthogonal matrix and 
$F$ is a $2\snn\times \sx$ matrix. % column-orthogonal to $G^{(1)}$.
We say that the linear CQSS protocol with $(G^{(1)}, F)$ is an 
$(\snn, \lceil \frac{\snn+\sx}{2} \rceil)$-EAMDS code
when it is $\fA$-correct with $\fA= \{ \cA \subset [\snn] \mid |\cA| \geq 
\lceil \frac{\snn+\sx}{2} \rceil\}$.
\end{defi}

Then, we obtain the following Corollary of Theorems \ref{Cor1} and \ref{TH3CQ}
with $\stt=\snn/2$.
\begin{coro}\Label{Coro17}
Assume that $ G^{(1)}$ is a $2\snn\times \snn$ self-column-orthogonal matrix and 
$F$ is a $2\snn\times \sx$ matrix. % column-orthogonal to $G^{(1)}$.
The linear CQSS protocol with $(G^{(1)}, F)$ is an 
$(\snn, \lceil \frac{\snn+\sx}{2} \rceil)$-CQMDS code
if and only if the linear CSS protocol with $(G^{(1)}, F)$ is
$\bar{\fA}$-correct with 
$\fA= \{ \cA \subset [\snn] \mid |\cA| \geq 
\lceil \frac{\snn+\sx}{2} \rceil\}$.
In addition, an 
$(\snn, \lceil \frac{\snn+\sx}{2} \rceil)$-EAMDS code exits.
\end{coro}
\fi

\subsection{Linear QQSS}\Label{S8}
Next, we discuss linear QQSS protocols.
For this aim, we focus on
a $2\snn\times (\snn-\sx)$ self-column-orthogonal matrix $G^{(1)}$
and a $2\snn\times \sy_2$ matrix $G^{(2)}$.
Also, we focus on a $2\snn\times 2 \sx$ matrix $F$ 
that is column-orthogonal to $G^{(1)}$.
In this construction, 
we use the notation given in Section \ref{S6-1-1}, and use the matrix $G^{(1)}$
as the stabilizer.
That is, we define the subset $\cD_D[y,G^{(1)}] \subset \cD_D$ in the same way as $\cD_E[y,G^{(1)}]$ for $y \in \FF_q^{\snn-\sx}$.
\if0
An ${\bar{\snn}}\times (\sy+\sx)$ matrix $(G,F)$ is called QQ-convertible 
when the following conditions hold.
%We denote the column vectors $g_{j}$ so that $G=(g^1,\ldots, g^{\sy+\sx})$.
The integers ${\bar{\snn}}$ and $\sx$ are even numbers $2 \snn$ and $2\sx$, respectively. 
The relation ${\bar{\snn}} \ge \sy+\sx$ holds,
the $\sx$ column vectors $f^{1}, \ldots, f^{\sx}$ have a symplectic structure, and
the $\sy$ column vectors $g^{1}, \ldots, g^{\sy}$ have a semi-symplectic structure
with $\sy_1,\sy_2$.
The relation $\snn=\sbb+\sy_1$ holds.
\fi
%Next, given a QQ convertible ${\bar{\snn}}\times (\sy+\sx)$ matrix $(G,F)$ with ${\bar{\snn}}=2\snn$ on the finite field $\FF_q$, 
Then, we define a QQSS protocol as Protocol \ref{protocol6}.
This protocol is called the linear QQSS protocol with $(G^{(1)},G^{(2)},F)$.

\begin{Protocol}[H]                  
\caption{Linear QQSS protocol with $(G^{(1)},G^{(2)},F)$}         
\Label{protocol6}
\begin{algorithmic}
\STEPONE
\textbf{Share generation}:
The dealer encodes the system $\cH^{\otimes |\sx|}$
into the subspace $\cD_{D}[0,G^{(1)}]$. Hence, the message system $\cD_M$
is identified with $\cD_{D}[0,G^{(1)}]$.
The dealer prepares a uniform random variable $U_{D,2} \in \FF_q^{\sy_2} $,
and applies 
$\mathbf{W}_{[\snn]}(G^{(2)} U_{D,2})$ on ${\cal D}_D$. 
\STEPTWO
\textbf{Decoding}:
For a subset $\cA \in \fA$,
the end-user applies a suitable TP-CP map $\overline{\Gamma}$
to recover the original state.
%the TP-CP map $\overline{\Gamma} [\Pi[\cA]]$, where $\Pi[\cA]$ denotes the POVM
%$\{\mathbf{W}_{\cA}(y) (\Tr_{(\cA,E)^c}|\Phi[0,G^{(1)}]\rangle \langle \Phi[0,G^{(1)}]|) \mathbf{W}_{\cA}^\dagger(y)\}_{y \in \FF_q^{2|\cA|}}$.
%where the subscript expresses the system $\cD_E$.
\end{algorithmic}
\end{Protocol}

\begin{theo}\Label{TH4QQ}
Given a $2\snn\times (\snn-\sx)$ self-column-orthogonal matrix $G^{(1)}$
and
a $2\snn\times \sy_2$ matrix $G^{(2)}$,
we choose a $2\snn\times 2 \sx$ matrix $F$ column-orthogonal to $G^{(1)}$.
Then, the following conditions for $G^{(1)},G^{(2)},F$
are equivalent.
\begin{description}
\item[(D1)]
The linear QQSS protocol with $(G^{(1)},G^{(2)},F)$
is $(\ACC,\REJ)$-secure.
That is, there exists a suitable TP-CP map $\overline{\Gamma}$ to recover the original state
in STEP 2.
\item[(D2)]
The linear CSS protocol with $((G^{(1)},G^{(2)}),F)$
is $(\bar{\ACC},\bar{\REJ})$-secure.
\item[(D3)]
The matrix $((G^{(1)},G^{(2)}),F)$ is an $(\bar{\ACC},\bar{\REJ})$-MMSP.
\end{description}
\end{theo}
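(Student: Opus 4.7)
The plan is to prove the chain (D1) $\Leftrightarrow$ (D2) $\Leftrightarrow$ (D3), where (D2) $\Leftrightarrow$ (D3) is immediate from Proposition \ref{prop1}. Only (D1) $\Leftrightarrow$ (D2) carries real content, and it will follow by exploiting the relations between noiseless quantum state transmission and dense coding announced for Section \ref{SS-7}, combined with the CQSS characterization of Theorem \ref{Cor1}.

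For the direction (D1) $\Rightarrow$ (D2), I would view the QQSS encoding as acting on one half of a maximally entangled pair shared between the dealer and the end-user. Since $F$ is column-orthogonal to $G^{(1)}$, the $2\sx$ columns of $F$ descend to $\sx$ pairs of logical $\sX,\sZ$ operators on the stabilizer code $\cD_D[0,G^{(1)}]$; dense coding then allows the dealer to embed a classical message $m \in \FF_q^{2\sx}$ through $\mathbf{W}_{[\snn]}(Fm + G^{(2)} u_D)$ applied to the entangled half, so that any quantum decoder for the QQSS on $\cA \in \fA$ yields, by standard post-processing with the end-user's half, a classical decoder for $m$ in the associated linear CSS with generator matrix $((G^{(1)},G^{(2)}),F)$ on the symplectified access structure $\bar{\fA}$. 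Conversely, for (D2) $\Rightarrow$ (D1), I would use teleportation through the classical CSS channel to transmit $\sx$ qudits: the correctness of the CSS on $\bar{\cA}$ yields an explicit TP-CP decoder $\overline{\Gamma}$ for the QQSS on $\cA$, and its $\bar{\cB}$-secrecy translates to independence of $\Tr_{\cB^c} \Gamma(\rho)$ from the input state $\rho$.

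To make this dictionary rigorous at the level of the acceptance/rejection conditions, I would use Lemma \ref{lemm:mmsp_equiv_condition} directly. Condition (A1) for $\bar{\cA}$ says that the columns of $P_{\bar{\cA}} F$ are linearly independent modulo $\im P_{\bar{\cA}}(G^{(1)},G^{(2)})$, and via the Weyl commutation identity \eqref{eq:commutative} this means that the logical Pauli action remains faithful after restriction to $\cA$, so a stabilizer-type measurement on $\cA$ produces a syndrome that uniquely identifies the encoded state. Dually, (B1) for $\bar{\cB}$ forces the logical action of $F$ on $\cB$ into the group generated by $(G^{(1)},G^{(2)})$, so that after the $G^{(2)}$-twirl and partial trace over $\cB^c$ the reduced state is independent of $\rho$. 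The main obstacle will be the joint handling of $G^{(1)}$ (a genuine stabilizer, responsible for the quantum code space) and $G^{(2)}$ (a Weyl twirl, which plays the role of classical shared randomness on the EASS side but acts as a quantum operation on the QQSS side): $G^{(2)}$ is not assumed column-orthogonal to anything, so the decoder on $\cA$ must coherently average over it, and it is precisely this coupling that forces the analysis through the symplectified access structures $\bar{\fA},\bar{\fB}$ and the single combined MMSP for $((G^{(1)},G^{(2)}),F)$, rather than attempting to treat $G^{(1)}$ in isolation.
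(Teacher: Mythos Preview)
Your approach is essentially the paper's: the paper restates the result as Theorem~\ref{TH4} with an added equivalent condition (J4), namely that the linear \emph{EASS} protocol with $(G^{(1)},G^{(2)},F)$ is $(\fA,\fB)$-secure, proves (D1)$\Leftrightarrow$(J4) via Lemmas~\ref{L5} and~\ref{L6} (exactly the dense-coding/teleportation correspondence you invoke), and obtains (J4)$\Leftrightarrow$(D2)$\Leftrightarrow$(D3) from Theorem~\ref{TH1} and Proposition~\ref{prop1}.

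One correction is needed: the bridge is the EASS characterization of Theorem~\ref{TH1}, not the CQSS characterization of Theorem~\ref{Cor1}. Here $G^{(1)}$ is $2\snn\times(\snn-\sx)$ with $\sx>0$, so when you ``view the QQSS encoding as acting on one half of a maximally entangled pair,'' the end-user retains a genuine $q^{\sx}$-dimensional entangled reference; the resulting classical-message protocol is an EASS with $\sy_1=\snn-\sx<\snn$, and Theorem~\ref{Cor1} (which requires $\sy_1=\snn$) simply does not apply. Also, your third-paragraph direct secrecy argument from (B1) is not complete as written: (B1) shows that $\mathbf{W}_{\cB}(P_{\bar\cB}Fm)$ is absorbed by the $(G^{(1)},G^{(2)})$-action on $\cB$, hence the reduced state on $\cB$ is invariant under the logical Heisenberg--Weyl group, but you still need the passage from ``invariant under all logical Paulis'' to ``independent of the input state $\rho$.'' The paper sidesteps this by packaging the whole secrecy equivalence into the mutual-information identity $I(X;BR)=I(R;B)$ of Lemma~\ref{L6}, which is cleaner than a direct stabilizer computation.
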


This theorem will be shown by Theorem \ref{TH4} including the construction of the decoder in Section \ref{S8-2}.
The papers \cite{Sarvepalli12} studied QQSS protocols 
with general access structure.
However, the paper \cite{Sarvepalli12} did not discuss its relation with MMSP.

To characterize QQSS protocols with the threshold case, i.e., the ramp scheme,
we define the following special case of $(\ACC,\REJ)$-MMSPs.

\begin{defi}[$(\srr,\stt,\snn)$-QQMMSP]
We choose $\ACC = \{ \cA \subset [\snn] \mid |\cA| \geq \srr \}$
and $\REJ = \{ \cB \subset [\snn] \mid |\cB| \leq \stt \}$.
Given a $2\snn \times (\snn-\sx)$ self-column-orthogonal matrix $G^{(1)}$,
a $2\snn\times \sy_2$ matrix $G^{(2)}$,
and a $2\snn\times 2\sx$ matrix $F$ 
column-orthogonal to $G^{(1)}$,
the matrix $(G^{(1)},G^{(2)},F)$ is called an $(\srr,\stt,\snn)$-QQMMSP
when the matrix $((G^{(1)},G^{(2)}),F)$ is an $(\bar{\ACC},\bar{\REJ})$-MMSP.
\end{defi}

When the matrix $(G^{(1)},G^{(2)},F)$ is called an $(\srr,\stt,\snn)$-QQMMSP,
the linear QQSS protocol with $(G^{(1)},G^{(2)},F)$
has the perfect secrecy for $\stt$ colluded players.
In this case, the secret can be recovered by the end user when 
the end user collect the shares from $\srr$ players.
Such a linear QQSS protocol is called 
an $(\srr,\stt,\snn)$-linear QQSS protocol.

\begin{theo}\Label{TH6}
Let $p$ be a prime number.
Then, the following conditions for positive integers $\srr,\stt,\snn$ with $\snn \ge \srr > \stt >0$ 
are equivalent.
\begin{description}
\item[(E1)]
The condition  $\srr \ge (\snn+1)/2$ holds.
\item[(E2)]
There exist positive integers $s$, $\sy_2$, and $\sx$,
a $2\snn \times (\snn-\sx)$ self-column-orthogonal matrix $G^{(1)}$,
a $2\snn\times \sy_2$ matrix $G^{(2)}$,
and a $2\snn\times 2 \sx$ matrix $F$ column-orthogonal 
to the matrix $G^{(1)}$
on $\FF_{q}$ with $q=p^s$
such that 
the matrix $(G^{(1)},G^{(2)},F)$ is an $(\srr,\stt,\snn)$-QQMMSP.
\item[(E3)]
Choose $\stt':= \max(\stt, \snn-\srr)$.
There exist a positive integer $s$, 
a $2\snn \times (\snn-\srr+\stt')$ self-column-orthogonal matrix $G^{(1)}$,
a $2\snn\times (\stt'+\srr-\snn)$ matrix $G^{(2)}$,
and a $2\snn\times 2 (\srr-\stt')$ matrix $F$ column-orthogonal 
to the matrix $G^{(1)}$
on $\FF_{q}$ with $q=p^s$
such that 
the matrix $(G^{(1)},G^{(2)},F)$ is an $(\srr,\stt,\snn)$-QQMMSP.
\end{description}
\if0
When the above condition holds, $\sy_2$ and $\sx$ can be chosen as 
$\stt+\srr-\snn $ and $$, respectively, i.e., 
$G^{(1)}$, $G^{(2)}$, and $F$ are
a $2\snn \times (\snn-\srr+\stt)$ self-column-orthogonal matrix,
a $2\snn\times (\stt+\srr-\snn)$ matrix, and 
a $2\snn\times 2(\srr-\stt)$ matrix, respectively.
\fi
\end{theo}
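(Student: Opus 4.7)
The plan is to establish the cyclic chain (E3) $\Rightarrow$ (E2) $\Rightarrow$ (E1) $\Rightarrow$ (E3). The first implication is immediate because (E3) exhibits matrices of specific dimensions fitting into the template of (E2).

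For (E2) $\Rightarrow$ (E1), I would argue by contrapositive. Suppose $\srr \le \snn/2$, so that for every $\cA \subset [\snn]$ with $|\cA| = \srr$ the complement satisfies $|\cA^c| = \snn - \srr \ge \srr$, hence $\cA^c \in \ACC$. Applying Theorem~\ref{TH4QQ}, an $(\srr,\stt,\snn)$-QQMMSP would yield a linear QQSS protocol in which both $\cA$ and $\cA^c$ are qualified. Since the definition forces $\sx \ge 1$ (so the protected subspace is nontrivial), Definition~\ref{def:q-nss} together with the no-cloning theorem yields a contradiction. A purely algebraic version of the same argument uses that, under the column-orthogonality of $F$ to $G^{(1)}$ and the self-column-orthogonality of $G^{(1)}$, the acceptance condition (A1) applied to both $\bar{\cA}$ and $\bar{\cA^c}$ over-determines the symplectic pairings between columns of $F$ and would force the $2\sx$ columns of $F$ to be linearly dependent.

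For (E1) $\Rightarrow$ (E3), set $\stt' := \max(\stt, \snn - \srr)$. Condition (E1) yields $\snn - \srr < \srr$, and combined with $\stt < \srr$ this gives $\stt' < \srr$; moreover $\stt' + \srr - \snn \ge 0$, so the target dimensions are well defined. I would construct $(G^{(1)}, G^{(2)}, F)$ from a quantum Reed--Solomon stabilizer code over $\FF_q$ with $q = p^s$ chosen so that $q \ge 2\snn$. Concretely, pick $2\snn$ distinct evaluation points in $\FF_q$ and populate the columns of $(G^{(1)}, G^{(2)}, F)$ with evaluations of polynomials of calibrated degrees: the columns of $G^{(1)}$ realize stabilizer generators (yielding self-column-orthogonality), the columns of $F$ realize encoded logical Pauli operators (yielding column-orthogonality to $G^{(1)}$), and the $\stt'+\srr-\snn$ columns of $G^{(2)}$ provide the additional randomizing directions needed whenever $\stt > \snn - \srr$. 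This mirrors the strategy used for Theorem~\ref{TH3CQ} but with the additional symplectic constraint $F \perp G^{(1)}$.

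Verification then reduces, via Definition~\ref{defi:MMSP} applied to $(\bar{\ACC}, \bar{\REJ})$, to two families of minor conditions on symplectified subsets of $[2\snn]$: for $\cA$ with $|\cA| \ge \srr$ the matrix $P_{\bar{\cA}}(G^{(1)}, G^{(2)}, F)$ must have $2\srr$ linearly independent columns (acceptance via (A1)), and for $\cB$ with $|\cB| \le \stt$ the columns of $P_{\bar{\cB}} F$ must lie in $\im P_{\bar{\cB}}(G^{(1)}, G^{(2)})$ (rejection via (B1)). Both reduce to Vandermonde-type non-vanishing statements that follow from the Reed--Solomon structure together with a dimension count. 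The main obstacle is to reconcile three simultaneous requirements — self-column-orthogonality of $G^{(1)}$, column-orthogonality of $F$ to $G^{(1)}$, and the symplectified MMSP conditions on \emph{both} $\bar{\ACC}$ and $\bar{\REJ}$ — while absorbing the gap $\stt' - \stt$ into $G^{(2)}$ without breaking the rejection condition on subsets $\cB$ with $|\cB| < \stt'$. Handling this gap (which forces a hybrid of symplectic and MDS arguments) is the delicate step; once arranged, Proposition~\ref{theo:MDS-MMSP} applied in spirit to the symplectified access structure closes the construction.
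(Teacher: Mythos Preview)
Your overall architecture matches the paper's: the same cyclic chain (E3) $\Rightarrow$ (E2) $\Rightarrow$ (E1) $\Rightarrow$ (E3), and your (E2) $\Rightarrow$ (E1) via Theorem~\ref{TH4QQ} and no-cloning is exactly the paper's argument. (Your purely algebraic alternative for (E2) $\Rightarrow$ (E1) is not obviously complete as stated; you would need to explain concretely why acceptance on both $\bar{\cA}$ and $\bar{\cA^c}$ forces a linear dependence among the columns of $F$, which is not a one-line consequence of (A1).)

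The substantive divergence is in (E1) $\Rightarrow$ (E3). The paper does \emph{not} use a Reed--Solomon construction at $2\snn$ evaluation points. Instead it invokes a purpose-built pair of lemmas (Lemmas~\ref{AMT} and~\ref{AMX}) that construct, over an iterated algebraic extension $\FF_p[e_1,\ldots]$, block matrices $A,B,C$ with a specific shape so that (i) $A^TJA=0$ and $A^TJB=0$ give the self- and cross-column-orthogonality for free, and (ii) $A$, $B$, $(A,C)$, $(B,C)$ are simultaneously MDS of the right sizes. The paper then sets $G^{(1)}=A$, $G^{(2)}=C$, and takes $F$ from columns of $B$; the acceptance condition reduces to $(B,C)$ being a $(2\snn,2\srr)$-MDS code and the rejection condition to $(A,C)$ being a $(2\snn,2\stt')$-MDS code, both furnished by Lemma~\ref{AMX}. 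The passage from an $(\srr,\stt',\snn)$-QQMMSP to an $(\srr,\stt,\snn)$-QQMMSP is then the trivial monotonicity $\stt'\ge\stt$.

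Your RS sketch is not wrong in spirit, but it is not a proof: you never specify the degree sequences, and the point you flag as ``the delicate step'' (making $G^{(1)}$ self-column-orthogonal, $F$ column-orthogonal to $G^{(1)}$, and both symplectified MDS conditions hold simultaneously) is precisely the content that Lemmas~\ref{AMT}/\ref{AMX} are engineered to deliver. A naive RS/Vandermonde layout on $2\snn$ independent points does not respect the paired structure $\{i,i+\snn\}$ of $\bar{\cA}$, so the acceptance/rejection conditions on symplectified subsets do not reduce to ordinary Vandermonde minors without further work. If you want to push the RS route through, you would effectively need to re-derive an analogue of Lemma~\ref{AMT} (the symmetric relation \eqref{AMP} is what forces self-column-orthogonality while preserving MDS), at which point you are reproducing the paper's machinery.
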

Theorem \ref{TH6} is shown in Appendix \ref{A3}.
Therefore, the threshold scheme, i.e., a $(\srr,\srr-1,\snn)$-secure QQSS protocol
exists when $\snn \ge \srr  \ge (\snn+1)/2$ \cite{CGL99}.
Combining Theorem \ref{TH4QQ} and Theorem \ref{TH6}, we obtain the following corollary.

\begin{coro}\Label{Coro9}
When the condition  $\srr \ge (\snn+1)/2$ holds,
there exists an $(\srr,\stt,\snn)$-secure QQSS protocol with rate 
$(\srr-\max(\stt, \snn-\srr))/\snn$.
\end{coro}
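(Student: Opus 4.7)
The plan is to combine Theorem \ref{TH6} with Theorem \ref{TH4QQ} in a direct way. First, set $\stt' := \max(\stt, \snn-\srr)$. The hypothesis $\srr \ge (\snn+1)/2$ is exactly condition (E1) of Theorem \ref{TH6}, so I would invoke the equivalent condition (E3) to obtain a prime power $q = p^s$ and matrices $G^{(1)}$, $G^{(2)}$, $F$ over $\FF_q$ of sizes $2\snn \times (\snn-\srr+\stt')$, $2\snn \times (\stt'+\srr-\snn)$, and $2\snn \times 2(\srr-\stt')$ respectively, such that $G^{(1)}$ is self-column-orthogonal, $F$ is column-orthogonal to $G^{(1)}$, and $(G^{(1)},G^{(2)},F)$ is an $(\srr,\stt,\snn)$-QQMMSP.

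Next, I would apply the implication (D3) $\Rightarrow$ (D1) of Theorem \ref{TH4QQ} with the threshold access structure $\ACC = \{\cA \subset [\snn] \mid |\cA| \ge \srr\}$ and $\REJ = \{\cB \subset [\snn] \mid |\cB| \le \stt\}$: the QQMMSP just produced certifies that the linear QQSS protocol with $(G^{(1)},G^{(2)},F)$ (Protocol \ref{protocol6}) is $(\ACC,\REJ)$-secure, hence $(\srr,\stt,\snn)$-secure by Definition \ref{def:q-nss}.

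Finally, the rate follows from a dimension count. In Protocol \ref{protocol6} the secret system $\cD_M$ is identified with the stabilizer code subspace $\cD_D[0,G^{(1)}] \subset \cH^{\otimes \snn}$. Since $G^{(1)}$ contributes $\snn - \srr + \stt'$ independent commuting stabilizer generators, this subspace has dimension $q^{\snn - (\snn-\srr+\stt')} = q^{\srr-\stt'}$, giving $\smm = q^{\srr-\stt'}$ and share dimension $D = q^{\snn}$, so by \eqref{Rate1} the rate equals $(\srr - \stt')/\snn = (\srr - \max(\stt, \snn-\srr))/\snn$. The genuine content is packaged inside Theorem \ref{TH6} (existence of the QQMMSP with the stated sizes, proved in the appendix) and Theorem \ref{TH4QQ} (equivalence of protocol security with the algebraic MMSP condition via the stabilizer-style decoder constructed in Section \ref{S8-2}); the corollary itself reduces to the preceding one-line counting argument, and no genuine obstacle arises at this stage.
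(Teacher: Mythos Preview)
Your proposal is correct and matches the paper's own argument, which simply states that the corollary follows by combining Theorem \ref{TH4QQ} and Theorem \ref{TH6}. You have spelled out the rate computation via the stabilizer-subspace dimension $q^{\srr-\stt'}$, which the paper leaves implicit; this is the right count and the only content beyond invoking the two theorems.
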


Only a limited case of the ramp case for QQSS protocols
was discussed in existing studies \cite{ZM15,Matsumoto18}.
That is, the achievability of the rate $(\srr-\max(\stt, \snn-\srr))/\snn$
was not shown in existing studies.
In addition, as a special case of QQSS protocol, we define the QQ version of MDS codes as follows.

\begin{defi}[$(\snn, \srr)$-QQMDS code]\Label{AXP}
We consider the case with $\stt= \snn-\srr$.
Assume that $ G^{(1)}$ is a $2\snn\times 2(\snn-\srr)$ self-column-orthogonal matrix and 
$F$ is a $2\snn\times 2(2\srr-\snn)$ matrix column-orthogonal to $G^{(1)}$.
We say that the randomless linear QQSS protocol with $(G^{(1)}, F)$ 
is an $(\snn, \srr)$-QQMDS code
when it is $\fA$-correct with $\fA= \{ \cA \subset [\snn] \mid |\cA| \geq \srr \}$.
\end{defi}

Usually, a QQMDS code is called a quantum MDS code \cite{KL,Rains,HG}.
Hence, any $(\srr,\snn-\srr,\snn)$-secure QQSS protocol
is an $(\snn,\srr)$-QQMDS code while 
its special case was mentioned in \cite{RST05}.
We have the following characterization for QQMDS codes.

\begin{lemm}\Label{Cor73QQ}
Given a $2\snn \times 2(\snn-\srr)$ self-column-orthogonal matrix $G^{(1)}$
and $2\snn\times 2(2\srr-\snn)$ matrix $F$ column-orthogonal to $G^{(1)}$,
the following conditions are equivalent.
\begin{description}
\item[(F1)]
The randomless linear QQSS protocol with $(G^{(1)},F)$ 
is a $(\snn, \srr)$-QQMDS code.
\item[(F2)]
The matrix $(G^{(1)},F)$ accepts
$\bar{\ACC}$ with $\ACC= \{ \cA \subset [\snn] \mid |\cA| \geq \srr \}$
in the sense of Definition \ref{defi:MMSP}.
\end{description}
\end{lemm}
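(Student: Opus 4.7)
The plan is to derive Lemma \ref{Cor73QQ} as a direct specialization of the correctness half of Theorem \ref{TH4QQ}. First I will observe that Definition \ref{AXP} declares a $(\snn,\srr)$-QQMDS code to be a \emph{randomless} linear QQSS protocol (i.e.\ the $\sy_2=0$ case of Protocol \ref{protocol6}, where $G^{(2)}$ is absent) with $(G^{(1)},F)$ satisfying only $\fA$-correctness for $\fA=\{\cA\subset[\snn]\mid |\cA|\geq \srr\}$; no secrecy condition against any $\fB$ is imposed. Thus the Lemma is purely a statement about the decoder half of Theorem \ref{TH4QQ}.

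Next I would verify that the dimensional conventions of Definition \ref{AXP} match the hypotheses of Theorem \ref{TH4QQ}. Setting $\sx_{\mathrm{thm}}=2\srr-\snn$ in the theorem produces $G^{(1)}$ of size $2\snn\times(\snn-\sx_{\mathrm{thm}})=2\snn\times 2(\snn-\srr)$ and $F$ of size $2\snn\times 2\sx_{\mathrm{thm}}=2\snn\times 2(2\srr-\snn)$, with $F$ column-orthogonal to the self-column-orthogonal $G^{(1)}$. These are precisely the matrix shapes stipulated in Lemma \ref{Cor73QQ}.

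With this identification, I would then invoke the correctness portion of the equivalence $\text{(D1)}\Leftrightarrow\text{(D3)}$ of Theorem \ref{TH4QQ}, specialized to $\REJ=\emptyset$ and $\sy_2=0$. In this specialization, the rejection requirement of an $(\bar\ACC,\bar\REJ)$-MMSP is vacuous, so what remains of (D3) is exactly the statement that $(G^{(1)},F)$ accepts $\bar\ACC$, which is condition (F2). On the other hand, the $(\ACC,\emptyset)$-secure condition (D1) reduces to $\ACC$-correctness of the randomless linear QQSS protocol with $(G^{(1)},F)$, which by Definition \ref{AXP} is exactly the condition (F1) of being an $(\snn,\srr)$-QQMDS code. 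The equivalence (F1)$\Leftrightarrow$(F2) follows.

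The only point requiring care is to cleanly isolate the acceptance$\Leftrightarrow$decodability half of the argument behind Theorem \ref{TH4QQ} from the rejection$\Leftrightarrow$secrecy half. Since Theorem \ref{TH4QQ} is established via Theorem \ref{TH4} in Section \ref{S8-2}, where the decoder $\overline\Gamma$ of STEP 2 of Protocol \ref{protocol6} is constructed explicitly from the acceptance condition on $\bar\fA$ (while the secrecy analysis draws on the rejection condition separately), this separation is already built into the proof of Theorem \ref{TH4QQ}. Extracting the correctness branch therefore suffices and no additional obstacle arises.
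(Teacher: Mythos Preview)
Your proposal is correct and follows essentially the same approach as the paper: both isolate the correctness half of Theorem~\ref{TH4QQ}/Theorem~\ref{TH4} by taking $\REJ=\emptyset$ and $\sy_2=0$. The only presentational difference is that the paper routes the argument through an intermediate condition~(F3) (the EAMDS-code characterization via Lemma~\ref{Coro17}), since it states Corollary~\ref{Cor73} with three equivalent conditions rather than two, whereas you invoke Theorem~\ref{TH4QQ} directly to get (F1)$\Leftrightarrow$(F2).
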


This Lemma will be shown as Corollary \ref{Cor73} in Section \ref{S8-2}.
%\subsection{QQ MDS code}
%Next, we discuss the application to quantum MDS code.
Considering the special case of Theorem \ref{TH6} with $\stt=\snn-\srr$,
we have the following corollary because 
an $(\srr,\snn-\srr,\snn)$-QQMMSP $(G^{(1)},\emptyset,F)$ accepts
$\bar{\ACC}$ with $\ACC= \{ \cA \subset [\snn] \mid |\cA| \geq \srr \}$.

\begin{coro}\Label{Cor74}
For any positive integers $\srr,\snn$ with $\snn \ge \srr > 0$
and any prime $p$,
there exist a positive integer $s$,
a $2\snn \times 2(\snn-\srr)$ self-column-orthogonal matrix $G^{(1)}$,
and a $2\snn\times 2(2\srr-\snn)$ matrix $F$ column-orthogonal 
to the matrix $G^{(1)}$
on $\FF_{q}$ with $q=p^s$
such that 
the matrix $(G^{(1)},F)$ accepts
$\bar{\ACC}$ with $\ACC= \{ \cA \subset [\snn] \mid |\cA| \geq \srr \}$.
\end{coro}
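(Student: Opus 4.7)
The plan is to derive Corollary \ref{Cor74} as an immediate specialization of Theorem \ref{TH6} to the parameter choice $\stt = \snn - \srr$. With this substitution, the auxiliary quantity appearing in condition (E3), namely $\stt' := \max(\stt, \snn - \srr) = \snn - \srr$, forces the column count of $G^{(2)}$ to $\stt' + \srr - \snn = 0$, so $G^{(2)}$ is absent. The remaining dimensions become $2\snn \times (\snn - \srr + \stt') = 2\snn \times 2(\snn-\srr)$ for $G^{(1)}$ and $2\snn \times 2(\srr - \stt') = 2\snn \times 2(2\srr-\snn)$ for $F$, matching exactly what is demanded in the Corollary.

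Next, I would verify that the hypothesis of Theorem \ref{TH6} is met. With $\stt = \snn-\srr$, the requirement $\srr > \stt$ becomes $\srr > \snn/2$, i.e., $\srr \geq (\snn+1)/2$ for integer $\srr$, which is exactly condition (E1) of Theorem \ref{TH6}. Hence Theorem \ref{TH6} produces an $(\srr, \snn-\srr, \snn)$-QQMMSP, and since $G^{(2)}$ is vacuous this coincides with the matrix $(G^{(1)}, F)$ being a $(\bar{\ACC}, \bar{\REJ})$-MMSP for $\ACC = \{\cA \subset [\snn] \mid |\cA| \geq \srr\}$ and $\REJ = \{\cB \subset [\snn] \mid |\cB| \leq \snn-\srr\}$. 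By Definition \ref{defi:MMSP}, being an MMSP entails in particular acceptance of $\bar{\ACC}$, which is precisely the conclusion.

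The only cases not directly covered by this reduction are the boundary regimes $\srr = \snn$ (where $\stt = 0$ violates $\stt > 0$ in Theorem \ref{TH6}) and $\srr \leq \snn/2$ (where the column counts degenerate). Both are handled by inspection: for $\srr = \snn$ the Corollary reduces to requiring that a $2\snn \times 2\snn$ matrix $F$ have full rank, and $F = I_{2\snn}$ with $G^{(1)}$ empty suffices; for $\srr \leq \snn/2$ the number $2(2\srr-\snn)$ is nonpositive, so $F$ is vacuous and the acceptance condition is automatic, leaving only the existence of a self-column-orthogonal $G^{(1)}$ of the required width inside a Lagrangian subspace of $\FF_q^{2\snn}$, which is standard. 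The main obstacle lies not in this deduction, which is essentially bookkeeping, but in Theorem \ref{TH6} itself, whose proof is deferred to Appendix \ref{A3}; the Corollary is best viewed as extracting the $\stt = \snn-\srr$ slice of that theorem.
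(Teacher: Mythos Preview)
Your approach is correct and matches the paper's: both derive the corollary by specializing Theorem \ref{TH6} to $\stt = \snn - \srr$, so that $\stt' = \snn - \srr$, $G^{(2)}$ becomes vacuous, and the resulting $(\srr,\snn-\srr,\snn)$-QQMMSP $(G^{(1)},\emptyset,F)$ in particular accepts $\bar{\ACC}$. You are more careful than the paper in flagging the boundary cases $\srr = \snn$ (where $\stt = 0$ falls outside the hypothesis $\stt > 0$ of Theorem \ref{TH6}) and $2\srr \le \snn$ (where the column count of $F$ degenerates), and your direct handling of those cases is fine.
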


Therefore, due to Lemma \ref{Cor73QQ} and Corollary \ref{Cor74},
there exists 
an $(\srr,\snn-\srr,\snn)$-secure QQSS protocol
that gives an $(\snn,\srr)$-QQMDS code
with a sufficiently large prime power $q$ of any prime number $p$.

\begin{remark}
Usually, a quantum MDS code is defined as a code minimum distance $d$ 
whose dimension is $\snn +2-2d$ \cite{KL,Rains,HG}.
The reference \cite{GBP} showed that the following conditions for stabilizer codes
are equivalent to 
the condition that 
the minimum distance is $d$. 
\begin{description}
\item[(G1)]
It is possible to detect the existence of error 
only with $d-1$ systems.
\item[(G2)]
It is possible to recover the original state when 
errors occur only in $\lfloor (d-1)/2\rfloor$ systems.
\item[(G3)]
It is possible to recover the original state even when 
$\lfloor d-1\rfloor$ systems are lost at most. 
\end{description}
Our definition (Definition \ref{AXP})
is based on the condition (G3).
\end{remark}

%\cite{Smith00} MDS

%\subsection{Relation to stabilizer codes}
\begin{remark}\Label{R5}
We remark the relation with stabilizer codes.
When $G^{(2)}$ is empty, 
the randomless linear QQSS protocol with $(G^{(1)},F)$
with a $2\snn\times (\snn-\sx)$ matrix $G^{(1)}$ and a $2\snn\times \sx$ matrix $F$
has a relation with a stabilizer code.
Since $G^{(1)}$ is self-column-orthogonal,
the group
$N:=\{ G^{(1)} y \}_{y \in \FF_q^{(\snn-\sx)}}$
satisfies the self-orthogonal condition 
$N\subset N^{\perp}:=
\{  v\in \FF_q^{2\snn}| (v, v')=0 ,\quad \forall v' \in N  \}$.
Since $F$ is column-orthogonal to $G^{(1)}$,
$N^\perp=\{ G^{(1)} y+F x \}_{y \in \FF_q^{\sy_1}, x \in \FF_q^{2\sx}}$.
Therefore, 
the randomless linear QQSS protocol with $(G^{(1)},F)$ can be considered as a 
stabilizer code with the stabilizer $N$.

Conversely, we consider a stabilizer code with 
a stabilizer $N \subset \FF_q^{2\snn}$ of dimension $\snn-\sx $.
Then, depending on $N$, we choose a $2\snn\times (\snn-\sx)$ matrix $G^{(1)}$ and a $2\snn\times \sx$ matrix $F$ to satisfy the conditions.
\begin{align}
N=\{ G^{(1)} y \}_{y \in \FF_q^{(\snn-\sx)}},\quad 
N^\perp=\{ G^{(1)} y+F x \}_{y \in \FF_q^{\sy_1}, x \in \FF_q^{2\sx}}.
\end{align}
\end{remark}

\begin{table}[t]
\caption{Comparison for analysis for QQSS protocols}
\Label{hikaku3}
\begin{center}
\begin{tabular}{|c|c|c|c|c|}
\hline
 & general & \multirow{2}{*}{relation}   &  \multirow{2}{*}{ramp} & relation between    \\
 & access &\multirow{2}{*}{to MMSP}  &   \multirow{2}{*}{scheme}  &  QQMDS code and \\
 & structure & &     & QQSS  protocol \\
\hline
\cite{ZM15,Matsumoto18} &No& No & special case    & No \\
\hline
\cite{KL,Rains} &\multirow{2}{*}{No} & \multirow{2}{*}{No} & \multirow{2}{*}{No} &No (They studied   \\
\cite{HG,GBP}&&&& only QQMDS code.)\\
\hline
\multirow{2}{*}{This paper} & \multirow{2}{*}{Yes} & general case  & general case  & Yes \\
& &(Theorem \ref{TH4QQ})  & (Corollary \ref{Cor74}) & (Lemma \ref{Cor73QQ})\\
\hline
\end{tabular}
\end{center}
\end{table}

\subsection{Linear CQSPIR protocol}
We choose
an ${\bar{\snn}}\times \snn$ matrix $G^{(1)}$,
an ${\bar{\snn}}\times \sy_2$ matrix $G^{(2)}$, and
an ${\bar{\snn}}\times \sx$ matrix $F$
on the finite field $\FF_q$ with ${\bar{\snn}}=2 \snn$
 in the same way as 
Section \ref{S6-1-1}.
Then, Similar to Section \ref{S6-1-1},
we define 
a normalized vector $|\psi[G^{(1)}]\rangle \in
\cD_1 \otimes \cdots \otimes \cD_{\snn}$ 
as the stabilizer of the column vectors of $G^{(1)}$.
Then, using the state  $|\psi[G^{(1)}]\rangle$,
we define the linear CQSPIR protocol with $G^{(1)},G^{(2)},Q^{(K)}$
 as Protocol \ref{protocol5CQ}.

\begin{Protocol}[H]                  
\caption{Linear CQSPIR protocol with $G^{(1)},G^{(2)},Q^{(K)}$}        
\Label{protocol5CQ}      
\begin{algorithmic}
\STEPONE
\textbf{Preparation}:
We set the initial state $\rho_{D}$ on 
$\cD_1 \otimes \cdots \otimes \cD_{\snn} $
to be $|\psi[G^{(1)}]\rangle$.
Let $U_{S,2}$ be a random variable subject to the uniform distribution on $\FF_q^{\sy_2}$.
The shared randomness $R = (R_1,\ldots, R_{2\snn})^T $ is generated as
$R_j :=G_j^{(2)} U_{S,2}$ for $j=1, \ldots, 2\snn$.
The randomness $R$ is distributed so that $j$-th server contains $R_j$ and $R_{\snn +j}$
for $j=1, \ldots, \snn$.
\STEPTWO
\textbf{User's encoding}:
The user 
	randomly encodes the index $K$ to classical queries 
	$Q^{(K)}:=(Q_1^{(K)},\ldots,Q_{\red{2}\snn}^{(K)})^T$, \red{which 
is an ${\bar{\snn}} \times \sff \sx $ random matrix for $k=1, \ldots, \sff$.
The user sends $Q_j^{(K)},Q_{\snn+j}^{(K)}$ to the $j$-th server $\mathtt{serv}_j$.}
\STEPTHREE
\textbf{Servers' encoding}:
The $j$-th server $\mathtt{serv}_j$ applies unitary 
$\mathsf{W}( Q_j^{(K)} \vec{m}+R_j , Q_{\snn+j}^{(K)} \vec{m}+R_{\snn+j})$ on ${\cal D}_j$. 
\STEPFOUR
\textbf{Decoding}:
For a subset $\cA \in \fA$,
the user makes the measurement  given by the POVM
$\{\mathbf{W}_{\cA}(y) (\Tr_{\cA^c}
|\psi[G^{(1)}]\rangle \langle \psi[G^{(1)}]|)
\mathbf{W}_{\cA}^\dagger(y)\}_{y \in \FF_q^{2|\cA|}}$.
Based on the obtained outcome, %the end-user recovers $m$.
the user outputs the measurement outcome $m$ as the retrieval result.
\end{algorithmic}
\end{Protocol}

The linear CQSPIR protocol with $G^{(1)},G^{(2)},Q^{(K)}$ 
is called
the standard linear CQSPIR protocol with $(G^{(1)},G^{(2)},F)$
when the query $Q^{(K)}$ is given as \eqref{CAP1} by using $F$.
%Under the same condition,
%the linear EASPIR protocol with $G^{(1)},G^{(2)},Q^{(K)}$
%is called the standard linear EASPIR protocol with $G^{(1)},G^{(2)},F$.
In addition, 
linear CQSPIR protocols discussed in %Kerenidis and de Wolf 
\cite{KdW04,SH19,SH19-2}
do not have randomization $U_S \in \FF_q^{\sy_2} $.
That is, $\sy_2=0$ and it does not have the matrix $G^{(2)}$.
Such a protocol is called the randomless standard linear CQSPIR protocol with $(G^{(1)},F)$.
Then, we have the following theorem.

\begin{theo}\Label{Cor3CQ}
Given a $2\snn \times \snn$ self-column-orthogonal matrix $G^{(1)}$,
a $2\snn\times \sy_2$ matrix $G^{(2)}$,
and
a query $Q^{(K)}$,
the following conditions are equivalent.
\begin{description}
\item[(H1)]
The linear CQSPIR protocol with $G^{(1)},G^{(2)},Q^{(K)}$ 
is $(\ACC,\REJ)$-secure.
\item[(H2)]
The linear CSPIR protocol with $(G^{(1)},G^{(2)}),Q^{(K)}$
is $(\bar{\ACC},\bar{\REJ})$-secure.
\end{description}

In addition, we assume that $F$ is a $2\snn\times \sx$ matrix.
the following conditions for $G^{(1)},G^{(2)}$, and $F$ are equivalent.
\begin{description}
\item[(H3)]
The standard linear CQSPIR protocol with $(G^{(1)},G^{(2)},F)$ 
is $(\ACC,\REJ)$-secure.
\item[(H4)]
The standard linear CSPIR protocol with $((G^{(1)},G^{(2)}),F)$
is $(\bar{\ACC},\bar{\REJ})$-secure.
\item[(H5)]
The matrix $((G^{(1)},G^{(2)}),F)$ is an $(\bar{\ACC},\bar{\REJ})$-MMSP.
\end{description}
\end{theo}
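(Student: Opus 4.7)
\textbf{Proof plan for Theorem \ref{Cor3CQ}.}
The guiding idea is that the quantum protocol on $\snn$ servers is, after symplectification, informationally equivalent to the classical linear CSPIR protocol on ${\bar\snn}=2\snn$ virtual servers, so that each of the three security requirements (correctness, user secrecy, server secrecy) transfers one-to-one between the two worlds. I would first prove (H1)$\Leftrightarrow$(H2), and then obtain the standard-form chain (H3)$\Leftrightarrow$(H4)$\Leftrightarrow$(H5) by specialising the query and invoking Proposition \ref{P2}.

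First I would compute the state produced by STEPs 1--3 of Protocol \ref{protocol5CQ}. Since $|\psi[G^{(1)}]\rangle\langle\psi[G^{(1)}]|$ is proportional to $\sum_{y\in\FF_q^\snn}\mathbf{W}_{[\snn]}(G^{(1)}y)$, applying $\mathbf{W}_{[\snn]}(Q^{(K)}\vec M+G^{(2)}U_{S,2})$ and averaging over $U_{S,2}$ yields
\begin{align*}
\bar\rho_{K,\vec M}=\frac{1}{q^{\sy_2}}\sum_{u\in\FF_q^{\sy_2}}\mathbf{W}_{[\snn]}\bigl(Q^{(K)}\vec M+G^{(2)}u\bigr)|\psi[G^{(1)}]\rangle\langle\psi[G^{(1)}]|\mathbf{W}_{[\snn]}\bigl(Q^{(K)}\vec M+G^{(2)}u\bigr)^{\dagger}.
\end{align*}
Because $\mathbf{W}_{[\snn]}(G^{(1)}y)|\psi[G^{(1)}]\rangle$ is proportional to $|\psi[G^{(1)}]\rangle$, the state $\bar\rho_{K,\vec M}$ is determined solely by the coset of $Q^{(K)}\vec M$ in $\FF_q^{2\snn}/\im(G^{(1)},G^{(2)})$.

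Next I would analyse the measurement on $\cA$. The POVM $\{\mathbf{W}_{\cA}(y)\sigma_{\cA}\mathbf{W}_{\cA}(y)^{\dagger}\}_{y}$ with $\sigma_{\cA}=\Tr_{\cA^c}|\psi[G^{(1)}]\rangle\langle\psi[G^{(1)}]|$ is Weyl-covariant, and because $\sigma_{\cA}$ is stabilised (up to a scalar) precisely by the $\cA$-restriction of the rows of $G^{(1)}$, the outcome identifies the coset $\pi[\im P_{\bar\cA}G^{(1)}]\,P_{\bar\cA}(Q^{(K)}\vec M+G^{(2)}u)$. Averaging the shift $G^{(2)}u$ shows that the user's accessible statistic is exactly $\pi[\im P_{\bar\cA}(G^{(1)},G^{(2)})]\,P_{\bar\cA}Q^{(K)}\vec M$, which is word-for-word the information a classical user receives from servers in $\bar\cA$ in the linear CSPIR protocol with matrix $(G^{(1)},G^{(2)})$ and query $Q^{(K)}$. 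With this identification the three security clauses translate: correctness on $\cA$ becomes correctness on $\bar\cA$; user secrecy on $\cB$ becomes independence from $K$ of the $\bar\cB$-rows of $Q^{(K)}$ (since the $j$-th server holds both $Q_j^{(K)}$ and $Q_{\snn+j}^{(K)}$); and server secrecy reduces, via the coset description of $\bar\rho_{K,\vec M}$, to the Lemma \ref{L7} condition applied to $(G^{(1)},G^{(2)})$. This establishes (H1)$\Leftrightarrow$(H2). The second half is then routine: (H3)$\Leftrightarrow$(H4) is the preceding equivalence specialised to the standard query $Q^{(k)}=FE_k+G^{(2)}U_Q$, and (H4)$\Leftrightarrow$(H5) is Proposition \ref{P2} applied to the standard linear CSPIR protocol with matrix $((G^{(1)},G^{(2)}),F)$ on $\bar\snn=2\snn$ servers and access structure $(\bar\fA,\bar\fB)$.

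The principal obstacle is the measurement analysis in the second step. One must verify that the POVM truly distinguishes cosets of $\im P_{\bar\cA}G^{(1)}$ with no over- or under-counting even when the reduced stabilizer state $\sigma_{\cA}$ is mixed, and that the classical average over $G^{(2)}$-randomness combines cleanly with the intrinsic symplectic coarse-graining modulo $\im P_{\bar\cA}G^{(1)}$ to yield exactly $\im P_{\bar\cA}(G^{(1)},G^{(2)})$. This is precisely the quantum-to-classical translation via symplectification developed for linear CQSS through EASS earlier in the paper, and the same machinery carries over here essentially unchanged, which is why the proof can lean on Proposition \ref{P2} in the final step rather than re-deriving MMSP characterisations from scratch.
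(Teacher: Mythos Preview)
Your proof plan is essentially correct but takes a different route from the paper. The paper does \emph{not} analyse the CQSPIR measurement directly. Instead it builds the EASPIR framework (Protocol~\ref{protocol5}), proves the chain EASPIR $\Leftrightarrow$ FEASPIR $\Leftrightarrow$ CSPIR (Lemma~\ref{L2} and Theorem~\ref{TH2}, the latter proved by the same ``modified protocol'' averaging trick as Theorem~\ref{TH1}), and then simply observes that when $\sy_1=\snn$ the space $\cD_E[0,G^{(1)}]$ is one-dimensional, so the EASPIR protocol degenerates to the CQSPIR protocol~\ref{protocol5CQ}. Theorem~\ref{Cor3CQ} then drops out of Theorem~\ref{TH2} and Corollary~\ref{Cor2} with no further work. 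Your direct approach---computing the encoded state, arguing it depends only on the coset of $Q^{(K)}\vec M$ in $\FF_q^{2\snn}/\im(G^{(1)},G^{(2)})$, and reading off that coset via the Weyl-covariant POVM on the reduced stabilizer state---is more self-contained, but the ``principal obstacle'' you flag (that the POVM on the mixed state $\sigma_\cA$ resolves exactly the cosets of $\im P_{\bar\cA}G^{(1)}$) is precisely what the paper's FEASPIR detour is designed to avoid: by passing to a maximally entangled initial state, the Bell-basis measurement reads off $P_{\bar\cA}x$ exactly (equation~\eqref{XZP}), and the stabilizer coarse-graining is absorbed into the averaging over $G^{(1)}y$ rather than into the POVM structure. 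Both routes arrive at the same place; the paper's is more modular, yours more elementary once the POVM claim is verified.

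One small slip: in your final paragraph the standard query should be $Q^{(k)}=FE_k+(G^{(1)},G^{(2)})U_Q$, not $FE_k+G^{(2)}U_Q$; see \eqref{CAP1} and the computation \eqref{MCUZ}. With only $G^{(2)}$-randomisation the user-secrecy condition would be strictly stronger than the $(\bar\fA,\bar\fB)$-MMSP rejection condition, so (H3)$\Leftrightarrow$(H5) would fail. This is easy to fix and does not affect your overall strategy.
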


Also, we have the following proposition.
\begin{prop}\Label{ZNOSPIRCQ}
In Protocol \ref{protocol5CQ},
even when STEP 4 is replaced by another decoder,
the decoder can be simulated by the decoder given in STEP 3.
That is, once STEPs 1, 2, and 3 are given in Protocol \ref{protocol5CQ},
without loss of generality, we can assume that our decoder is given as STEP 4.
\end{prop}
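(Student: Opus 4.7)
The plan is to mirror the argument for Proposition~\ref{ZNOCQ}: show that, after averaging over the shared randomness $U_{S,2}$ (which is hidden from the user), the user's state on $\cD[\cA]$ is a classical mixture in the basis diagonalized by the STEP~4 measurement, so that any alternative POVM can be simulated by performing STEP~4 and then classically post-processing its outcome.

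First I would fix $\cA \in \fA$, a query $Q^{(K)}$, and files $\vec m$, and write the user's post-STEP-3 reduced state as
\begin{align*}
\rho_{\cA}(\vec m, Q^{(K)}) = \frac{1}{q^{\sy_2}} \sum_{u \in \FF_q^{\sy_2}} \Tr_{\cA^c}\!\bigl[ \mathbf{W}_{[\snn]}(v_u)\, |\psi[G^{(1)}]\rangle\langle\psi[G^{(1)}]|\, \mathbf{W}_{[\snn]}^\dagger(v_u) \bigr],
\end{align*}
where $v_u := Q^{(K)}\vec m + G^{(2)} u \in \FF_q^{2\snn}$. Since $|\psi[G^{(1)}]\rangle$ is stabilized by $\mathbf{W}_{[\snn]}(G^{(1)} y)$ for all $y \in \FF_q^{\snn}$, the commutation relation \eqref{eq:commutative} shows that shifts $v_u$ differing by an element of $\im G^{(1)}$ yield the same state. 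Combined with the partial trace over $\cA^c$, the dependence of $\rho_{\cA}$ on $v_u$ reduces further: only the $\cA$-components matter, and only modulo the ``local stabilizer'' $N_{\cA}$ consisting of elements of $\im G^{(1)}$ that are supported on the $\cA$-coordinates. Consequently $\rho_{\cA}(\vec m, Q^{(K)})$ can be rewritten as a convex mixture of operators of the form $\mathbf{W}_{\cA}(y)\, \sigma_{\cA}\, \mathbf{W}_{\cA}^\dagger(y)$, where $\sigma_{\cA} := \Tr_{\cA^c}|\psi[G^{(1)}]\rangle\langle\psi[G^{(1)}]|$.

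Second, I would verify that the family $\{q^{-|\cA|}\mathbf{W}_{\cA}(y)\sigma_{\cA}\mathbf{W}_{\cA}^\dagger(y)\}_{y}$ resolves the identity on $\cD[\cA]$ (because the Weyl operators form a unitary $1$-design on each factor and $\Tr\sigma_{\cA}=1$), and that distinct cosets of $N_{\cA}$ index operators with mutually orthogonal supports. These two facts upgrade the STEP~4 assignment to a \emph{bona fide} coarse-grained projective measurement whose outcomes label the cosets of $N_{\cA}$. Because $\rho_{\cA}(\vec m, Q^{(K)})$ is, by the first step, a classical mixture with respect to exactly this partition, any POVM $\{E_w\}_w$ applied by the user produces output statistics identical to first running STEP~4, observing a coset label $y$, and then outputting $w$ according to the conditional distribution $\Pr[w\mid y] = \Tr[E_w\, \mathbf{W}_{\cA}(y)\sigma_{\cA}\mathbf{W}_{\cA}^\dagger(y)]/\Tr[\mathbf{W}_{\cA}(y)\sigma_{\cA}\mathbf{W}_{\cA}^\dagger(y)]$. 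This gives the desired simulation.

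The main obstacle lies in the second step: establishing mutual orthogonality of the displaced reduced stabilizer states and identifying precisely which shifts collapse to the same POVM element. This is a standard stabilizer-code computation in spirit, but in the present ramp / general-access setting it requires delicate bookkeeping with the symplectic form, with the image of $P_{\bar{\cA}} G^{(1)}$, and with the interaction between $G^{(2)}$ and $N_{\cA}$. Once this orthogonality is pinned down, the argument for reducing any decoder to STEP~4 follows by the purely classical post-processing observation above, in exact analogy to the CQSS case.
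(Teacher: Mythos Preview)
Your direct approach is essentially correct, but it is not the route the paper takes, so a brief comparison is in order.

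The paper does not prove Proposition~\ref{ZNOSPIRCQ} by analyzing the stabilizer structure of the user's reduced state. Instead it observes that a linear CQSPIR protocol is the special case $\sy_1=\snn$ of a linear EASPIR protocol (Protocol~\ref{protocol5}), and then invokes Proposition~\ref{ZNOSPIR}. That proposition is in turn proved ``in the same way as Proposition~\ref{ZNO2}'': one randomly applies the unitary $\mathbf{W}_{\cA}(P_{\cA}H^{(1)}Y)\otimes\mathbf{W}(H^{(1)}Y)$ with uniform $Y\in\FF_q^{\sy_1}$, which converts the EASPIR state into the FEASPIR state of Protocol~\ref{protocol4}. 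In the FEASPIR setting every possible state is diagonal in the generalized Bell basis $\{\mathbf{W}_{\cA}(y)|\phi\rangle^{\otimes|\cA|}\}$, so Proposition~\ref{ZNO} applies immediately, and one unwinds the reduction. The same mechanism underlies the paper's proof of Proposition~\ref{ZNOCQ}; that result is also obtained as a corollary of Proposition~\ref{ZNO2}, not by a direct stabilizer computation.

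Your argument instead works intrinsically in the CQSPIR protocol: you show that $\sigma_{\cA}=\Tr_{\cA^c}|\psi[G^{(1)}]\rangle\langle\psi[G^{(1)}]|$ is (proportional to) the projection onto the joint $+1$ eigenspace of the local stabilizer $N_{\cA}|_{\cA}$, and that the conjugates $\mathbf{W}_{\cA}(y)\sigma_{\cA}\mathbf{W}_{\cA}^\dagger(y)$ are the projections onto the various joint eigenspaces indexed by characters of $N_{\cA}|_{\cA}$, hence mutually orthogonal or equal. This is a valid and self-contained route; it avoids introducing the auxiliary entangled system $\cD_E$ and the randomization over $Y$, at the cost of doing the eigenspace bookkeeping you flag as the ``main obstacle.'' The paper's route, by contrast, buys uniformity: once the FEASS/FEASPIR reduction is in place, all four propositions (\ref{ZNO}, \ref{ZNO2}, \ref{ZNOCQ}, \ref{ZNOSPIR}, \ref{ZNOSPIRCQ}) fall out of the single Bell-diagonal observation in Proposition~\ref{ZNO}, with no further stabilizer analysis needed.
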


The above theorem and proposition will be shown after Corollary \ref{Cor2} later.
Combining Theorems \ref{TH3CQ} and \ref{Cor3CQ}, we obtain the following corollary.

\begin{coro}\Label{Coro7CQ}
When 
$\snn \ge \srr > \stt \ge \snn/2>0$,
there exists an $(\srr,\stt,\snn)$-secure linear CQSPIR protocol of rate $2(\srr-\stt)/\snn$.
\end{coro}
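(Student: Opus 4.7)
The plan is to combine Theorems \ref{TH3CQ} and \ref{Cor3CQ} directly. Under the hypothesis $\snn \ge \srr > \stt \ge \snn/2 > 0$, we automatically have $\srr > \snn/2$, so the hypothesis of Theorem \ref{TH3CQ} is met. Moreover, $\stt \ge \snn/2$ yields $[2\stt-\snn]_+ = 2\stt - \snn$ and $\max(2\stt,\snn) = 2\stt$, so that theorem produces, for some $q = p^s$, a $2\snn \times \snn$ self-column-orthogonal matrix $G^{(1)}$, a $2\snn \times (2\stt-\snn)$ matrix $G^{(2)}$, and a $2\snn \times 2(\srr-\stt)$ matrix $F$ over $\FF_q$ such that $(G^{(1)},G^{(2)},F)$ is an $(\srr,\stt,\snn)$-CQMMSP, i.e., $((G^{(1)},G^{(2)}),F)$ is an $(\bar\ACC,\bar\REJ)$-MMSP with $\ACC = \{\cA \subset [\snn] \mid |\cA|\ge \srr\}$ and $\REJ = \{\cB \subset [\snn] \mid |\cB|\le \stt\}$.

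Next, I would form the standard linear CQSPIR protocol with $(G^{(1)},G^{(2)},F)$ as in Protocol \ref{protocol5CQ} with the standard-form query \eqref{CAP1} constructed from $F$, where each file $M_i$ is an element of $\FF_q^{\sx}$ with $\sx = 2(\srr-\stt)$. Invoking the equivalence (H3) $\Leftrightarrow$ (H5) in Theorem \ref{Cor3CQ}, the MMSP property obtained in the previous step immediately yields that this standard linear CQSPIR protocol is $(\ACC,\REJ)$-secure, which is exactly $(\srr,\stt,\snn)$-security.

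Finally I would verify the rate. By the construction of Protocol \ref{protocol5CQ}, each server's system $\cD_j$ has dimension $q$, so the total download dimension is $D = q^{\snn}$, while the file size is $\vM = q^{\sx} = q^{2(\srr-\stt)}$. Substituting into \eqref{Rate2},
\begin{align*}
R = \frac{\log \vM}{\log D} = \frac{2(\srr-\stt)\log q}{\snn \log q} = \frac{2(\srr-\stt)}{\snn},
\end{align*}
which is the claimed rate.

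There is no substantive obstacle: the corollary is a direct composition of the existence result in Theorem \ref{TH3CQ} with the characterization in Theorem \ref{Cor3CQ}, plus a one-line computation of the rate from \eqref{Rate2}. The only point requiring care is checking that the hypothesis $\stt \ge \snn/2$ collapses the $[2\stt - \snn]_+$ and $\max(2\stt,\snn)$ expressions in Theorem \ref{TH3CQ} to the clean values $2\stt - \snn$ and $2\stt$, so that $F$ has exactly $2(\srr-\stt)$ columns, giving the advertised rate.
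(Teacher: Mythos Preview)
Your proposal is correct and follows precisely the paper's own approach: the paper states that Corollary~\ref{Coro7CQ} is obtained by ``Combining Theorems~\ref{TH3CQ} and~\ref{Cor3CQ},'' and you have filled in exactly those details, including the simplification of $[2\stt-\snn]_+$ and $\max(2\stt,\snn)$ under $\stt\ge\snn/2$ and the rate computation from~\eqref{Rate2}.
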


\red{The preceding paper \cite{SH20} proposed a protocol 
with the rate $2(\snn-\stt)/\snn$ when $\srr=\snn$.
That is, no existing study considered CQSPIR protocols 
with general qualified sets $\ACC$ including the case with $\srr<\snn$. 
Hence, Corollary \ref{Coro7} can be considered as a generalization of the existing 
result \cite{SH20}.
The comparison with the existing CQSPIR results is summarized in Table \ref{hikaku2}.
In the classical case,} 
the optimal rate of $(\srr,\stt,\snn)$-secure SPIR protocol
is $(\srr-\stt)/\snn$ \cite[Corollary 4]{SH2022}.
Hence, the rate of Corollary \ref{Coro7} is twice of the classical case.
In addition, the rate of CQSPIR cannot exceed $1$ due to the condition $\stt \ge \snn/2$.

\begin{table}[t]
\caption{Comparison for analysis for CQSPIR protocols}
\Label{hikaku2}
\begin{center}
\begin{tabular}{|c|c|c|c|c|}
\hline
 & general & relation   &  threshold type   \\
 & qualified &to  &   qualified  \\
 & set &MMSP &   set  \\
\hline
\cite{KdW04,SH19,SH19-2} &No& No & case with $\srr=\snn$ and special $\stt$   \\
\hline
\cite{SH20} &No & No & case with $\srr=\snn$ and general $\stt$  \\
\hline
\multirow{2}{*}{This paper} & \multirow{2}{*}{Yes} & general case  & general case  \\
&&(Theorem \ref{Cor3CQ})& (Corollary \ref{Coro7CQ})\\
\hline
\end{tabular}
\end{center}
\end{table}

\begin{remark}\Label{RTTY}
One may consider that SPIR for quantum states can be discussed in this framework.
However, a simple application of this method to 
SPIR for quantum states does not work due to the following reasons.
To transmit quantum states, we need to perform a quantum operation across several subsystems.
In the case of SS, only the dealer makes encoding.
Hence, a quantum operation across several subsystems is possible.
However, in the case of SPIR,
several servers perform encoding operations individually.
Hence, it is impossible to perform a quantum operation across several subsystems.
This is reason why 
we cannot apply the same scenario to SPIR for quantum states.
\end{remark}

%\section{Quantum linear protocols}\Label{S6}
\section{Quantum SS protocols with preshared entanglement with end-user}\Label{S6-1}
This section introduces EASS protocols and presents our results for EASS protocols, which implies our results for CQSS protocols.

\subsection{Formulation}
Modifying the CQSS setting by allowing prior entanglement between 
the dealer and the end-user, we formulate
an SS protocol with preshared entanglement with user.
Since this problem setting employs entanglement assistance, 
this protocol is called an entanglement-assisted secret sharing (EASS) protocol.
As illustrated in Fig. \ref{fig:SS} (c), an EASS protocol with one dealer, $\snn$ players, and 
one end-user is defined as Protocol \ref{Flow2}.

\begin{Protocol}[H]                  
\caption{EASS protocol}         
\Label{Flow2}      
\begin{algorithmic}
\STEPONE
\textbf{Preparation}:
The dealer and the end-user 
have quantum systems $\cD_{D}$ and $\cD_{E}$, respectively,
and share a state $\rho_{DE}$ on the joint quantum system $\cD_{D}\otimes \cD_{E}$
before the protocol.
\STEPTWO
\textbf{Share generation}:
Depending on the message $m \in \cM$,
the dealer prepares 
$\snn$ shares as the joint system $\cD_1\otimes \cdots \otimes \cD_{\snn}$
by applying a TP-CP map $\Gamma[m]$ from $\cD_{D}$ to $\cD_1\otimes \cdots \otimes \cD_{\snn}$,
and sends the $j$-th share system $\cD_j$ to the $j$-th player.
\STEPTHREE
\textbf{Decoding}:
For a subset $\cA \in \fA$,
the end-user decodes the message from the received state from players $\cA$
by a decoder, which is defined as a POVM
$\mathsf{Dec}(\cA,E) \coloneqq \{ {Y}_{\cA,E}(w) \mid  w\in[\smm] \}$ on 
$\cD[\cA]\otimes \cD_E$.
The end-user outputs the measurement outcome $W$ as the decoded message.
\end{algorithmic}
\end{Protocol}

\begin{defi}[$(\ACC,\REJ)$-EASS] \Label{def:dense-nss}
For an access structure $(\ACC, \REJ)$  on $[\snn]$,
	an EASS protocol defined as Protocol \ref{Flow2} %$\PNSS$ 
	is called $(\ACC,\REJ)$-secure
	if 
	the following conditions are satisfied. $\ACC$-correctness is defined in the same way as Definition \eqref{def:nss}.
	%The protocol $\Phi$ is called a $(\stt,\srr,\snn,\sdd)$-NSS protocol with security $(\alpha,\beta)$ if the following conditions are satisfied.
\begin{itemize}[leftmargin=1.5em]
\item \textbf{Correctness}:
The relation 
    \begin{align*}
\Tr \Gamma[m](\rho_{DE}) (Y_{\cA,E}(m)\otimes I_{\cA^c} )=1
    \end{align*}
holds for $m \in \cM$.
\item \textbf{Secrecy}:
The state $\Tr_{(\cB,E)^c} \Gamma[m](\rho_{DE})$ does not depend on $m \in \cM$.
\end{itemize}
In particular, when the system $\cD_E$ has the same dimension as $\cD_D$
and the state $\rho_{DE}$ is a maximally entangled state, 
the EASS protocol is called fully EASS (FEASS) protocol. 
\end{defi}

\subsection{Linear FEASS protocol}
Next, we formulate linear protocols with preshared entanglement with user.
Given an ${\bar{\snn}}\times (\sy+\sx)$ matrix $(G,F)$ on the finite field $\FF_q$ 
with ${\bar{\snn}}=2 \snn$, we define
an EASS protocol as follows.
We choose the message set $\cM$ as $\FF_q^{\sx}$,
and choose the Hilbert space ${\cal H}$ as the space spanned by 
$\{ |x\rangle \}_{x \in \FF_q}$.

We define $\cD_j$ and $\cD_{E,j}$ as ${\cal H}$, and define
${\cal D}_D$, $\cD[\cA]$, ${\cal D}_E$, and
${\cal D}_E[\cA]$ as
$\otimes_{j=1}^{\snn}\cD_j$,
$\otimes_{j \in \cA}\cD_j$,
$\otimes_{j=1}^{\snn}\cD_{E,j}$, and 
$\otimes_{j \in \cA}\cD_{E,j}$, respectively, for any subset $\cA \subset [\snn]$.
Hence, ${\cal D}_D$ and ${\cal D}_E$ are ${\cal H}^{\otimes \snn}$.
In the following sections, we adopt the above definitions.
Also, we use the maximally entangled state $
|\phi\rangle:=\sum_{x \in \FF_q}\frac{1}{\sqrt{q}} |x\rangle |x\rangle$
 on ${\cal H}^{\otimes 2}$.
In particular, we define the state $|\Phi\rangle$ as
 the state on the composite 
${\cal D}_D\otimes {\cal D}_E$ whose reduced density on 
$\cD_{j} \otimes \cD_{E,j}$ is $|\phi\rangle$.
%$|\Phi\rangle :=|\phi\rangle^{\otimes \snn} $
\begin{table}[t]
\caption{Symbols for FEASS and EASS protocols}
\Label{symbol1}
\begin{center}
\begin{tabular}{|c|c|c|}
\hline
symbol & meaning & definition   \\
\hline
$\cD_j$ & the j-th share system & $\cH$  \\
\hline
$\cD_{E,j}$ &the reference system of $\cD_j$ &  $\cH$  \\
\hline
${\cal D}_D$& dealer's system&  $\otimes_{j=1}^{\snn}\cD_{j}=\cH^{\otimes n}$ \\
\hline
$\cD[\cA]$ &share system of $\cA$ & $\otimes_{j \in \cA}\cD_j$ \\
\hline
 ${\cal D}_E$& end-user's system &  $\otimes_{j=1}^{\snn}\cD_{E,j}=\cH^{\otimes n}$ \\
\hline
${\cal D}_E[\cA]$ & the reference system of $\cD[\cA]$  & $\otimes_{j \in \cA}\cD_{E,j}$  \\
\hline
\end{tabular}
\end{center}
\end{table}
Then, we define the linear FEASS protocol with $(G,F)$ as Protocol \ref{protocol1}.
\begin{Protocol}[H]                  
\caption{Linear FEASS protocol with $(G,F)$}         
\Label{protocol1}      
\begin{algorithmic}
\STEPONE
\textbf{Preparation}:
We set the initial state $\rho_{DE}$ on ${\cal D}_D \otimes {\cal D}_E$
to be $|\Phi\rangle$.
\STEPTWO
\textbf{Share generation}:
The dealer prepares a uniform random variable $U_D \in \FF_q^{\sy} $.
For $m \in \cM$, the dealer applies 
$\mathbf{W}_{[\snn]}( Fm+GU_D)$ on ${\cal D}_D$. That is,
the encoding operation $\Gamma[m]$ \red{on ${\cal D}_D$} is defined as
\begin{align}
\Gamma[m](\rho):= \sum_{u_D \in \FF_q^{\sy}} 
\frac{1}{q^{\sy}}
\mathbf{W}_{[\snn]}( Fm+Gu_D) \rho 
\mathbf{W}_{[\snn]}^\dagger( Fm+Gu_D).
\end{align}
The shares are given as parts of the state $ \Gamma[m](|\Phi\rangle \langle \Phi|)$.
\STEPTHREE
\textbf{Decoding}:
For a subset $\cA \in \fA$,
the end-user takes partial trace on $\cD_{E}[\cA^c]$, and 
makes measurement on the basis
$\{\mathbf{W}_{\cA}(y) |\Phi\rangle^{|\cA|}\}_{y \in \FF_q^{2|\cA|}}$.
Based on the obtained outcome, the end-user recovers $m$.
%Same as Protocol \ref{}.
\end{algorithmic}
\end{Protocol}

\begin{lemm}\Label{L2U}
The following conditions for $(G,F)$ are equivalent.
\begin{description}
\item[(I1)]
The linear FEASS protocol with $(G,F)$ 
is $(\ACC,\REJ)$-secure.
\item[(I2)]
The linear CSS protocol with $(G,F)$ 
is $(\bar{\ACC},\bar{\REJ})$-secure.
\item[(I3)]
The matrix $(G,F)$ is an $(\bar{\ACC},\bar{\REJ})$-MMSP.
\end{description}
\end{lemm}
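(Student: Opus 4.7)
The equivalence (I2) $\Leftrightarrow$ (I3) is immediate from Proposition \ref{prop1} applied to $(\bar{\ACC}, \bar{\REJ})$, so I focus on (I1) $\Leftrightarrow$ (I2). My strategy is to translate the quantum encoding $\Gamma[m](|\Phi\rangle\langle\Phi|)$ into a classical encoding on $\FF_q^{2\snn}$ by exploiting that $\{\mathbf{W}_{[\snn]}(v)|\Phi\rangle : v \in \FF_q^{2\snn}\}$ is an orthonormal basis of $\cD_D \otimes \cD_E$. Under this basis the encoded state becomes a classical mixture over the label $v = Fm + GU_D$, so I expect the FEASS protocol on $\snn$ players to collapse to the linear CSS protocol on the symplectified index set $[2\snn]$ with access structure $(\bar{\ACC}, \bar{\REJ})$.

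\textbf{Correctness.} Using the pointwise factorization $|\Phi\rangle = \bigotimes_{j=1}^{\snn}|\phi\rangle_{\cD_j\cD_{E,j}}$ together with $\mathbf{W}_{[\snn]}(v) = \bigotimes_{j=1}^{\snn} \mathsf{W}(a_j, b_j)$ for $v = (a, b)$, I will show that the partial trace of $\mathbf{W}_{[\snn]}(v)|\Phi\rangle\langle\Phi|\mathbf{W}_{[\snn]}^\dagger(v)$ over $\cD[\cA^c] \otimes \cD_E[\cA^c]$ equals the rank-one state whose vector is $\mathbf{W}_{\cA}(P_{\bar{\cA}}v)|\phi\rangle^{\otimes|\cA|}$. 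Consequently the Bell-basis measurement prescribed in STEP 3 returns the classical outcome $y = P_{\bar{\cA}}(Fm + GU_D)$; this basis is optimal without loss of generality because the encoded ensemble is simultaneously diagonal in it, so no POVM can extract more information. Hence $m$ is decodable iff the affine cosets $P_{\bar{\cA}}Fm + \im P_{\bar{\cA}}G$ are pairwise disjoint across $m$, which is exactly condition (A1) for $(G,F)$ on $\bar{\cA}$.

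\textbf{Secrecy and closing.} A parallel calculation---tracing only over $\cD[\cB^c]$---will produce, on the $\cD[\cB]\otimes\cD_E[\cB]$ factor, a Bell-basis diagonal ensemble whose weights are the distribution of $P_{\bar{\cB}}(Fm + GU_D)$ under uniform $U_D$, tensored with a fixed maximally mixed state on $\cD_E[\cB^c]$. This state is $m$-independent iff $P_{\bar{\cB}}Fm \in \im P_{\bar{\cB}}G$ for every column of $F$, i.e., condition (B1) on $\bar{\cB}$. Combining the two reductions gives (I1) $\Leftrightarrow$ (I2), and Proposition \ref{prop1} then completes the triangle to (I3). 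The main obstacle is to carefully establish the pointwise reduction showing that the relevant partial trace depends on $v$ only through $P_{\bar{\cA}} v$ or $P_{\bar{\cB}} v$, and to justify that the Bell-basis decoder is without loss of generality; once this bridge is in place the remainder is the classical MMSP analysis of Section \ref{S5-2}.
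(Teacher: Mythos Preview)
Your proposal is correct and follows essentially the same route as the paper: both arguments rest on the factorization of $|\Phi\rangle$ into per-site Bell pairs, yielding the key reduction that the reduced state on $\cD[\cA]\otimes\cD_E$ (resp.\ $\cD[\cB]\otimes\cD_E$) depends on $v=Fm+GU_D$ only through $P_{\bar{\cA}}v$ (resp.\ $P_{\bar{\cB}}v$), which is exactly equation~\eqref{XZP} in the paper. One minor remark: your ``without loss of generality'' justification for the Bell-basis decoder is not needed for Lemma~\ref{L2U} itself, since Protocol~\ref{protocol1} already fixes that decoder in STEP~3; the paper treats decoder optimality separately as Proposition~\ref{ZNO}.
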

\begin{proof}
Since the equivalence between (I2) and (I3) follows from Proposition \ref{prop1},
we show the equivalence between (I1) and (I2).

%When $(G,F)$ is an $(\bar{\ACC},\bar{\REJ})$-MMSP,
%the above linear EASS protocol is an $(\ACC,\REJ)$-EASS.
Given a subset $\cA \subset [\snn]$ and $x \in \FF_q^{{\bar{\snn}}}$, we consider
a linear FEASS protocol with $(G,F)$. 
In this protocol, we have
 \begin{align}
& \Tr_{(\cA,E)^c} \mathbf{W}_{[\snn]}( x) (|\Phi\rangle \langle \Phi|)
 \mathbf{W}_{[\snn]}^\dagger( x)  \nonumber \\
 =&
\mathbf{W}_{\cA}(P_{\bar{\cA}} x) ( |\phi\rangle \langle \phi|^{\otimes |\cA|})
 \mathbf{W}_{\cA}^\dagger(P_{\bar{\cA}} x) \otimes \rho_{mix}^{\otimes |\red{\cA^c}|}.\Label{XZP}
 \end{align}
Hence, the above state can be identified with the 
classical information $ P_{\bar{\cA}} x$.
That is, the analysis on 
the correctness and the secrecy 
in the linear FEASS protocol with $(G,F)$ 
is equivalent with 
the correctness and the secrecy 
in the linear CSS protocol with $(G,F)$. 
We obtain the equivalence between (I1) and (I2).
\end{proof}

\begin{prop}\Label{ZNO}
In Protocol \ref{protocol1},
even when STEP 3 is replaced by another decoder,
the decoder can be simulated by the decoder given in STEP 3.
That is, once STEPs 1 and 2 are given in Protocol \ref{protocol1},
without loss of generality, we can assume that our decoder is given as STEP 3.
\end{prop}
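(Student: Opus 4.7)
The plan is to show that the Bell-type basis measurement in STEP 3 is a sufficient statistic for any information about $m$ contained in the post-encoding state available to the end-user, so any alternative decoder can be replaced by that measurement followed by classical post-processing.

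First I would observe that after STEP 2, the global state on $\cD_1\otimes\cdots\otimes\cD_\snn\otimes\cD_E$ is a mixture, over $u_D$, of the pure states $\mathbf{W}_{[\snn]}(Fm+Gu_D)|\Phi\rangle$. Because the encoding unitary acts only on $\cD_D$, the reduced state on $\cD_E[\cA^c]$ remains the completely mixed state $\rho_{\mathrm{mix}}^{\otimes |\cA^c|}$ regardless of $m$ and $u_D$, and is uncorrelated (in a product decomposition) with the rest of the end-user's information; this follows by the same computation that yields \eqref{XZP}. Consequently, partial tracing over $\cD_E[\cA^c]$, as in STEP 3, discards no information relevant to decoding $m$, and any hypothetical decoder on $\cD[\cA]\otimes\cD_E$ can be replaced without loss by a decoder on $\cD[\cA]\otimes\cD_E[\cA]$.

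Next, I would use \eqref{XZP} to identify the state on $\cD[\cA]\otimes\cD_E[\cA]$ after these partial traces as
\begin{align}
\rho_{\cA}(m)=\sum_{u_D\in \FF_q^{\sy}}\frac{1}{q^{\sy}}\,B_{P_{\bar{\cA}}(Fm+Gu_D)},
\end{align}
where $B_y:=\mathbf{W}_{\cA}(y)|\Phi\rangle\langle\Phi|^{|\cA|}\mathbf{W}_{\cA}^{\dagger}(y)$ for $y\in\FF_q^{2|\cA|}$. The family $\{B_y\}_y$ is a complete rank-one PVM on $\cD[\cA]\otimes\cD_E[\cA]$ because the states $\mathbf{W}_{\cA}(y)|\Phi\rangle^{|\cA|}$ are a generalized Bell basis. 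Thus $\rho_{\cA}(m)$ is fully classical in this basis, and the measurement of STEP 3 returns precisely the random variable $Y=P_{\bar{\cA}}(Fm+GU_D)$ together with a complete statistical description of the state.

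Finally, given any alternative POVM $\{Y_{\cA,E}(w)\}_{w\in[\smm]}$ on $\cD[\cA]\otimes\cD_E$, I would define the classical stochastic kernel
\begin{align}
\tilde{p}(w\mid y):=\langle\Phi|^{|\cA|}\mathbf{W}_{\cA}^{\dagger}(y)\bigl(\Tr_{\cD_E[\cA^c]}Y_{\cA,E}(w)\bigr)\mathbf{W}_{\cA}(y)|\Phi\rangle^{|\cA|},
\end{align}
which is a valid conditional probability distribution by positivity and completeness of the POVM. A direct computation using the classical form of $\rho_{\cA}(m)$ shows $\Tr Y_{\cA,E}(w)\,\Gamma[m](|\Phi\rangle\langle\Phi|)=\sum_y \Pr(Y=y\mid m)\,\tilde{p}(w\mid y)$, so performing the STEP 3 measurement and then sampling $w$ from $\tilde{p}(\cdot\mid y)$ reproduces the statistics of the alternative decoder exactly. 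The main (minor) obstacle is keeping track of the bookkeeping between $\cD_E[\cA]$ and $\cD_E[\cA^c]$ when reducing a general POVM on the full end-user system; everything else is a standard consequence of the generalized Bell basis being a complete orthonormal PVM.
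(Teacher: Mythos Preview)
Your proposal is correct and follows essentially the same approach as the paper: both arguments use \eqref{XZP} to see that the state on $\cD[\cA]\otimes\cD_E$ factors as a Bell-diagonal state on $\cD[\cA]\otimes\cD_E[\cA]$ tensored with the completely mixed state on $\cD_E[\cA^c]$, and then invoke the fact that any POVM on a state diagonal in an orthonormal basis can be simulated by first measuring in that basis and then classically post-processing. Your version is more explicit in writing out the simulation kernel $\tilde{p}(w\mid y)$, whereas the paper leaves this step implicit, but the underlying idea is identical.
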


\begin{proof}
When the end-user can access all shares,
any possible state $\Gamma[m](\rho)$ is a diagonal state with respect to the basis 
$\{\bW_{[\snn]}(x) |\Phi\rangle\}_{x \in \FF_q^{2n}}$.
For any subset $\sA \subset [\snn]$, 
the state reduced density with respect to $(\cA E)^c$ is given as \eqref{XZP}.
Since the state $\rho_{mix}^{\otimes |\red{\cA^c}|}$ has no information,
without loss of generality, we can consider that the state is 
$\mathbf{W}_{\cA}(P_{\bar{\cA}} x)  |\phi\rangle \langle \phi|^{}
 \mathbf{W}_{\cA}^\dagger(P_{\bar{\cA}} x)$, which is 
 a diagonal state with respect to the basis 
$\{\bW_{\sA}(x) |\phi\rangle^{\otimes |\cA|}\}_{x \in \FF_q^{2|\cA|}}$.
Therefore, 
even when STEP 3 is replaced by another decoder,
the decoder can be simulated by the decoder given in STEP 3.
 \end{proof}

\if0
Next, given an EA convertible ${\bar{\snn}}\times (\sy+\sx)$ matrix $(G,F)$ on the finite field $\FF_q$ 
with ${\bar{\snn}}=2 \snn$, we define
an EASS protocol as follows.
\fi

\subsection{Linear EASS protocol}\Label{S6-1-3}
% and linear CQSS protocol}
Next, we focus on %assume that the ${\bar{\snn}}\times \sy$ matrix $G$ is written as $(G^{(1)},G^{(2)})$ with 
a $2\snn\times \sy_1$ self-column-orthogonal matrix $G^{(1)}$
and
a $2\snn\times \sy_2$ matrix $G^{(2)}$.
\red{We denote 
$\sy_1$ column vectors of $G^{(1)}$ by $g^{1}, \ldots, g^{\sy_1}$.}
In this case, we propose another EASS protocol.
For this aim, we choose $\snn-\sy_1$ column vectors 
$\bar{g}^1, \ldots, \bar{g}^{\snn-\sy_1}$
such that all the vectors $g^{1}, \ldots, g^{\sy_1}, 
\bar{g}^1, \ldots, \bar{g}^{\snn-\sy_1}$
are orthogonal to each other in the sense of symplectic inner product.
We also choose vectors $h^{1}, \ldots, h^{\sy_1} $
such that 
$\langle h^{j},{g}^{j'}\rangle =\delta_{j,j'}$ and 
$\langle h^{j},{h}^{j'}\rangle=0$.
We denote the matrix
$(h^{1}, \ldots, h^{\sy_1}) $ by $H^{(1)}$.

For $y \in \FF_q^{\sy_1}$ and $x \in \FF_q^{\snn-\sy_1}$,
we define the vector $|x,y\rangle$ as
\begin{align}
\mathbf{W}_{[\snn]}(g_j)|x,y\rangle&=\omega^{y_j} |x,y\rangle \hbox{ for }j=1, \ldots, \sy_1 \\
\mathbf{W}_{[\snn]}(\bar{g}_j)|x,y\rangle&=\omega^{x_j} |x,y\rangle \hbox{ for }j=1, \ldots, \snn-\sy_1\\
\mathbf{W}_{[\snn]}(H^{(1)} \bar{y})|x,y\rangle&= |x,y+\bar{y}\rangle. \Label{CMD}
\end{align}
We define the space 
$\cD_E[y,G^{(1)}]$ as the space spanned by $\{|x,y\rangle\}_{x \in \FF_q^{\snn-\sy_1}}$.
We define the entangled state
\begin{align}
|\Phi[y,G^{(1)}]\rangle:= \sum_{ x \in \FF_q^{\red{n-}\sy_1}}
\frac{1}{\sqrt{q^{\red{n-}\sy_1}}}|x,y\rangle|x,y\rangle.
\Label{MUR}
\end{align}

To consider the relation with CQSS protocols, % and QQSS protocols,
we modify Protocol \ref{protocol1} as follows.
The initial state $|\Phi\rangle$ is replaced by $ |\Phi[0,G^{(1)}]\rangle$.
The random variable $U_D$ is written as $(U_{D,1},U_{D,2})$ with
$U_{D,1} \in \FF_q^{\by_1}$ and $U_{D,2} \in \FF_q^{\by_2}$ so that
$G U_D= G^{(1)} U_{D,1}+G^{(2)} U_{D,2}$.
The applied unitary $ \mathbf{W}_{[\snn]}(F m +G U_{D})$ is replaced by 
$ \mathbf{W}_{[\snn]}(Fm+G^{(2)} U_{D,2})$.
In this protocol, the end-user's space $\cD_E$ is given as $\cD_E[0,G^{(1)}]$.
This protocol is formally written as Protocol \ref{protocol2}, and is called 
the linear EASS protocol with $(G^{(1)},G^{(2)},F)$.
In this notation, 
the first matrix $G^{(1)}$ identifies the initial state,
the second matrix $G^{(2)}$ identifies the direction of the randomization for secrecy,
and the third matrix $F$ identifies the direction of the message imbedding.
That is, 
the linear EASS protocol with $(G^{(1)},G^{(2)},F)$ is different from 
the linear FEASS protocol with $((G^{(1)},G^{(2)}),F)$ because
the latter is characterized as follows;
The initial state is the maximally entangled state between 
$\cD_D \otimes \cD_E$ and the matrix $(G^{(1)},G^{(2)})$ determines 
 the direction of the randomization for secrecy.

\begin{Protocol}[H]                  
\caption{Linear EASS protocol with $(G^{(1)},G^{(2)},F)$}         
\Label{protocol2} 
\begin{algorithmic}
\STEPONE
\textbf{Preparation}:
We set the initial state $\rho_{DE}$ on ${\cal D}_D \otimes {\cal D}_E[0,G^{(1)}]$
to be $ |\Phi[0,G^{(1)}]\rangle$.
\STEPTWO
\textbf{Share generation}:
The dealer prepares a uniform random variable $U_{D,2} \in \FF_q^{\sy_2} $.
For $m \in \cM$, the dealer applies 
$\mathbf{W}_{[\snn]}( Fm+G^{(2)} U_{D,2})$ on ${\cal D}_D$. That is,
the encoding operation $\Gamma[m]$ \red{on ${\cal D}_D$} is defined as
\begin{align}
\Gamma[m](\rho):= \sum_{u_{D,2} \in \FF_q^{\sy_2}} 
\frac{1}{q^{\sy_2}}
\mathbf{W}_{[\snn]}( Fm+G^{(2)} u_{D,2}) \rho 
\mathbf{W}_{[\snn]}^\dagger( Fm+G^{(2)} u_{D,2}).
\end{align}
The shares are given as parts of the state 
$ \Gamma[m](|\Phi[0,G^{(1)}]\rangle \langle \Phi[0,G^{(1)}]|)$.
\STEPTHREE
\textbf{Decoding}:
For a subset $\cA \in \fA$,
the end-user takes partial trace on $\cD_{E}[\cA^c]$, and 
makes the measurement  given by the POVM
$\big\{\mathbf{W}_{\cA}(z) (
\Tr_{\cD[\cA^c],\cD_{E}[\cA^c]}
|\Phi[0,G^{(1)}]\rangle \langle \Phi[0,G^{(1)}]|)
\mathbf{W}_{\cA}^\dagger(z)\big\}_{z \in \FF_q^{2|\cA|}}$.
Based on the obtained outcome, the end-user recovers $m$.
\end{algorithmic}
\end{Protocol}

\red{In particular, 
when $\sy_2=0$, the protocol does not have the matrix $G^{(2)}$
and the protocol does not have the random variable $U_{D,2}$.
Such a protocol is called the randomless linear EASS protocol with $(G^{(1)},F)$.
Similar to Lemma \ref{L2U}, we have the following theorem for linear EASS protocols, 
Protocol \ref{protocol2}.}

\begin{theo}\Label{TH1}
Given a $2\snn\times \sy_1$ self-column-orthogonal matrix $G^{(1)}$,
a $2\snn\times \sy_2$ matrix $G^{(2)}$,
and
a $2\snn\times \sx$ matrix $F$, 
the following conditions %for $G^{(1)},G^{(2)},F$ 
are equivalent.
\begin{description}
\item[(J1)]
The linear EASS protocol with $(G^{(1)},G^{(2)},F)$
is $(\ACC,\REJ)$-secure.
\item[(J2)]
The linear FEASS protocol with $((G^{(1)},G^{(2)}),F)$
is $(\ACC,\REJ)$-secure.
\item[(J3)]
The linear CSS protocol with $((G^{(1)},G^{(2)}),F)$
is $(\bar{\ACC},\bar{\REJ})$-secure.
\item[(J4)]
The matrix $((G^{(1)},G^{(2)}),F)$ is an $(\bar{\ACC},\bar{\REJ})$-MMSP.
\end{description}
\end{theo}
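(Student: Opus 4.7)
The plan is to first dispose of the equivalences among (J2), (J3), and (J4) by a direct appeal to Lemma \ref{L2U} and Proposition \ref{prop1}, applied to the $2\snn \times ((\sy_1 + \sy_2) + \sx)$ matrix $((G^{(1)}, G^{(2)}), F)$ treated as a single $(G, F)$ matrix. No special property of $G^{(1)}$ beyond being a matrix is used in that block, so these three equivalences are immediate. The substantive step is to prove (J1) $\Leftrightarrow$ (J2), which asserts that the linear EASS protocol of Protocol \ref{protocol2} and the linear FEASS protocol of Protocol \ref{protocol1} with the same combined encoding matrix share all correctness and secrecy properties.

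For (J1) $\Leftrightarrow$ (J2), the plan is to exhibit a TP-CP map $\Theta$ acting only on the end-user's reference system $\cD_E$ that converts the FEASS-encoded state into the EASS-encoded state, together with a reverse TP-CP map $\Theta'$ in the other direction. Because neither map touches the share subsystems $\cD_1, \ldots, \cD_\snn$, the reduced states on any $\cD[\cB]$ are preserved, so the secrecy condition for FEASS coincides with that for EASS; and because $\Theta$ and $\Theta'$ can be absorbed into the end-user's decoder, the correctness conditions also coincide. The key technical computation is to unfold the extra randomization
\begin{align*}
\frac{1}{q^{\sy_1}} \sum_{U_{D,1} \in \FF_q^{\sy_1}} \mathbf{W}_{[\snn]}(G^{(1)} U_{D,1})\,(\cdot)\,\mathbf{W}_{[\snn]}^\dagger(G^{(1)} U_{D,1})
\end{align*}
present in the FEASS protocol but absent in the EASS protocol. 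Using the identity $(M \otimes I)|\Phi\rangle = (I \otimes M^T)|\Phi\rangle$ for the maximally entangled state, this randomization can be transported from the dealer side to the end-user side, where it becomes an unknown Weyl shift. Since $G^{(1)}$ is self-column-orthogonal, the family $\{\mathbf{W}_{[\snn]}(h^j)\}_{j=1}^{\sy_1}$ with $H^{(1)} = (h^1, \ldots, h^{\sy_1})$ pairwise commutes (as $\langle h^j, h^{j'}\rangle = 0$), so the end-user can jointly measure these observables to obtain some $y \in \FF_q^{\sy_1}$ and then apply the correction $\mathbf{W}_{[\snn]}(-H^{(1)} y)$ on $\cD_E$; by \eqref{CMD} this brings the state into the $y = 0$ branch $\cD_E[0, G^{(1)}]$, yielding $|\Phi[0, G^{(1)}]\rangle$, the initial state of the EASS protocol. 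The reverse conversion $\Theta'$ is the embedding of $\cD_E[0, G^{(1)}]$ into $\cD_E$ composed with the uniform $\mathbf{W}_{[\snn]}(H^{(1)} y)$-randomization over $y \in \FF_q^{\sy_1}$.

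The main obstacle I anticipate is the careful bookkeeping of phases in the identity $(M \otimes I)|\Phi\rangle = (I \otimes M^T)|\Phi\rangle$ when $M$ is a Weyl operator over $\FF_q$, in particular the sign flips and the $\omega^{\tr ab}$ factors arising from $\mathsf{X}(a)\mathsf{Z}(b) = \omega^{-\tr ab}\mathsf{Z}(b)\mathsf{X}(a)$, together with verifying that any residual Weyl shift appearing on $\cD_D$ after the measure-and-correct step on $\cD_E$ can be absorbed into the $Fm + G^{(2)} U_{D,2}$ shift already present in both protocols (using the translation invariance of the $U_{D,2}$ averaging). Beyond this bookkeeping, the argument reduces to routine symplectic linear algebra together with the structural observation that local operations on the end-user's reference system preserve both the secrecy with respect to any $\cB \in \fB$ and the achievability of decoding from any $\cA \in \fA$.
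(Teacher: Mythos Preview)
Your reduction of (J2)$\Leftrightarrow$(J3)$\Leftrightarrow$(J4) via Lemma~\ref{L2U} is correct. The gap is in your plan for (J1)$\Leftrightarrow$(J2).

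Your central claim---that a TP-CP map $\Theta$ acting \emph{only} on $\cD_E$ converts the FEASS encoded state into the EASS encoded state---fails. After the $G^{(1)}$-randomization, the relevant FEASS state is $q^{-\sy_1}\sum_y|\Phi[y,G^{(1)}]\rangle\langle\Phi[y,G^{(1)}]|$ by \eqref{MTX}. Your measure-and-correct on $\cD_E$ (which, incidentally, should use the commuting family $\{\bW(g^j)\}$ rather than $\{\bW(h^j)\}$---the latter \emph{shift} the $y$-label rather than read it, by \eqref{CMD}) sends the summand $|\Phi[y,G^{(1)}]\rangle$ to $\sum_x q^{-(\snn-\sy_1)/2}|x,y\rangle_D\otimes|x,0\rangle_E$, not to $|\Phi[0,G^{(1)}]\rangle=\sum_x q^{-(\snn-\sy_1)/2}|x,0\rangle_D\otimes|x,0\rangle_E$. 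The residual $\bW_D(H^{(1)}y)$ on the dealer side cannot be absorbed into the $G^{(2)}$-averaging since $H^{(1)}y\notin\im G^{(2)}$ in general. More structurally, the $\cD_D$-marginal of the FEASS state is the completely mixed state, while the $\cD_D$-marginal of the EASS state is supported on Weyl translates of the proper subspace $\cD_D[0,G^{(1)}]$; since any channel on $\cD_E$ alone preserves the $\cD_D$-marginal, no such $\Theta$ can exist, and your secrecy argument (``the reduced states on any $\cD[\cB]$ are preserved'') collapses.

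The paper avoids this by never claiming a $\cD_E$-only conversion. It inserts an intermediate Protocol~\ref{protocol3} whose encoded state \emph{equals} the FEASS encoded state via \eqref{MTX}, and then shows that the decoders of Protocols~\ref{protocol2} and~\ref{protocol3} simulate one another: the end-user generates a uniform $Y$ and applies $\bW_{\cA}(P_{\cA}H^{(1)}Y)\otimes\bW(H^{(1)}Y)$ on $\cD[\cA]\otimes\cD_E$---crucially acting on the share systems $\cD[\cA]$ that the end-user actually holds, not only on $\cD_E$. Your outline is close in spirit, but the restriction ``only on $\cD_E$'' must be dropped; once you allow the correction to act on $\cD[\cA]$ as well, the argument essentially becomes the paper's.
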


%\subsection{Linear CQSS protocol}
The above theorem shows the equivalence between a spacial class of 
linear FEASS protocols and linear EASS protocols.
In addition, the equivalence between (J1) and (J2), 
we have a statement similar to Proposition \ref{ZNO} as follows
while its proof is given in Section \ref{S6-2}.
%holds for linear EASS protocols.

\begin{prop}\Label{ZNO2}
In Protocol \ref{protocol2},
even when STEP 3 is replaced by another decoder,
the decoder can be simulated by the decoder given in STEP 3.
That is, once STEPs 1 and 2 are given in Protocol \ref{protocol2},
without loss of generality, we can assume that our decoder is given as STEP 3.
\end{prop}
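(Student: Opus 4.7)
The plan is to mirror the proof of Proposition \ref{ZNO}, adapted to the stabilizer-subspace setting of $|\Phi[0,G^{(1)}]\rangle$. The goal is to show that, regardless of $m$ and $u_{D,2}$, all possible joint states available to the end-user after STEPs 1 and 2 lie in a commuting family whose common eigenbasis is the one measured by the POVM of STEP 3. Once that is proved, any alternative POVM can be replaced by the STEP 3 measurement followed by classical post-processing, which is the required statement.

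First I would establish the following structural property: for $x, x' \in \FF_q^{2\snn}$ acting on $\cD_D$, the vectors $\bW_{[\snn]}(x)|\Phi[0,G^{(1)}]\rangle$ and $\bW_{[\snn]}(x')|\Phi[0,G^{(1)}]\rangle$ coincide up to a global phase when $x - x' \in \im G^{(1)}$ and are orthogonal otherwise. This follows from the stabilizer property $\bW_{[\snn]}(G^{(1)}y)|\Phi[0,G^{(1)}]\rangle = |\Phi[0,G^{(1)}]\rangle$ together with the commutation relation \eqref{eq:commutative}: each $\bW_{[\snn]}(x)|\Phi[0,G^{(1)}]\rangle$ is a joint eigenvector of the commuting family $\{\bW_{[\snn]}(g^j)\}_{j=1}^{\sy_1}$ with eigenvalues $\omega^{(g^j,x)}$, and a direct computation in the basis $\{|x,0\rangle\otimes|x,0\rangle\}$ from the definition \eqref{MUR} handles the case of identical eigenvalues but $x-x'\notin \im G^{(1)}$. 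Consequently the state $\Gamma[m](|\Phi[0,G^{(1)}]\rangle\langle\Phi[0,G^{(1)}]|)$ is diagonal in the orthonormal family indexed by cosets of $\im G^{(1)}$.

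Second I would define $\sigma_{0,\cA}:=\Tr_{\cD[\cA^c],\cD_E[\cA^c]}|\Phi[0,G^{(1)}]\rangle\langle\Phi[0,G^{(1)}]|$ and establish the translation-covariance identity
\begin{align*}
&\Tr_{\cD[\cA^c],\cD_E[\cA^c]} \bW_{[\snn]}(x)|\Phi[0,G^{(1)}]\rangle\langle\Phi[0,G^{(1)}]|\bW_{[\snn]}^\dagger(x) \\
&\qquad = \bW_\cA(P_{\bar\cA}x)\,\sigma_{0,\cA}\,\bW_\cA^\dagger(P_{\bar\cA}x),
\end{align*}
which is the analogue of \eqref{XZP} in the present setting, up to tensor factors on auxiliary $\cD_E[\cA^c]$ components carrying no $m$-dependence. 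Combined with the first step, this shows that the reduced state on $\cD[\cA]\otimes\cD_E[\cA]$ is a convex mixture of the pairwise commuting states $\bW_\cA(P_{\bar\cA}(Fm+G^{(2)}u_{D,2}))\,\sigma_{0,\cA}\,\bW_\cA^\dagger(P_{\bar\cA}(Fm+G^{(2)}u_{D,2}))$, all diagonal in the STEP 3 POVM basis. A standard Naimark-style argument then lets any alternative POVM be simulated by performing the STEP 3 measurement and post-processing its outcome classically.

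The main obstacle lies in verifying the translation-covariance identity of the second step. In Proposition \ref{ZNO} this was immediate because $|\Phi\rangle$ is a tensor product of Bell pairs, so the partial trace over $\cD[\cA^c]\otimes\cD_E[\cA^c]$ factorizes directly and $\bW_{[\snn]}(x)$ splits cleanly along the $\cA/\cA^c$ partition. For $|\Phi[0,G^{(1)}]\rangle$, the stabilizer structure correlates the $\cD_j$ factors across $\cA$ and $\cA^c$, so the computation must be carried out in the $\{|x,y\rangle\}$ basis introduced in Section \ref{S6-1-3}, and the commutativity of $\bW_{[\snn]}(g^j)$ with the $\cA^c$-projection must be exploited to show that the partial trace over $\cD[\cA^c]\otimes\cD_E[\cA^c]$ contributes only a translation-independent factor, leaving the desired covariant form on the user's accessible subsystem.
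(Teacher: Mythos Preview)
Your approach is genuinely different from the paper's. The paper does not attempt a direct diagonality argument; instead it lifts Protocol \ref{protocol2} to Protocol \ref{protocol1} by randomizing with a unitary built from $H^{(1)}$ (the symplectic dual of $G^{(1)}$ introduced in Section \ref{S6-1-3}). By \eqref{CMD} and \eqref{MTX3} this converts both the state and any alternative decoder into the FEASS setting, where Proposition \ref{ZNO} applies directly because the Bell-basis measurement is a genuine PVM on rank-one states. The equivalence of decoders established in the proof of Theorem \ref{TH1} then carries the conclusion back to Protocol \ref{protocol2}.

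There is a gap in your proposal, and it is not where you locate it. The translation-covariance identity is immediate: $\bW_{[\snn]}(x)$ factors along the tensor-product structure as $\bW_\cA(P_{\bar\cA}x)\otimes\bW_{\cA^c}(\cdots)$, and the $\cA^c$ factor disappears under $\Tr_{\cD[\cA^c]}$ by cyclicity, irrespective of the structure of $|\Phi[0,G^{(1)}]\rangle$. The actual difficulty is your assertion that the reduced states $\bW_\cA(z)\sigma_{0,\cA}\bW_\cA^\dagger(z)$ are ``pairwise commuting'' and ``diagonal in the STEP 3 POVM basis.'' Orthogonality in the full space (your first step) does not imply commutativity after partial trace; and since $\sigma_{0,\cA}$ is generically mixed, the STEP 3 POVM elements are not rank-one projectors, so ``diagonal in the POVM basis'' is not even well-defined. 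Commutativity does in fact hold here---$\sigma_{0,\cA}$ is a reduced stabilizer state, hence a linear combination of commuting Weyl operators, and conjugation by $\bW_\cA(z)$ only introduces phases---but you have not supplied this argument, and even granting it you would still need to show that the specific STEP 3 POVM (not just the common-eigenbasis PVM) is a sufficient statistic for the family. The paper's detour through Protocol \ref{protocol1} packages all of this into the already-proved Proposition \ref{ZNO}.
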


\red{When $\sy_1$ is ${\bar{\snn}}/2=\snn$,
$\cD_{E}[0,G^{(1)}]$ is a one-dimensional system
and the summand for $x$ does not appear in 
\eqref{MUR}.
Hence, 
the state $|\Phi[0,G^{(1)}]\rangle$ is a product state, and  
can be considered as a state on $\cD_D$. 
Since the state on $\cD_E$ in Protocol \ref{protocol2} 
is fixed to $|0,0\rangle$, 
Protocol \ref{protocol2} is essentially the same as Protocol \ref{protocol1CQ}, 
a linear CQSS protocol.
Therefore, as corollaries of Theorem \ref{TH1} and Proposition \ref{ZNO2}, 
we obtain Theorem \ref{Cor1} and Proposition \ref{ZNOCQ}.}

\if0
This protocol is called the linear CQSS protocol with $(G^{(1)},G^{(2)},F)$.
\begin{coro}\Label{Cor1}
Given a $2\snn \times \snn$ self-column-orthogonal matrix $G^{(1)}$,
a $2\snn\times \sy_2$ matrix $G^{(2)}$,
and
a $2\snn \times \sx$ matrix $F$, 
the following conditions for $G^{(1)},G^{(2)},F$ are equivalent.
\begin{description}
\item[(E1)]
The linear CQSS protocol with $(G^{(1)},G^{(2)},F)$
is $(\ACC,\REJ)$-secure.
\item[(E2)]
The linear CSS protocol with $((G^{(1)},G^{(2)}),F)$ 
is $(\bar{\ACC},\bar{\REJ})$-secure.
\item[(E3)]
The matrix $((G^{(1)},G^{(2)}),F)$ is an $(\bar{\ACC},\bar{\REJ})$-MMSP.
\end{description}
\end{coro}
\fi

Since $G^{(1)}$ is self-column-orthogonal,
the group
$N:=\{ G^{(1)} y \}_{y \in \FF_q^{(\snn)}}$
satisfies the self-orthogonal condition 
$N\subset N^{\perp}:=
\{  v\in \FF_q^{2\snn}| (v, v')=0 ,\quad \forall v' \in N  \}$.
In particular, the dimension of $N$ is $\snn$,
$N= N^{\perp}$.
Hence, the state $|\Phi[0,G^{(1)}]\rangle$ is given as the stabilizer state of 
$N$.
That is, the encoded state is given as 
the application of $ \Gamma[m]$ to
the stabilizer state of $N$.

Therefore, 
\red{as a generalization of $(\srr,\stt,\snn)$-CQMMSP,}
%to characterize EASS protocols and CQSS protocols with two thresholds,
%the threshold case,
we define the following special case of $(\ACC,\REJ)$-MMSPs.

\begin{defi}[$(\srr,\stt,\snn)$-EAMMSP]
We choose $\ACC = \{ \cA \subset [\snn] \mid |\cA| \geq \srr \}$
and $\REJ = \{ \cB \subset [\snn] \mid |\cB| \leq \stt \}$.
Given a $2\snn \times \sy_1$ self-column-orthogonal matrix $G^{(1)}$,
a $2\snn\times \sy_2$ matrix $G^{(2)}$,
and a $2\snn\times \sx$ matrix $F$, 
the matrix $(G^{(1)},G^{(2)},F)$ is called an $(\srr,\stt,\snn)$-EAMMSP
when the matrix $((G^{(1)},G^{(2)}),F)$ is an $(\bar{\ACC},\bar{\REJ})$-MMSP.
\end{defi}

\begin{theo}\Label{TH3}
%Let $p$ be a prime number.
%Positive integers $\srr,\stt,\snn$ with $\snn \ge \srr > \stt >0$  satisfy the condition  $\stt+\srr \ge \snn$, if and only if 
For any positive integers $\srr,\stt,\snn,\sy_1$ with 
$\snn \ge \srr > \stt \ge \sy_1/2>0$
and any prime $p$,
there exist a positive integer $s$,
a $2\snn \times \sy_1$ self-column-orthogonal matrix $G^{(1)}$,
a $2\snn\times (2\stt-\sy_1)$ matrix $G^{(2)}$,
and a $2\snn\times (2\srr-2\stt)$ matrix $F$ on $\FF_{q}$ with $q=p^s$
such that 
the matrix $(G^{(1)},G^{(2)},F)$ is an $(\srr,\stt,\snn)$-EAMMSP.
\end{theo}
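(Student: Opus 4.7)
The plan is to invoke Proposition~\ref{theo:MDS-MMSP} in order to reduce the EAMMSP requirement to a pair of MDS conditions, and then construct the required matrices over a sufficiently large finite field. Specifically, it suffices to exhibit $G^{(1)} \in \FF_q^{2\snn \times \sy_1}$, $G^{(2)} \in \FF_q^{2\snn \times (2\stt - \sy_1)}$, and $F \in \FF_q^{2\snn \times (2\srr - 2\stt)}$ such that $G^{(1)}$ is self-column-orthogonal with respect to the symplectic form, $(G^{(1)}, G^{(2)})$ is a $(2\snn, 2\stt)$-MDS code, and $((G^{(1)}, G^{(2)}), F)$ is a $(2\snn, 2\srr)$-MDS code. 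Granted this, Proposition~\ref{theo:MDS-MMSP} (with $\bar{\snn}=2\snn$, $\srr'=2\srr$, $\stt'=2\stt$) gives that $((G^{(1)}, G^{(2)}), F)$ is a $(2\srr, 2\stt, 2\snn)$-MMSP. Since every element of $\bar{\ACC}$ has cardinality at least $2\srr$ and every element of $\bar{\REJ}$ has cardinality at most $2\stt$, the monotonicity of MMSPs upgrades this to the $(\bar{\ACC}, \bar{\REJ})$-MMSP property, i.e., to the $(\srr, \stt, \snn)$-EAMMSP condition by definition.

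For the construction itself, I would first select $G^{(1)}$ so that its columns span a totally isotropic subspace of $(\FF_q^{2\snn}, (\cdot,\cdot))$ of dimension $\sy_1$; such a subspace exists provided $\sy_1 \le \snn$, which is an implicit consequence of self-column-orthogonality (the alternating property of the symplectic form forces any pairwise-orthogonal linearly independent family to span an isotropic subspace, and isotropic subspaces have dimension at most $\snn$). Then I would extend $G^{(1)}$ by adjoining $2\stt - \sy_1$ columns to form $(G^{(1)}, G^{(2)})$ as a $(2\snn, 2\stt)$-MDS code, and further by $2\srr - 2\stt$ columns to form $((G^{(1)}, G^{(2)}), F)$ as a $(2\snn, 2\srr)$-MDS code. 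Over $\FF_q = \FF_{p^s}$ with $s$ taken large enough, the existence of such generic extensions can be established by a Schwartz--Zippel-type argument: each MDS-violating $k \times k$ minor (for $k=2\stt$ or $k=2\srr$) is a non-zero polynomial in the free entries of $G^{(2)}$ and $F$, so once $q$ exceeds the union of the vanishing loci of these finitely many polynomials, a simultaneous non-vanishing choice exists.

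The hardest step is showing that the isotropy constraint on $G^{(1)}$ is compatible with the MDS extensions: pinning the first $\sy_1$ columns inside the (proper) isotropic subvariety is not a priori reconcilable with the generic MDS conditions on the remaining columns, so a purely dimension-counting argument is insufficient on its own. I expect the resolution to parallel the explicit construction used in the proof of Theorem~\ref{TH3CQ}, where $G^{(1)}$ is chosen as a block of generalized Reed--Solomon codewords whose evaluation points are paired in a manner compatible with the symplectic form, guaranteeing isotropy and MDS-extendability simultaneously. The freedom in $\sy_1$ (with $0 < \sy_1 \le 2\stt$) is then absorbed by varying the number of isotropic Reed--Solomon generators used for $G^{(1)}$, while the remaining $2\srr - \sy_1$ columns forming $G^{(2)}$ and $F$ are supplied by Reed--Solomon codewords at distinct additional evaluation points, whose MDS property is automatic. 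Verifying that this explicit construction works uniformly across the full parameter range of the theorem is the technical core of the argument and is deferred to the appendix.
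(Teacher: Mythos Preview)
Your reduction via Proposition~\ref{theo:MDS-MMSP} is exactly the paper's first move: both arguments boil the EAMMSP claim down to showing that $(G^{(1)},G^{(2)})$ is a $(2\snn,2\stt)$-MDS code and $((G^{(1)},G^{(2)}),F)$ is a $(2\snn,2\srr)$-MDS code, with $G^{(1)}$ self-column-orthogonal. Your remark that this is stronger than needed (because $\bar\ACC,\bar\REJ$ contain only symplectified subsets) is correct and harmless.

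The gap is precisely where you flag it. For your Schwartz--Zippel step (with $G^{(1)}$ fixed and only $G^{(2)},F$ varying) to go through, each $2\stt\times 2\stt$ minor of $(G^{(1)},G^{(2)})$ must be a non-zero polynomial in the free entries; this forces $P_{\cA}G^{(1)}$ to have full column rank for every $2\stt$-subset $\cA$, i.e.\ $G^{(1)}$ must itself already be a $(2\snn,\sy_1)$-MDS code. So the deferred step is exactly: construct a $G^{(1)}$ that is simultaneously self-column-orthogonal and MDS. Your Reed--Solomon-with-paired-evaluation-points suggestion is in the right spirit but is not what the paper does. The paper's construction (Lemmas~\ref{AMT} and~\ref{AMX}) takes $G^{(1)}=A$ of the block form $\left(\begin{smallmatrix}A_2\\A_3\\A_1\\I\end{smallmatrix}\right)$ over a tower of extensions $\FF_p[e_1,\ldots,e_{2\snn-1}]$: self-column-orthogonality is enforced by the explicit symmetry constraint~\eqref{AMP} on $A_3$, while the MDS property is obtained from an inductive determinant argument (Lemmas~\ref{LL11}--\ref{LL12}) that exploits the fact that in each critical square submatrix one corner entry is transcendental over the subfield containing all the others. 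The extension by the columns of $C=(G^{(2)},F)$ uses a second, independent tower $\FF_q[e_1',\ldots]$ on top of the first rather than a genericity argument. That explicit construction is the entire technical content of the proof; once it is in hand, your plan goes through verbatim.
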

Theorem \ref{TH3} is shown in Appendix \ref{A2}.
Combining Theorem \ref{TH3}, we obtain the following corollary.

\begin{coro}\Label{Coro6}
When 
$\snn \ge \srr > \stt >0$,
there exists an $(\srr,\stt,\snn)$-secure EASS protocol with rate $2(\srr-\stt)/\snn$.
\end{coro}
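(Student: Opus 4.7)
The plan is to chain Theorem \ref{TH3} with the equivalence (J4)$\Rightarrow$(J1) from Theorem \ref{TH1}. Given $\snn \ge \srr > \stt > 0$, I first verify that the hypotheses of Theorem \ref{TH3} can be met: since $\stt$ is a positive integer, $\stt \ge 1$, so any integer $\sy_1$ with $1 \le \sy_1 \le 2\stt$ satisfies $\stt \ge \sy_1/2 > 0$. For concreteness take $\sy_1 = 1$ (or $\sy_1 = 2\stt$ if one wants a randomless variant with empty $G^{(2)}$).

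With this choice of $\sy_1$, Theorem \ref{TH3} produces a prime power $q = p^s$ and matrices $G^{(1)} \in \FF_q^{2\snn \times \sy_1}$ (self-column-orthogonal), $G^{(2)} \in \FF_q^{2\snn \times (2\stt-\sy_1)}$, and $F \in \FF_q^{2\snn \times 2(\srr-\stt)}$ such that $(G^{(1)}, G^{(2)}, F)$ is an $(\srr,\stt,\snn)$-EAMMSP. By the definition given just before Theorem \ref{TH3}, this means the concatenated matrix $((G^{(1)},G^{(2)}), F)$ is an $(\bar\ACC, \bar\REJ)$-MMSP for the threshold access structure $\ACC = \{\cA \subset [\snn] : |\cA| \ge \srr\}$ and $\REJ = \{\cB \subset [\snn] : |\cB| \le \stt\}$.

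Next I apply Theorem \ref{TH1}, specifically the implication (J4)$\Rightarrow$(J1), to this triple. This yields directly that the linear EASS protocol associated with $(G^{(1)}, G^{(2)}, F)$, namely Protocol \ref{protocol2}, is $(\ACC,\REJ)$-secure, which is exactly $(\srr,\stt,\snn)$-security in the sense of Definition \ref{def:nss}. Finally, I compute the rate using \eqref{Rate1}. Since $F$ has $\sx = 2(\srr-\stt)$ columns, the message alphabet is $\cM = \FF_q^{2(\srr-\stt)}$, so $\vM = q^{2(\srr-\stt)}$. Each share system $\cD_j$ has dimension $q$, giving total share dimension $D = q^\snn$. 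Therefore
\begin{align*}
R \;=\; \frac{\log \vM}{\log D} \;=\; \frac{2(\srr-\stt)\log q}{\snn \log q} \;=\; \frac{2(\srr-\stt)}{\snn},
\end{align*}
as claimed.

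The only obstacle is the minor bookkeeping of matching hypotheses: verifying that for every admissible $(\snn, \srr, \stt)$ triple of the corollary one can select a valid $\sy_1$ so as to invoke Theorem \ref{TH3}. All the substantive content, namely the existence of the MMSP with the prescribed threshold acceptance/rejection structure and the equivalence between the algebraic MMSP condition and operational protocol security, is already packaged inside Theorems \ref{TH3} and \ref{TH1}, so the corollary follows as a direct combination.
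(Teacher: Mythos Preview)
Your proof is correct and follows essentially the same approach as the paper, which simply states that the corollary follows by combining Theorem \ref{TH3} (existence of the EAMMSP) with Theorem \ref{TH1} (the (J4)$\Rightarrow$(J1) direction); you have merely filled in the details of choosing a valid $\sy_1$ and computing the rate explicitly.
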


In the case of classical case, 
the optimal rate of $(\srr,\stt,\snn)$-secure SS protocol
is $(\srr-\stt)/\snn$ \cite{Ogata,Okada,Paillier}.
Hence, the rate of Corollary \ref{Coro6} is twice of the classical case.
However, the rate of CQSS cannot exceed $1$ due to the condition $\stt \ge \snn/2$.
This constraint always holds beyond the condition in Corollary \ref{Coro6} 
because CQSS does not have shared entanglement.
Since EASS has shared entanglement, the rate of CQSS exceeds $1$ by removing 
the condition $\stt \ge \snn/2$,
which can be considered as an advantage of EASS over CQSS.
\red{In addition, 
in the case of threshold type, i.e., the case with $\srr=\stt+1$,
a CQSS protocol requires the condition $\stt \ge \snn/2$ in Corollary \ref{Coro6CQ}.
In contrast, an EASS protocol works even with 
$\snn/2 \ge \srr=\stt+1$, which is another advantage of use of preshared entanglement with user
over CQSS protocols.}

\begin{remark}
We compare the above EASS protocols, Protocol \ref{protocol2},
with the combination of QQSS protocol and dense coding.
As discussed in Corollary \ref{Coro9},
QQSS protocol has the rate $(\srr-\max(\stt, \snn-\srr))/\snn$
under the condition $\snn \ge \srr > (\snn+1) /2$.
Combining it with dense coding,  
the obtained protocol has the rate $2(\srr-\max(\stt, \snn-\srr))/\snn$.
In contrast,
as mentioned in Corollary \ref{Coro6},
the EASS protocol has the rate $2(\srr-\stt)/\snn$
under the condition $\snn \ge \srr > \stt >0$.
Since $2(\srr-\stt)/\snn - 2(\srr-\max(\stt, \snn-\srr))/\snn=
\max(0, \snn-\srr-\stt)/\snn\ge 0$,
the EASS protocol has a strictly better performance 
than the simple combination of QQSS protocol and dense coding
when $\snn-\srr-\stt>0$.
\end{remark}

\red{Further,} as a special case of EASS protocol, we define the EA version of MDS codes as follows.
This concept is useful for discussion the QQ version of MDS codes.

\begin{defi}[$(\snn, \sx)$-EAMDS code]
We consider the case with $\sy_2=0$.
Assume that $ G^{(1)}$ is a $\snn\times \sy_1$ self-column-orthogonal matrix and 
$F$ is a $\snn\times \sx$ matrix. % column-orthogonal to $G^{(1)}$.
We say that the randomless linear EASS protocol with $(G^{(1)}, F)$ is a 
$(\snn, \lceil \frac{\sy_1+\sx}{2} \rceil)$-EAMDS code
when it is $\fA$-correct with $\fA= \{ \cA \subset [\snn] \mid |\cA| \geq 
\lceil \frac{\sy_1+\sx}{2} \rceil\}$.
\end{defi}

By considering the case when $\fB$ is empty set,
Theorem \ref{TH1} implies the following lemma.
\begin{lemm}\Label{Coro17}
Assume that $ G^{(1)}$ is a $\snn\times \sy_1$ self-column-orthogonal matrix and 
$F$ is a $\snn\times \sx$ matrix. % column-orthogonal to $G^{(1)}$.
The randomless linear EASS protocol with $(G^{(1)}, F)$ is a 
$(\snn, \lceil \frac{\sy_1+\sx}{2} \rceil)$-EAMDS code
if and only if the linear CSS protocol with $(G^{(1)}, F)$ is
$\bar{\fA}$-correct with 
$\fA= \{ \cA \subset [\snn] \mid |\cA| \geq 
\lceil \frac{\sy_1+\sx}{2} \rceil\}$.
\end{lemm}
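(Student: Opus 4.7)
The plan is to obtain Lemma~\ref{Coro17} as the specialization of Theorem~\ref{TH1} to the case in which the rejection collection is empty and the randomization matrix is absent. The structure follows the hint given just before the statement, so the argument is essentially bookkeeping of how the various correctness/secrecy notions reduce.

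First, I would unpack the definition of an $(\snn, \lceil (\sy_1+\sx)/2 \rceil)$-EAMDS code: by the definition just above the lemma, this amounts to the randomless linear EASS protocol with $(G^{(1)},F)$ being $\fA$-correct for the specified threshold collection $\fA$, with no secrecy condition at all. Equivalently, this is $(\fA,\emptyset)$-security in the sense of Definition~\ref{def:dense-nss}, since the secrecy clause in that definition quantifies over $\cB\in\fB$ and is vacuously satisfied when $\fB=\emptyset$. So the left-hand side of the lemma is nothing but $(\fA,\emptyset)$-security of the linear EASS protocol with $(G^{(1)},\emptyset,F)$, where $G^{(2)}$ is the empty matrix (i.e.\ $\sy_2=0$).

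Next, I would invoke Theorem~\ref{TH1} with the choices $G^{(2)}=\emptyset$ and $\fB=\emptyset$. The equivalence (J1)$\Leftrightarrow$(J3) directly gives that $(\fA,\emptyset)$-security of the linear EASS protocol with $(G^{(1)},\emptyset,F)$ is equivalent to $(\bar{\fA},\overline{\emptyset})$-security of the linear CSS protocol with $(G^{(1)},F)$. Since the symplectification of the empty collection is empty, $\overline{\emptyset}=\emptyset$, and so $(\bar{\fA},\emptyset)$-security of the linear CSS protocol is exactly $\bar{\fA}$-correctness of that protocol. This yields the right-hand side of the lemma.

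There is no real obstacle: the only things to verify are that the secrecy clause of Definition~\ref{def:dense-nss} (and its classical analogue via Proposition~\ref{prop1}) is truly vacuous over $\fB=\emptyset$, and that Theorem~\ref{TH1} applies unchanged when $\sy_2=0$ (so that the random variable $U_{D,2}$ and the unitary $\mathbf{W}_{[\snn]}(G^{(2)}U_{D,2})$ drop out and the protocol is genuinely randomless). Both points are immediate from the definitions, so the proof is a one-line appeal to Theorem~\ref{TH1} after these identifications are recorded.
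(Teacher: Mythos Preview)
Your proposal is correct and follows exactly the paper's approach: the paper states just before the lemma that ``By considering the case when $\fB$ is empty set, Theorem~\ref{TH1} implies the following lemma,'' which is precisely the specialization you carry out (take $G^{(2)}=\emptyset$, $\fB=\emptyset$, and use (J1)$\Leftrightarrow$(J3)). The only content is the bookkeeping you record, namely that EAMDS by definition means $\fA$-correctness, that secrecy over $\fB=\emptyset$ is vacuous, and that $\overline{\emptyset}=\emptyset$.
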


\subsection{Proofs of Theorem \ref{TH1} and Proposition \ref{ZNO2} %, and Lemma \ref{Coro17}
}\Label{S6-2}
We show only Theorem \ref{TH1} by using the idea by references \cite{AMTW,SP}.
%Lemma \ref{Coro17} follows from the discussion only for the correctness in the proof of Theorem \ref{TH1}. 
Lemma \ref{L2U} guarantees the equivalence among (J2), (J3), and (J4).
In the following, we show the equivalence between (J1) and (J2), and Proposition \ref{ZNO2}.

As the preparation, we notice the relation;
\begin{align}
&\sum_{y \in \FF_q^{\sy_1}}\frac{1}{q^{\sy_1}} 
\mathbf{W}_{[\snn]}(G^{(1)} y) |\Phi\rangle \langle \Phi|
\mathbf{W}_{[\snn]}^\dagger(G^{(1)} y) \nonumber \\
=&\sum_{y \in \FF_q^{\sy_1}}\frac{1}{q^{\sy_1}} |\Phi[y,G^{(1)}]\rangle\langle \Phi[y,G^{(1)}]|
\Label{MTX}\\
\stackrel{(a)}{=}&\mathbf{W}_{\cA}(P_{\cA} H^{(1)} y) \otimes \mathbf{W}(H^{(1)} y)
|\Phi[0,G^{(1)}]\rangle\langle \Phi[0,G^{(1)}]|
(\mathbf{W}_{\cA}(P_{\cA} H^{(1)} y) \otimes \mathbf{W}(H^{(1)} y))^\dagger
\Label{MTX3},
\end{align}
where $(a)$ follows from \eqref{CMD}.
Taking the partial trace on $\cD[\cA^c]$ and $\cD_{E}[\cA^c]$, for $z \in \FF_q^{2|\cA|}$,
we have
\begin{align}
%&\Tr_{\cD[\cA^c],\cD_{E}[\cA^c]}
%\Big(\sum_{y \in \FF_q^{\sy_1}}\frac{1}{q^{\sy_1}} 
%\mathbf{W}_{[\snn]}(z+G^{(1)} y) |\Phi\rangle \langle \Phi|
%\mathbf{W}_{[\snn]}^\dagger(z+ G^{(1)} y) \Big)\nonumber \\
%=
\bar{\Pi}_z
:=&
\mathbf{W}_{\cA}( z) \Big(
\sum_{y \in \FF_q^{\sy_1}}\frac{1}{q^{\sy_1}} 
\mathbf{W}_{\cA}(P_{\cA} G^{(1)} y) 
\Big(\Tr_{\cD[\cA^c],\cD_{E}[\cA^c]} |\Phi\rangle \langle \Phi| \Big)
\mathbf{W}_{\cA}^\dagger(P_{\cA} G^{(1)} y) \Big)
\mathbf{W}_{\cA}( z)^\dagger  \Label{MTX4}\\
=&
\mathbf{W}_{\cA}(z)
\Tr_{\cD[\cA^c],\cD_{E}[\cA^c]}
\Big(\sum_{y \in \FF_q^{\sy_1}}\frac{1}{q^{\sy_1}} 
\mathbf{W}_{[\snn]}(G^{(1)} y) |\Phi\rangle \langle \Phi|
\mathbf{W}_{[\snn]}^\dagger( G^{(1)} y) \Big)
\mathbf{W}_{\cA}( z)^\dagger
\nonumber \\
=&
\hat{\Pi}_{z}:=\mathbf{W}_{\cA}( z)
\Big(\sum_{y \in \FF_q^{\sy_1}}\frac{1}{q^{\sy_1}}
\Tr_{\cD[\cA^c],\cD_{E}[\cA^c]}
 |\Phi[y,G^{(1)}]\rangle\langle \Phi[y,G^{(1)}]|\Big)
\mathbf{W}_{\cA}( z)^\dagger .
\Label{MTX2}
\end{align}

To analyze the linear EASS protocol with $(G^{(1)},G^{(2)},F)$, i.e., 
Protocol \ref{protocol2}, we consider 
the modified linear EASS protocol with $(G^{(1)},G^{(2)},F)$, which is defined as 
Protocol \ref{protocol3}.

\begin{Protocol}[H]                  
\caption{Modified linear EASS protocol with $(G^{(1)},G^{(2)},F)$}         
\Label{protocol3}      
\begin{algorithmic}
\STEPONE
\textbf{Preparation}:
We set the initial state $\rho_{DE}$ on ${\cal D}_D \otimes {\cal D}_E$
to be $\sum_{y \in \FF_q^{\sy_1}}\frac{1}{q^{\sy_1}} |\Phi[y,G^{(1)}]\rangle\langle \Phi[y,G^{(1)}]|$.
\STEPTWO
\textbf{Share generation}:
The dealer prepares a uniform random variable $U_D \in \FF_q^{\sy} $.
For $m \in \cM$, the dealer applies 
$\mathbf{W}_{[\snn]}( Fm+G^{(2)} U_{D,2})$ on ${\cal D}_D$. That is,
the encoding operation $\Gamma[m]$ \red{on ${\cal D}_D$} is defined as
\begin{align}
\Gamma[m](\rho):= \sum_{u_D \in \FF_q^{\sy}} 
\frac{1}{q^{\sy}}
\mathbf{W}_{[\snn]}( Fm+G^{(2)} U_{D,2}) \rho 
\mathbf{W}_{[\snn]}^\dagger( Fm+G^{(2)} U_{D,2}).
\end{align}
The shares are given as parts of the state 
\red{$\Gamma[m](
\sum_{y \in \FF_q^{\sy_1}}\frac{1}{q^{\sy_1}} |\Phi[y,G^{(1)}]\rangle\langle \Phi[y,G^{(1)}]|)$}.
\STEPTHREE
\textbf{Decoding}:
For a subset $\cA \in \fA$,
the end-user takes partial trace on $\cD_{E}[\cA^c]$, and 
makes measurement on the basis
$\{\mathbf{W}_{\cA}(\red{z}) |\phi\rangle^{\otimes |\cA|}\}_{\red{z} \in \FF_q^{2|\cA|}}$.
Based on the obtained outcome, the end-user recovers $m$.
\end{algorithmic}
\end{Protocol}

The relation \eqref{MTX} guarantees 
that their final state in STEP2 are the same, which implies that
Protocol \ref{protocol1} has the same performance as Protocol \ref{protocol3}.

Protocol \ref{protocol2}
and 
Protocol \ref{protocol3} are converted to each other 
in decoding process as follows.
For a subset $\cA \in \fA$, we modify the decoder of Protocol \ref{protocol2} as follows.
First, the end-user randomly generates $Y \in \FF_q^{\sy_1} $ subject to the uniform distribution.
The end-user applies the unitary 
$\mathbf{W}_{\cA}(P_{\cA} H^{(1)} Y)
\otimes \mathbf{W}(H^{(1)} Y)
 $ on 
$(\otimes_{j \in \cA}\cD_j)\otimes \cD_{E}$.
Then, the end-user applies the measurement $\{\hat{\Pi}_{z} \}_{z \in \FF_q^{2|\cA|}}$
defined in \eqref{MTX2}.
This measurement has the same output statistics as the measurement given in STEP3 of Protocol \ref{protocol2}.

Due to \eqref{MTX3},
the resultant state by the above unitary application
on
$(\otimes_{j \in \cA}\cD_j)\otimes \cD_{E}$ is the same state as the final state of STEP 2 of 
Protocol \ref{protocol3}.
Since the final state of STEP 2 of Protocol \ref{protocol3} is invariant for 
$ \mathbf{W}_{\cA}(P_{\cA} G^{(1)} y) $,
the measurement in STEP 3 of Protocol \ref{protocol3} can be replaced by the POVM
$\{\bar{\Pi}_z \}_{z \in \FF_q^{2|\cA|}}$ defined in \eqref{MTX4}.
The relation \eqref{MTX2} guarantees that the POVM is the same as the POVM given in 
STEP 3 of Protocol \ref{protocol2}.
Therefore, the decoder of Protocol \ref{protocol2}
has the same output statistics as the decoder of Protocol \ref{protocol3}.
That is , the correctness and the secrecy of 
Protocol \ref{protocol1} is equivalent to those of Protocol \ref{protocol2}.
Hence, we obtain the equivalence between (J1) and (J2).

Next, we proceed to the proof of Proposition \ref{ZNO2}.
Any decoding measurement $\{\Pi_\omega\}_{\omega}$ in Protocol \ref{protocol2} is given as a POVM on
the space $\cD_D \otimes \cD_E[0,G^{(1)}]$.
For a subset $\cA \in \fA$, we modify this decoding measurement as follows.
First, the end-user randomly generates $Y \in \FF_q^{\sy_1} $ subject to the uniform distribution.
The end-user applies the unitary 
$\mathbf{W}_{\cA}(P_{\cA} H^{(1)} Y)
\otimes \mathbf{W}(H^{(1)} Y)
 $ on 
$(\otimes_{j \in \cA}\cD_j)\otimes \cD_{E}$.
Then, the end-user applies the measurement 
$\{
\mathbf{W}_{\cA}(-P_{\cA} H^{(1)} Y)
\Pi_\omega
\mathbf{W}_{\cA}(P_{\cA} H^{(1)} Y) \}_{\omega}$ to 
$\cD_D \otimes \cD_E[Y,G^{(1)}]$.
This modified measurement has the same output statistics as the original measurement $\{\Pi_\omega\}_{\omega}$
in Protocol \ref{protocol2}.
Also, we define the POVM
$\{\tilde{\Pi}_\omega\}_{\omega}$ on 
$\cD_D \otimes \cD_E :=\oplus_{y \in \FF_q^{\sy_1}} \cD_D \otimes \cD_E[y,G^{(1)}]$
as $\tilde{\Pi}_\omega
:=\sum_{y \in \FF_q^{\sy_1}}
\mathbf{W}_{\cA}(- P_{\cA} H^{(1)} y)
\Pi_\omega \mathbf{W}_{\cA}(P_{\cA} H^{(1)} y)$.
Since the state on $\cD_D \otimes \cD_E$ is the same as 
the same state as the final state of STEP 2 of Protocol \ref{protocol3}, which is the same as
the final state of STEP 2 of Protocol \ref{protocol1}.
Due to Proposition \ref{ZNO}, 
the output of this measurement can be simulated by the decoder given in STEP 3 of 
Protocol \ref{protocol1}.
The above proof of Theorem \ref{TH1}
guarantees that 
the decoder given in STEP 3 of 
Protocol \ref{protocol1} can be simulated by 
the decoder given in STEP 3 of 
Protocol \ref{protocol2}.
Hence, we obtain Proposition \ref{ZNO2}.

\if0
Conversely, 
in the decoding step in Protocol \ref{protocol3},
the end-user makes measurement $\{ \Pi_{\bar{y}}\}_{\bar{y}\in \FF_q^{\sy_1}} $ on $\cD_E$,
where $\Pi_{\bar{y}}$ is the projection to $\cD_E[\bar{y},G^{(1)}]$.
When the observed outcome is $\bar{y}$,
 the end-user applies $\mathbf{W}_{\cA}(-P_{\cA} H^{(1)} {y})
\otimes \mathbf{W}(-H^{(1)} {y}) $ on 
$(\otimes_{j \in \cA}\cD_j)\otimes \cD_{E}$.
The obtained state on
$(\otimes_{j \in \cA}\cD_j)\otimes \cD_{E}$ is the same state of 
Protocol \ref{protocol2}.
\fi
%the proof of Theorem \ref{TH1} is completed.
%follows from Lemma \ref{L2}.

%Therefore, Theorem \ref{TH1} follows from Lemma \ref{L2}.

\section{Proofs of Theorem \ref{TH4QQ} and Lemma \ref{Cor73QQ}
and decoder for linear QQSS protocol}\Label{SS-7}
This section presents the proofs for our results of QQSS protocols stated in
Section~\ref{SS-5} by showing notable relations between dense coding and noiseless quantum state transmission.

\subsection{Dense coding and noiseless quantum state transmission}\Label{S7}
For a preparation for discussion on QQSS protocols,
we investigate the relation between dense coding and noiseless quantum state transmission.
When noiseless transmission with $d$-dimensional quantum system is allowed,
the noiseless classical message 
transmission with size $d^2$ is possible 
with shared entangled state.
It is called dense coding.
Here, we investigate its converse statement.

\red{We choose $\cH_A$ as $\cH^{\otimes \snn'}$.
We prepare the system ${\cal H}_{R}$ as the same dimensional system as ${\cal H}_{A}$.
Here, to clarify that the operator 
$\overline{\mathbf{W}}_{[\snn']}( x) $ is applied on $\cH_A$, 
we denote it by 
$\overline{\mathbf{W}}_{A,[\snn']}( x)$.
This usage of the subscript will be applied in the latter parts.}
For a given a POVM $\Pi= \{\Pi_{x}\}_{x\in \FF_q^{2 \snn'}}$
on the joint system $\cH_R \otimes \cH_B$,
we define the TP-CP map \red{$\overline{\Gamma}[\Pi]$ 
from the system $\cH_B$ to the system $\cH_A$} 
as follows.
We prepare the maximally entangled state 
$ |\phi\rangle \langle \phi|^{\otimes \snn'}$ on 
${\cal H}_{R}\otimes {\cal H}_{A}$.
Then, $\overline{\Gamma}[\Pi]$ is defined as
\begin{align}
\overline{\Gamma}[\Pi](\rho):=
\sum_{x} \overline{\mathbf{W}}_{A,[\snn']}^\dagger( x)
\Tr_{B,R}( \rho \otimes |\phi\rangle \langle \phi|^{\otimes \snn'} \Pi_{x} )
\overline{\mathbf{W}}_{A,[\snn']}( x)
\Label{XMT}
\end{align}
for a density $\rho$ on the system $\cH_B$.

\begin{lemm}\Label{L5}
Let ${\cal H}_A$ and ${\cal H}_R$ be the systems equivalent to
$\cH^{\otimes \snn'}$ and ${\cal H}_B$ be an arbitrary quantum system.
Let $\Lambda$ be a TP-CP map from
${\cal H}_A$ to ${\cal H}_B$.
Assume that a POVM $\Pi= \{\Pi_{x}\}_{x\in \FF_q^{2 \snn'}}$ 
on the joint system $\cH_R \otimes \cH_B$
satisfies that
\begin{align}
\Tr \Pi_{x'} \Lambda (\mathbf{W}_{A,[\snn']}( x) 
|\phi\rangle \langle \phi|^{\otimes \snn'}
\mathbf{W}_{A,[\snn']}^\dagger( x))=\delta_{x,x'}.\Label{ACL}
\end{align}
Then, 
$\overline{\Gamma} [\Pi] \circ \Lambda$ is the identity channel.
\end{lemm}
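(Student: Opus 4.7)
The plan is to interpret $\overline{\Gamma}[\Pi]$ as a generalized teleportation that reverses $\Lambda$, exploiting the fact that the $q^{2\snn'}$ generalized Bell states $|\phi^{(x)}\rangle:=(I_R\otimes\mathbf{W}_{A,[\snn']}(x))|\phi\rangle^{\otimes\snn'}$ form an orthonormal basis of $\cH_R\otimes\cH_A$. The two workhorses are the ricochet identity
\begin{align*}
(I_R\otimes M)|\phi\rangle^{\otimes\snn'}=(M^T\otimes I_A)|\phi\rangle^{\otimes\snn'},
\end{align*}
which transports any operator $M$ on $\cH_A$ across the maximally entangled state to its transpose acting on $\cH_R$, and the channel--state duality, which reduces the identity $\overline{\Gamma}[\Pi]\circ\Lambda=\iota_A$ to equality of its Choi state with $|\phi\rangle\langle\phi|^{\otimes\snn'}$.

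First I would extract the crucial consequence of the hypothesis \eqref{ACL}: setting $\rho_x:=(\iota_R\otimes\Lambda)(|\phi^{(x)}\rangle\langle\phi^{(x)}|)$, the assumptions $\Tr(\Pi_{x'}\rho_x)=\delta_{x,x'}$, $0\le\Pi_{x'}\le I$, and $\Tr\rho_x=1$ force $\Pi_x\rho_x=\rho_x=\rho_x\Pi_x$ by the standard Cauchy--Schwarz argument. Writing the Choi state $\tau:=\rho_0=(\iota_R\otimes\Lambda)(|\phi\rangle\langle\phi|^{\otimes\snn'})$ and applying the ricochet identity to the conjugation defining $\rho_x$, one gets $\rho_x=(\tilde{\mathbf{W}}_{R,[\snn']}(x)\otimes I_B)\,\tau\,(\tilde{\mathbf{W}}_{R,[\snn']}(x)\otimes I_B)^\dagger$ for a Weyl operator $\tilde{\mathbf{W}}_R$ related to $\mathbf{W}_A$ by transposition in the computational basis, hence $\hat\Pi_x\,\tau=\tau$ where $\hat\Pi_x:=(\tilde{\mathbf{W}}_{R,[\snn']}(x)^\dagger\otimes I_B)\,\Pi_x\,(\tilde{\mathbf{W}}_{R,[\snn']}(x)\otimes I_B)$. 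To finish, I would unfold the Choi state of $\overline{\Gamma}[\Pi]\circ\Lambda$ via the definition \eqref{XMT}, apply the ricochet identity again to transfer the correction unitaries $\mathbf{W}_{A,[\snn']}^\dagger(x)$ from the output factor onto the reference factor of the fresh maximally entangled state inside $\overline{\Gamma}[\Pi]$, and then use $\hat\Pi_x\tau=\tau$ together with the POVM completeness $\sum_x\Pi_x=I$ to collapse the resulting sum into $|\phi\rangle\langle\phi|^{\otimes\snn'}$.

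The main obstacle is the simultaneous bookkeeping of two copies of the reference system---one shared between the hypothesis \eqref{ACL} and the resource inside $\overline{\Gamma}[\Pi]$, the other introduced by channel--state duality---together with the transposition relating $\mathbf{W}_{A,[\snn']}(x)$ to $\tilde{\mathbf{W}}_{R,[\snn']}(x)$ in the $\FF_q$-valued Weyl--Heisenberg parametrization. The scalar phases arising from $\mathsf{W}(a,b)^T=\mathsf{Z}(b)\mathsf{X}(-a)$ and from the commutation relation \eqref{eq:commutative} must be tracked carefully so that the two conjugations---the one defining $\rho_x$ and the correction in \eqref{XMT}---cancel exactly and leave the identity channel.
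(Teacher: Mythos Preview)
Your proposal is correct and will work, but it takes a different and more laborious route than the paper. The paper's proof avoids the Choi-state computation and the second reference system entirely: it pulls $\Lambda$ back onto the POVM via the adjoint, writing
\[
\overline{\Gamma}[\Pi]\circ\Lambda(\rho)=\sum_x \overline{\mathbf{W}}_{A,[\snn']}^\dagger(x)\,\Tr_{B,R}\bigl(\rho\otimes|\phi\rangle\langle\phi|^{\otimes\snn'}\,\Lambda^*(\Pi_x)\bigr)\,\overline{\mathbf{W}}_{A,[\snn']}(x),
\]
and then observes that $\{\Lambda^*(\Pi_x)\}$ is a POVM on $\cH_R\otimes\cH_A$ (since $\Lambda^*$ is completely positive and unital) satisfying $\langle\phi^{(x)}|\Lambda^*(\Pi_{x'})|\phi^{(x)}\rangle=\delta_{x,x'}$ by \eqref{ACL}. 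Because the Bell states $|\phi^{(x)}\rangle$ form an orthonormal basis and $0\le\Lambda^*(\Pi_x)\le I$, this forces $\Lambda^*(\Pi_x)=|\phi^{(x)}\rangle\langle\phi^{(x)}|$, so the displayed expression is literally standard teleportation and equals $\rho$. Your route instead establishes $\Pi_x\rho_x=\rho_x$ on the $B$-side and then unrolls the Choi state of the composite; this is sound, but the bookkeeping you flag (two reference copies, transposition phases in the $\FF_q$ Weyl calculus) is exactly what the adjoint trick sidesteps. The paper's argument is shorter and more conceptual; yours has the virtue of being a concrete state-level calculation that makes the cancellation explicit, at the cost of more indices.
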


\begin{proof}
We define the adjoint map $\Lambda^*$ from the Hermitian matrices on 
$\cH_B$ to those on $\cH_A$ as
\begin{align}
\Tr \Lambda (X) Y =\Tr X \Lambda^*(Y). 
\end{align}
Then, we have
\begin{align}
&\overline{\Gamma} [\Pi] \circ \Lambda(\rho) \nonumber \\
=&\sum_{x} \overline{\mathbf{W}}_{A,[\snn']}^\dagger( x)
\Tr_{B,R}( \Lambda(\rho) \otimes |\phi\rangle \langle \phi|^{\otimes \snn'} 
\Pi_{x} )
\overline{\mathbf{W}}_{A,[\snn']}( x)  \nonumber \\
=& \sum_{x} \overline{\mathbf{W}}_{A,[\snn']}^\dagger( x)
\Tr_{B,R}( \rho \otimes |\phi\rangle \langle \phi|^{\otimes \snn'} 
\Lambda^*(\Pi_{x}) )
\overline{\mathbf{W}}_{A,[\snn']}( x) .\Label{ATE}
\end{align}
Due to the condition \eqref{ACL}, 
$\{ \Lambda^*( \Pi_{x})\}$ equals the POVM $
\{\mathbf{W}_{A,[\snn']}( x) |\phi\rangle \langle \phi|^{\otimes \snn'})
\mathbf{W}_{A,[\snn']}^\dagger( x)\}$.
Hence, the process in \eqref{ATE} can be considered as the process of quantum teleportation.
Thus,
$\overline{\Gamma} [\Pi] \circ \Lambda$ is the identity channel.
\end{proof}

\begin{lemm}\Label{L6}
Let ${\cal H}_A$ and ${\cal H}_R$ be the systems equivalent to 
$\cH^{\otimes \snn'}$ and ${\cal H}_B$ be an arbitrary quantum system.
Let $\Lambda$ be a TP-CP map from
${\cal H}_A$ to ${\cal H}_B$.
We generate the random variable $X\in \FF_q^{2 \snn'}$ 
subject to the uniform distribution.
Using the variable $X$,
we generate the state $ 
\Lambda (\mathbf{W}_{A,[\snn']}( X) |\phi\rangle \langle \phi|^{\otimes \snn'}
\mathbf{W}_{A,[\snn']}^\dagger( X))$.
We consider the mutual information between $X$ and the joint system $BR$.
That is, we consider the state 
$\rho_{BRX}:= \sum_{x \in \FF_q^{2 \snn'}}\frac{1}{q^{2\snn'}}
|x\rangle \langle x| \otimes  
\Lambda (\mathbf{W}_{A,[\snn']}( x) |\phi\rangle \langle \phi|^{\otimes \snn'}
\mathbf{W}_{A,[\snn']}^\dagger( x) )$.
Also, we consider another state $\sigma_{BR}:=
\Lambda ( |\phi\rangle \langle \phi|^{\otimes \snn'})$.
Then, we have the following relation.
\begin{align}
I(X;BR)[\rho_{BRX}]=I(R;B)[\sigma_{RB}].
\end{align}
Therefore, the joint system $BR$ has no information for the random variable $X$ in the dense coding scheme
if and only if the output system $B$ of the channel $\Lambda$ has no information for the inputs state on $\cH_A$.
\end{lemm}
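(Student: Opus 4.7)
The plan is to expand both sides through the entropy decompositions $I(X;BR) = H(X) + S(BR) - S(BRX)$ and $I(R;B)[\sigma_{RB}] = S(\sigma_R) + S(\sigma_B) - S(\sigma_{RB})$ and match them term by term. The key structural tool is the transpose trick for the maximally entangled state on $R \otimes A$: there exists a unitary $T(x)$ on $R$ (namely the transpose of $\mathbf{W}_{A,[\snn']}(x)$ in the Schmidt basis, up to an irrelevant unit scalar) such that $(I_R \otimes \mathbf{W}_{A,[\snn']}(x))|\phi\rangle^{\otimes \snn'} = (T(x) \otimes I_A)|\phi\rangle^{\otimes \snn'}$. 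Since $\Lambda$ acts only on $A$ and therefore commutes with $T(x)$ acting on $R$, the conditional state takes the form
\[
\rho_{BR\mid X=x} \;=\; (T(x) \otimes I_B)\,\sigma_{RB}\,(T(x)^\dagger \otimes I_B).
\]

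From this identity, $S(\rho_{BR\mid X=x}) = S(\sigma_{RB})$ because $T(x)$ is unitary, so $S(BRX) = H(X) + S(\sigma_{RB})$ with $H(X) = 2\snn'\log q$. For the marginal $\rho_{BR}$ I would swap the order of the Weyl average and $\Lambda$: the twirl on $A$ of $|\phi\rangle\langle\phi|^{\otimes \snn'}$ equals $I_{RA}/q^{2\snn'}$ because $\{\mathbf{W}_{A,[\snn']}(x)\}_x$ is a unitary $1$-design, and consequently $\rho_{BR} = (I_R \otimes \Lambda)(I_{RA}/q^{2\snn'}) = (I_R/q^{\snn'}) \otimes \sigma_B$, giving $S(BR) = \snn'\log q + S(\sigma_B)$. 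Combining the two computations yields $I(X;BR)[\rho_{BRX}] = \snn'\log q + S(\sigma_B) - S(\sigma_{RB})$.

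Finally, because $\Lambda$ is trace-preserving on $A$, one has $\sigma_R = \Tr_B \sigma_{RB} = \Tr_A |\phi\rangle\langle\phi|^{\otimes \snn'} = I_R/q^{\snn'}$, hence $S(\sigma_R) = \snn'\log q$ and $I(R;B)[\sigma_{RB}] = \snn'\log q + S(\sigma_B) - S(\sigma_{RB})$, matching the expression just obtained for $I(X;BR)[\rho_{BRX}]$. No step is genuinely difficult; the only subtlety is cleanly assembling the single-site transpose identity across all $\snn'$ factors of $\cH^{\otimes \snn'}$ and invoking the $1$-design property for the composite Weyl system, both of which follow from the per-site versions by tensorization.
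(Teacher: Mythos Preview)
Your proof is correct and follows essentially the same approach as the paper's: both use the transpose trick to move the Weyl operator from $A$ to $R$, invoke the $1$-design property of the Weyl operators to identify the averaged state as maximally mixed, and then exploit unitary invariance. The only difference is packaging---the paper tracks the calculation through a chain of relative entropies $D(\cdot\|\cdot)$ and their unitary invariance, while you decompose into individual von Neumann entropies---but the underlying computation is identical.
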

\begin{proof}
For $x=(a,b)$, we define $\bar{x}=(a,-b)$.
Then, we have
\begin{align}
&I(X;BR)[\rho_{BRX}] \nonumber \\
=&
 \sum_{x \in \FF_q^{2 \snn'}}\frac{1}{q^{2\snn'}}
D\bigg( \Lambda \Big(\mathbf{W}_{A,[\snn']}( x) 
|\phi\rangle \langle \phi|^{\otimes \snn'}
\mathbf{W}_{A,[\snn']}^\dagger( x) \Big)
  \nonumber \\
&\hspace{6ex} \bigg\|\Lambda \Big(
 \sum_{x' \in \FF_q^{2 \snn'}}\frac{1}{q^{2\snn'}}
\mathbf{W}_{A,[\snn']}( x') |\phi\rangle \langle \phi|^{\otimes \snn'}
\mathbf{W}_{A,[\snn']}^\dagger( x') \Big)
\bigg)
\nonumber \\
=&
 \sum_{x \in \FF_q^{2 \snn'}}\frac{1}{q^{2\snn'}}
D\bigg( \Lambda \Big(\mathbf{W}_{R,[\snn']}( \bar{x}) 
|\phi\rangle \langle \phi|^{\otimes \snn'}
\mathbf{W}_{R,[\snn']}^\dagger( \bar{x}) \Big)
  \nonumber \\
&\hspace{6ex}
\bigg\| 
\Lambda (\rho_{mix,A})\otimes \rho_{mix,R}
\bigg)
\nonumber \\
=&
 \sum_{x \in \FF_q^{2 \snn'}}\frac{1}{q^{2\snn'}}
D\Big( \Lambda ( |\phi\rangle \langle \phi|^{\otimes \snn'})
  \nonumber \\
&\hspace{6ex}\Big\| 
\Lambda (\rho_{mix,A})\otimes 
\mathbf{W}_{R,[\snn']}^\dagger( \bar{x})
\rho_{mix,R} \mathbf{W}_{R,[\snn']}( \bar{x})
\Big) \nonumber \\
=&
% \sum_{x \in \FF_q^{2 \snn}}\frac{1}{q^{2\snn}}
D\Big( \Lambda ( |\phi\rangle \langle \phi|^{\otimes \snn'})
\Big\| 
\Lambda (\rho_{mix,A})\otimes 
\rho_{mix,R} 
\Big)
=I(R;B)[\sigma_{RB}].
\end{align}
\end{proof}

\subsection{Proofs of Theorem \ref{TH4QQ} and Lemma \ref{Cor73QQ}
and decoder for linear QQSS protocol}\Label{S8-2}
%\section{Linear QQSS}\Label{S8}
Now, we show Theorem \ref{TH4QQ} and Lemma \ref{Cor73QQ}
by using the contents of Section \ref{S7}.
Considering linear EASS protocols,
we restate Theorem \ref{TH4QQ} as follows.
\begin{theo}\Label{TH4}
Given a $2\snn\times (\snn-\sx)$ self-column-orthogonal matrix $G^{(1)}$
and
a $2\snn\times \sy_2$ matrix $G^{(2)}$,
we choose a $2\snn\times 2 \sx$ matrix $F$ column-orthogonal to $G^{(1)}$.
Then, the following conditions for $G^{(1)},G^{(2)},F$
are equivalent.
\begin{description}
\item[(J1)]
The linear QQSS protocol with $(G^{(1)},G^{(2)},F)$
is $(\ACC,\REJ)$-secure. That is, there exists a suitable TP-CP map $\overline{\Gamma}$ to recover the original state
in STEP 2.
\item[(J2)]
The linear CSS protocol with $((G^{(1)},G^{(2)}),F)$
is $(\bar{\ACC},\bar{\REJ})$-secure.
\item[(J3)]
The matrix $((G^{(1)},G^{(2)}),F)$ is an $(\bar{\ACC},\bar{\REJ})$-MMSP.
\item[(J4)]
The linear EASS protocol with $(G^{(1)},G^{(2)},F)$
is $(\ACC,\REJ)$-secure.
\end{description}
In particular, when Condition (J4) holds,
the decoder of the linear QQSS protocol with $(G^{(1)},G^{(2)},F)$ is given as 
$\overline{\Gamma}[\Pi]$ defined in \eqref{XMT},
where the POVM $\Pi= \{\Pi_{m}\}_{m\in \FF_q^{2 \sx}}$ is the decoder of 
the linear EASS protocol with $(G^{(1)},G^{(2)},F)$.
\end{theo}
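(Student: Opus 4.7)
The plan is first to invoke Theorem \ref{TH1} to reduce to a single new implication, then to exhibit the linear EASS protocol associated with $(G^{(1)},G^{(2)},F)$ as the dense-coding image of the linear QQSS protocol associated with the same data, and finally to transfer correctness and secrecy between the two using Lemmas \ref{L5} and \ref{L6}. Applying Theorem \ref{TH1} (with the role of its $\sy_1$ played by $\snn-\sx$ and the role of its $\sx$ played by $2\sx$) already gives (J2)$\Leftrightarrow$(J3)$\Leftrightarrow$(J4) of Theorem \ref{TH4}. The remaining task is therefore to prove (J1)$\Leftrightarrow$(J4) and to verify the explicit decoder formula $\mathcal{DEC}[\cA]=\overline{\Gamma}[\Pi]$.

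The algebraic bridge between the two protocols is as follows. Because $G^{(1)}$ is self-column-orthogonal with $\snn-\sx$ columns, the stabilizer subspace $\cD_D[0,G^{(1)}]$ has dimension $q^{\sx}$ and is naturally identified with the secret system $\cD_M$ used in Protocol \ref{protocol6}. Because $F$ is column-orthogonal to $G^{(1)}$, each operator $\mathbf{W}_{[\snn]}(Fm)$ commutes with all stabilizers and hence preserves $\cD_D[0,G^{(1)}]$. Using a symplectic basis that extends the columns of $G^{(1)}$ by those of $F$, I obtain a unitary identification $\cD_D[0,G^{(1)}]\simeq\cH^{\otimes\sx}$ under which $\{\mathbf{W}_{[\snn]}(Fm)|_{\cD_D[0,G^{(1)}]}\}_{m\in\FF_q^{2\sx}}$ is conjugate to the full Weyl--Heisenberg family $\{\mathbf{W}_{[\sx]}(m)\}_{m\in\FF_q^{2\sx}}$ on $\cH^{\otimes\sx}$, and correspondingly $|\Phi[0,G^{(1)}]\rangle$ becomes the standard maximally entangled state between $\cD_M$ and $\cD_E[0,G^{(1)}]$. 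Under this identification, the EASS encoding of message $m$ is exactly the QQSS channel $\Lambda$ applied to the dense-coded state $\mathbf{W}_{[\sx],A}(m)|\phi\rangle^{\otimes\sx}\mathbf{W}_{[\sx],A}(m)^\dagger$ on the $\cD_M$-side of that maximally entangled pair.

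For correctness, fix $\cA\in\fA$ and let $\Lambda:=\Tr_{\cA^c}\circ\Gamma$, viewed as a channel from $\cD_M$ to $\cD[\cA]$. Condition (J4) asserts that some POVM $\Pi=\{\Pi_m\}_{m\in\FF_q^{2\sx}}$ on $\cD[\cA]\otimes\cD_E[0,G^{(1)}]$ satisfies exactly the hypothesis \eqref{ACL} of Lemma \ref{L5}; the conclusion of that lemma then supplies the QQSS decoder $\mathcal{DEC}[\cA]:=\overline{\Gamma}[\Pi]$ with $\overline{\Gamma}[\Pi]\circ\Lambda=\id$, establishing the (J4)$\Rightarrow$(J1) direction for correctness and simultaneously yielding the decoder formula asserted in the theorem. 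Conversely, given any QQSS decoder $\mathcal{DEC}[\cA]$ satisfying (J1)-correctness, pulling the Bell-basis POVM on $\cD_M\otimes\cD_E[0,G^{(1)}]$ back through $\mathcal{DEC}[\cA]^{\ast}$ yields a POVM on $\cD[\cA]\otimes\cD_E[0,G^{(1)}]$ satisfying \eqref{ACL}, by the standard teleportation-to-dense-coding conversion.

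For secrecy, fix $\cB\in\fB$ and apply Lemma \ref{L6} with its $\Lambda$ equal to the composition $\Tr_{(\cB\cup E)^c}\circ\Gamma$, viewed as a channel from $\cD_M$ into $\cD[\cB]\otimes\cD_E[0,G^{(1)}]$. The QQSS $\cB$-secrecy condition is $I(R;B)[\sigma_{RB}]=0$, while the EASS $\cB$-secrecy condition is $I(X;BR)[\rho_{BRX}]=0$; Lemma \ref{L6} asserts that these two mutual informations coincide, so the two secrecy notions are equivalent. Combining correctness and secrecy yields (J1)$\Leftrightarrow$(J4) and closes the four-way equivalence. The main obstacle in this plan is the unitary identification in the second paragraph: one must verify that column-orthogonality of $F$ with $G^{(1)}$ together with the linear independence of the $2\sx$ columns of $F$ modulo $\im G^{(1)}$ inside $\FF_q^{2\snn}$ suffice to make $\{\mathbf{W}_{[\snn]}(Fm)|_{\cD_D[0,G^{(1)}]}\}_m$ a faithful projective representation of $\FF_q^{2\sx}$, because this is exactly what allows Lemmas \ref{L5} and \ref{L6} to be applied with parameter $\snn'=\sx$.
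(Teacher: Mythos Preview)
Your proposal is correct and follows essentially the same route as the paper: invoke Theorem~\ref{TH1} for (J2)$\Leftrightarrow$(J3)$\Leftrightarrow$(J4), then establish (J1)$\Leftrightarrow$(J4) by identifying the EASS encoding as dense coding through the QQSS channel and applying Lemmas~\ref{L5} and~\ref{L6} for correctness and secrecy respectively, with the explicit decoder $\overline{\Gamma}[\Pi]$ coming out of Lemma~\ref{L5}.

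Two small remarks. First, in your secrecy paragraph the map $\Lambda$ of Lemma~\ref{L6} should be the channel from $\cD_M\simeq\cD_D[0,G^{(1)}]$ to $\cD[\cB]$ only (namely $\Tr_{\cB^c}\circ\Gamma$); the system $\cD_E[0,G^{(1)}]$ then plays the role of the reference $R$ in that lemma, rather than being part of the output of $\Lambda$. This is exactly how the paper sets it up. Second, the ``main obstacle'' you flag is less severe than you suggest: what is actually needed for Lemma~\ref{L5} (via its teleportation proof) is only that $\{\mathbf{W}_{[\snn]}(Fm)|_{\cD_D[0,G^{(1)}]}\}_{m\in\FF_q^{2\sx}}$ be a Hilbert--Schmidt orthonormal family of $q^{2\sx}$ unitaries on the $q^{\sx}$-dimensional space $\cD_D[0,G^{(1)}]$, and this follows from column-orthogonality of $F$ to $G^{(1)}$ together with linear independence of the columns of $F$ modulo $\im G^{(1)}$ (the latter being implied already by the acceptance condition in (J3) for any $\cA\in\fA$). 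No further symplectic structure on $F$ is required. The paper simply asserts this step without comment.
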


In the following proof, we give a construction of decoder for Condition (J1).

\begin{proof}
Since \red{Theorem \ref{TH1}} guarantees the equivalence among the conditions (J2), (J3), and (J4), 
we show only the equivalence between  the conditions (J1) and (J4).

First, we show the direction (J1)$\Rightarrow$(J4).
We assume Condition (J1).
We combine the linear QQSS protocol with $(G^{(1)},G^{(2)},F)$
and dense coding \cite{BW}.
Then, the obtained protocol is 
the linear EASS protocol with $(G^{(1)},G^{(2)},F)$.
The correctness and secrecy of
the linear QQSS protocol with $(G^{(1)},G^{(2)},F)$ for $(\ACC,\REJ)$
imply those of 
the linear EASS protocol with $(G^{(1)},G^{(2)},F)$ for $(\ACC,\REJ)$.
Hence, Condition (J4) holds.

Next, we show the direction (J4)$\Rightarrow$(J1).
%First, we show the equivalence between the correctness of 
%the linear QQSS protocol with $(G^{(1)},G^{(2)},F)$
%and the linear EASS protocol with $(G^{(1)},G^{(2)},F)$ with respect to $\cA \in \fA$.
We assume Condition (J4).
Hence, the linear EASS protocol with $(G^{(1)},G^{(2)},F)$ 
satisfies the correctness with respect to $\cA \in \fA$.
We choose 
$\cH_A$ as $\cD_D[0,G^{(1)}]$
%=\cH^{\otimes} $ 
and 
$\cH_B$ as $\otimes_{j \in \cA}\cD_j$.
Hence, we choose $\snn'$ to be $\sx$.
Since $F$ is column orthogonal to $G^{(1)}$, 
the action of the unitaries $\{\bW_{[\snn]}(F m)\}_{m \in \FF_q^{2\sx}}$
preserves the subspace $\cD_D[0,G^{(1)}]$.
We choose $\Lambda$ as
\begin{align}
\Lambda(\rho)  :=
\Tr_{\cA^c} \sum_{u_{D,2} \in \FF_q^{\sy_2}} 
\frac{1}{q^{\sy_2}}
\mathbf{W}_{[\snn]}(G^{(2)} u_{D,2}) \rho 
\mathbf{W}_{[\snn]}^\dagger( G^{(2)} u_{D,2}).
\end{align}
Due to the correctness of 
the linear EASS protocol with $(G^{(1)},G^{(2)},F)$ 
for $\cA \in \fA$,
there is a POVM $\Pi= \{\Pi_{m}\}_{m\in \FF_q^{2 \sx}}$ 
on the joint system $\cH_R \otimes \cH_B$
such that
\begin{align}
\Tr \Pi_{m'} \Lambda (\mathbf{W}_{[\snn]}(F m) 
|\phi\rangle \langle \phi|^{\otimes \sx}
\mathbf{W}_{[\sx]}^\dagger(F m))=\delta_{m,m'}.\Label{ACL2}
\end{align}
We choose $\overline{\Gamma}[\Pi]$ defined in \eqref{XMT}.
Then, Lemma \ref{L5} guarantees that
$\overline{\Gamma}[\Pi]\circ \Lambda$ is the identity channel.
Hence, we obtain
 the correctness of the linear QQSS protocol with $(G^{(1)},G^{(2)},F)$
with respect to $\cA \in \fA$ when 
$\overline{\Gamma}[\Pi]$ is chosen as the decoder.

In addition, 
the linear EASS protocol with $(G^{(1)},G^{(2)},F)$ 
satisfies the secrecy with respect to $\cB \in \fB$.
We choose 
$\cH_A$ as $\cD_D[0,G^{(1)}]$ and 
$\cH_B$ as $\otimes_{j \in \cB}\cD_j$.
We apply Lemma \ref{L6} in the same way as the above.
In this application, 
$I(R;B)[\sigma_{RB}]$ expresses the information obtained by the players in $\cB$ 
under the above linear QQSS protocol with $(G^{(1)},G^{(2)},F)$,
and 
$I(X;BR)[\rho_{RBX}]$ expresses the information obtained by the players in $\cB$ 
under the linear EASS protocol with $(G^{(1)},G^{(2)},F)$.
Hence,
the linear EASS protocol with $(G^{(1)},G^{(2)},F)$ 
with respect to $\cB \in \fB$ implies
the secrecy of 
the linear QQSS protocol with $(G^{(1)},G^{(2)},F)$
with respect to $\cB \in \fB$. 
Hence, Condition (J1) holds.
%Therefore, we obtain the equivalence between the conditions (J1) and (J4).
\end{proof}

Combining Lemma \ref{Coro17} and the same idea as Theorem \ref{TH4},
we obtain the following corollary.
This corollary is a recasted statement of 
Lemma \ref{Cor73QQ} with adding the new condition (F3). %EAMDS codes.

\begin{coro}\Label{Cor73}
Given a $2\snn \times 2(\snn-\srr)$ self-column-orthogonal matrix $G^{(1)}$
and $2\snn\times 2(2\srr-\snn)$ matrix $F$ column-orthogonal to $G^{(1)}$,
the following conditions are equivalent.
\begin{description}
\item[(F1)]
The randomless linear QQSS protocol with $(G^{(1)},F)$ 
is an $(\snn, \srr)$-QQMDS code.
\item[(F2)]
The matrix $(G^{(1)},F)$ accepts
$\bar{\ACC}$ with $\ACC= \{ \cA \subset [\snn] \mid |\cA| \geq \srr \}$.
\item[(F3)]
The randomless linear EASS protocol with $(G^{(1)},F)$
is an $(\snn, \srr)$-EAMDS code.
\end{description}
\end{coro}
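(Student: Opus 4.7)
The plan is to establish the chain (F1) $\Leftrightarrow$ (F3) $\Leftrightarrow$ (F2), with each equivalence being a direct specialization of results already proved in the paper.

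For (F3) $\Leftrightarrow$ (F2), I would apply Lemma \ref{Coro17} with $\sy_1 = 2(\snn-\srr)$ and $\sx = 2(2\srr-\snn)$; these give $\lceil (\sy_1 + \sx)/2\rceil = \srr$, so Lemma \ref{Coro17} states that (F3) holds if and only if the linear CSS protocol with $(G^{(1)},F)$ is $\bar{\fA}$-correct with $\fA = \ACC$. By the acceptance clause of Definition \ref{defi:MMSP} together with the correctness characterization extracted in the proof of Proposition \ref{prop1} (the injectivity of $x \mapsto \pi[\im P_{\cA} G^{(1)}](P_{\cA}F x)$ is exactly condition (A1) for every $\cA \in \bar{\fA}$), this correctness is precisely (F2).

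For (F1) $\Leftrightarrow$ (F3), I would replay the argument of Theorem \ref{TH4} with the matrix $G^{(2)}$ empty and discarding the secrecy half. In the direction (F1) $\Rightarrow$ (F3), composing the randomless QQSS protocol with dense coding on the reference system $\cD_E$ produces, by inspection, exactly the randomless linear EASS protocol with $(G^{(1)},F)$, and $\fA$-correctness of the QQSS protocol transfers to $\fA$-correctness of the EASS protocol. Conversely, for (F3) $\Rightarrow$ (F1), I would apply Lemma \ref{L5} with $\cH_A = \cD_D[0,G^{(1)}]$, $\cH_B = \bigotimes_{j\in\cA}\cD_j$, and $\Lambda$ the partial-trace channel onto the shares in $\cA$. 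The EASS decoder POVM $\Pi$ supplied by (F3) satisfies the analogue of condition \eqref{ACL2}, and Lemma \ref{L5} then yields that $\overline{\Gamma}[\Pi]\circ \Lambda$ is the identity channel on $\cD_D[0,G^{(1)}]$, giving QQSS correctness with the explicit decoder $\overline{\Gamma}[\Pi]$ of \eqref{XMT}.

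The only point requiring care is that Theorem \ref{TH4} was stated for full protocols with both correctness and secrecy, whereas the MDS-code statement here asks only about correctness. This is not a real obstacle, since the argument of Theorem \ref{TH4} splits cleanly: the correctness equivalence is driven by Lemma \ref{L5} and depends only on the qualified family $\fA$, while the secrecy equivalence is driven by Lemma \ref{L6} and depends only on $\fB$. In our setting only the former half is needed, and it ports over verbatim once one notes that the column-orthogonality of $F$ with $G^{(1)}$ ensures that the unitaries $\{\bW_{[\snn]}(Fm)\}_{m \in \FF_q^{2\sx}}$ preserve the code subspace $\cD_D[0,G^{(1)}]$, exactly as in the proof of Theorem \ref{TH4}.
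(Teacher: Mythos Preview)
Your proposal is correct and follows essentially the same approach as the paper: the equivalence (F2) $\Leftrightarrow$ (F3) via Lemma \ref{Coro17}, and (F1) $\Leftrightarrow$ (F3) by restricting the proof of Theorem \ref{TH4} to its correctness half (dense coding for one direction, Lemma \ref{L5} for the other). Your treatment is in fact more detailed than the paper's, which simply cites Lemma \ref{Coro17} and ``the discussion only for the correctness in the proof of Theorem \ref{TH4}'' without spelling out the parameter check or the decoupling of correctness from secrecy.
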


\begin{proof}
The equivalence between (F2) and (F3) follows from Lemma \ref{Coro17}.
The equivalence between (F1) and (F3) follows from 
the discussion only for the correctness in the proof of Theorem \ref{TH4}.
\end{proof}

%\section{various linear Quantum SPIR protocols}\Label{S6-3}
\section{Quantum SPIR protocols with preshared entanglement with user}\Label{S6-3}
This section introduces EASPIR protocols and presents our results for EASPIR protocols, which implies our results for CQSPIR protocols.

\subsection{Formulation}
Modifying the CQSPIR setting by allowing prior entanglement between 
the user and the servers, we introduce 
a quantum SPIR protocol with preshared entanglement with user.
Since this problem setting employs entanglement assistance, 
this protocol is called an entanglement-assisted SPIR (EASPIR) protocol.
As illustrated in Fig. \ref{fig:SPIR} (b), an EASPIR protocol is defined as Protocol \ref{Flow4}.

\begin{Protocol}[H]                  
\caption{EASPIR protocol}         
\Label{Flow4}      
\begin{algorithmic}
\STEPONE
\textbf{Preparation}:
The user has a quantum system $\cD_{U}$ before the protocol
and 
the state of the quantum system $\cD_1' \otimes \cdots \otimes \cD_{\vN}'\otimes \cD_U$ 
is initialized as the initial state $\rho_{\mathrm{prev}}$.
The random seed 
$\mathsf{Enc}_{\mathrm{SR}} (U_S) = R = (R_1,\ldots, R_n)^T 
\in \cR = \cR_1\times \cdots \times \cR_{\snn}$ is defined as the same way as 
Protocol \ref{Flow3}.
\STEPTWO
\textbf{User's encoding}:
This step is done in the same way as Protocol \ref{Flow3}.
\STEPTHREE
\textbf{Servers' encoding}:
This step is done in the same way as Protocol \ref{Flow3}.
\STEPFOUR
\textbf{Decoding}:
For a subset $\cA \in \fA$,
the user decodes the message from the received state from servers $\cA$
by a decoder, which is defined as a POVM
$\mathsf{Dec}(K,Q^{(K)},\cA) \coloneqq \{ {Y}_{K,Q^{(K)},\cA}(w) \mid  w\in [\smm] \}$ on 
$\cD[\cA]\otimes \cD_U$
depending on the variables $K$ and $Q^{(K)}$. 
The user outputs the measurement outcome $W$ as the retrieval result.
\end{algorithmic}
\end{Protocol}

\begin{figure}
\begin{center}
  \includegraphics[width=0.8\linewidth]{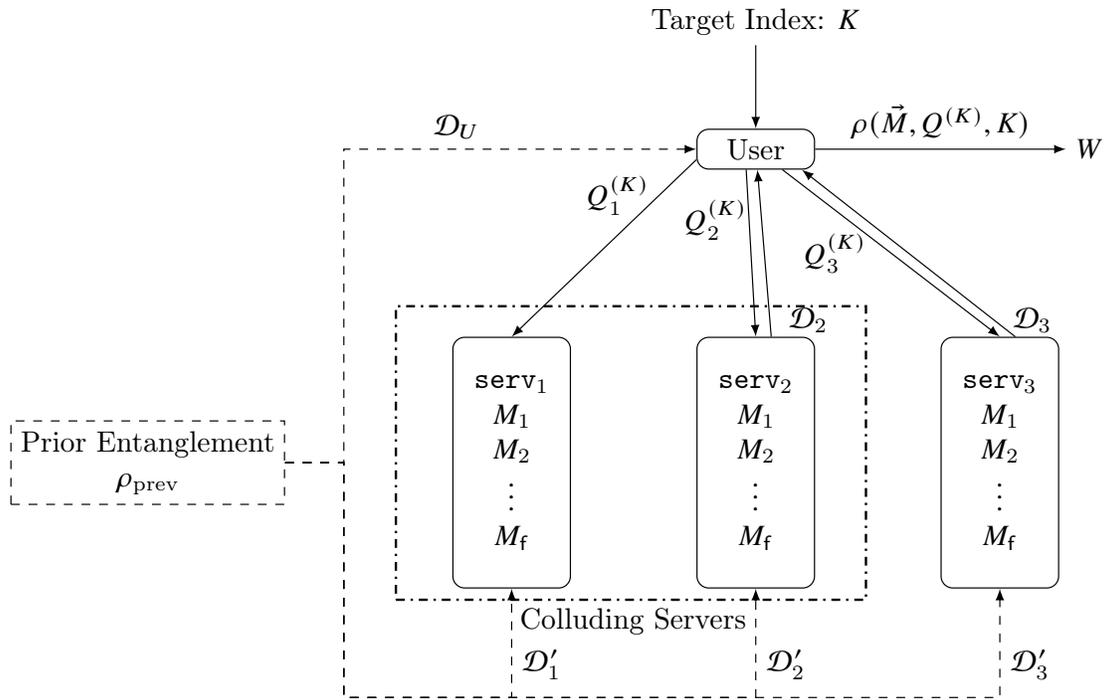}
  \end{center}
\caption{Entanglement-assisted (EA) SPIR protocols where Sever 1 and Server 2 collude and Server 2 and Server 3 respond to the user.
}   \label{fig:EASPIR}
\end{figure}

The $(\ACC,\REJ)$-security for 
an EASPIR protocol is defined in the same way as Definition \ref{Def5} as follows.

\begin{defi}\Label{Def5-B}
For an access structure $(\ACC, \REJ)$  on $[\snn]$,
	an EASPIR protocol defined as Protocol \ref{Flow4} %$\PNSS$ 
	is called $(\ACC,\REJ)$-secure
	if 
	the following conditions are satisfied.
	%The protocol $\Phi$ is called a $(\stt,\srr,\snn,\sdd)$-NSS protocol with security $(\alpha,\beta)$ if the following conditions are satisfied.
\begin{itemize}
\item \textbf{Correctness}:
For any $\cA \in \fA$, $k \in [\sff]$, and $\vec{m}=(m_1,\ldots, m_{\sff})^T \in [\smm]^\sff$,
the relation 
    \begin{align*}
\Tr \rho( \vec{m},q,k) (Y_{k,q,\cA}(m_k)\otimes I_{\cA^c} )=1
    \end{align*}
holds when $q$ is any possible query $Q^{(K)}$.

\item \textbf{User Secrecy}:
The distribution of $(Q_j^{(k)})_{j \in \cB}$ does not depend on $k \in [\sff]$ for any $\cB \in \fB$.

\item \textbf{Server Secrecy}:
We fix $K=k$, $M_k=m_k$, and $Q^{(K)}=q$. 
Then, the state $\rho( (m_1,\ldots, m_{\sff})^T,q,k)$ does not depend on $(m_j)_{j \neq k} \subset \cM^{\sff-1}$.
\end{itemize}
\end{defi}

In particular, when the system $\cD_1' \otimes \cdots \otimes \cD_{\vN}'$ has the same dimension as $\cD_U$
and the encoded states are maximally entangled states on
$\cD_1' \otimes \cdots \otimes \cD_{\vN}'$ and $\cD_U$, 
the EASPIR protocol is called fully EASPIR (FEASPIR) protocol. 
Indeed, it is difficult to consider quantum-quantum SPIR (QQPIR)
in a similar way as QQSS.
Remark \ref{RTTY} in Section \ref{S8} explains its reason.

%In addition, the user decodes the message from the state on the joint system $\cD[\cA]\otimes \cD_u$.

\subsection{Linear protocols with preshared entanglement with user}
%FEASPIR protocol and EASPIR protocol}
%Given a ${\bar{\snn}}\times (\sy+\sx)$ matrix $(G,F)$ on the finite field $\FF_q$ with ${\bar{\snn}}=2 \snn$ and 
Given a $2\snn\times \sy$ matrix $G$ and
a $2\snn \times \sff \sx $ matrix $Q^{(K)}$ with index $K$,
we define
an FEASPIR protocol as follows.
The set $\cM$ is given as $\FF_q^{\sx}$.
%We set ${\cal D}_U$ and ${\cal D}_E$ as ${\cal H}^{\otimes \snn}$.
%Also, we set $\cD_j$ to be $\cH$ for $j=1, \ldots,\snn$, and set $\cD_j'$ to the same system as $\cD_j$.
We set $\cD_j$, $\cD_j'$ and $\cD_{U,j}$ as ${\cal H}$ for $j=1, \ldots,\snn$, and define
$\cD[\cA]$, ${\cal D}_U$, and
${\cal D}_U[\cA]$ as
$\otimes_{j \in \cA}\cD_j$,
$\otimes_{j=1}^{\snn}\cD_{U,j}$, and 
$\otimes_{j \in \cA}\cD_{U,j}$, respectively, for any subset $\cA \subset [\snn]$.
Hence, ${\cal D}_U$ is ${\cal H}^{\otimes \snn}$.

\begin{table}[t]
\caption{Symbols for FEASPIR and EASPIR protocols}
\Label{symbol2}
\begin{center}
\begin{tabular}{|c|c|c|}
\hline
symbol & meaning & definition   \\
\hline
$\cD_{U,j}$ & User's $j$-th system &  $\cH$ \\
\hline
 ${\cal D}_U$&User's whole system &   $\otimes_{j=1}^{\snn}\cD_{U,j}=\cH^{\otimes n}$ \\
\hline
${\cal D}_U[\cA]$ & User's system with subset $\cA$ &$\otimes_{j\in \cA}\cD_{U,j}$   \\
\hline
\end{tabular}
\end{center}
\end{table}

Then, we define the linear FEASPIR protocol with $G,Q^{(K)}$ as Protocol \ref{protocol4}.
\begin{Protocol}[H]                  
\caption{Linear FEASPIR protocol with $G,Q^{(K)}$}         
\Label{protocol4}      
\begin{algorithmic}
\STEPONE
\textbf{Preparation}:
We set the initial state $\rho_{\mathrm{prev}}$ on 
$\cD_1 \otimes \cdots \otimes \cD_{\snn} \otimes {\cal D}_U$
to be $|\Phi\rangle$.
Let $U_S$ be a random variable subject to the uniform distribution on $\FF_q^{\sy}$, which is called
the {\em random seed} for servers.
By using the random seed $U_S$,
the shared randomness $R = (R_1,\ldots, R_{2\snn})^T $ is generated as
$R_j :=G_j U_S$ for $j=1, \ldots, 2\snn$.
The randomness $R$ is distributed so that $j$-th server contains $R_j$ and $R_{\snn +j}$
for $j=1, \ldots, \snn$.
\STEPTWO
\textbf{User's encoding}:
The user 
	randomly encodes the index $K$ to classical queries 
	$Q^{(K)}:=(Q_1^{(K)},\ldots,Q_{\red{2}\snn}^{(K)})^T$, \red{which 
is an ${\bar{\snn}} \times \sff \sx $ random matrix for $k=1, \ldots, \sff$.
The user sends $Q_j^{(K)},Q_{\snn+j}^{(K)}$ to the $j$-th server $\mathtt{serv}_j$.}
\STEPTHREE
\textbf{Servers' encoding}:
The $j$-th server $\mathtt{serv}_j$ applies unitary 
$\mathsf{W}( Q_j^{(K)} \vec{m}+R_j , Q_{\snn+j}^{(K)} \vec{m}+R_{\snn+j})$ on ${\cal D}_j$. 
\STEPFOUR
\textbf{Decoding}:
For a subset $\cA \in \fA$,
the user takes partial trace on $\cD_{U}[\cA^c]$, and makes measurement on the basis
$\{\mathbf{W}_{\cA}(y) |\phi\rangle^{\otimes |\cA|}\}_{y \in \FF_q^{2|\cA|}}$.
Based on the obtained outcome, %the end-user recovers $m$.
the user outputs the measurement outcome $m$ as the retrieval result.
\end{algorithmic}
\end{Protocol}

\begin{lemm}\Label{L2}
The following conditions for the matrix $G$ and the query $Q^{(K)}$ are equivalent.
\begin{description}
\item[(K1)]
The linear FEASPIR protocol with $G,Q^{(K)}$
is $(\ACC,\REJ)$-secure.
\item[(K2)]
The linear CSPIR protocol with $G,Q^{(K)}$
is $(\bar{\ACC},\bar{\REJ})$-secure.
\end{description}
%The linear EASPIR protocol with general $(G,F)$ and $|\Phi\rangle$ is 
%$(\ACC,\REJ)$-secure if and only if 
%$(G,F)$ is an $(\bar{\ACC},\bar{\REJ})$-PMS.
\end{lemm}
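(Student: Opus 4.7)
The plan is to mirror the proof of Lemma \ref{L2U}, exploiting the dense-coding correspondence to translate the FEASPIR security conditions into the classical CSPIR security conditions on the symplectified access structure. First I would observe that, after the servers' encoding in Step~3 of Protocol \ref{protocol4}, the state on $\cD_1 \otimes \cdots \otimes \cD_\snn \otimes \cD_U$ is exactly
\begin{align}
\mathbf{W}_{[\snn]}(x) |\Phi\rangle\langle\Phi| \mathbf{W}_{[\snn]}^\dagger(x),
\end{align}
where $x = Q^{(K)}\vec{m} + GU_S \in \FF_q^{2\snn}$, because each local Weyl operator $\mathsf{W}(Q_j^{(K)}\vec{m}+R_j,\,Q_{\snn+j}^{(K)}\vec{m}+R_{\snn+j})$ acts on its half of $|\phi\rangle$ on $\cD_j\otimes\cD_{U,j}$, and these tensor up to $\mathbf{W}_{[\snn]}(x)$.

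Next, by the identity \eqref{XZP} that was used inside the proof of Lemma \ref{L2U}, the partial trace on $\cD[\cA^c]\otimes\cD_U[\cA^c]$ is
\begin{align}
\mathbf{W}_{\cA}(P_{\bar{\cA}} x)\, |\phi\rangle\langle \phi|^{\otimes |\cA|}\,\mathbf{W}_{\cA}^\dagger(P_{\bar{\cA}} x) \otimes \rho_{\mathrm{mix}}^{\otimes |\cA^c|}.
\end{align}
This state is completely characterized by the classical label $P_{\bar{\cA}}x = P_{\bar{\cA}}(Q^{(K)}\vec{m} + GU_S)$, which is precisely the vector of classical answers from the servers indexed by $\bar{\cA}\subset [2\snn]$ in the linear CSPIR protocol. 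Consequently, any POVM decoding on $\cD[\cA]\otimes\cD_U[\cA]$ is informationally equivalent to a classical decoder acting on $P_{\bar{\cA}} x$, and vice versa via the measurement specified in Step~4.

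With this reduction in hand, I would verify the three security conditions one by one. For \emph{correctness} on $\cA\in\fA$, the end-user recovers $m_K$ from $\mathbf{W}_{\cA}(P_{\bar{\cA}}x)|\phi\rangle^{\otimes |\cA|}$ if and only if $m_K$ is a function of $P_{\bar{\cA}}(Q^{(K)}\vec{m}+GU_S)$, which is the correctness condition for the CSPIR protocol on $\bar{\cA}$. For \emph{user secrecy} on $\cB\in\fB$, the colluding servers $\{j\}_{j\in\cB}$ in the quantum protocol receive exactly the pairs $(Q_j^{(K)},Q_{\snn+j}^{(K)})_{j\in\cB}$, which is identically the set of queries received by the symplectified set $\bar{\cB}\subset [2\snn]$ in the CSPIR; independence of $K$ transfers verbatim. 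For \emph{server secrecy}, the state held by servers in $\cB$ (or equivalently by the user after collecting $\cB$) is the Weyl-shifted maximally entangled state labeled by $P_{\bar{\cB}}(Q^{(k)}\vec{m}+GU_S)$, so independence from $(m_j)_{j\neq k}$ is equivalent (via the dense-coding labeling argument) to the CSPIR server-secrecy condition on $\bar{\cB}$, which by Lemma \ref{L7} is the linear condition that the relevant columns of $Q^{(k)}$ lie in $\im G$.

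The main obstacle is to make the ``informational equivalence'' in the second paragraph fully rigorous in both directions: the $\Rightarrow$ direction (quantum security implies classical security) is immediate since any classical decoder/adversary can be simulated by a suitable measurement on the Weyl-labeled entangled state; the $\Leftarrow$ direction requires showing that no quantum strategy on $\cD[\cA]\otimes\cD_U[\cA]$ (respectively $\cD[\cB]\otimes\cD_U[\cB]$) can extract more than the classical label $P_{\bar{\cA}}x$ (respectively $P_{\bar{\cB}}x$). This follows by the same argument as in Proposition \ref{ZNO}: the state is diagonal in the basis $\{\mathbf{W}_\cA(z)|\phi\rangle^{\otimes |\cA|}\}_z$, so any POVM can be post-processed from the canonical measurement in Step~4. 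With that in place, the three-way equivalence of correctness, user secrecy, and server secrecy between the FEASPIR with $(\ACC,\REJ)$ and the CSPIR with $(\bar{\ACC},\bar{\REJ})$ is established.
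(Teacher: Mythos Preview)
Your approach is essentially identical to the paper's: reduce via the identity \eqref{XZP} so that the reduced state on $\cD[\cA]\otimes\cD_U[\cA]$ is faithfully labeled by the classical data $P_{\bar{\cA}}x$ with $x=Q^{(K)}\vec{m}+GU_S$, and then transfer each security condition to the linear CSPIR on the symplectified structure. One small slip worth correcting: server secrecy in Definition~\ref{Def5-B} is a condition on the \emph{full} state $\rho(\vec{m},q,k)$, not on a restriction to some $\cB\in\fB$, so the relevant classical object is the full label $x$ (and then Lemma~\ref{L7}), not $P_{\bar{\cB}}x$; this does not affect your conclusion.
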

\begin{proof}
%When $(G,F)$ is an $(\bar{\ACC},\bar{\REJ})$-MMSP,
%the above linear EASS protocol is an $(\ACC,\REJ)$-EASS.
 Given a subset $\cA \subset [\snn]$ and $x \in \FF_q^{{\bar{\snn}}}$, we have
 \begin{align}
& \Tr_{(\cA,U)^c} \mathbf{W}_{[\snn]}( x) (|\Phi\rangle \langle \Phi|)
 \mathbf{W}_{[\snn]}^\dagger( x)  \nonumber \\
 =&
\mathbf{W}_{\cA}(P_{\bar{\cA}} x)  |\phi\rangle \langle \phi|
 \mathbf{W}_{\cA}^\dagger(P_{\bar{\cA}} x) \otimes \rho_{mix}^{\otimes |
 \red{\cA^c}|}.
 \end{align}
Hence, the above state can be identified with the 
classical information $ P_{\bar{\cA}} x$.
That is, the analysis on the correctness and the secrecy 
of the linear FEASPIR protocol with $G,Q^{(K)}$
follows from those 
of the linear CSPIR protocol with $G,Q^{(K)}$.
%Hence, the desired statement follows from Proposition \ref{prop1}.
\end{proof}

Next, we assume that the $2\snn\times \sy$
matrix $G$ is written as $(G^{(1)},G^{(2)})$ with 
a $2\snn\times \sy_1$ self-column-orthogonal matrix $G^{(1)}$
and
a $2\snn\times \sy_2$ matrix $G^{(2)}$.
Then, to discuss the relation with CQSPIR protocols,
we modify Protocol \ref{protocol4} as follows.
The initial state $|\Phi\rangle$ is replaced by $ |\Phi[0,G^{(1)}]\rangle$.
The random seed $U_S$ is written as $(U_{S,1},U_{S,2})$ with
$U_{S,1} \in \FF_q^{\by_1}$ and $U_{S,2} \in \FF_q^{\by_2}$ so that
$G U_S= G^{(1)} U_{S,1}+G^{(2)} U_{S,2}$.
The applied unitary 
$\mathsf{W}( Q_j^{(K)} \vec{m}+G_j U_S , Q_{\snn+j}^{(K)} \vec{m}+G_{\snn+j} U_S)$ 
on ${\cal D}_j$ is replaced by 
$\mathsf{W}( Q_j^{(K)} \vec{m}+G_j^{(2)} U_{S,2} , Q_{\snn+j}^{(K)} \vec{m}+G_{\snn+j}^{(2)} U_{S,2})$. 
In this protocol, the end-user's space $\cD_E$ is given as $\cD_E[0,G^{(1)}]$.
This protocol is formally written as Protocol \ref{protocol5}, and is called
the linear EASPIR protocol with $G^{(1)},G^{(2)},Q^{(K)}$.
When $\sy_2$ is zero, i.e., $G_j^{(2)}$ is not given, 
the linear EASPIR protocol with $G^{(1)},\emptyset,Q^{(K)}$ 
does not require shared randomness among servers, and is called
the randomless linear EASPIR protocol with $G^{(1)},Q^{(K)}$.

\begin{Protocol}[H]                  
\caption{Linear EASPIR protocol with $G^{(1)},G^{(2)},Q^{(K)}$}        
\Label{protocol5}      
\begin{algorithmic}
\STEPONE
\textbf{Preparation}:
We set ${\cal D}_U$ to be $\cH^{\otimes \snn} $.
We set the initial state $\rho_{DU}$ on 
$\cD_1 \otimes \cdots \otimes \cD_{\snn} \otimes {\cal D}_U$
to be $|\Phi[0,G^{(1)}]\rangle$.
Let $U_{S,2}$ be a random variable subject to the uniform distribution on $\FF_q^{\sy_2}$.
The shared randomness $R = (R_1,\ldots, R_{2\snn})^T $ is generated as
$R_j :=G_j^{(2)} U_{S,2}$ for $j=1, \ldots, 2\snn$.
The randomness $R$ is distributed so that $j$-th server contains $R_j$ and $R_{\snn +j}$
for $j=1, \ldots, \snn$.
\STEPTWO
\textbf{User's encoding}:
The user 
	randomly encodes the index $K$ to classical queries 
	$Q^{(K)}:=(Q_1^{(K)},\ldots,Q_{\red{2}\snn}^{(K)})^T$, \red{which 
is an ${\bar{\snn}} \times \sff \sx $ random matrix for $k=1, \ldots, \sff$.
The user sends $Q_j^{(K)},Q_{\snn+j}^{(K)}$ to the $j$-th server $\mathtt{serv}_j$.}
\STEPTHREE
\textbf{Servers' encoding}:
The $j$-th server $\mathtt{serv}_j$ applies unitary 
$\mathsf{W}( Q_j^{(K)} \vec{m}+R_j , Q_{\snn+j}^{(K)} \vec{m}+R_{\snn+j})$ on ${\cal D}_j$. 
\STEPFOUR
\textbf{Decoding}:
For a subset $\cA \in \fA$,
the user takes partial trace on $\cD_{U}[\cA^c]$, and 
makes the measurement  given by the POVM
$\{\mathbf{W}_{\cA}(y) (
\Tr_{\cD[\cA^c],\cD_{U}[\cA^c]}
|\Phi[0,G^{(1)}]\rangle \langle \Phi[0,G^{(1)}]|)
\mathbf{W}_{\cA}^\dagger(y)\}_{y \in \FF_q^{2|\cA|}}$.
Based on the obtained outcome, %the end-user recovers $m$.
the user outputs the measurement outcome $m$ as the retrieval result.
\end{algorithmic}
\end{Protocol}

\begin{theo}\Label{TH2}
Given a $2 \snn\times \sy_1$ self-column-orthogonal matrix $G^{(1)}$,
a $2 \snn \times \sy_2$ matrix $G^{(2)}$, and query $Q^{(K)}$,
the following conditions are equivalent.
\begin{description}
\item[(L1)]
The linear EASPIR protocol with $G^{(1)},G^{(2)},Q^{(K)}$
is $(\ACC,\REJ)$-secure.
\item[(L2)]
The linear FEASPIR protocol with $(G^{(1)},G^{(2)}),Q^{(K)}$
is $(\ACC,\REJ)$-secure.
\item[(L3)]
The linear CSPIR protocol with $(G^{(1)},G^{(2)}),Q^{(K)}$
is $(\bar{\ACC},\bar{\REJ})$-secure.
%$(G,F)$ is an $(\bar{\ACC},\bar{\REJ})$-MMSP.
\end{description}
\end{theo}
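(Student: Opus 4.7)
The plan is to reduce Theorem~\ref{TH2} to Theorem~\ref{TH1} in spirit, by importing the dealer-side reasoning there to the distributed server-side setting here. Since Lemma~\ref{L2} already establishes (L2)$\Leftrightarrow$(L3), only the equivalence (L1)$\Leftrightarrow$(L2) needs proof. I would mirror the three-step template used for Theorem~\ref{TH1}: introduce an intermediate modified FEASPIR protocol, connect it to Protocol~\ref{protocol4} by pushing the $U_{S,1}$-randomization out of the server-encoding step into the state-preparation step, and then connect it to Protocol~\ref{protocol5} by exploiting the local-unitary relationship between $|\Phi[y,G^{(1)}]\rangle$ and $|\Phi[0,G^{(1)}]\rangle$ from \eqref{MTX3}.

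Concretely, the modified FEASPIR protocol initialises $\cD_1\otimes\cdots\otimes\cD_\snn\otimes\cD_U$ with the mixture $\sum_{y\in\FF_q^{\sy_1}}\frac{1}{q^{\sy_1}}|\Phi[y,G^{(1)}]\rangle\langle\Phi[y,G^{(1)}]|$, supplies the servers only with the shared randomness $U_{S,2}$, and has the $j$-th server apply $\mathsf{W}(Q_j^{(K)}\vec m+G_j^{(2)}U_{S,2},\,Q_{\snn+j}^{(K)}\vec m+G_{\snn+j}^{(2)}U_{S,2})$, with the Bell-basis POVM from Protocol~\ref{protocol4} as decoder. To identify this with Protocol~\ref{protocol4}, observe that the $U_{S,1}$-dependent piece of each server's encoding, taken collectively across all $\snn$ servers, is exactly $\mathbf{W}_{[\snn]}(G^{(1)}U_{S,1})$; by \eqref{eq:commutative}, $\mathbf{W}(a+b)$ and $\mathbf{W}(a)\mathbf{W}(b)$ differ only by a scalar phase which cancels under conjugation, so this factor can be commuted past the query-and-$U_{S,2}$ part of the encoding and absorbed into the state preparation. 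Averaging over $U_{S,1}$ and invoking \eqref{MTX} converts $|\Phi\rangle\langle\Phi|$ to precisely the mixture above, so the joint states produced by Protocol~\ref{protocol4} and the modified protocol coincide for every fixed $(\vec m,K,Q^{(K)},U_{S,2})$, and all three security conditions transfer. To identify the modified protocol with Protocol~\ref{protocol5}, I would apply the same auxiliary-randomness trick used in the proof of Theorem~\ref{TH1}: by \eqref{MTX3} each $|\Phi[y,G^{(1)}]\rangle$ is a local-unitary image of $|\Phi[0,G^{(1)}]\rangle$ under $\mathbf{W}_\cA(P_\cA H^{(1)}y)\otimes\mathbf{W}(H^{(1)}y)$, so inserting a uniform $Y\in\FF_q^{\sy_1}$ on the user's side, applying the compensating unitary, and using the POVM identities \eqref{MTX4}--\eqref{MTX2} interconverts the two decoders without altering any output statistics.

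The principal obstacle, compared with Theorem~\ref{TH1}, is that the encoding is no longer a single dealer operation but is distributed across $\snn$ servers: the collective unitary $\mathbf{W}_{[\snn]}(G^{(1)}U_{S,1})$ is realised only because each server locally knows its own slice $G_jU_S,\,G_{\snn+j}U_S$ of the shared randomness. Confirming that this distributed realisation is statistically identical to a global application to $|\Phi\rangle$ requires careful bookkeeping of the phases generated by the noncommutativity in \eqref{eq:commutative} and verifying that they all cancel in the conjugation defining the resulting density matrix; once this cancellation is secured, the remainder of the argument is a direct transcription of the EASS-to-FEASS proof of Theorem~\ref{TH1}. Chaining the three equivalences then yields (L1)$\Leftrightarrow$(L2), and combining with Lemma~\ref{L2} completes the proof.
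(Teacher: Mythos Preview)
Your proposal is correct and follows essentially the same approach as the paper, which simply states that (L2)$\Leftrightarrow$(L3) follows from Lemma~\ref{L2} and that (L1)$\Leftrightarrow$(L2) is shown ``in the same way as Theorem~\ref{TH1}.'' Your detailed unpacking of that reference---the modified intermediate protocol, the absorption of the $U_{S,1}$-randomization via \eqref{MTX}, and the decoder interconversion via \eqref{MTX3}--\eqref{MTX2}---is exactly the intended argument; the distributed-encoding concern you raise is harmless because $\mathbf{W}_{[\snn]}$ is by definition a tensor product of the local $\mathsf{W}$ operators, so the collective server action coincides with the single-dealer action up to the global phase you already note cancels under conjugation.
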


Theorem \ref{TH2} can be shown in the same way as Theorem \ref{TH1}.
That is, 
the equivalence between (L2) and (L3) follows from Lemma \ref{L2}, and 
the equivalence between (L1) and (L2) can be shown in the same way as Theorem \ref{TH1}.

In the same way as Proposition \ref{ZNO2},
we have the following proposition.
\begin{prop}\Label{ZNOSPIR}
In Protocol \ref{protocol5},
even when STEP 4 is replaced by another decoder,
the decoder can be simulated by the decoder given in STEP 3.
That is, once STEPs 1, 2, and 3 are given in Protocol \ref{protocol5CQ},
without loss of generality, we can assume that our decoder is given as STEP 4.
\end{prop}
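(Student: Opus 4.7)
The plan is to follow the proof of Proposition~\ref{ZNO2} essentially verbatim, translating it into the EASPIR setting. The structure of Protocol~\ref{protocol5} relative to Protocol~\ref{protocol4} is identical to the structure of Protocol~\ref{protocol2} relative to Protocol~\ref{protocol1}: the initial state is $|\Phi[0,G^{(1)}]\rangle$ on the subspace $\cD_U[0,G^{(1)}]$ instead of $|\Phi\rangle$ on the full $\cD_U$, and only the random seed $U_{S,2}$ with matrix $G^{(2)}$ is used in place of the full $(U_{S,1},U_{S,2})$ with $(G^{(1)},G^{(2)})$. The argument should therefore run in three steps, each paralleling the EASS case.

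First, I would establish an analog of Proposition~\ref{ZNO} for Protocol~\ref{protocol4}: any POVM used as decoder in Protocol~\ref{protocol4} can be simulated by the basis measurement prescribed in STEP~4 of that protocol. This follows exactly as in the proof of Proposition~\ref{ZNO}, because for any values of $K$, $Q^{(K)}$, $\vec{M}$, and any $\cA$, the state produced by STEPs~1--3 of Protocol~\ref{protocol4}, after the partial trace on $\cD_U[\cA^c]$, has the form $\mathbf{W}_{\cA}(z)\,|\phi\rangle\langle\phi|^{\otimes|\cA|}\,\mathbf{W}_{\cA}^\dagger(z) \otimes \rho_{\mathrm{mix}}^{\otimes|\cA^c|}$ for a classical value $z\in\FF_q^{2|\cA|}$ determined by those inputs, so only diagonal information in the basis $\{\mathbf{W}_{\cA}(y)|\phi\rangle^{\otimes|\cA|}\}_y$ is accessible.

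Second, given an arbitrary query-dependent POVM $\{\Pi_{\omega,K,Q^{(K)}}\}_\omega$ acting on $\cD[\cA]\otimes \cD_U[0,G^{(1)}]$ as a decoder in Protocol~\ref{protocol5}, I would modify it as in the proof of Proposition~\ref{ZNO2}: generate $Y\in\FF_q^{\sy_1}$ uniformly (independently of $K,Q^{(K)},\vec{M}$), apply the unitary $\mathbf{W}_{\cA}(P_{\cA}H^{(1)}Y)\otimes\mathbf{W}(H^{(1)}Y)$, and then apply the unitarily conjugated POVM on $\cD[\cA]\otimes \cD_U[Y,G^{(1)}]$. Averaging over $Y$ gives the POVM
\begin{align*}
\tilde{\Pi}_{\omega,K,Q^{(K)}} := \sum_{y\in\FF_q^{\sy_1}} \mathbf{W}_{\cA}(-P_{\cA}H^{(1)}y)\,\Pi_{\omega,K,Q^{(K)}}\,\mathbf{W}_{\cA}(P_{\cA}H^{(1)}y)
\end{align*}
on the enlarged space $\cD[\cA]\otimes \cD_U = \bigoplus_{y}\cD[\cA]\otimes \cD_U[y,G^{(1)}]$, which reproduces the statistics of the original decoder. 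By identity \eqref{MTX}, the state of $\cD[\cA]\otimes \cD_U$ at the end of STEP~3 of Protocol~\ref{protocol5}, averaged over the above classical $Y$, coincides with the state at the end of STEP~3 of Protocol~\ref{protocol4} (the roles of $G^{(1)}$-randomness and the maximally entangled state across the full $\cD_U$ being interchangeable).

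Third, by the analog of Proposition~\ref{ZNO} for Protocol~\ref{protocol4} established in Step~1, the POVM $\{\tilde{\Pi}_{\omega,K,Q^{(K)}}\}_\omega$ can be simulated by the basis measurement prescribed in STEP~4 of Protocol~\ref{protocol4}. Finally, the conversion-of-decoders argument used to prove the equivalence (L1)$\Leftrightarrow$(L2) in Theorem~\ref{TH2} (parallel to (J1)$\Leftrightarrow$(J2) in Theorem~\ref{TH1}) converts this basis measurement on Protocol~\ref{protocol4} back to the decoder specified in STEP~4 of Protocol~\ref{protocol5}. Concatenating these simulations proves the proposition. The main point to verify carefully is that the auxiliary randomness $Y$ and all the unitaries $\mathbf{W}_{\cA}(P_{\cA}H^{(1)}y)$ are independent of $K$, $Q^{(K)}$, and $\vec{M}$, so that the resulting simulation is a legitimate query-dependent decoder of the form allowed in Protocol~\ref{protocol5}; this is immediate from the fact that $G^{(1)}$ and $H^{(1)}$ are protocol parameters rather than random variables. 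Thus no genuinely new obstacle arises beyond those already handled in Section~\ref{S6-2}.
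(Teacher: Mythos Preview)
Your proposal is correct and takes essentially the same approach as the paper: the paper simply states that Proposition~\ref{ZNOSPIR} is shown ``in the same way as Proposition~\ref{ZNO2},'' and your three-step outline is precisely the translation of that proof to the EASPIR setting. Your additional remark that the auxiliary randomness $Y$ and the unitaries $\mathbf{W}_{\cA}(P_{\cA}H^{(1)}y)$ are independent of $K$, $Q^{(K)}$, and $\vec{M}$ is a useful clarification that the paper does not spell out.
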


In addition, the linear FEASPIR protocol with $G,Q^{(K)}$ %($G^{(1)},G^{(2)},Q^{(K)}$) 
is called
the standard linear FEASPIR protocol with $(G,F)$ %($G^{(1)},G^{(2)},Q^{(K)}$)
when the query $Q^{(K)}$ is given as \eqref{CAP1} by using $F$.
Under the same condition,
the linear EASPIR protocol with $G^{(1)},G^{(2)},Q^{(K)}$
is called the standard linear EASPIR protocol with $(G^{(1)},G^{(2)},F)$.
In particular, 
when $\sy_2=0$, the protocol does not have the matrix $G^{(2)}$, i.e., 
protocol does not have the random variable $U_{S,2}$.
Such a protocol is called the randomless standard linear EASPIR protocol with $(G^{(1)},F)$.

Therefore, we obtain the following corollary by combining 
Theorem \ref{TH2} and Proposition \ref{P2}.

\begin{coro}\Label{Cor2}
Given a $2 \snn \times \sy_1$ self-column-orthogonal matrix $G^{(1)}$,
a $2 \snn \times \sy_2$ matrix $G^{(2)}$, and 
a $2 \snn \times \sx$ matrix $F$, 
the following conditions are equivalent.
\begin{description}
\item[(M1)]
The standard linear EASPIR protocol with $(G^{(1)},G^{(2)},F)$
is $(\ACC,\REJ)$-secure.
\item[(M2)]
The standard linear FEASPIR protocol with $((G^{(1)},G^{(2)}),F)$
is $(\ACC,\REJ)$-secure.
\item[(M3)]
The standard linear CSPIR protocol with $((G^{(1)},G^{(2)}),F)$
is $(\bar{\ACC},\bar{\REJ})$-secure.
\item[(M4)]
The matrix $((G^{(1)},G^{(2)}),F)$ is an $(\bar{\ACC},\bar{\REJ})$-MMSP.
\end{description}
\end{coro}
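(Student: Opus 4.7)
The plan is to obtain Corollary \ref{Cor2} by directly chaining Theorem \ref{TH2} with Proposition \ref{P2}, exploiting the fact that the ``standard'' protocols are exactly the special cases of the general linear protocols in which the random query $Q^{(K)}$ is constrained to the form \eqref{CAP1} with matrix $F$.

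First, I would specialize Theorem \ref{TH2} to the choice $Q^{(K)} := FE_K + (G^{(1)},G^{(2)}) U_Q$, where $U_Q$ is uniform over $\FF_q^{(\sy_1+\sy_2)\times \sx\sff}$. With this query, the linear EASPIR protocol with $G^{(1)},G^{(2)},Q^{(K)}$ is, by Definition \ref{def:projection1}, precisely the standard linear EASPIR protocol with $(G^{(1)},G^{(2)},F)$, and analogously for the FEASPIR and CSPIR protocols. Theorem \ref{TH2} then immediately yields the equivalences (M1) $\Leftrightarrow$ (M2) $\Leftrightarrow$ (M3).

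Next, I would invoke Proposition \ref{P2} applied to the $2\snn\times(\sy_1+\sy_2+\sx)$ matrix $((G^{(1)},G^{(2)}),F)$ with the access structure $(\bar{\ACC},\bar{\REJ})$. Proposition \ref{P2} states exactly that the standard linear CSPIR protocol with $((G^{(1)},G^{(2)}),F)$ is $(\bar{\ACC},\bar{\REJ})$-secure if and only if $((G^{(1)},G^{(2)}),F)$ is an $(\bar{\ACC},\bar{\REJ})$-MMSP, giving (M3) $\Leftrightarrow$ (M4).

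Since nothing beyond these two results is needed, the argument is essentially bookkeeping: one only has to check that the definitions of standard linear EASPIR/FEASPIR/CSPIR protocols align correctly with the specialization of the query $Q^{(K)}$ to the standard form. The single mild obstacle is purely notational, namely verifying that in passing from $G$ to the block decomposition $(G^{(1)},G^{(2)})$ the standard form \eqref{CAP1} of $Q^{(K)}$ is preserved, which is immediate because $\im G = \im(G^{(1)},G^{(2)})$. No new algebraic construction or combinatorial argument beyond those already established in Sections~\ref{S5} and \ref{S6-3} is required.
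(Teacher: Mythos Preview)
Your proposal is correct and matches the paper's approach exactly: the paper states the corollary is obtained ``by combining Theorem \ref{TH2} and Proposition \ref{P2},'' which is precisely the two-step chain you describe (specialize Theorem \ref{TH2} to the standard query \eqref{CAP1} for (M1)$\Leftrightarrow$(M2)$\Leftrightarrow$(M3), then invoke Proposition \ref{P2} for (M3)$\Leftrightarrow$(M4)). The notational check you flag---that the standard query for the EASPIR protocol uses the full matrix $(G^{(1)},G^{(2)})$ in \eqref{CAP1}---is indeed the only thing to verify, and the paper confirms this reading in \eqref{MCUZ}.
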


When $\sy_1$ is ${\bar{\snn}}/2=\snn$,
as discussed in Section \ref{S6-1-3}, 
$\cD_{E}[0,G^{(1)}]$ is a one-dimensional system. Hence, 
the state $|\Phi[0,G^{(1)}]\rangle$ 
is a product state.
%can be considered as a state on $\cD_D$. 
Therefore, 
Protocol \ref{protocol5} essentially
coincides with the linear CQSPIR protocol with $G^{(1)},G^{(2)},Q^{(K)}$.
%In particular, it is called the standard linear CQSPIR protocol with $(G^{(1)},G^{(2)},F)$
%when $Q^{(K)}$ has the standard form with $F$.
Since a (standard) linear CQSPIR protocol with $G^{(1)},G^{(2)},Q^{(K)}$
($(G^{(1)},G^{(2)},F)$) is a special case of 
a (standard) linear EASPIR protocol with $G^{(1)},G^{(2)},Q^{(K)}$ ($(G^{(1)},G^{(2)},F)$),
the relations among Conditions (L1), (L3), (M1), (M3), and (M4)
yields Theorem \ref{Cor3CQ}.
In the same way, Proposition \ref{ZNOSPIR}
implies Proposition \ref{ZNOSPIRCQ}.

\if0
\begin{coro}\Label{Cor3}
Given a $2\snn \times \snn$ self-column-orthogonal matrix $G^{(1)}$,
a $2\snn\times \sy_2$ matrix $G^{(2)}$,
and
a query $Q^{(K)}$,
the following conditions are equivalent.
\begin{description}
\item[(I1)]
The linear CQSPIR protocol with $G^{(1)},G^{(2)},Q^{(K)}$ 
is $(\ACC,\REJ)$-secure.
\item[(I2)]
The linear CSPIR protocol with $(G^{(1)},G^{(2)}),Q^{(K)}$
is $(\bar{\ACC},\bar{\REJ})$-secure.
\end{description}
In addition, we assume that $F$ is a $2\snn\times \sx$ matrix.
the following conditions for $G^{(1)},G^{(2)}$, and $F$ are equivalent.
\begin{description}
\item[(I3)]
The standard linear CQSPIR protocol with $(G^{(1)},G^{(2)},F)$ 
is $(\ACC,\REJ)$-secure.
\item[(I4)]
The standard linear CSPIR protocol with $((G^{(1)},G^{(2)}),F)$
is $(\bar{\ACC},\bar{\REJ})$-secure.
\item[(I5)]
The matrix $((G^{(1)},G^{(2)}),F)$ is an $(\bar{\ACC},\bar{\REJ})$-MMSP.
\end{description}
\end{coro}
\fi

Combining Corollary \ref{Cor2} and Theorem \ref{TH3}, we obtain the following corollary as a generalization of Corollary \ref{Coro7CQ}

\begin{coro}\Label{Coro7}
When 
$\snn \ge \srr > \stt >0$,
there exists an $(\srr,\stt,\snn)$-secure EASPIR protocol with rate $2(\srr-\stt)/\snn$.
\end{coro}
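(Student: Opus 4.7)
The plan is to combine Theorem~\ref{TH3} with Corollary~\ref{Cor2}, mirroring the derivation of Corollary~\ref{Coro7CQ} in the CQSPIR setting. First, I would invoke Theorem~\ref{TH3} with some admissible choice of $\sy_1$ (for example $\sy_1 = 1$, or any $\sy_1$ with $0 < \sy_1 \le 2\stt$) to obtain, over a sufficiently large field $\FF_q = \FF_{p^s}$, a $2\snn \times \sy_1$ self-column-orthogonal matrix $G^{(1)}$, a $2\snn \times (2\stt-\sy_1)$ matrix $G^{(2)}$, and a $2\snn \times 2(\srr-\stt)$ matrix $F$ such that $(G^{(1)}, G^{(2)}, F)$ is an $(\srr,\stt,\snn)$-EAMMSP. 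By the definition of this EAMMSP, the concatenated matrix $((G^{(1)},G^{(2)}), F)$ is then an $(\bar{\ACC},\bar{\REJ})$-MMSP for the thresholds $\ACC = \{\cA \subset [\snn] \mid |\cA| \ge \srr\}$ and $\REJ = \{\cB \subset [\snn] \mid |\cB| \le \stt\}$.

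Next, I would assemble the standard linear EASPIR protocol with $(G^{(1)}, G^{(2)}, F)$ as in Protocol~\ref{protocol5}, taking the query to be of the standard form~\eqref{CAP1} built from $F$. Then the equivalence (M1)$\Leftrightarrow$(M4) of Corollary~\ref{Cor2} directly converts the MMSP property obtained in the previous step into $(\ACC,\REJ)$-security of this protocol, which is precisely the $(\srr,\stt,\snn)$-security demanded by the corollary.

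It remains to compute the rate. Since $F$ has $2(\srr-\stt)$ columns, the file space is $\cM = \FF_q^{2(\srr-\stt)}$, so the file size is $\smm = q^{2(\srr-\stt)}$; the download consists of $\snn$ quantum systems each of dimension $q$, whence $D(\qprot) = q^{\snn}$. The rate formula~\eqref{Rate2} then yields $R(\qprot) = 2(\srr-\stt)/\snn$, as claimed.

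The construction itself is essentially mechanical once Theorem~\ref{TH3} is in hand, so the genuine work lies in Theorem~\ref{TH3}, whose proof (deferred to Appendix~\ref{A2}) supplies the required EAMMSP over a sufficiently large finite field; this is the main obstacle. Compared with Corollary~\ref{Coro7CQ}, the present corollary removes the restriction $\stt \ge \snn/2$ enjoyed only by CQSPIR, reflecting the additional capability afforded by the preshared entanglement between the user and the servers.
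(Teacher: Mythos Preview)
Your proposal is correct and follows essentially the same approach as the paper, which simply states that the corollary is obtained by combining Corollary~\ref{Cor2} with Theorem~\ref{TH3}. Your added details---the explicit choice of an admissible $\sy_1$, the invocation of the equivalence (M1)$\Leftrightarrow$(M4), and the rate computation via~\eqref{Rate2}---are all accurate and faithfully flesh out what the paper leaves implicit.
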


Remember that the rate of CQSPIR cannot exceed $1$ due to the condition $\stt \ge \snn/2$.
This constraint always holds beyond the condition in Corollary \ref{Coro7} 
because CQSPIR does not have shared entanglement.
Since EASPIR has shared entanglement, the rate of CQSPIR exceeds $1$ by removing 
the condition $\stt \ge \snn/2$,
which can be considered as an advantage of EASPIR over CQSPIR.
Further, we have the following lemma, which implies Lemma \ref{LL5CQ}.

\begin{lemm}\Label{LL5}
When we apply the conversion given in Theorem \ref{theo:SPIRtoNSS}
to the standard linear FEASPIR protocol with $(G,F)$,
the resultant EASS protocol is the linear FEASS protocol with $(G,F)$.
\end{lemm}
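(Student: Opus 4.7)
The plan is to trace the conversion prescribed by Theorem~\ref{theo:SPIRtoNSS} through each stage of the standard linear FEASPIR protocol with $(G,F)$ (Protocol~\ref{protocol4} combined with the standard query of Definition~\ref{def:projection1}) and verify that each stage reproduces the corresponding stage of the linear FEASS protocol with $(G,F)$ from Protocol~\ref{protocol1}. Three things must be matched: the initial shared state between dealer and end-user, the dealer's encoding depending on the secret $m$, and the end-user's decoder on the received shares.

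The initial state and the decoder match almost by inspection. The FEASPIR initial state on $\cD_1\otimes\cdots\otimes\cD_\snn\otimes\cD_U$ is $|\Phi\rangle$, a tensor product of Bell pairs linking each $\cD_j$ with its reference $\cD_{U,j}$, which is exactly the FEASS initial state on $\cD_D\otimes\cD_E$ once the end-user's role is reassigned from the retrieving user to the secret-recovering user. Likewise, STEP~4 of Protocol~\ref{protocol4} measures in the basis $\{\mathbf{W}_{\cA}(y)|\phi\rangle^{\otimes|\cA|}\}_{y\in\FF_q^{2|\cA|}}$, which coincides with STEP~3 of Protocol~\ref{protocol1}; the auxiliary dependence on $K$ and $Q^{(K)}$ disappears once both are fixed by the conversion.

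The substantive step is the encoding. Under the standard form, the query is $Q^{(K)}=FE_K+GU_Q$ with $U_Q$ uniform on $\FF_q^{\sy\times\sff\sx}$, and Theorem~\ref{theo:SPIRtoNSS} has the dealer simulate both user and servers. Concretely, the dealer fixes the retrieval index at $K=1$, embeds the SS secret as $\vec{M}=(m,0,\ldots,0)^T\in\FF_q^{\sff\sx}$ with $m\in\FF_q^{\sx}$, draws $U_Q$ and the servers' shared randomness $U_S$ uniformly, and applies on $\cD_j$ the unitary $\mathsf{W}(Q_j^{(1)}\vec{M}+R_j,\,Q_{\snn+j}^{(1)}\vec{M}+R_{\snn+j})$ with $R=GU_S$. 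Using $E_1\vec{M}=m$ and writing $U_Q$ as a concatenation of $\sff$ blocks of $\sx$ columns whose first block is $\widetilde{U}\in\FF_q^{\sy\times\sx}$, the displacement on the $j$-th subsystem simplifies to
\begin{align}
F_j m + G_j\bigl(U_S + \widetilde{U} m\bigr).
\end{align}
Setting $U_D := U_S + \widetilde{U} m$ and using that $U_S$ is uniform on $\FF_q^{\sy}$ independently of $\widetilde{U}$ and $m$, the variable $U_D$ is uniform on $\FF_q^{\sy}$ and independent of $m$. Hence the total dealer operation equals $\mathbf{W}_{[\snn]}(Fm+GU_D)$ applied to $|\Phi\rangle$, matching exactly STEP~2 of Protocol~\ref{protocol1}.

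The only subtle point I expect in turning this sketch into a complete proof is verifying that the joint distribution of the dealer's effective randomness and the output shares agrees with that of Protocol~\ref{protocol1}: one must check that $U_S$ absorbs the non-uniform term $\widetilde{U}m$ cleanly, so that the induced channel from $m$ to shares-plus-reference is the same. Once this randomness bookkeeping is made rigorous, the three matching stages above identify the converted protocol with the linear FEASS protocol with $(G,F)$, completing the proof.
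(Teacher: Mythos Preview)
Your proposal is correct and follows essentially the same approach as the paper: match the initial state, the encoding, and the decoder between the converted FEASPIR protocol and the FEASS protocol with $(G,F)$. In fact you are slightly more careful than the paper's own proof, which in its displayed computation writes $Q_j^{(1)}\vec{m}+G_jU_S \stackrel{(a)}{=} F_jE_1\vec{m}+G_jU_S$ without explicitly showing the $G_jU_Q\vec{m}$ term and its absorption into $G_jU_S$; your treatment via $U_D:=U_S+\widetilde{U}m$ makes this step transparent, and the independence of $U_S$ from $(\widetilde{U},m)$ immediately gives the uniformity of $U_D$, so the ``subtle point'' you flag is already handled.
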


\begin{proof}
First, we calculate the share of the resultant EASS protocol.
Combining 
Protocol \ref{Flow5} and
Protocol \ref{protocol5},
we find that it is calculated as
\begin{align}
& \otimes_{j=1}^{\snn} 
\mathsf{W}( Q_j^{(1)} \vec{m}+G_{j} U_S , Q_{\snn+j}^{(1)} \vec{m}+G_{\snn+j} U_S)
  |\Phi\rangle \nonumber \\
\stackrel{(a)}{=}& \otimes_{j=1}^{\snn} 
\mathsf{W}( F_j E_1 \vec{m}+G_{j} U_S , F_{\snn+j} E_1 \vec{m}+G_{\snn+j} U_S)  |\Phi\rangle \nonumber \\
=& \otimes_{j=1}^{\snn} 
\mathsf{W}( F_j m_1+G_j U_S , F_{\snn+j} m_1+G_{\snn+j} U_S )  |\Phi\rangle \nonumber \\
=&
\mathsf{W}_{[\snn]}( F m_1+G U_S)  |\Phi\rangle,\Label{XWA}
\end{align}
where $(a)$ follows from \eqref{CAP1}.
The RHS of \eqref{XWA} is  the same as the share of  the linear FEASS protocol with $(G,F)$.

The decoding of the resultant EASS protocol is 
the standard linear EASPIR protocol with $(G,F)$, which is given as follows.
For a subset $\cA \in \fA$,
the user makes measurement on the basis
$\{\mathbf{W}_{\cA}(y) |\phi\rangle^{|\cA|}\}_{y \in \FF_q^{2|\cA|}}$.
Based on the obtained outcome, %the end-user recovers $m$.
the user outputs the measurement outcome $m$ as the retrieval result.
It is the same as the decoder of the linear FEASS protocol with $(G,F)$ presented as 
Protocol \ref{protocol1}. 
\end{proof}

In fact, the same property holds for the standard linear EASPIR protocol with 
$((G^{(1)},G^{(2)}),F)$, which includes a standard linear CQSPIR protocol as a special case.

\section{Conversion from EASPIR protocol to EASS protocol}\Label{ZMTA}
\subsection{Conversion from general EASPIR protocol}
As stated in \cite{SH2022}, CSPIR protocol can be converted to CSS protocol.
In this subsection, we present how EASPIR protocol is converted to EASS protocol.
Protocol \ref{Flow5} shows the converted protocol from an EASPIR protocol. 
Protocol \ref{Flow5} contains the conversion from an CQSPIR protocol to CQSS protocol
as the special case when $\cD_{E}=\cD_U$ is one-dimensional.

\begin{Protocol}[H]                  
\caption{EASS protocol converted from EASPIR protocol}
\Label{Flow5}      
\begin{algorithmic}
\STEPONE
\textbf{Preparation}:
The dealer chooses the quantum systems $\cD_{D}$ to be $\cD_1' \otimes \cdots \otimes \cD_{\snn}' $,
and the end-user chooses the quantum system $\cD_{E}$ to be $\cD_U$, respectively.
They share the state $\rho_{\mathrm{prev}}$
as the state $\rho_{DE}$ on the joint quantum system $\cD_{D}\otimes \cD_{E}$
before the protocol.
\STEPTWO
\textbf{Share generation}:
The dealer set $K$ to be $1$.
Depending on the message $m \in \cM$,
the dealer applies the TP-CP map
$\otimes_{j=1}^{\snn}\mathsf{Enc}_{\mathrm{serv}_j} (\vec{m},Q_j^{(1)}, R_j) $
to the system $\cD_{D}=\cD_1' \otimes \cdots \otimes \cD_{\snn}' $, where
$\vec{m}=(m,0,\ldots, 0)$.
The dealer obtains the system $\cD_1 \otimes \cdots \otimes \cD_{\snn} $.
The resultant joint state on $\cD_1 \otimes \cdots \otimes \cD_{\snn} \otimes \cD_E$ is
$\rho( \vec{M},Q^{(1)},1) $.
The dealer sends the $j$-th share system $\cD_j$ to the $j$-th player.
\STEPTHREE
\textbf{Decoding}:
For a subset $\cA \in \fA$,
the end-user decodes the message from the received state from players $\cA$
by the decoder defined as the POVM
$\mathsf{Dec}(1,Q^{(1)},\cA) \coloneqq \{ {Y}_{1,Q^{(1)},\cA}(w) \mid  w\in [\smm] \}$ on 
$\cD[\cA] \otimes \cD_E$.
\end{algorithmic}
\end{Protocol}

\begin{theo} \Label{theo:SPIRtoNSS}
When the original EASPIR protocol is $(\REC,\COL)$-secure,
the converted EASS protocol via Protocol \ref{Flow5}
is $(\REC,\COL)$-secure.
\end{theo}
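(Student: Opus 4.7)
The plan is to verify the two required conditions of $(\REC,\COL)$-security for the converted EASS protocol separately, translating each from the corresponding guarantee of the underlying EASPIR protocol.

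For correctness at any $\cA \in \REC$, I would observe that after share generation in Protocol \ref{Flow5}, the joint state on $\cD[\cA] \otimes \cD_E$ is precisely the partial trace $\Tr_{\cA^c} \rho((m, 0, \ldots, 0), Q^{(1)}, 1)$ (identifying $\cD_E$ with $\cD_U$), since the dealer executes exactly the EASPIR server encodings with $K=1$, $M_1 = m$, and $M_j = 0$ for $j \neq 1$. Invoking the EASPIR correctness of Definition \ref{Def5-B} with $k=1$ then yields $\Tr \rho((m,0,\ldots,0), Q^{(1)}, 1)(Y_{1, Q^{(1)}, \cA}(m) \otimes I_{\cA^c}) = 1$. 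The end-user has access to the classical register holding $Q^{(1)}$ because it is generated on the dealer's side and recorded as part of $\rho_{DE}$, so the EASPIR decoding POVM can be reused verbatim as the EASS decoder.

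For secrecy at $\cB \in \COL$, I need to show that the reduced state on $\cD[\cB]$, obtained by tracing out $\cD[\cB^c]$ and $\cD_E$ from the final joint state, is independent of $m$. The first step is an isolation observation: because $\Lambda_j$ is trace preserving for each $j \notin \cB$, the reduced state on $\cD[\cB]$ equals $\bigotimes_{j \in \cB} \Lambda_j$ applied to the reduced state of $\rho_{\mathrm{prev}}$ on the pre-encoding systems $\bigotimes_{j \in \cB} \cD_j'$. In particular, the reduced state on $\cD[\cB]$ depends on $Q^{(K)}$ only through its restriction $(Q_j^{(K)})_{j \in \cB}$, and analogously for $R$.

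With the isolation observation in hand, I would introduce a phantom index $k \neq 1$ (available since SPIR assumes $\sff \ge 2$) and apply the two EASPIR security guarantees in tandem. By user secrecy, the marginal distribution of $(Q_j^{(K)})_{j \in \cB}$ is identical for $K=1$ and $K=k$, so averaging over $Q^{(1)}$ produces the same reduced state on $\cD[\cB]$ as averaging over $Q^{(k)}$. By server secrecy, once we condition on $K=k$ and any query value $q$, the state $\rho((m,0,\ldots,0), q, k)$, and hence its reduced state on $\cD[\cB]$, is independent of $(m_j)_{j \neq k}$, which contains $m_1 = m$. Composing the two observations forces the EASS-side reduced state on $\cD[\cB]$ to be independent of $m$, which is exactly the required EASS secrecy. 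The main obstacle I anticipate is not any single computation but the careful bookkeeping between the two EASPIR secrecy guarantees, since user secrecy is a marginal statement about query distributions while server secrecy is a state-wise statement at fixed $(K, Q^{(K)}, M_k)$; the isolation observation is the glue that lets them compose cleanly, and verifying it rigorously requires checking that partial trace commutes with the trace-preserving server maps on the discarded subsystems even though $\rho_{\mathrm{prev}}$ is entangled across all of $\cD_1' \otimes \cdots \otimes \cD_{\snn}' \otimes \cD_U$.
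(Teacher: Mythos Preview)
Your approach is essentially the same as the paper's---correctness via the EASPIR correctness at $K=1$, and secrecy by isolating the $\cB$-servers, swapping to a phantom index $k\neq 1$ via user secrecy, and then invoking server secrecy to kill the $m_1$-dependence. However, there is one genuine slip in the secrecy part.

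The EASS secrecy condition in Definition~\ref{def:dense-nss} requires that $\Tr_{(\cB,E)^c}\Gamma[m](\rho_{DE})$ be independent of $m$, i.e.\ the reduced state on $\cD[\cB]\otimes\cD_E$ must not depend on $m$. You instead trace out $\cD_E$ and argue about the reduced state on $\cD[\cB]$ alone. That is strictly weaker and does not suffice: nothing in your argument prevents $\cD_E$ from carrying $m$-dependent correlations with $\cD[\cB]$ even when the $\cD[\cB]$-marginal is flat. The paper's proof keeps $\cD_U=\cD_E$ throughout, working with $\Tr_{(\cB,U)^c}\rho(\vec m,Q^{(1)},1)$, and this is what the definition demands.

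The repair is immediate and costs nothing: in your isolation observation, retain $\cD_U$ when you trace out the non-$\cB$ servers (this is free since the $\Lambda_j$ for $j\notin\cB$ act trivially on $\cD_U$), so the reduced state on $\cD[\cB]\otimes\cD_U$ is $\bigotimes_{j\in\cB}\Lambda_j$ applied to $\Tr_{(\cB,U)^c}\rho_{\mathrm{prev}}$. The user-secrecy swap to index $k$ is unchanged, and the final step works because server secrecy in Definition~\ref{Def5-B} concerns the full state $\rho(\vec m,q,k)$, which already lives on $\cD_1\otimes\cdots\otimes\cD_{\snn}\otimes\cD_U$, so its partial trace onto $\cD[\cB]\otimes\cD_U$ is also independent of $m_1$.
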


\begin{proof}
%This theorem can be shown as follows.
The correctness of the converted EASS protocol follows from 
the correctness of the original EASPIR protocol.

The secrecy of the converted EASS protocol follows from 
User secrecy and Server secrecy of the original EASPIR protocol
 in the following way.
Since 
User secrecy of the original EASPIR protocol guarantees that 
$\prod_{j \in \cB}Q_j^{(1)}$ cannot be distinguished from 
$\prod_{j \in \cB}Q_j^{(k)}$ for $k \neq 1$, we have
$\otimes_{j \in \cB} \mathsf{Enc}_{\mathrm{serv}_j} (\vec{m},Q_j^{(1)}, R_j)
=\otimes_{j \in \cB} \mathsf{Enc}_{\mathrm{serv}_j} (\vec{m},Q_j^{(k)}, R_j)$.
Thus,
\begin{align}
&\Tr_{(\cB,U)^c}\rho( \vec{m},Q^{(1)},1)  \nonumber \\
=& \otimes_{j \in \cB} \mathsf{Enc}_{\mathrm{serv}_j} (\vec{m},Q_j^{(1)}, R_j)
 (\Tr_{(\cB,U)^c}\rho_{\mathrm{prev}})  \nonumber \\
=&\otimes_{j \in \cB} \mathsf{Enc}_{\mathrm{serv}_j} (\vec{m},Q_j^{(k)}, R_j)
 (\Tr_{(\cB,U)^c}\rho_{\mathrm{prev}})  \nonumber \\
 =& \Tr_{(\cB,U)^c}\rho( \vec{m},Q^{(k)},k) .
 \end{align}
Server secrecy of the original EASPIR protocol guarantees that
the state $\Tr_{(\cB,U)^c}\rho( \vec{m},Q^{(k)},k)$ is independent of $m_1$,
which implies the secrecy of the converted EASS protocol.
\end{proof}

\subsection{Conversion from linear EASPIR protocol}
We consider how the linear EASPIR protocol with $G^{(1)},G^{(2)},Q^K$
is converted to a linear EASS protocol.  
When the conversion protocol, Protocol \ref{Flow5} is applied to Protocol \ref{protocol4}, 
we have the following protocol.

\begin{Protocol}[H]                  
\caption{Linear EASS protocol converted from linear EASPIR protocol with $G^{(1)},G^{(2)},Q^K$}         
\Label{protocol-conv} 
\begin{algorithmic}
\STEPONE
\textbf{Preparation}:
We set the initial state $\rho_{DE}$ on ${\cal D}_D \otimes {\cal D}_E[0,G^{(1)}]$
to be $ |\Phi[0,G^{(1)}]\rangle$.
\STEPTWO
\textbf{Share generation}:
The dealer prepares a uniform random variable $U_{D,2} \in \FF_q^{\sy_2} $.
For $m \in \cM$, the dealer applies 
$\mathbf{W}_{[\snn]}( Q_j^{(1)} (m,0,\ldots,0)+G_j^{(2)} U_{D,2})$ on ${\cal D}_j$. 
That is,
the encoding operation $\Gamma[m]$ \red{on ${\cal D}_D$} is defined as
\begin{align}
\Gamma[m](\rho):= \sum_{u_{D,2} \in \FF_q^{\sy_2}} 
\frac{1}{q^{\sy_2}}
\mathbf{W}_{[\snn]}( Q^{(1)} (m,0,\ldots,0)+G^{(2)} u_{D,2}) \rho 
\mathbf{W}_{[\snn]}^\dagger( Q^{(1)} (m,0,\ldots,0) +G^{(2)} u_{D,2}).
\end{align}
The shares are given as parts of the state 
$ \Gamma[m](|\Phi[0,G^{(1)}]\rangle \langle \Phi[0,G^{(1)}]|)$.
\STEPTHREE
\textbf{Decoding}:
For a subset $\cA \in \fA$,
the end-user takes partial trace on $\cD_{E}[\cA^c]$, and 
makes the measurement  given by the POVM
$\big\{\mathbf{W}_{\cA}(z) (
\Tr_{\cD[\cA^c],\cD_{E}[\cA^c]}
|\Phi[0,G^{(1)}]\rangle \langle \Phi[0,G^{(1)}]|)
\mathbf{W}_{\cA}^\dagger(z)\big\}_{z \in \FF_q^{2|\cA|}}$.
Based on the obtained outcome, the end-user recovers $m$.
\end{algorithmic}
\end{Protocol}

Next, we consider the case when the query $Q^K$ has the standard form \eqref{CAP1}.
In this case, 
the uniform random number $U_Q$ in \eqref{CAP1}
is rewritten as $(U_{Q,1},U_{Q,2})$ by using the uniform random numbers 
$U_{Q,1}$ and $U_{Q,2})$ on $\FF_q^{\sy_1}$ and $\FF_q^{\sy_2}$.
Hence, 
$Q^{(1)} (m,0,\ldots,0)+G^{(2)} U_{D,2}$ is rewritten as
\begin{align}
& Q^{(1)} (m,0,\ldots,0)+G^{(2)} U_{D,2}
=Fm +(G^{(1)},G^{(2)}) U +G^{(2)} U_{D,2} \nonumber\\
=& Fm +G^{(1)} U_{Q,1}+G^{(2)} (U_{Q,2}+U_{D,2}). \Label{MCUZ}
\end{align}
Since $\bW_{[\snn]}(G^{(1)} U_{Q,1})$ does not change the state 
$ |\Phi[0,G^{(1)}]\rangle$,
the application of \eqref{MCUZ} is equivalent to 
the application of $Fm +G^{(2)} (U_{Q,2}+U_{D,2})$, which is 
Step 2 of Protocol \ref{protocol2}.
Therefore, we find that 
the standard linear EASPIR protocol with $(G^{(1)},G^{(2)},F)$
is converted to the linear EASS protocol with $(G^{(1)},G^{(2)},F)$ via 
the conversion protocol, Protocol \ref{Flow5}.
That is, the standard linear EASPIR protocols with $(G^{(1)},G^{(2)},F)$
have one-to-one correspondence with 
linear EASS protocols

When we restrict our protocols to standard linear CQSPIR protocols,
we have the following lemma.

\begin{lemm}\Label{LL5CQ}
When we apply the conversion given in Theorem \ref{theo:SPIRtoNSS}
to the standard linear CQSPIR protocol with $(G^{(1)},G^{(2)},F)$,
the resultant CQSS protocol is the linear CQSS protocol with $(G^{(1)},G^{(2)},F)$.
\end{lemm}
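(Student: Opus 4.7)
The plan is to follow the same strategy as Lemma \ref{LL5}, specialized to the CQSPIR setting. A linear CQSPIR protocol with $(G^{(1)},G^{(2)},F)$ is nothing but a linear EASPIR protocol with $(G^{(1)},G^{(2)},F)$ in which the user's reference system is trivialized, so that $|\Phi[0,G^{(1)}]\rangle$ collapses to $|\psi[G^{(1)}]\rangle$; hence the calculation is essentially the one carried out around Protocol \ref{protocol-conv} and equation \eqref{MCUZ}, only without user-side bookkeeping.

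First, I would apply Protocol \ref{Flow5} to the standard linear CQSPIR protocol with $(G^{(1)},G^{(2)},F)$: set $K=1$ and $\vec{m}=(m,0,\ldots,0)^T$, so that the shares are produced by applying $\mathbf{W}_{[\snn]}(Q^{(1)}\vec{m}+G^{(2)}U_{S,2})$ to $|\psi[G^{(1)}]\rangle$ and averaging over the query randomness $U_Q$ and the server randomness $U_{S,2}$. Substituting the standard-form expression $Q^{(1)}=FE_1+(G^{(1)},G^{(2)})U_Q$ and using $E_1\vec{m}=m$, the displacement rewrites as $Fm+G^{(1)}V_1+G^{(2)}V_2$, where $V_1:=U_{Q,1}\vec{m}$ and $V_2:=U_{Q,2}\vec{m}+U_{S,2}$. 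Because $U_{S,2}$ is uniform on $\FF_q^{\sy_2}$ and independent of $U_Q$, the variable $V_2$ is uniform on $\FF_q^{\sy_2}$, matching the dealer randomness of Protocol \ref{protocol1CQ}.

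Next, I would invoke the stabilizer property $\mathbf{W}_{[\snn]}(G^{(1)}y)|\psi[G^{(1)}]\rangle=|\psi[G^{(1)}]\rangle$ to absorb the $G^{(1)}V_1$ contribution: commuting $\mathbf{W}_{[\snn]}(G^{(1)}V_1)$ past $\mathbf{W}_{[\snn]}(Fm+G^{(2)}V_2)$ produces only a global phase that cancels at the density-operator level. What remains is exactly $\Gamma[m](|\psi[G^{(1)}]\rangle\langle\psi[G^{(1)}]|)$ from Step 2 of Protocol \ref{protocol1CQ}, i.e.\ the share of the linear CQSS protocol with $(G^{(1)},G^{(2)},F)$. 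The decoder inherited from Protocol \ref{protocol5CQ} is the measurement in the basis $\{\mathbf{W}_{\cA}(y)(\Tr_{\cA^c}|\psi[G^{(1)}]\rangle\langle\psi[G^{(1)}]|)\mathbf{W}_{\cA}^\dagger(y)\}_{y}$ followed by classical post-processing recovering $m$, which coincides verbatim with the CQSS decoder in Step 3 of Protocol \ref{protocol1CQ}.

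The only step that requires care is tracking the symplectic phases produced by the Weyl commutation relation \eqref{eq:commutative}, but because these phases act globally on the density operator they never affect correctness or secrecy. This is the same routine bookkeeping that appears in the proof of Lemma \ref{LL5}, so I do not expect any genuine obstacle.
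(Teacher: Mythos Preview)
Your proposal is correct and follows essentially the same route as the paper: the paper derives Lemma~\ref{LL5CQ} by specializing the general EASPIR-to-EASS conversion (Protocol~\ref{protocol-conv} and the computation around~\eqref{MCUZ}) to the case $\sy_1=\snn$, where the user's reference system is trivial and $|\Phi[0,G^{(1)}]\rangle$ reduces to $|\psi[G^{(1)}]\rangle$; your argument unwinds exactly this specialization, using the stabilizer property to kill the $G^{(1)}$-contribution and the independence of $U_{S,2}$ to re-uniformize the $G^{(2)}$-randomness.
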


When the CQSPIR protocol is a standard linear CQSPIR protocol,
the converted CQSS protocol is characterized by the same matrices.
That is, the reverse conversion is possible in this case.

\section{Example of unified construction of protocols}\Label{Sec-Ex}

\begin{table*}[t]
\begin{center}
\caption{Characterizations for matrices used for respective protocols}
\begin{tabular}{|c|c|c|c|}
    \hline
        & $G^{(1)}$ & $G^{(2)}$ & $F$\\
    \hline
linear standard FEASPIR   & 
\multicolumn{2}{c|}{\multirow{2}{*}{one $2\snn \times \sy$ matrix}}
 &  \multirow{2}{*}{$2\snn \times \sx$} \\
linear FEASS (general form) & \multicolumn{2}{c|}{} &\\
    \hline
linear standard EASPIR &  $2\snn \times \sy_1$     & \multirow{2}{*}{$2\snn \times \sy_2$}  
&  \multirow{2}{*}{$2\snn \times \sx$} \\
linear EASS   (general form)  &self-column-orthogonal  & & \\
    \hline
$(\srr,\stt,\snn)$-secure linear standard EASPIR  
&  $2\snn \times \sy_1$     & \multirow{2}{*}{$2\snn \times (2\stt-\sy_1)$}  
&  \multirow{2}{*}{$2\snn \times (2\srr-2\stt)$} \\
$(\srr,\stt,\snn)$-secure linear EASS & self-column-orthogonal  & & \\
    \hline
linear standard CQSPIR (general form)   &  $2\snn \times \snn$     
& \multirow{2}{*}{$2\snn \times \sy_2$}  &  \multirow{2}{*}{$2\snn \times \sx$}     \\
linear CQSS & self-column-orthogonal  & & \\
\hline
$(\srr,\stt,\snn)$-secure linear standard CQSPIR    
&  $2\snn \times \snn$     & \multirow{2}{*}{$2\snn \times (2\stt-\snn)$}  
&  \multirow{2}{*}{$2\snn \times (2\srr-2\stt)$} \\
$(\srr,\stt,\snn)$-secure linear CQSS & self-column-orthogonal  & & \\
    \hline
\multirow{2}{*}{linear QQSS  (general form)}  &  $2\snn \times (\snn -\sx)$ 
&  \multirow{2}{*}{$2\snn \times \sy_2$} & $2\snn \times 2\sx$ \\
& self-column-orthogonal & & column-orthogonal to $G^{(1)}$ \\
    \hline
\multirow{2}{*}{$(\srr,\stt,\snn)$-secure linear QQSS}   &  $2\snn \times (\snn -\srr+\stt)$ 
&  \multirow{2}{*}{$2\snn \times (\stt +\srr-\snn)$} & $2\snn \times 2(\srr-\stt)$ \\
& self-column-orthogonal & & column-orthogonal to $G^{(1)}$ \\
    \hline
\end{tabular}
\end{center}
\end{table*}

%\subsection{CQSS, QQSS, and CQSPIR protocols}
In this section, we give an example of MMSP which can be converted to 
randomless linear CQSS, randomless linear QQSS, randomless standard linear CQSPIR
randomless linear EASS, randomless linear QQSS, 
and randomless standard linear EASPIR protocols.
Since the construction of the respective protocols are given in the above sections,
we construct only MMSPs in this section.

\begin{exam}
We choose 
\begin{align}
G^{(1)}: =
\begin{pmatrix}
1 & 0  \\
1 & 0\\
2 & 2 \\
0 & 1  \\
0 & 1  \\
0 & 2  
\end{pmatrix} \in  \FF_3^{6\times 2},\quad
F: =
\begin{pmatrix}
2 & 0  \\
1 & 0\\
1 & 2 \\
1 & 0  \\
0 & 2 \\
1 & 2  
\end{pmatrix} \in  \FF_3^{6\times 2}.
\end{align}
The column vectors of $(G^{(1)},F)$ are linearly independent.
Then, 
$G^{(1)}$ is self-column-orthogonal, and
$F$ is column-orthogonal to $G^{(1)}$.
Hence, according to Protocol \ref{protocol1CQ},
the randomless linear CQSS protocol with $(G^{(1)},F)$ can be constructed.
Also, according to Protocol \ref{protocol6},
the randomless linear QQSS protocol with $(G^{(1)},F)$ can be constructed.
%and its decoder is given in Theorem \ref{TH4}.
In addition,
according to Protocol \ref{protocol5CQ} with \eqref{CAP1},
the randomless standard linear CQSPIR protocol with $(G^{(1)},F)$ can be constructed.
Also,
 the randomless linear EASS protocol with $(G^{(1)},F)$
 and 
 the randomless standard linear EASPIR protocol with $(G^{(1)},F)$
can be constructed
according to Protocols \ref{protocol2} and \ref{protocol5}, respectively.

We choose $\fA = \{ \{1,2\} , \{2,3\},\{1,2,3\} \}$ and
$\fB = \{ \emptyset, \{1\}, \{2\}, \{3\} \}$.
%$\fB_2 = \{ \emptyset, \{1\}, \{2\}, \{3\}, \{1,3\} \}$.
Since 
\begin{align}
P_{\overline{\{1,2\}}} (G^{(1)}, F)
=
\begin{pmatrix}
1 & 0 & 2 & 0  \\
1 & 0 & 1 & 0\\
0 & 1 & 1 & 0 \\
0 & 1 & 0 & 2
\end{pmatrix},~
P_{\overline{\{2,3\}}} (G^{(1)}, F)
=
\begin{pmatrix}
1 & 0 & 1 & 0  \\
2 & 2 & 1 & 2\\
0 & 1 & 0 & 2 \\
0 & 2 & 1 & 2
\end{pmatrix}
\end{align}
are invertible,
the MMSP $(G^{(1)}, F)$ accepts $\fA$.
Since the matrices
\begin{align}
P_{\overline{\{1\}}} G^{(1)}
=
P_{\overline{\{2\}}} G^{(1)}
=
\begin{pmatrix}
1 & 0   \\
0 & 1 
\end{pmatrix},\quad
P_{\overline{\{3\}}} G^{(1)}
=
\begin{pmatrix}
2 & 2   \\
0 & 2
\end{pmatrix},
\end{align}
are invertible,
the MMSP $(G^{(1)}, F)$ rejects $\fB$.
Hence, $(G^{(1)}, F)$ is an $(\overline{\fA},\overline{\fB})$-MMSP.
Thus, 
the randomless linear CQSS protocol with $(G^{(1)},F)$ and
the randomless standard linear CQSPIR protocol with $(G^{(1)},F)$ are
$(\overline{\fA},\overline{\fB})$-secure due to 
Theorem~\ref{Cor1}
and Theorem~\ref{Cor3CQ}, respectively.
Also,
the randomless linear EASS protocol with $(G^{(1)},F)$ and
the randomless standard linear EASPIR protocol with $(G^{(1)},F)$ are
$(\overline{\fA},\overline{\fB})$-secure due to 
Theorem~\ref{TH1}
and Theorem~\ref{TH2}, respectively.
In addition, due to Theorem~\ref{TH4QQ}, 
the randomless linear QQSS protocol with $(G^{(1)},F)$
is $(\overline{\fA},\overline{\fB})$-secure, and its decoder 
is given as $\overline{\Gamma}[\Pi]$ defined in \eqref{XMT},
where the POVM $\Pi= \{\Pi_{m}\}_{m\in \FF_q^{2 \sx}}$ is the decoder of 
the randomless linear EASS protocol with $(G^{(1)},F)$.
\end{exam}

\begin{exam}
Next, we choose $G^{(1),*}$ and $F^*$ as
\begin{align}
G^{(1),*}: =
\begin{pmatrix}
1 & 0  & 0\\
1 & 0 & 0\\
2 & 2 & 2\\
0 & 1 &0 \\
0 & 1  &2\\
0 & 2  &2
\end{pmatrix} \in  \FF_3^{6\times 3},\quad
F^{*}: =
\begin{pmatrix}
2 & 0  \\
1 & 1\\
1 & 2 \\
0 & 0  \\
0 & 2 \\
0 & 2  
\end{pmatrix} \in  \FF_3^{6\times 2}.
\end{align}
The column vectors of $(G^{(1),*},F^*)$ are linearly independent.
Then, $G^{(1),*}$ is self-column-orthogonal.
Hence, according to Protocol \ref{protocol1CQ},
the randomless linear CQSS protocol with $(G^{(1),*},F^*)$ can be constructed.
Also, according to Protocol \ref{protocol5CQ} with \eqref{CAP1},
the randomless standard linear CQSPIR protocol with $(G^{(1),*},F^*)$ can be constructed.

We choose $\fA^* = \{ \{1,2,3\} \}$ and
$\fB^* = \{ \emptyset, \{1\}, \{2\}, \{3\} ,\{1,3\}\}$.
Since
\begin{align}
P_{\overline{\{1,3\}}} G^{(1),*}
=
\begin{pmatrix}
1 & 0 & 0 \\
2 & 2 & 2\\
0 & 1 & 0 \\
0 & 2 & 2 
\end{pmatrix},\quad
P_{\overline{\{1,3\}}} F^{*}
=
\begin{pmatrix}
2 & 0 \\
1 & 2 \\
0 & 0 \\
0 &  2 
\end{pmatrix},
\end{align}
the column vectors of $P_{\overline{\{1,3\}}} F^*$
is written as linear sums of 
the column vectors of $P_{\overline{\{1,3\}}} G^{(1),*}$.
Also,
since the rank of $P_{\overline{\{2\}}} G^{(1),*}$ is $2$,
the column vectors of $P_{\overline{\{2\}}} F^*$
is written as linear sums of 
the column vectors of $P_{\overline{\{2\}}} G^{(1),*}$.
Hence, the MMSP $(G^{(1),*}, F^*)$ rejects $\fB^*$.
Since the column vectors of $(G^{(1),*}, F^*)$ are linearly independent,
the MMSP $(G^{(1),*}, F^*)$ accepts $\fA^*$.
Hence, $(G^{(1),*}, F^*)$ is an $(\overline{\fA^*},\overline{\fB^*})$-MMSP.
Thus, 
the randomless linear CQSS protocol with $(G^{(1),*},F^*)$ and
the randomless standard linear CQSPIR protocol with $(G^{(1),*},F^*)$ are
$(\overline{\fA^*},\overline{\fB^*})$-secure
due to 
Theorem~\ref{Cor1} and Theorem~\ref{Cor3CQ}, respectively.
\end{exam}

\begin{exam}
In this example, we give an example of MMSP which can be converted to 
randomless linear EASS and randomless standard linear EASPIR protocols that overperform 
linear CQSS and linear CQSPIR protocols, respectively.

Let $p$ be a prime. 
We define the $2p \times 2$ matrix $G^{**}=(g_{j,k})$ and
the $2p \times 2$ matrix $F^{**}=(f_{j,k})$ over $\FF_p$ as 
$g_{j,1}=1 $,
$g_{j+p,1}=0 $,
$g_{j,2}=0 $,
$g_{j+p,2}=1 $,
$f_{j,1}=j-1 $,
$f_{j+p,1}=0 $,
$f_{j,2}=0 $,
$f_{j+p,2}=j-1 $ for $j=1, \ldots, p$.
The column vectors of $(G^{**},F^{**})$ are linearly independent.
Then, $G^{(1)}$ is self-column-orthogonal.

When $p >2$, $(G^{**},F^{**})$ cannot be used for
a randomless linear CQSS protocol,
a randomless linear QQSS protocol, nor
a randomless standard linear CQSPIR protocol.
But, 
the randomless linear EASS protocol with $(G^{**},F^{**})$ and
the randomless standard linear EASPIR protocol with $(G^{**},F^{**})$
can be constructed according to Protocols \ref{protocol2} and \ref{protocol5}, respectively.

We choose $\ACC^{**}= \{ \cA \subset [p] \mid |\cA| \geq 2 \}$
and $\REJ^{**}= \{ \cB \subset [p] \mid |\cB| \leq 1 \}$.
For $i<j \in [p]$, we have
\begin{align}
P_{\overline{\{i\}}} (G^{**} F^{**})=
\left(
\begin{array}{cccc}
1 & 0 & i-1&0 \\
0 & 1 & 0 & i-1
\end{array}
\right), \quad
P_{\overline{\{i,j\}}} (G^{**} F^{**})=
\left(
\begin{array}{cccc}
1 & 0 & i-1&0 \\
1 & 0 & j-1&0 \\
0 & 1 & 0 & i-1\\
0 & 1 & 0 & j-1
\end{array}
\right).
\end{align}
The first relation shows that the MMSP $(G^{**}, F^{**})$ rejects $\fB^{**}$,
and 
the second relation shows that
the MMSP $(G^{**}, F^{**})$ accepts $\fA^{**}$.

Thus, the randomless linear EASS protocol with $(G^{**},F^{**})$ and
the randomless standard linear EASPIR protocol with $(G^{**},F^{**})$ are
$(\overline{\fA}^{**},\overline{\fB}^{**})$-secure due to 
Theorem~\ref{TH1} and Theorem~\ref{TH2}, respectively.
The randomless linear EASS protocol with $(G^{**},F^{**})$ and
the randomless standard linear EASPIR protocol with $(G^{**},F^{**})$ 
have the rate $2$.
\end{exam}

\begin{figure}[htbp]
%\centering
%\includegraphics[scale=0.4]{MHRepeater.png}
\begin{center}
  \includegraphics[width=0.8\linewidth]{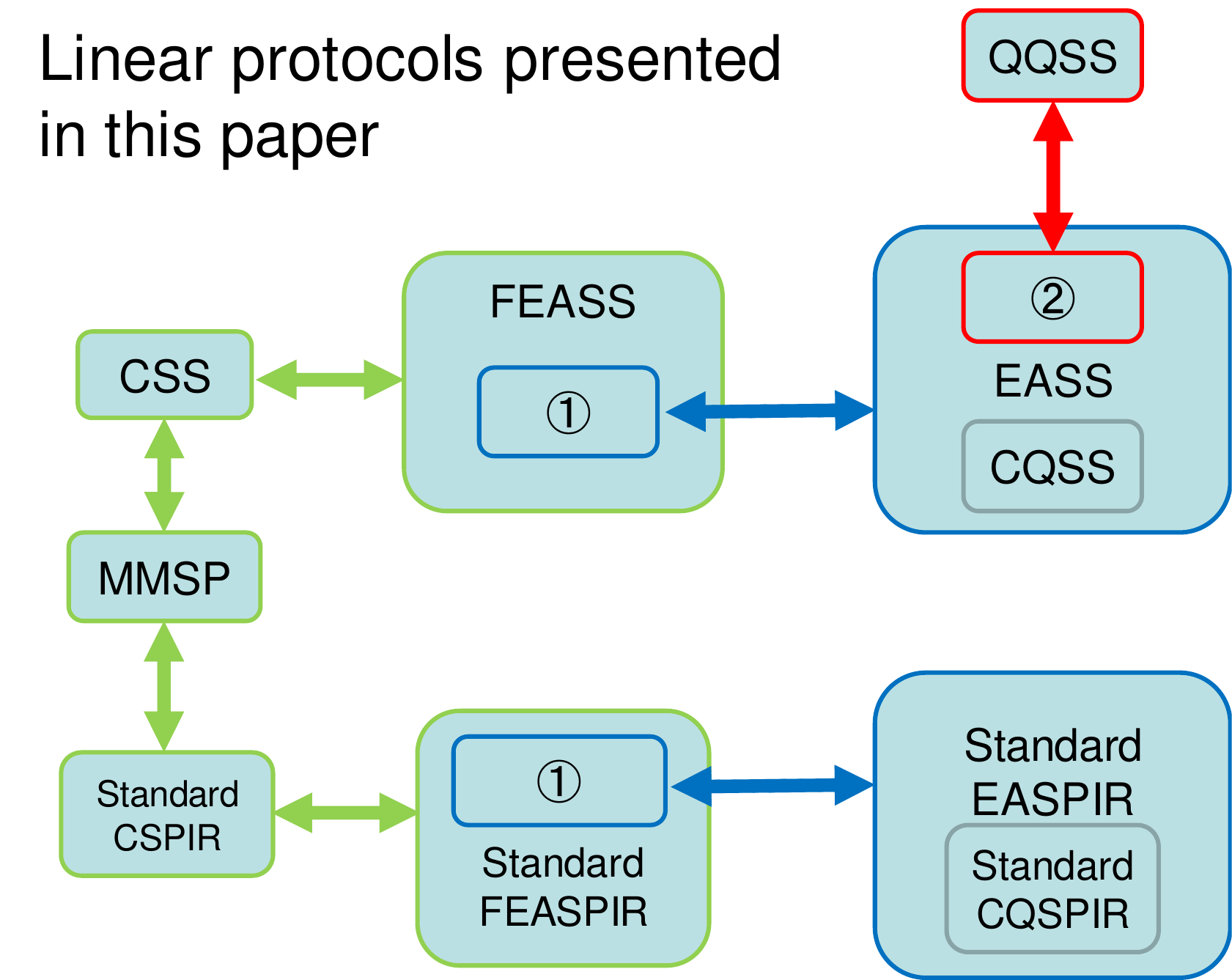}
  \end{center}
\caption{One-to-one relations among various linear protocols.
In this figure, the word ``linear'' is omitted.
Arrows of each color show 
a one-to-one relation among several protocols.
$\ctext{1}$ shows the restriction that $G=(G^{(1)},G^{(2)})$ and $G^{(1)}$ is self-column-orthogonal.
$\ctext{2}$ shows the restriction that $G^{(2)}=\emptyset$ and $F$ is 
column-orthogonal to $G^{(1)}$.}
\Label{FF1C}
\end{figure}   

%\subsection{EASS and EASPIR protocols}
\section{Conclusion} \Label{sec:conclusion}
We have characterized CQSS and QQSS protocols and CQSPIR protocols
under general access structure by using MMSP with symplectic structure.
These characterizations yield 
ramp type of CQSS and QQSS protocols and CQSPIR protocols with general qualified set,
which were not studied sufficiently until this paper.
Also, these characterizations yield interesting constructions QQMDS codes.
However, the derivation of these characterizations cannot be derived from 
simple application of similar relations in the classical case.
To overcome this problem,
we have introduced EASS and EASPIR protocols.
Since these two types of protocols can be converted to classical protocols,
we have easily derived 
their relation with general access structure and MMSP
while these analyses require column-orthogonality for MMSP.
Fortunately, 
CQSS and QQSS protocols and CQSPIR protocols can be considered as
special cases of EASS and EASPIR protocols, respectively.
That is,
the relation among these settings is summarized as Fig. \ref{FF1C}.
In addition, 
we have shown the existence of desired types of MMSP in Appendix, which implies
the existence of 
CQSS, QQSS, and CQSPIR protocols parameterized by two threshold parameters $\stt$ and $\srr$.

For this discussion, as subclasses of EASS and EASPIR protocols,
we have newly introduced \red{linear} EASS and \red{linear} EASPIR protocols
and the symplectification for an access structure.
In particular, we have focused on linear FEASS and FEASPIR protocols
because they are directly linked to 
linear classical protocols as Lemmas \ref{L2U} and \ref{L2}
thanks to the orthogonality of generalized Bell basis.
Such a simple structure has never appeared in 
CQSS and QQSS protocols and CQSPIR protocols.
Under the self-column-orthogonality for the matrix $G^{(1)}$,
linear FEASS and FEASPIR protocols are converted to
linear EASS and EASPIR protocols as Theorems \ref{TH1} and \ref{TH2}.
Since the classical linear protocols are linked to MMSP as Proposition \ref{prop1} and Lemma \ref{P2}, 
linear EASS and EASPIR protocols are linked to MMSP via the above relations.
Since CQSS and CQSPIR protocols are special classes of EASS and EASPIR protocols,
CQSS and CQSPIR protocols are characterized by using MMSP in this way.

However, the relation with QQSS is more complicated.
To establish the relation between QQSS and EASS protocols, we have introduced new relation between
dense coding and quantum state transmission.
It was known that noiseless quantum state transmission implies 
dense coding with zero error.
However, no existing study clarified whether dense coding protocol with zero error 
yields noiseless quantum state transmission.
In this paper, we have constructed a concrete protocol for 
noiseless quantum state transmission from dense coding protocol with zero error.
That is, we constructed a decoder for quantum state transmission with zero error
from a decoder for dense coding protocol with zero error as Lemma \ref{L5}.
Also, we have derived the equivalence relation between the mutual information 
between dense coding and quantum state transmission as Lemma \ref{L6}.
Using these relations, 
we have made the conversion between QQSS and EASS protocols as Theorem \ref{TH4}.
Also, we have pointed out that a special class of QQSS protocols yields QQMDS codes, which are often called
quantum MDS codes.
In addition, as Remark \ref{R5}, we have sown that 
any stabilizer code can be characterized as the performance of 
QQSS protocols in our method.
{\bf Overall, our main contribution can be summarized as 
revealing the relation between EASS and EASPIR protocols and
the symplectification for an access structure, which is 
a hidden simple structure behind CQSS, QQSS, and CQSPIR protocols.}

Although we have constructed various types of MMSP with 
column-orthogonality,
these constructions are based on algebraic extension similar to \cite{SH20,CH17}.
In contrast, existing studies \cite{HG} discussed how small size of field can realize QQMDS codes under certain condition.
Therefore, it is an interesting future study to find efficient constructions of 
various types of MMSP with 
column-orthogonality depending on two threshold parameters $\stt$ and $\srr$.
This is because these constructions are essential for constructing 
our linear protocols.
In addition, the existing study \cite[Section IV-B]{HVA} discussed SPIR with quantum noisy 
multiple access channel.
Since noisy setting is realistic, it is another interesting study to extend our results to 
the setting with quantum noisy channels.

\section*{Acknowledgments}
% SS is supported by Lotte Foundation and JSPS Grant-in-Aid for JSPS Fellows No. JP20J11484.
% MH is supported in part by a JSPS Grant-in-Aids for 
% 	Scientific Research (A) No.17H01280 and for Scientific Research (B) No.16KT0017, 
% 	and Kayamori Foundation of Information Science Advancement.
MH is supported in part by the National Natural Science Foundation of China (Grant No. 62171212),
Guangdong Provincial Key Laboratory (Grant No. 2019B121203002),
and 
a JSPS Grant-in-Aids for Scientific Research (A) No.17H01280.
SS is supported by JSPS Grant-in-Aid for JSPS Fellows No. JP20J11484 and Lotte Foundation Scholarship.

\appendices

%\section{Proofs of Theorems \ref{TH3} and \ref{TH6}}\Label{A1}

\section{Preparation for proofs of Theorems \ref{TH3} and \ref{TH6}}\Label{A1}
This appendix prepares several lemmas to be used in our proofs of 
Theorems \ref{TH3CQ}, \ref{TH3}, and \ref{TH6}.
For this aim, we prepare the following lemma.
\begin{lemm}\Label{LL11}
We consider a $(\sdd+1) \times (\sdd+1)$ matrix
$A=(a_{i,j})_{1\le i\le \sdd+1, 1\le j\le \sdd+1 }$ over a finite field $\FF_q[e]$ to satisfy the following conditions.
(i) The $\sdd \times \sdd$ matrix $(a_{i,j})_{1\le i\le \sdd, 1\le j\le \sdd }$ 
is an invertible matrix.
(ii) The components $a_{i,j}$ except for $a_{\sdd+1,\sdd+1 }$ belong to $\FF_q$.
Then, the $(\sdd+1) \times (\sdd+1)$ matrix
$A$ is invertible.
\end{lemm}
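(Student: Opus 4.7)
The plan is to prove invertibility by showing $\det(A) \neq 0$, exploiting the fact that only one entry of $A$ lies outside $\FF_q$. First, I would expand $\det(A)$ by cofactor expansion along either the last row or the last column. Since every entry in that row (respectively column), apart from $a_{\sdd+1,\sdd+1}$ itself, lies in $\FF_q$, and since every cofactor is the determinant of an $\sdd \times \sdd$ submatrix whose entries also all lie in $\FF_q$ (the entry $a_{\sdd+1,\sdd+1}$ is deleted in forming the cofactor of $a_{\sdd+1,\sdd+1}$, and lies in the deleted row/column for every other cofactor along that row/column), the result is an expression of the form
\begin{align}
\det(A) = \alpha + \beta \cdot a_{\sdd+1,\sdd+1},
\end{align}
where $\alpha,\beta \in \FF_q$.

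Next, I would identify $\beta$ as the cofactor of $a_{\sdd+1,\sdd+1}$, which is exactly the determinant of the $\sdd \times \sdd$ matrix $(a_{i,j})_{1\le i,j\le \sdd}$. By hypothesis (i) this submatrix is invertible, hence $\beta \in \FF_q^\times$ is a nonzero element. So the determinant is an $\FF_q$-affine function of $a_{\sdd+1,\sdd+1}$ with nonzero leading coefficient.

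Finally, I would conclude that $\det(A) \neq 0$ in $\FF_q[e]$. Since $\beta \neq 0$ is a unit, the equation $\alpha + \beta \cdot a_{\sdd+1,\sdd+1} = 0$ would force $a_{\sdd+1,\sdd+1} = -\beta^{-1}\alpha \in \FF_q$, contradicting the standing assumption (implicit in the hypothesis that $A$ lives over $\FF_q[e]$ while all other entries are singled out as being in $\FF_q$) that $a_{\sdd+1,\sdd+1} \in \FF_q[e] \setminus \FF_q$. Hence $\det(A) \neq 0$ and $A$ is invertible.

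The main (minor) obstacle is just being careful about the cofactor expansion: one needs to verify that expanding along either the last row or last column really isolates $a_{\sdd+1,\sdd+1}$ linearly, and that all other cofactors involved are computed purely from $\FF_q$-valued entries. Once this is in place, the conclusion is immediate from the linear form of $\det(A)$ together with the nonvanishing of the leading coefficient supplied by hypothesis (i).
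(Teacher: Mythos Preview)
Your proof is correct and takes a genuinely different, more direct route than the paper. The paper argues by contradiction: assuming $A$ is singular, it writes the last column as an $\FF_q[e]$-linear combination of the first $\sdd$ columns, expands the coefficients in an $\FF_q$-basis $\{1,\alpha,\alpha^2,\ldots\}$ with $\alpha=a_{\sdd+1,\sdd+1}$, and then matches the $\alpha^1$-components in the last row versus the first $\sdd$ rows to produce a nonzero $\FF_q$-linear relation among the columns of the invertible top-left block, a contradiction. Your determinant/cofactor argument achieves the same thing in one line: the cofactor expansion immediately exhibits $\det(A)$ as $\alpha+\beta\,a_{\sdd+1,\sdd+1}$ with $\alpha,\beta\in\FF_q$ and $\beta\neq 0$, so $\det(A)=0$ would force $a_{\sdd+1,\sdd+1}\in\FF_q$. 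This is cleaner and avoids the paper's somewhat delicate step of assuming the powers of $\alpha$ form a convenient basis for $\FF_q[e]$ over $\FF_q$. Both proofs implicitly use that $a_{\sdd+1,\sdd+1}\notin\FF_q$, which you correctly flag; in the paper this hypothesis is supplied each time the lemma is invoked (e.g.\ in Lemma~\ref{LL12}).
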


\begin{proof}
We show the desired statement by contradiction.
We assume that $A$ is not an invertible matrix.
We denote the $\sdd+1$ column vectors of $A$ 
by $a^1, \ldots, a^{\sdd+1}$.
Since the $\sdd$ column vectors of 
$a^1, \ldots, a^{\sdd}$
are linearly independent, 
there exist $\sdd$ elements
$\beta_1, \ldots, \beta_{\sdd}$ of $\FF_q[e] $ such that
\begin{align}
\sum_{j=1}^{\sdd} \beta_j a_{i,j}
=a_{i,\sdd+1} \Label{LPA}
\end{align}
for $i=1, \ldots, \sdd+1$.
The finite field $\FF_q[e] $ is a vector space over the finite field $\FF_q$
generated by $\alpha^0=1, \alpha^1, \ldots, \alpha^n$ with a certain positive integer $n \ge 1$, where $\alpha:= a_{\sdd+1,\sdd+1}$.
We choose elements $\beta_{j,i} \in \FF_q$ such that
$\beta_j=\sum_{i=0}^{n} \beta_{j,i} \alpha^i$.
Then, \eqref{LPA} with $i=\sdd+1 $ is rewritten as
\begin{align}
\sum_{i=0}^{n} (\sum_{j=1}^{\sdd}  \beta_{j,i}  a_{\sdd+1,j}) \alpha^i
=\sum_{j=1}^{\sdd} \sum_{i=0}^{n} \beta_{j,i} \alpha^i a_{\sdd+1,j}
=a_{\sdd+1,\sdd+1}=\alpha.
\end{align}
Considering the case with $i=1$, we have
\begin{align}
\sum_{j=1}^{\sdd}  \beta_{j,1}  a_{\sdd+1,j}=1,
\end{align}
which implies that
the vector $(\beta_{j,1})_{j=1}^{\sdd}$ is a non-zero vector.
We denote the $\sdd+1$ column vectors of $A$ only with the initial $\sdd$ components
by $b^1, \ldots, b^{\sdd+1} \in \FF_q^{\sdd}$.
Since $b^1, \ldots, b^{\sdd}$ are linearly independent,
the vector $\sum_{j=1}^{\sdd}  \beta_{j,1}  b^j$ 
is a non-zero vector.

Next, we rewrite \eqref{LPA} with $i=1, \ldots, \sdd $ as
\begin{align}
\sum_{i=0}^{n} (\sum_{j=1}^{\sdd}  \beta_{j,i}  b^j )\alpha^i
=\sum_{j=1}^{\sdd} \sum_{i=0}^{n} \beta_{j,i} \alpha^i b^j
=\sum_{j=1}^{\sdd} \beta_j b^j
=b^{\sdd+1} \Label{LPA2}.
\end{align}
We focus on the case with $i=1$, which implies
\begin{align}
\sum_{j=1}^{\sdd}  \beta_{j,1}  b^j
=0.
\end{align}
However, the LHS is a non-zero vector,
we obtain contradiction.
Hence, we obtain the desired statement.
\end{proof}

\begin{lemm}\Label{LL12}
Assume that a $(\sdd +\sff) \times \sdd$ matrix
$D$ over a finite field $\FF_{q'}$ is a $(\sdd +\sff , \sdd)$-MDS code.
We consider a $(\sdd +\sff) \times \sgg$ matrix $F=(f_{i,j})_{1\le i \le \sdd +\sff,
1\le j \le \sgg}$
over the finite field $\FF_{q'}[e_1, \ldots, e_{\sdd}]$, 
We assume the following conditions.
(i) $\sff \ge \sgg$.
(ii) 
The component $f_{i,j}$ belongs to $\FF_{q'}$ when $i+j\le \sdd+\sgg$.
(iii)
The component $f_{i,j}$ belongs to 
$
\FF_{q'}[e_1, \ldots, e_{i+j- \sdd-\sgg}]
\setminus \FF_{q'}[e_1, \ldots, e_{i+j- \sdd-\sgg-1}]$
when $i+j> \sdd+\sgg$.
Then, the $(\sdd +\sff) \times (\sdd+\sgg)$ matrix 
$G:=(D,F)$ is  a $(\sdd +\sff , \sdd+\sgg)$-MDS code.
\end{lemm}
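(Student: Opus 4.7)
The goal is to show that for every subset $S=\{i_1<i_2<\cdots<i_{\sdd+\sgg}\}\subset[\sdd+\sff]$, the submatrix $M_S := P_S G = (P_S D,\,P_S F)$ has nonzero determinant in the polynomial ring $\FF_{q'}[e_1,\ldots,e_\sdd]$. My plan is to prove this by induction on $\sgg$, using Lemma~\ref{LL11} as the engine for the inductive step. The base case $\sgg=0$ reduces immediately to the $(\sdd+\sff,\sdd)$-MDS hypothesis on $D$: every $\sdd \times \sdd$ submatrix of $D$ is invertible.

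For the inductive step I build $M_S$ as a nested chain of square submatrices $M_S^{(0)}\subsetneq M_S^{(1)}\subsetneq\cdots\subsetneq M_S^{(\sgg)}=M_S$, where $M_S^{(0)}$ is the $\sdd\times\sdd$ submatrix of $P_S D$ on the rows $i_1,\ldots,i_\sdd$ (invertible by the MDS property), and $M_S^{(k)}$ is obtained from $M_S^{(k-1)}$ by adjoining row $i_{\sdd+k}$ and column~$k$ of $F$. The bottom-right entry of $M_S^{(k)}$ is $f_{i_{\sdd+k},\,k}$, which by condition~(iii) involves $e_{i_{\sdd+k}+k-\sdd-\sgg}$ as its highest-indexed variable (when this index is positive). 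Meanwhile, every other entry adjoined at step $k$ has row--column sum strictly smaller than $i_{\sdd+k}+k$, so by (ii)--(iii) it lies in $\FF_{q'}[e_1,\ldots,e_{i_{\sdd+k}+k-\sdd-\sgg-1}]$. This is precisely the hypothesis of Lemma~\ref{LL11}, applied with the fresh variable $e=e_{i_{\sdd+k}+k-\sdd-\sgg}$, and lets me promote invertibility from $M_S^{(k-1)}$ to $M_S^{(k)}$.

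The main obstacle is the sequencing of the adjunctions and the degenerate case where $f_{i_{\sdd+k},k}$ still lies in $\FF_{q'}$ (i.e.\ $i_{\sdd+k}+k\le\sdd+\sgg$), in which Lemma~\ref{LL11} cannot be applied directly because no new variable is introduced. At the final step, however, since $i_{\sdd+\sgg}\ge\sdd+\sgg$ we have $i_{\sdd+\sgg}+\sgg\ge\sdd+2\sgg>\sdd+\sgg$, so a fresh variable is always introduced at the top of the chain; thus the chain can be completed by choosing the order of added rows and columns so that the index $i+j-\sdd-\sgg$ of the newly introduced variable is strictly monotone increasing. For steps that would be degenerate, I instead start the base $M_S^{(0)}$ one step larger by absorbing these rows into the MDS block (which is permitted since $P_S D$ has rank $\sdd$ and any $\sdd+k$ rows of the full matrix $(D, F_{\le k})$ over $\FF_{q'}$ are linearly independent when the top-left is already invertible over $\FF_{q'}$); this amounts to a careful combinatorial choice of the initial $\sdd\times\sdd$ block using the MDS property of $D$. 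Once this sequencing is fixed, each application of Lemma~\ref{LL11} is routine, and after $\sgg$ iterations we conclude that $\det M_S\ne 0$, as required.
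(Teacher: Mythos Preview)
Your chain-of-submatrices approach via Lemma~\ref{LL11} is exactly the paper's: the paper sets $G_k=(g_{\pi(i),j})_{1\le i,j\le\sdd+k}$ and inducts on $k$. You go further by flagging the degenerate case where the corner $f_{i_{\sdd+k},k}$ lies in $\FF_{q'}$ (this occurs whenever $i_{\sdd+k}+k\le\sdd+\sgg$, e.g.\ for $S=[\sdd+\sgg]$ and $2k\le\sgg$); the paper's inductive step silently assumes this never happens.

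However, your proposed repairs do not close the gap. Take $\sdd=\sff=\sgg=2$, $S=[4]$, $D$ any $(4,2)$-MDS matrix over $\FF_{q'}$, and choose $f_{1,1},f_{2,1},f_{3,1}\in\FF_{q'}$ so that the $3\times3$ block $G_1$ (rows $1,2,3$, columns $1,2,3$) is singular. Reordering cannot save the chain: $f_{4,2}$ is the \emph{unique} entry of $M_S$ lying outside $K_1=\FF_{q'}[e_1]$, so any valid application of Lemma~\ref{LL11} at the final step forces the corner to be $(4,4)$ and the top-left $3\times3$ block to be $G_1$ up to permutation --- there is no alternative penultimate matrix. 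Your fallback, ``absorb degenerate rows into the base since any $\sdd+k$ rows of $(D,F_{\le k})$ over $\FF_{q'}$ are linearly independent when the top-left is invertible'', is precisely the assertion that $G_1$ is invertible, which we have just arranged to fail. The lemma is nonetheless true here (after the column operation $C_3\mapsto C_3-\alpha C_1-\beta C_2$ one obtains $\det M_S=\pm f_{4,1}'\cdot N$ with $f_{4,1}'\in K_1\setminus K_0$ and $N$ a $3\times3$ minor to which Lemma~\ref{LL11} does apply), but neither your write-up nor the paper supplies such an argument for the degenerate case.
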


\begin{proof}
We denote the matrix $G$ as $(g_{i,j})_{1\le i \le \sdd +\sff, 1 \le j \le \sdd +\sgg}$.
We choose a strictly increasing function $\pi$ from
$[\sdd+\sgg]$ to $[\sdd+\sff]$.
We define the subset $\cA_\pi$ as
$\{\pi(1), \ldots, \pi(\sdd+\sgg)\}$.
Hence, it is sufficient to show that 
the $(\sdd +\sgg) \times (\sdd+\sgg)$ matrix 
$P_{\cA_\pi}(D,F)$ is invertible for any map $\pi$.
To show this statement, we show that
the matrix 
$G_k:=(g_{\pi(i),j})_{1\le i \le \sdd +k, 1 \le j \le \sdd +k}$
is invertible for $k=0,1, \ldots, \sgg$
by the induction for $k$.
The case with $k=0$ holds because 
$D$ is a $(\sdd +\sff , \sdd)$-MDS code.

Now, we assume that $G_k$ is invertible for $k=t-1$.
The component
$g_{\pi(t),t }=f_{\pi(t)-\sdd,t }$ is an element of 
$\FF_{q'}[e_1, \ldots, e_{\pi(t)+t- 2\sdd-\sgg}]
\setminus \FF_{q'}[e_1, \ldots, e_{\pi(t)+t- 2 \sdd-\sgg-1}]$.
Also, other components of $G_t$ belong to 
$\FF_{q'}[e_1, \ldots, e_{\pi(t)+t- 2 \sdd-\sgg-1}]$.
Hence, Lemma \ref{LL11} guarantees that $G_t$ is invertible.
\end{proof}

Now, we recall Proposition 4 of
\cite[Appendix D]{SH20}, which is a generalization of 
Appendix of \cite{CH17}.

\begin{prop}[\protect{\cite[Proposition 4]{SH20}}]\Label{PP8}
Given positive integers $l<r$ and a prime $p$,
we choose $q$ such that 
$\FF_q$ is an algebraic extension $\FF_p[e_1, \ldots, e_{k+r-l-2}]$.
We choose $\alpha_{i,j}$ as an element of 
$\FF_p[e_1, \ldots, e_{i+j-2}]\setminus \FF_p[e_1, \ldots, e_{i+j-3}]$.
We define $l$ vectors $v^1, \ldots, v^l \in \FF_{p}^r$ as 
\begin{align}
v_{i}^j:=
\left\{
\begin{array}{ll}
\delta_{i,j} & \hbox{ when } i \le l \\
\alpha_{i-l,j} & \hbox{ when } i \ge l+1.
\end{array}
\right.
\end{align}
In addition, we assume that $\alpha_{1,1}=1$.
Then, the matrix $(v^1, \ldots, v^{l})$ is an $(r,l)$-MDS code.
\end{prop}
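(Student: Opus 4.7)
The goal is to show that every $l \times l$ submatrix of the $r \times l$ matrix $V := (v^1, \ldots, v^l)$ is invertible, which is equivalent to $V$ being an $(r,l)$-MDS code. Observe that $V$ has the block form
\begin{align*}
V = \begin{pmatrix} I_l \\ A \end{pmatrix},
\end{align*}
where the $(r-l) \times l$ block $A = (\alpha_{i,j})_{1 \le i \le r-l,\, 1 \le j \le l}$ satisfies the degree condition $\alpha_{i,j} \in \FF_p[e_1, \ldots, e_{i+j-2}] \setminus \FF_p[e_1, \ldots, e_{i+j-3}]$.

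First, I would fix a row-selecting function $\pi : [l] \to [r]$ with $\pi(1) < \cdots < \pi(l)$ and split the chosen rows into ``top'' rows (with $\pi(i) \le l$) and ``bottom'' rows (with $\pi(i) > l$). Because each top row is a standard basis vector, expanding the determinant along all top rows contributes only a sign and deletes the corresponding columns, reducing the problem to invertibility of a square submatrix of $A$ of some size $k$, whose rows are indexed by some $i_1 < \cdots < i_k$ in $[r-l]$ and whose columns are indexed by some $j_1 < \cdots < j_k$ in $[l]$.

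Next, I would prove by induction on $k$ that any such submatrix $A_k := (\alpha_{i_a, j_b})_{1 \le a,b \le k}$ is invertible. For $k=1$, invertibility is immediate since $\alpha_{i_1, j_1}$ lies outside $\FF_p[e_1, \ldots, e_{i_1+j_1-3}]$ and is therefore nonzero (with the convention that $\FF_p[e_1,\ldots,e_{-1}]=\{0\}$, this handles the case $\alpha_{1,1}=1$). For the inductive step, assume that the submatrix $A_{k-1}$ built from rows $i_1, \ldots, i_{k-1}$ and columns $j_1, \ldots, j_{k-1}$ is invertible. In the $k \times k$ matrix $A_k$, every entry $\alpha_{i_a, j_b}$ with $(a,b) \ne (k,k)$ satisfies either $a < k$ or $b < k$, so $i_a + j_b - 2 \le i_k + j_k - 3$, and hence every such entry lies in $\FF_p[e_1, \ldots, e_{i_k+j_k-3}]$. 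The bottom-right entry $\alpha_{i_k, j_k}$, however, does not lie in this subfield. Applying Lemma \ref{LL11} with $\sdd = k-1$ and the ``new'' variable $e_{i_k+j_k-2}$ then yields invertibility of $A_k$.

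The main obstacle is the index bookkeeping in the inductive step: one must verify that strict monotonicity $i_1 < \cdots < i_k$ and $j_1 < \cdots < j_k$ is precisely what forces the critical variable $e_{i_k + j_k - 2}$ to appear only in the bottom-right entry, which is exactly the hypothesis needed to invoke Lemma \ref{LL11}. Once this is in place, the cofactor reduction described above completes the argument for an arbitrary row selection $\pi$, and since $\pi$ was arbitrary, every $l \times l$ submatrix of $V$ is invertible, establishing the MDS property.
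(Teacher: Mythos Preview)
Your proof is correct. Note, however, that the paper does not actually prove this proposition itself; it is recalled verbatim from \cite[Proposition 4]{SH20} (see the sentence immediately preceding the statement). Your argument is nonetheless exactly in the spirit of the surrounding material: the reduction via cofactor expansion to square submatrices of $A$, followed by induction on the size using Lemma~\ref{LL11}, is precisely the technique the paper employs to prove the closely related Lemma~\ref{LL12} (and Lemma~\ref{L78}). In effect you have supplied the missing self-contained proof using the paper's own toolkit, so there is nothing to correct.
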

%\begin{remark}
%\end{remark}

\begin{lemm}\Label{L78}
Given positive integers $l<k<r$ and a prime $p$,
we choose $q$ such that 
$\FF_{q}$ is an algebraic extension $\FF_p[e_1, \ldots, e_{k+r-l-2}]$.
We choose $\alpha_{i,j}$ as an element of 
$\FF_p[e_1, \ldots, e_{i+j-2}]\setminus \FF_p[e_1, \ldots, e_{i+j-3}]$.
We define $k$ vectors $v^1, \ldots, v^k \in \FF_{p}^r$ as 
\begin{align}
v_{i}^j:=
\left\{
\begin{array}{ll}
\delta_{i,j} & \hbox{ when } i \le l \\
\alpha_{i-l,j} & \hbox{ when } i \ge l+1.
\end{array}
\right.
\end{align}
In addition, we assume that  $\alpha_{1,1}=1$.
%Then, the matrix $(v^1, \ldots, v^{l+s})$ is a $(r,l+s)$-MDS code for $s=0, \ldots, k-l$.
Then, the matrix $(v^1, \ldots, v^{k})$ is an $(r,k)$-MDS code.
\end{lemm}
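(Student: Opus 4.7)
The plan is to establish the $(r,k)$-MDS property directly by showing that every $k\times k$ submatrix of $(v^1,\ldots,v^k)$ is invertible, using Lemma \ref{LL11} as the engine in a manner that parallels and strengthens the proof of Proposition \ref{PP8}.

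First I would fix an arbitrary strictly increasing map $\pi\colon[k]\to[r]$ and study the associated $k\times k$ submatrix $A_\pi$ of $(v^1,\ldots,v^k)$. Let $s$ denote the number of indices $t\in[k]$ with $\pi(t)\le l$; the corresponding rows of $A_\pi$ are the standard basis vectors $e_{\pi(1)},\ldots,e_{\pi(s)}$ of $\FF_q^k$ by the $\delta_{i,j}$ structure of $v^j$ for $i\le l$. Cofactor-expanding $\det A_\pi$ along these $s$ rows reduces the invertibility of $A_\pi$ (up to an inessential sign) to the invertibility of the $(k-s)\times(k-s)$ matrix $M=(\alpha_{\beta_{t'},c_{j'}})_{t',j'=1}^{k-s}$, where $\beta_{t'}:=\pi(s+t')-l$ satisfy $1\le\beta_1<\cdots<\beta_{k-s}\le r-l$ and $c_1<\cdots<c_{k-s}$ enumerate $[k]\setminus\{\pi(1),\ldots,\pi(s)\}$ in increasing order.

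Second, I would prove by induction on $\tau\in\{1,\ldots,k-s\}$ that the $\tau\times\tau$ top-left submatrix $M^{(\tau)}$ is invertible. The base case $\tau=1$ is immediate: either $(\beta_1,c_1)=(1,1)$ and $\alpha_{1,1}=1\neq 0$ by hypothesis, or else $\alpha_{\beta_1,c_1}$ lies in $\FF_p[e_1,\ldots,e_{\beta_1+c_1-2}]\setminus\FF_p[e_1,\ldots,e_{\beta_1+c_1-3}]$, which in particular excludes $0$. For the inductive step $\tau-1\to\tau$, strict monotonicity of both $(\beta_{t'})$ and $(c_{j'})$ gives $\beta_{\tau-1}+c_\tau-2\le\beta_\tau+c_\tau-3$ and $\beta_\tau+c_{\tau-1}-2\le\beta_\tau+c_\tau-3$, so every entry of $M^{(\tau)}$ other than the bottom-right corner $\alpha_{\beta_\tau,c_\tau}$ has extension index at most $\beta_\tau+c_\tau-3$, while the corner sits in $\FF_p[e_1,\ldots,e_{\beta_\tau+c_\tau-2}]\setminus\FF_p[e_1,\ldots,e_{\beta_\tau+c_\tau-3}]$. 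Applying Lemma \ref{LL11} with $\FF_q=\FF_p[e_1,\ldots,e_{\beta_\tau+c_\tau-3}]$ and new generator $e=e_{\beta_\tau+c_\tau-2}$ then yields invertibility of $M^{(\tau)}$ from that of $M^{(\tau-1)}$.

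Running the induction up to $\tau=k-s$ establishes invertibility of $M$, hence of $A_\pi$; since $\pi$ was arbitrary, this gives the $(r,k)$-MDS property. The main obstacle will be the book-keeping of extension indices at each inductive step: one must verify that no non-corner entry of $M^{(\tau)}$ secretly uses the fresh generator $e_{\beta_\tau+c_\tau-2}$, which is exactly where the strict monotonicity of both sequences $(\beta_{t'})$ and $(c_{j'})$ is essential. If either sequence failed to be strictly increasing, a single generator could appear in two entries of the enlarged submatrix and Lemma \ref{LL11} would no longer apply cleanly.
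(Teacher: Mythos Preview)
Your proof is correct. Both your argument and the paper's rest on the same engine, Lemma~\ref{LL11}, applied inductively along the ``anti-diagonal'' grading of the $\alpha_{i,j}$; the difference is in packaging. The paper invokes Lemma~\ref{LL12} with $D=(v^1,\ldots,v^l)$, $F=(v^{l+1},\ldots,v^k)$, $\sdd=l$, $\sff=r-l$, $\sgg=k-l$: the base of that lemma's induction (invertibility of the $l\times l$ minor of $D$) is supplied by Proposition~\ref{PP8}, and the induction then adjoins the columns of $F$ one at a time via Lemma~\ref{LL11}. You instead first strip off the identity rows by cofactor expansion and then run the Lemma~\ref{LL11} induction directly on the reduced $(k-s)\times(k-s)$ minor $M$, starting from a $1\times 1$ base case. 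Your route is more self-contained (it uses only Lemma~\ref{LL11} and avoids a separate appeal to Proposition~\ref{PP8}), while the paper's route is more modular (it reuses Lemma~\ref{LL12}, which is also needed elsewhere in the appendix). The book-keeping you flag---that strict monotonicity of both $(\beta_{t'})$ and $(c_{j'})$ keeps every non-corner entry of $M^{(\tau)}$ inside $\FF_p[e_1,\ldots,e_{\beta_\tau+c_\tau-3}]$---is exactly the point, and you have it right.
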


\begin{proof}
We apply Lemma \ref{LL12} to the case when
$D=(v^1, \ldots, v^l)$,
$F=(v^{l+1}, \ldots, v^{k})$,
$\sdd=l $,
$\sff=r-l$, $\sgg=k-l $,
and 
$\FF_{q'}$ is $\FF_p[e_1, \ldots, e_{l+k-3}]$.
Then, we find that 
the matrix $(v^1, \ldots, v^{k})$ is an $(r,k)$-MDS code.
\end{proof}

\begin{lemm}\Label{AMT}
We consider a finite field $\FF_p$.
We choose
a $(\sbb-\saa)\times \saa$ matrix $A_1=(a_{i,j})_{1\le i \le \sbb-\saa, 1\le j\le \saa}$ and 
a $(\sbb-\saa)\times \saa$ matrix $A_2=(a_{i,j})_{\sbb-\saa+1\le i \le 2(\sbb-\saa), 1\le j\le \saa}$
 over $\FF_p[e_1, \ldots, e_{2\sbb-\saa-1}]$ to satisfy the following conditions.
The component $a_{i,j}$ is an element of 
$\FF_p[e_1, \ldots, e_{i+j-1}]\setminus \FF_p[e_1, \ldots, e_{i+j-2}]$
for $1\le i \le 2(\sbb-\saa)$ and $1 \le j \le \saa$.
We choose a $\saa\times \saa$ matrix $A_3=(a_{i,j})_{2(\sbb-\saa)+1\le i \le 2\sbb-\saa, 1\le j\le \saa}$ 
 over $\FF_p[e_1, \ldots, e_{2\sbb-1}]$
 to satisfy the conditions;
The component $a_{i,j}$ is an element of 
$\FF_p[e_1, \ldots, e_{i+j-1}]\setminus \FF_p[e_1, \ldots, e_{i+j-2}]$
for $2(\sbb-\saa)+1\le i \le 2\sbb-\saa$ and $1 \le j \le \saa$.
The relation 
\begin{align}
&a_{2(\sbb-\saa)+j,i}+\sum_{k=1}^{\sbb-\saa} a_{k,j} a_{\sbb-\saa +k,i}
 \nonumber \\
 =&
a_{2(\sbb-\saa)+i,j}+\sum_{k=1}^{\sbb-\saa} a_{k,j} a_{\sbb-\saa +k,i}
\Label{AMP}
\end{align}
holds for $1 \le i \le \saa$ and $1 \le j \le \saa$.
Notice that the above choice is always possible.
Then, we define 
a $2\sbb \times \saa$ matrix $A$ and 
a $2\sbb \times (2\sbb-\saa)$ matrix $B$ as follows
\begin{align}
A=
\left( 
\begin{array}{c}
A_2 \\
A_3 \\
A_1 \\
I
\end{array}
\right), \quad
B=
\left( 
\begin{array}{ccc}
I & 0 & 0\\
-A_1^T & A_2^T & A_3^T \\
0 & I & 0 \\
0 & 0 & I 
\end{array}
\right).
\end{align}
In addition, we assume that $a_{1,1}=1$.
Then, we have the following conditions.
\begin{description}
\item[(N1)]
The relation $A^T J A =0$ holds.
\item[(N2)]
The relation $A^T J B =0$ holds.
\item[(N3)]
The $2\sbb \times \saa$ matrix $A$ is a 
$(2\sbb , \saa)$-MDS code.
\item[(N4)]
The $2\sbb \times (2\sbb-\saa)$ matrix $B$ is a 
$(2\sbb , (2\sbb-\saa))$-MDS code.
\end{description}
\end{lemm}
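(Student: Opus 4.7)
The plan is to verify the four claims in the order (N2), (N1), (N3), (N4), exploiting the designed block structure of $A$ and $B$ and tying the two MDS statements together via symplectic duality. Writing $J$ in block form matching the $(\sbb-\saa,\saa,\sbb-\saa,\saa)$ row decomposition of $A$, a direct block computation gives
\begin{align*}
A^T J A = A_3 - A_3^T + A_1^T A_2 - A_2^T A_1,
\end{align*}
whose $(j,i)$ entry is precisely the symmetry identity encoded in (AMP), yielding (N1). The parallel computation of $A^T J B$ annihilates termwise using only the placement of the identity blocks in $B$, so (N2) is an unconditional block identity.

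For (N3), I will permute the rows of $A$ so that the identity block $I_\saa$ sits on top, leaving a $(2\sbb-\saa)\times \saa$ lower block whose rows, after reindexing, are $(a_{i,j})_{j=1}^{\saa}$ for $i = 1,\dots,2\sbb-\saa$. Since a row permutation preserves the MDS property, I then apply Proposition \ref{PP8} with $l=\saa$ and $r=2\sbb$. The hypothesis of Proposition \ref{PP8} asks for $\alpha_{i,j} \in \FF_p[e_1,\dots,e_{i+j-2}] \setminus \FF_p[e_1,\dots,e_{i+j-3}]$, while our entries satisfy the analogous condition one step higher; absorbing this single-index shift (e.g., by working over the enlarged base field $\FF_p[e_1]$) delivers that $A$ is a $(2\sbb,\saa)$-MDS code.

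For (N4), I turn to symplectic duality. The columns of $B$ are manifestly linearly independent from its three identity blocks, so $\dim \im B = 2\sbb-\saa$, and (N2) then forces $\im B = (\im A)^{\perp_J}$, the symplectic complement of $\im A$. Because $J$ is a monomial (signed-permutation) matrix, the map $v \mapsto Jv$ preserves Hamming weights; hence $(\im A)^{\perp_J} = -J\,(\im A)^\perp$ shares the weight distribution of the standard-dual code $(\im A)^\perp$. Classical MDS duality then transfers the MDS property from (N3) to $\im B$, establishing (N4). The main obstacle is Step (N4): one must recognize that, although the symplectic complement differs from the standard orthogonal complement, the underlying linear automorphism $-J$ is a coordinate-permutation-with-sign map and therefore preserves the MDS property, which is what licenses the use of classical MDS duality.
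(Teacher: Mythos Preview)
Your argument is correct. For (N1), (N2), and (N3) you follow essentially the same path as the paper: a block computation of $A^TJA$ and $A^TJB$, followed by a row permutation of $A$ and an appeal to Proposition~\ref{PP8}. (The one-index shift you flag is real; the paper silently glosses over it, and your fix via the enlarged base field $\FF_p[e_1]$ is the right way to make it honest.)

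Where you diverge is (N4). The paper does \emph{not} use duality: it simply row-permutes $B$ to the form
\[
\begin{pmatrix} I & 0 & 0\\ 0 & I & 0\\ 0 & 0 & I\\ -A_1^T & A_2^T & A_3^T\end{pmatrix}
\]
and applies Proposition~\ref{PP8} a second time, checking that the bottom $\saa\times(2\sbb-\saa)$ block again satisfies the required field-tower condition entrywise. Your route instead leverages (N2) and (N3): from $A^TJB=0$ and a rank count you get $\im B = J^{-1}(\im A)^\perp$, and since $J$ is a signed coordinate permutation it preserves Hamming weight, so classical MDS duality transports the MDS property from $\im A$ to $\im B$. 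This is a genuinely different argument. The paper's approach is more self-contained (it needs only Proposition~\ref{PP8} and nothing about dual codes), while yours is shorter and more conceptual --- it explains \emph{why} $B$ is MDS rather than re-verifying it from scratch, and it makes transparent that (N4) is not an independent miracle but a shadow of (N3) through the symplectic pairing.
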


\red{In fact,
since the set $\FF_p[e_1, \ldots, e_{i+j-1}]\setminus \FF_p[e_1, \ldots, e_{i+j-2}]$
is not empty, it is possible to choose 
the component $a_{i,j}$ in the above way.}

\begin{proof}
The condition (N1) follows from the condition \eqref{AMP}.
The condition (N2) follows from the definitions of $A$ and $B$.
The condition (N3) holds if and only if
the $2\sbb \times \saa$ matrix 
$
\left( 
\begin{array}{c}
I \\
A_1 \\
A_2 \\
A_3 
\end{array}
\right)$ is a 
$(2\sbb , \saa)$-MDS code.
The latter condition follows from Proposition \ref{PP8}.
Hence, we obtain the condition (N3).
The condition (N4) holds if and only if
the $2\sbb \times (2\sbb-\saa)$ matrix 
$
\left( 
\begin{array}{ccc}
I & 0 & 0\\
0 & I & 0 \\
0 & 0 & I \\
-A_1^T & A_2^T & A_3^T 
\end{array}
\right)$ is a 
$(2\sbb , (2\sbb-\saa))$-MDS code.
The latter condition follows from Proposition \ref{PP8}.
Hence, we obtain the condition (N4).
\end{proof}

\begin{lemm}\Label{AMX}
We choose matrices $A$ and $B$ in the same way as Lemma \ref{AMT}.
We choose a prime power $q$ such that 
$\FF_{q}=\FF_p[e_1, \ldots, e_{2\sbb-1}]$.
We choose
a $(\sbb-\saa)\times \scc$ matrix $C_1=(c_{i,j})_{1\le i \le \sbb-\saa, 1\le j\le \saa}$ and 
a $(\sbb-\saa)\times \scc$ matrix $C_2=(c_{i,j})_{\sbb-\saa+1\le i \le 2(\sbb-\saa), 1\le j\le \saa}$
 over $\FF_{q}[e_1', \ldots, e_{2\sbb-\saa-1}']$ to satisfy the following conditions.
The component $c_{i,j}$ is an element of 
$\FF_{q}[e_1', \ldots, e_{i+j-1}']\setminus \FF_{q}[e_1', \ldots, e_{i+j-2}']$
for $1\le i \le 2(\sbb-\saa)$ and $1 \le j \le \scc$.
We choose a $\saa\times \scc$ matrix $C_3=(c_{i,j})_{2(\sbb-\saa)+1\le i \le 2\sbb-\saa, 1\le j\le \scc}$ to satisfy the conditions;
The component $c_{i,j}$ is an element of 
$\FF_{q}[e_1', \ldots, e_{i+j-1}']\setminus \FF_{q}[e_1', \ldots, e_{i+j-2}']$
for $2(\sbb-\saa)+1\le i \le 2\sbb-\saa$ and $1 \le j \le \scc$.

Then, we define 
a $2\sbb \times \scc$ matrix $C$ as follows
\begin{align}
C=
\left( 
\begin{array}{c}
C_2 \\
C_3 \\
C_1 \\
0
\end{array}
\right).
\end{align}
Then, we have the following conditions.
\begin{description}
%\item[(N5)]
%Any column vector of $A^T J C $ is a non-zero vector.
\item[(N5)]
The $2\sbb \times (\saa+\scc)$ matrix $(A,C)$ is a 
$(2\sbb , (\saa+\scc))$-MDS code.
\item[(N6)]
Let $C^{(s)}$ be the matrix composed of the first $s$ column vectors of $C$.
The $2\sbb \times (\saa+s)$ matrix $(A,C^{(s)})$ is a 
$(2\sbb , (\saa+s))$-MDS code.
\item[(N7)]
The $2\sbb \times (2\sbb-\saa+\scc)$ matrix $(B,C)$ is a 
$(2\sbb , (2\sbb-\saa+\scc))$-MDS code.
\end{description}
\end{lemm}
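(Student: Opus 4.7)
The plan is to prove (N5)--(N7) uniformly by inducting on the number of appended columns of $C$, appealing to Lemma~\ref{LL11} to preserve the MDS property at each step. The essential input is that the entries of $C$ grow in a strict staircase in the variables $e'_1, \ldots, e'_{2\sbb-\saa-1}$ as prescribed by the hypothesis, whereas the entries of $A$ and $B$ lie in the base field $\FF_q$ and involve no $e'_k$. Thus each time we append a new column of $C$, it introduces a fresh transcendental direction invisible to the previously established structure.

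For (N5), the base case $s=0$ is exactly (N3) of Lemma~\ref{AMT}. For the inductive step, fix any $(\saa+s)\times(\saa+s)$ submatrix of $(A, C^{(s)})$. Among its rows, I would select the row index $i^\star$ maximizing $i^\star + s - 1$; by the staircase hypothesis, $c_{i^\star, s}$ involves the variable $e'_{i^\star + s - 1}$ with strictly higher index than any other entry in the submatrix. Permuting this row to the bottom and the last column to the right, Lemma~\ref{LL11} applied over the intermediate field $\FF_q[e'_1, \ldots, e'_{i^\star + s - 2}]$ reduces the invertibility question to that of the top-left $(\saa+s-1)\times(\saa+s-1)$ block, which is a submatrix of $(A, C^{(s-1)})$ and hence invertible by the inductive hypothesis. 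This proves (N5), and restricting the induction to the first $s$ columns yields (N6) as a byproduct.

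The main obstacle is verifying the row-ordering step in edge cases, notably when the chosen row set intersects the bottom $\saa$ rows of $C$ (where $C$ vanishes) or when two candidate rows share the same degree slot $i + j - 1$. The first is harmless because the zero rows of $C$ are aligned with the identity block of $A$, so they contribute a clean identity subblock that can be eliminated without affecting the generic polynomial structure of the remaining entries. The second is resolved by breaking ties using the ordering of variables: since the $e'_k$ with $k = i^\star + s - 1$ is truly newer than any $e'_{k'}$ with $k' < i^\star + s - 1$ appearing elsewhere, the leading-term argument implicit in Lemma~\ref{LL11} still produces a nonzero determinant.

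For (N7), the identical induction applies with $B$ in place of $A$, using (N4) of Lemma~\ref{AMT} as the base. The only adjustment is that the identity-like rows of $B$ are distributed across three blocks (rows $1$ to $\sbb-\saa$, $\sbb+1$ to $2\sbb-\saa$, and $2\sbb-\saa+1$ to $2\sbb$) rather than concentrated at the bottom; the nontrivial row block of $B$ is aligned with the $C_3$ rows of $C$. The row-ordering argument is unaffected by this redistribution because the staircase structure on $C$ depends only on the index $i$ and not on which block of $B$ the row lies in, so the same maximizing row $i^\star$ and the same appeal to Lemma~\ref{LL11} deliver the conclusion.
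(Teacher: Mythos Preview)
Your approach is correct and matches the paper's: both establish the MDS property by inducting on the number of appended columns of $C$, invoking Lemma~\ref{LL11} at each step. The paper packages this induction as Lemma~\ref{LL12} and applies it after first row-permuting $(A,C)$ into the form $\left(\begin{smallmatrix} I & 0 \\ A_1 & C_1 \\ A_2 & C_2 \\ A_3 & C_3 \end{smallmatrix}\right)$ (and analogously $(B,C)$ for (N7)), so that the staircase index of each $c_{i,j}$ aligns with its physical row position and Lemma~\ref{LL12} applies directly; you instead run the induction in place and handle the block ordering ad hoc.

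Two simplifications to your edge-case discussion. First, the separate treatment of the bottom $\saa$ rows via ``eliminating the identity subblock'' is unnecessary: those rows have all entries in $\FF_q$ (zeros in the $C$-columns, standard basis vectors in the $A$- or $B$-columns), so they already lie in the intermediate field $\FF_q[e'_1,\ldots,e'_{i^\star+s-2}]$ and cause no obstruction to Lemma~\ref{LL11}. You only need the observation that any set of $\saa+s$ rows (resp.\ $2\sbb-\saa+s$ rows) with $s\ge 1$ necessarily contains at least one row outside the zero block, and then take $i^\star$ to be the maximal $c$-index---not the row position in $C$---among those rows. Second, the worry about ``two candidate rows sharing the same degree slot'' does not arise: distinct rows of $C$ carry distinct $c$-indices, hence distinct degrees in any fixed column $s$, so the choice of $i^\star$ is unambiguous.
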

\begin{proof}

\if0
Due to the construction, 
$c_{2(\sbb-\saa)+j,i}$ belongs to
$\FF_{q'}[e_1', \ldots, e_{2(\sbb-\saa)+i+j-1}']
\setminus \FF_{q'}[e_1', \ldots, e_{2(\sbb-\saa)+i+j-2}']$
and 
$\sum_{k=1}^{\sbb-\saa} c_{k,j} a_{\sbb-\saa +k,i}
-\sum_{k=1}^{\sbb-\saa} a_{k,i} c_{\sbb-\saa +k, j}$
belongs to
$\FF_{q'}[e_1', \ldots, e_{2(\sbb-\saa)+j-1}']$.
Since 
$(\FF_{q'}[e_1', \ldots, e_{2(\sbb-\saa)+i+j-1}']
\setminus \FF_{q'}[e_1', \ldots, e_{2(\sbb-\saa)+i+j-2}'])
\cap 
\FF_{q'}[e_1', \ldots, e_{2(\sbb-\saa)+j-1}']=
\emptyset$,
the relation 
\begin{align}
c_{2(\sbb-\saa)+j,i}+
\sum_{k=1}^{\sbb-\saa} a_{k,i} c_{\sbb-\saa +k, j}
\neq 
\sum_{k=1}^{\sbb-\saa} c_{k,j} a_{\sbb-\saa +k,i}
\Label{AMP2}
\end{align}
holds for $1 \le i \le \saa$ and $1 \le j \le \scc$.
Hence, we obtain the condition (N5).
\fi

The condition (N5) holds if and only if
the $2\sbb \times (\saa+\scc)$ matrix 
$
\left( 
\begin{array}{cc}
I & 0\\
A_1& C_1 \\
A_2& C_2 \\
A_3 & C_3
\end{array}
\right)$ is a 
$(2\sbb , \saa)$-MDS code.
We apply Lemma \ref{LL12} to the case with
$D=\left( 
\begin{array}{c}
I \\
A_1 \\
A_2 \\
A_3
\end{array}
\right)$,
$F=\left( 
\begin{array}{c}
0\\
C_1\\
C_2\\
C_3
\end{array}
\right)$, 
$l= \saa$,
$k=\saa+\scc$,
$r=2\sbb$,
and 
$\FF_{q'}=\FF_{q}[e_1',\ldots, e_{\scc-1}']$.
Then, due to the condition (N3) of Lemma \ref{AMT},
we obtain the condition (N5).

To show the condition (N6), 
we apply Lemma \ref{LL12} to the case with
$D=\left( 
\begin{array}{c}
I \\
A_1 \\
A_2 \\
A_3
\end{array}
\right)$,
$F=\left( 
\begin{array}{c}
0\\
C^{(s)}_1\\
C^{(s)}_2\\
C^{(s)}_3
\end{array}
\right)$, 
$l= \saa$,
$k=\saa+s$,
$r=2\sbb$,
and 
$\FF_{q'}=\FF_{q}[e_1',\ldots, e_{s-1}']$.
Hence, we obtain the condition (N6).

The condition (N7) holds if and only if
the $2\sbb \times (\saa+\scc)$ matrix 
$\left( 
\begin{array}{cccc}
I & 0 & 0 & C_2\\
0 & I & 0 & C_1\\
0 & 0 & I & 0 \\
-A_1^T & A_2^T & A_3^T & C_3
\end{array}
\right)$ is a 
$(2\sbb , (2\sbb-\saa))$-MDS code.
We apply Lemma \ref{LL12} to the case with
$D=\left( 
\begin{array}{ccc}
I & 0 & 0 \\
0 & I & 0 \\
0 & 0 & I \\
-A_1^T & A_2^T & A_3^T 
\end{array}
\right)$,
$F=\left( 
\begin{array}{c}
C_2\\
C_1\\
0 \\
C_3
\end{array}
\right)$, 
$l= 2\sbb-\saa$,
$k=2\sbb-\saa+\scc$,
$r=2\sbb$,
and 
$\FF_{q'}=\FF_{q}[e_1',\ldots, e_{2(\sbb-\saa)+\scc-1}']$.
Then, due to the condition (N4) of Lemma \ref{AMT},
we obtain the latter condition.
Hence, we obtain the condition (N7).
\end{proof}

\section{Proof of Theorem \ref{TH3}}\Label{A2}
To show Theorem \ref{TH3}, 
we apply Lemma \ref{AMX}
with $\sbb=\snn$, $\saa=\sy_1$, $\scc=2\srr$ that is given in Appendix \ref{A1}.
Then, we choose the matrices $G^{(1)},G^{(2)},F$ as 
$A$,
the matrix composed of the first $2\stt-\sy_1$ column vectors of $C$,
the matrix composed of the remaining $2\srr-(2\stt-\sy_1)$ column vectors of $C$,
respectively.
Due to the condition \eqref{AMP}, the matrix $G^{(1)}$ is self-column-orthogonal. 

The condition (N6) with $s=2\stt-\sy_1$ 
guarantees that $(G^{(1)},G^{(2)})$ is a $(2\snn,2\stt)$-MDS code.
This property guarantees the rejection condition with $\bar{\fB}$ with the choice $\REJ = \{ \cB \subset [\snn] \mid |\cB| \leq \stt \}$.
Also, the condition (N5)
guarantees that $( (G^{(1)},G^{(2)}),F)$ is a $(2\snn,2\srr)$-MDS code.
This property guarantees the acceptance condition with $\bar{\fA}$ with the choice 
$\fA = \{ \cA \subset [\snn] \mid |\cA| \geq \srr \}$.
Therefore, the proof of Theorem \ref{TH3} is completed.

\section{Proof of Theorem \ref{TH3CQ}}\Label{A2B}
To show Theorem \ref{TH3CQ}, 
we apply Lemma \ref{AMX}
with $\sbb=\snn$, $\saa=\snn$, $\scc=2\srr$ that is given in Appendix \ref{A1}.
Then, we choose the matrices $G^{(1)},G^{(2)},F$ as 
$A$,
the matrix composed of the first $\sy_2:=[2\stt-\snn]_+$ column vectors of $C$,
the matrix composed of the remaining $2\srr-\sy_2$ column vectors of $C$,
respectively.
Due to the condition \eqref{AMP}, the matrix $G^{(1)}$ is self-column-orthogonal. 

The condition (N6) with $s=\sy_2$ 
guarantees that $(G^{(1)},G^{(2)})$ is a $(2\snn,\snn+ \sy_2)$-MDS code.
This property guarantees the rejection condition with $\bar{\fB}$ with the choice $\REJ = \{ \cB \subset [\snn] \mid |\cB| \leq \stt \}$ because 
$\snn+\sy_2 \ge 2\stt$.
Also, the condition (N5)
guarantees that $( (G^{(1)},G^{(2)}),F)$ is a $(2\snn,2\srr)$-MDS code.
This property guarantees the acceptance condition with $\bar{\fA}$ with the choice 
$\fA = \{ \cA \subset [\snn] \mid |\cA| \geq \srr \}$.
Therefore, the proof of Theorem \ref{TH3CQ} is completed.

\section{Proof of Theorem \ref{TH6}}\Label{A3}
Since 
Condition (E3) implies Condition (E2), we show the directions
(E2) $\Rightarrow$ (E1) and (E1) $\Rightarrow$ (E3).

To show the direction
(E2) $\Rightarrow$ (E1),
we assume Condition (E2).
Due to Theorem \ref{TH4QQ},
there exists a linear QQSS protocol that satisfies the correctness with $\ACC = \{ \cA \subset [\snn] \mid |\cA| \geq \srr \}$.
If Condition (E1) does not holds, i.e.,  $\srr < (\snn+1)/2$,
there are two disjoint subsets $\cA_1\cA_2 \in \ACC$.
Hence, the players in $\cA_1$ and the players in $\cA_2$ 
 can recover the original state, which contradicts the no-cloning theorem \cite{WZ,Dieks}.
Hence, Condition (E2) implies Condition (E1). 
 
To show the direction
(E1) $\Rightarrow$ (E3),
we assume Condition (E1), and
choose $\stt':= \max(\stt, \snn-\srr)$.
Then, we choose a positive integer $s$, 
a $2\snn \times (\snn-\srr+\stt')$ self-column-orthogonal matrix $G^{(1)}$,
a $2\snn\times (\stt'+\srr-\snn)$ matrix $G^{(2)}$,
and a $2\snn\times 2 (\srr-\stt')$ matrix $F$ column-orthogonal 
to the matrix $G^{(1)}$
on $\FF_{q}$ with $q=p^s$
such that 
the matrix $(G^{(1)},G^{(2)},F)$ is an $(\srr,\stt',\snn)$-QQMMSP.
Since $\stt'\ge \stt$, this statement implies Condition (E3).

For this aim, we apply Lemma \ref{AMX}
to the case with
$\sbb= \snn$, $\saa= \snn-\srr+\stt'$,
$\scc=\stt'+\srr-\snn$ that is given in Appendix \ref{A1}.
We choose the $2\snn \times (\snn-\srr+\stt')$ matrix $G^{(1)}$, 
the $2\snn \times (\stt'-\srr+\snn)$ matrix $G^{(2)}$,
and the $2\snn \times 2 (\stt'-\stt')$ matrix $F$,
as $A$, $C$, and $ 
\left( 
\begin{array}{cc}
I & 0 \\
-A_1^T & A_2^T \\
0 & I \\
0 & 0 
\end{array}
\right)$,
respectively.
The matrix $G^{(1)}$ is self-column-orthogonal due to the condition (N1).
The matrix $F$ is column-orthogonal to $G^{(1)}$ due to the condition (N2).

The matrix $(G^{(1)},G^{(2)})$ is $(2\snn,2\stt')$-MDS code due to the condition (N5).
Since 
the $\snn+ \srr-\stt'$ column vectors of $(G^{(1)},F)$
forms the orthogonal space to the $\snn+ \srr-\stt'$ column vectors of $G^{(1)}$,
the linear space spanned by 
the $\snn+ \srr-\stt'$ column vectors of $(G^{(1)},F)$
equals 
the linear space spanned by 
the $\snn+ \srr-\stt'$ column vectors of $B$.
Hence,
the matrix $(G^{(1)},G^{(2)},F)$ is $(2\snn,2\srr)$-MDS code 
if and only if 
The matrix $(B,G^{(2)})$ is $(2\snn,2\srr)$-MDS code, which is guaranteed by  
the condition (N6).
Therefore, 
the matrix $(G^{(1)},G^{(2)},F)$ is an $(\srr,\stt',\snn)$-QQMMSP, which 
implies Condition (E3).


\begin{thebibliography}{99}

\bibitem{Shamir79}
A. Shamir, ``How to share a secret,'' {\em Communications of the ACM,} 22(11):612--613, 1979.

\bibitem{Blakley79}
G.R. Blakley, ``Safeguarding Cryptographic Keys,'' {\em Managing Requirements Knowledge, International Workshop on (AFIPS),} 48: 313--317, 1979.


\bibitem{CGKS98} 
B. Chor, O. Goldreich, E. Kushilevitz, and M. Sudan, ``Private information retrieval,'' {\em Journal of the ACM,} 45(6):965--981, 1998.

\bibitem{Rabin81}
M. Rabin, ``How to exchange secrets by oblivious transfer,'' 
{\em Tech. Memo TR-81}, Aiken Computation Laboratory, Harvard University, 1981.

\bibitem{GIKM00}
Y. Gertner, Y. Ishai, E. Kushilevitz, and T. Malkin, 
``Protecting data privacy in private information retrieval schemes,'' 
{\em Journal of Computer and Systems Sciences}, 60(3):592--629, 2000. Earlier version in STOC 98.

\bibitem{BIKO12}
A. Beimel, Y. Ishai, E. Kushilevitz, and I. Orlov, 
``Share Conversion and Private Information Retrieval,'' 
{\em 2012 IEEE 27th Conference on Computational Complexity (CCC),} 2012.

\bibitem{LMD17}
L. Li, M. Militzer, and A. Datta, `
`rPIR: ramp secret sharing-based communication-efficient private information retrieval,'' 
{\em Int. J. Inf. Secur.} 16, 603--625, 2017.

\bibitem{YSL18}
H. Yang, W. Shin, and J. Lee,
``Private Information Retrieval for Secure Distributed Storage Systems,''
{\em IEEE Transactions on Information Forensics and Security}, pp.(99):1-1, 2018.

\bibitem{DER18} 
R. G.L. D'Oliveira and S. El Rouayheb, 
``Lifting Private Information Retrieval from Two to any Number of Messages,''
{\em Proc. IEEE Int. Symp. Inf. Theory (ISIT),}  pp. 1744--1748, 2018.
 
\bibitem{SH2022}
S. Song and M. Hayashi,
``Equivalence of Non-Perfect Secret Sharing and Symmetric Private Information Retrieval with General Access Structure,''
{\em IEEE Journal on Selected Areas in Communications},
Volume: 40, Issue: 3, pp. 999 -- 1012 (2022).

%%%%%%%%% survey
\bibitem{Beimel11} 
A. Beimel, 
``Secret-Sharing Schemes: A Survey,'' {\em Chee Y.M. et al. (eds) Coding and Cryptology. 
IWCC 2011. Lecture Notes in Computer Science,} vol 6639, Springer, Berlin, Heidelberg, 2011.

\bibitem{BI93}
M. Bertilsson and I. Ingemarsson, 
``A construction of practical secret sharing schemes using linear block codes,'' 
In J. Seberry and Y. Zheng, editors, Advances in
Cryptology -- AUSCRYPT '92, volume 718 of 
{\em Lecture Notes in Computer Science}, pp. 67--79. Springer-Verlag, 1993.

\bibitem{Dijk95}
M. van Dijk, ``A linear construction of perfect secret sharing schemes,'' 
In A. De Santis, editor, Advances in Cryptology -- EUROCRYPT '94, volume 950 of 
{\em Lecture Notes in Computer Science}, pp. 23--34. Springer-Verlag, 1995.

%%%%%%START of CQSS studies
 \bibitem{HBB99}
M. Hillery, V. Buzek, and A. Berthaume, 
``Quantum secret sharing,'' {\em Phys. Rev. A}, 59(3):1829--1834, 1999.

\bibitem{KKI99} 
A. Karlsson, M. Koashi, and N. Imoto, 
``Quantum entanglement for secret sharing and secret splitting,'' {\em Phys. Rev. A} 59, 162, 1999.

\bibitem{Gottesman00}
D. Gottesman, 
``On the Theory of Quantum Secret Sharing,'' {\em Phys. Rev. A} 61, 042311, 2000.

%%% QSS & Graph state
\bibitem{Sarvepalli12}
P. Sarvepalli, 
``Non-Threshold Quantum Secret Sharing Schemes in the Graph State Formalism, {\em Phys. Rev. A} 86, 042303, 2012. % QQ&CQ, general access structure

\bibitem{KMMP09}
E. Kashefi, D. Markham, M. Mhalla, and S. Perdrix, `
`Information Flow in Secret Sharing Protocols,'' 
{\em Electronic Proceedings of Theoretical Computer Science,} 9, 87, 2009.

\bibitem{MS08}
D. Markham and B. C. Sanders, 
``Graph States for Quantum Secret Sharing,'' 
{\em Physical Review A} 78, 042309, 2008.

\bibitem{KFMS10}
A. Keet, B. Fortescue, D. Markham, and B. C. Sanders, 
``Quantum secret sharing with qudit graph states,'' {\em Phys. Rev. A,} 82, 062315, 2010.

\bibitem{WCY14}
MM. Wang, XB. Chen, and YX. Yang, 
``Quantum secret sharing for general access structures based on multiparticle entanglements,''
{\em Quantum Inf Process} 13, 429--443, 2014.

%%ramp QSS

\bibitem{Matsumoto17}
R. Matsumoto, 
``Quantum stabilizer codes can realize access structures impossible by classical secret sharing,''
{\em IEICE Trans. Fundam.} E100--A(12), 2738--2739, 2017. 

\bibitem{Matsumoto20}
R. Matsumoto, 
``Classical access structures of ramp secret sharing based on quantum stabilizer codes,'' 
{\em Quantum Information Processing}, vol.19, no.1, 2020.



%%%Connection to other problems


\bibitem{MM13}
A. Marin and D. Markham, 
``Equivalence between sharing quantum and classical secrets and error correction,'' 
{\em Phys. Rev. A}, vol. 88, issue 4, p. 042332, 2013.


%%QSS for general access structure
\bibitem{Smith00}
A. Smith ``Quantum secret sharing for general access structures,''
arXiv: 0001087, 2000.


%%START of QQSS studies

%%ramp QSS

\bibitem{ZM15}
P. Zhang and R. Matsumoto, 
``Quantum Strongly Secure Ramp Secret Sharing,'' 
{\em Quantum Information Processing,} vol. 14, no. 2, pp. 715--729, 2015.

\bibitem{Matsumoto18}
R. Matsumoto, 
``Coding Theoretic Construction of Quantum Ramp Secret Sharing,'' 
{\em IEICE Trans. Fundamentals,} vol.E101-A, no.8, pp.1215--1222, 2018.

%%%Others

\bibitem{GJMP15}
S. Gravier, J. Javelle, M. Mhalla, and S. Perdrix, 
``On Weak Odd Domination and Graph-based Quantum Secret Sharing,'' 
{\em TCS Theoretical Computer Science,} 598, 129-137. 2015.

\bibitem{CGL99}
R. Cleve, D. Gottesman, and H.-K. Lo, ``How to share a quantum secret,'' 
{\em Phys. Rev. Lett.,} {\bf 83}(3):648--651, 1999.

\bibitem{SR10}
P. Sarvepalli and R. Raussendorf, ``Matroids and quantum-secret-sharing schemes,'' 
{\em Phys. Rev. A} 81, 052333, 2010.

\bibitem{OSIY05}
T. Ogawa, A. Sasaki, M. Iwamoto, H. Yamamoto, 
``Quantum secret sharing schemes and reversibility of quantum operations,'' 
{\em Phys. Rev. A}, 72(3), 032318, 2005.


%%--ramp
\bibitem{BM}
G.R. Blakley and C. Meadows, 
``Security of ramp schemes,''
In:Advances in Cryptology-CRYPTO’84. Lecture Notes in Computer Science, vol. 196, pp. 242–269. Springer (1985)

\bibitem{IY}
M. Iwamoto and H. Yamamoto, 
``Strongly secure ramp secret sharing schemes for general access structures,''
{\em Inf. Process. Lett.} 97(2), 52–57 (2006)

\bibitem{OKT}
W. Ogata, K. Kurosawa, and S. Tsujii, 
``Nonperfect secret sharing schemes,'' 
In: Advances in Cryptology-AUSCRYPT ’92. 
{\em Lecture Notes in Computer Science}, vol. 718, pp. 56--66. Springer (1993).

\bibitem{Stinson}
D.R. Stinson, 
``Cryptography Theory and Practice,'' 
3rd edn. Chapman \& Hall, London (2006)

\bibitem{Yamamoto}
H. Yamamoto, 
``Secret sharing system using (k, l, n) threshold scheme,'' 
{\em Electron. Commun. Jpn. I Commun.} 69(9), 46--54 (1986). 

%


\bibitem{KdW04}
I. Kerenidis and R. de Wolf, ``Quantum symmetrically-private information retrieval,''
{\em Information Processing Letters}, vol. 90, pp. 109--114, 2004.

\bibitem{SH19}
S. Song and M. Hayashi, 
``Capacity of Quantum Private Information Retrieval with Multiple Servers,'' 
                {\em IEEE Transactions on Information Theory}, vol. 67, no. 1, pp. 452--463, 2021.

\bibitem{SH19-2}
S. Song and M. Hayashi, 
                ``Capacity of Quantum Private Information Retrieval with Collusion of All But One of Servers,''
    %{\em Proceedings of 2019 IEEE Information Theory Workshop (ITW)}, pp. 1--5, 2019.
                {\em IEEE Journal on Selected Areas in Information Theory},  vol. 2, no. 1, pp. 380--390, 2021.

\bibitem{SH20}
S. Song and M. Hayashi, 
                ``Capacity of Quantum Private Information Retrieval with Colluding Servers,''
    {\em IEEE Transactions on Information Theory},  Volume: 67, Issue: 7, 5491 -- 5508 (2021).

\bibitem{AHPH20}
M. Allaix, L. Holzbaur, T. Pllaha, and C. Hollanti,
	%``Quantum Private Information Retrieval from MDS-coded and Colluding Servers,''
    ``Quantum Private Information Retrieval From Coded and Colluding Servers,''
    {\em IEEE Journal on Selected Areas in Information Theory}, vol. 1, no. 2, pp. 599--610, 2020.
    %{\em arXiv:2001.05883 [cs.IT]}, 2020.
 
\bibitem{ASHPHH21}
 M. Allaix, S. Song, L. Holzbaur, T. Pllaha, M. Hayashi, and C. Hollanti,
``On the Capacity of Quantum Private Information Retrieval from MDS-Coded and Colluding Servers,''
{\em IEEE Journal on Selected Areas in Communications},
vol. 40, no. 3, pp. 885 -- 898, 2022.

\bibitem{KL20}
W. Y. Kon and  C. C. W. Lim,
``Provably Secure Symmetric Private Information Retrieval with Quantum Cryptography,''
 {\em Entropy}, vol.~23, no.~1, 54, 2021.

\bibitem{WKNL21-1}
 C. Wang, W. Y. Kon, H. J. Ng, and C. C. Lim, 
``Experimental symmetric private information retrieval with measurement-device-independent quantum network'',
{\em Light Sci Appl}, {\bf 11}, 268 (2022).

\bibitem{WKNL21-2}
C. Wang, W. Y. Kon, H. J. Ng, and C. C. Lim, 
``Experimental symmetric private information retrieval with quantum key distribution,'' 
{\em Quantum Information and Measurement VI 2021, F. Sciarrino, N. Treps, M. Giustina, and C. Silberhorn, eds., Technical Digest Series}, Optica Publishing Group, 2021.

%%% Classical



\bibitem{TGKFHE17}
R. Tajeddine, O. W Gnilke, D. Karpuk, R. Freij-Hollanti, C. Hollanti, and S. El Rouayheb,
``Private information retrieval schemes for coded data with arbitrary collusion patterns,'' {\em 2017 IEEE International Symposium on Information Theory (ISIT),} pp.~1908--1912, 2017.

\bibitem{ZG17}
Y. Zhang and G. Ge, ``Private information retrieval from MDS coded databases with colluding servers under several variant models,''
%{\em arXiv} preprint: 1705.03186, 2017.
arXiv:1705.03186, 2017.


\bibitem{SJ18-col}
H. Sun and S. A. Jafar, ``Private information retrieval from MDS coded data with colluding servers: Settling a
conjecture by Freij-Hollanti et al.,'' 
{\em IEEE Transactions on Information Theory,} 64(2):1000--1022, 2018.

\bibitem{YLK20}
X. Yao, N. Liu, and W. Kang, ``The Capacity of Private Information Retrieval Under Arbitrary Collusion Patterns,''
{\em Proceedings of 2020 IEEE International Symposium on Information Theory (ISIT)}, pp. 1041--1046, 2020.

\bibitem{CLK20}
J. Cheng, N. Liu, and W. Kang, ``The Capacity of Symmetric Private Information Retrieval under Arbitrary Collusion and Eavesdropping Patterns,'' 
{\em IEEE Transactions on Information Forensics and Security}, vol. 17, pp. 3037-3050, 2022.
%, doi: 10.1109/TIFS.2022.3201375.



\bibitem{TGKFH19}
R. Tajeddine, O. W. Gnilke, D. Karpuk, R. Freij-Hollanti, and C. Hollanti, 
``Private Information Retrieval From Coded Storage Systems With Colluding, Byzantine, and Unresponsive Servers,''
{\em IEEE Transactions on Information Theory,} vol. 65, no. 6, pp. 3898-3906, 2019. %, doi: 10.1109/TIT.2018.2890285.

\bibitem{BU19-2}
K. Banawan and S. Ulukus, ``The Capacity of Private Information Retrieval from Byzantine and Colluding Databases,'' {\em IEEE Transactions on Information Theory,} vol. 65, no. 2, pp. 1206--1219, 2019.

%\bibitem{WS18}
%Q. Wang and M. Skoglund, ``Secure private information retrieval from colluding databases with eavesdroppers,'' {\em Proc. IEEE Int. Symp. Inf. Theory (ISIT),}  pp. 2456--2460, 2018.

\bibitem{HFLH21}
L. Holzbaur, R. Freij-Hollanti, J. Li, and C. Hollanti, 
``Towards the capacity of private information retrieval from coded and colluding servers,'' 
arXiv:1903.12552v6, 2021.


\bibitem{WS18}
Q. Wang and M. Skoglund,
``Symmetric Private Information Retrieval from MDS Coded Distributed Storage With Non-Colluding and Colluding Servers,''
 {\em IEEE Transactions on Information Theory}, vol. 65, no. 8, 2019.
 
 
\bibitem{KLLGR16}
I. Kerenidis, M. Lauri\`{e}re, F. Le Gall, and M. Rennela, 
``Information cost of quantum communication protocols,'' 
{\em Quantum information \& computation,} 16(3-4):181--196, 2016.

\bibitem{ABCGLS19}
D. Aharonov, Z. Brakerski, K.-M. Chung, A. Green, C.-Y. Lai, O. Sattath,
``On Quantum Advantage in Information Theoretic Single-Server PIR,'' 
{\em In: Ishai Y., Rijmen V. (eds) EUROCRYPT 2019,} Springer, Cham, vol.~11478, 2019.

\bibitem{SLH}
S. Song, F. Le Gall, and M. Hayashi,
``Prior Entanglement Exponentially Improves One-Server Quantum Private Information Retrieval for Quantum Messages,''
in preparation.

\bibitem{NC00}
M. A. Nielsen and I. L. Chuang, {\em Quantum computation and quantum information,} Cambridge: Cambridge University Press, 2000.

\bibitem{Hay17} M. Hayashi, 
{\em Quantum Information Theory: Mathematical Foundation}, Graduate Texts in Physics, Springer, (Second edition of Quantum Information: An Introduction, Springer), 2017.


	\bibitem{CRSS98}
	A. R. Calderbank, E. M. Rains, P. W. Shor, and N. J. A. Sloane,
	``Quantum error correction via codes over gf (4),'' 
	{\em IEEE Transactions on Information Theory,} vol. 44, no. 4, pp. 1369--1387, 1998.

	\bibitem{AK01}
	A. Ashikhmin and E. Knill, ``Nonbinary quantum stabilizer codes,''
	{\em IEEE Transactions on Information Theory,} vol. 47, no. 7, pp. 3065--3072, 2001.

	\bibitem{KKKS06}
	A. Ketkar, A. Klappenecker, S. Kumar and P. Sarvepalli, ``Nonbinary stablizer codes over finite fields,'' 
	{\em IEEE Transactions on Information Theory,} vol.~52, no.~11, pp.~4892--4914, 2006.

\bibitem{Haya2}
M. Hayashi, {\em Group Representation for Quantum Theory}, Cham, Switzerland: Springer, 2017.
 

\bibitem{MDS}
F. J. MacWilliams and N.J.A. Sloane, 
{\em The Theory of Error-Correcting Codes,}
%student revised ed., 
North-Holland Mathematical Library, Volume 16
Elsevier-North-Holland, Amsterdam, 1977.


\bibitem{KW93}
M. Karchmer and A. Wigderson, ``On span programs,'' {\em Proc. of the 8th IEEE Structure in Complexity Theory,} pp. 102--111, 1993.

\bibitem{Brickell89}
E. F. Brickell, ``Some ideal secret sharing schemes,'' {\em Journal of Combin. Math. and Combin. Comput.,} 6:105--113, 1989.

\bibitem{Beimel_thesis}
A. Beimel, {\em Secure Schemes for Secret Sharing and Key Distribution,} 
PhD thesis, Technion, 1996. 
	www.cs.bgu.ac.il/ beimel/pub.html.


		
\bibitem{SJ17-2}
H. Sun and S. Jafar, ``The Capacity of Symmetric Private Information Retrieval,'' 
2016 IEEE Globecom Workshops (GC Wkshps), Washington, DC, 2016, pp. 1--5.

\bibitem{WS17-2}
{Q. Wang and M. Skoglund,} 
``Secure Symmetric Private Information Retrieval from Colluding Databases with Adversaries,'' 
{\em 2017 55th Annual Allerton Conference on Communication, Control, and Computing (Allerton)}, pp. 1083--1090, 2017.


\bibitem{Ogata}
 W. Ogata, K. Kurosawa, and S. Tsujii, 
 ``Nonperfect Secret Sharing Schemes,'' 
{\em Advances in Cryptology, Auscrypt 92, Lecture Notes in Comput. Sci.,} 
718, 56--66, 1993.


\bibitem{Okada}
K. Okada and K. Kurosawa, 
``Lower Bound on the Size of Shares of Nonperfect Secret Sharing Schemes,'' 
{\em Advances in Cryptology, Asiacrypt 94, Lecture Notes in Comput. Sci.,} 
917, 33--41, 1995.

\bibitem{Paillier}
P. Paillier, ``On ideal non-perfect secret sharing schemes,'' 
{\em Security Protocols, 5th International Workshop, Lecture Notes in Comput. Sci.,}
1361, 207--216, 1998


\bibitem{KL}
E. Knill and R. Laflamme, 
``Theory of quantum error-correcting codes,'' 
{\em Phys. Rev. A} 55, 900 -- 911 (1997).

\bibitem{Rains}
E. M. Rains, ``Nonbinary quantum codes,'' 
{\em IEEE Trans. Inf. Theory} 45, 1827 (1999).

\bibitem{HG}
F. Huber, and M. Grassl,
``Quantum Codes of Maximal Distance and Highly Entangled Subspaces,''
{\em Quantum} 4, 284 (2020).

\bibitem{RST05}
K. Rietjens, B. Schoenmakers, and P. Tuyls, 
``Quantum Information Theoretical Analysis of Various Constructions for Quantum Secret Sharing,'' 
{\em Proceedings of the 2005 IEEE International Symposium on Information Theory,} 
Adelaide, Australia (IEEE, Piscataway, NJ, 2005), pp. 1598--1302, 2005.


\bibitem{GBP}
M. Grassl, Th. Beth, and T. Pellizzari
``Codes for the quantum erasure channel,''
{\em Phys. Rev. A} {\bf 56}, 33 -- 38, 1997


\bibitem{AMTW}
A. Ambainis, M. Mosca, A. Tapp and R. De Wolf, 
``Private quantum channels,'' 
{\em Proceedings 41st Annual Symposium on Foundations of Computer Science}, 
2000, pp. 547-553.
%, doi: 10.1109/SFCS.2000.892142.

\bibitem{SP}
P. Shor and J. Preskill, 
``Simple Proof of Security of the BB84 Quantum Key Distribution Protocol,''
{\em Phys. Rev. Lett.} {\bf 85}, 441  (2000).

\bibitem{BW}
C. H. Bennett and S. J. Wiesner, 
``Communication via one and two-particle operators on einstein-podolsky-rosen states,''
{\em Physical Review Letters}, {\bf 69}, 2881 (1992).


\bibitem{CH17}
N. Cai and M. Hayashi,
``Secure Network Code for Adaptive and Active Attacks with No-Randomness in Intermediate Nodes,''
{\em IEEE Transactions on Information Theory}, vol.~66, 1428--1448, 2020.


\bibitem{HVA}
M. Hayashi and \'{A}. V\'{a}zquez-Castro,
``Computation-aided classical-quantum multiple access to boost network communication speeds,''
{\em Physical Review Applied}, {\bf 16}, 054021 (2021).


\bibitem{WZ}
W. Wootters and W. Zurek, 
``A Single Quantum Cannot be Cloned,'' {\em Nature} 299 (5886): 802 -- 803 (1982). 

\bibitem{Dieks}
D. Dieks, ``Communication by EPR devices,''
{\em Physics Letters A},{\bf 92} (6): 271 -- 272  (1982). 

 
 
\end{thebibliography}
\end{document}